\documentclass[11pt, letter]{article}

\usepackage[english]{babel}
\usepackage[T1]{fontenc}
\usepackage{stmaryrd}
\usepackage{algorithmic}
\usepackage{amssymb}
\usepackage{marginnote}
\usepackage[ruled,vlined,linesnumbered]{algorithm2e}
\usepackage{booktabs,tabularx,threeparttable}
\usepackage[normalem]{ulem}

\usepackage{tikz}
\usetikzlibrary{fit,backgrounds,positioning}
\usepackage{fullpage}
\usepackage{mathrsfs}
\usepackage{amsmath}
\usepackage{bbm}
\usepackage{graphicx}
\usepackage{stackengine}
\usepackage{amsthm}
\usepackage{subcaption}
\usepackage{authblk}

\usepackage{thmtools}
\usepackage{thm-restate}

\usepackage[colorlinks=true, allcolors=blue]{hyperref}

\usepackage{fancyhdr,lastpage}
\usepackage{amsfonts}
\usepackage[dvipsnames]{xcolor}

\usepackage{enumitem}

\newcommand{\mylim}[2][n\rightarrow\infty]{\left(\lim_{#1}#2\right)}
\newcommand{\mysum}[4]{\left(\sum_{#1=#2}^{#3} #4\right)}
\newcommand{\myprod}[4]{\left(\prod_{#1=#2}^{#3} #4\right)}
\newcommand{\fn}[2]{#1\left(#2\right)}

\newcommand{\set}[1]{\left\{#1\right\}}
\newcommand{\floor}[1]{\left\lfloor#1\right\rfloor}
\newcommand{\ceil}[1]{\left\lceil#1\right\rceil}
\newcommand{\angles}[1]{\left\langle #1 \right\rangle}
\newcommand{\abs}[1]{\left| #1 \right|}
\newcommand{\norm}[1]{\left|\left| #1 \right|\right|}

\newcommand{\binomMax}[1][n]{\binom{#1}{\floor{\frac{#1}{2}}}}
\newcommand{\binomSet}[2]{[#1]^{(#2)}}
\newcommand{\unionNat}[1][k]{\bigcup_{#1=1}^{\infty}}

\newcommand{\opInt}[2]{\left(#1,#2\right)}
\newcommand{\clInt}[2]{\left[#1,#2\right]}
\newcommand{\clopInt}[2]{\left[#1,#2\right)}
\newcommand{\opclInt}[2]{\left(#1,#2\right]}

\newcommand{\smallo}[1]{o\left(#1\right)}
\newcommand{\bigO}[1]{O\left(#1\right)}

\DeclareMathOperator*{\amn}{arg\,min}
\DeclareMathOperator*{\amx}{arg\,max}
\DeclareMathOperator*{\argmax}{arg\,max}
\DeclareMathOperator*{\argmin}{arg\,min}
\newcommand{\eqtext}[1]{\ensuremath{\stackrel{\text{#1}}{=}}}

\newcommand{\bb}[0]{\mathbb}
\newcommand{\bbm}[0]{\mathbbm}
\newcommand{\scr}[0]{\mathscr}
\newcommand{\ov}[0]{\overline}

\newcommand{\power}[2]{\left(#1\right)^{#2}}

\newtheorem{theorem}{Theorem}
\newtheorem{claim}[theorem]{Claim}
\newtheorem{remark}[theorem]{Remark}
\newtheorem{conjecture}[theorem]{Conjecture}
\newtheorem{lemma}[theorem]{Lemma}
\newtheorem{definition}[theorem]{Definition}

\newtheorem*{itheorem}{Theorem}

\usetikzlibrary{arrows,shapes,decorations.pathmorphing,automata,backgrounds,petri}
\usetikzlibrary{positioning,calc}
\usetikzlibrary{patterns}
\tikzstyle{legendborder}=[rectangle, draw, black, rounded corners, thin, top color=white, text=black, minimum width=2.5cm, text width=4.5cm]
\tikzstyle{legendnoborder}=[rectangle, draw, white, rounded corners, thin, top color=white, text=black, minimum width=2.5cm]
\tikzstyle{selected edge} = [draw,line width=1pt,black]
\usetikzlibrary{shapes,backgrounds,plothandlers,plotmarks,calc,arrows,fadings,decorations.pathreplacing,decorations.pathmorphing}
\newcommand{\midarrow}{\tikz \draw[-triangle 90] (0,0) -- +(.1,0);}
\tikzstyle{v}=[circle,fill=black,draw=black!75,inner sep=0pt,minimum size=0.3em]
\tikzstyle{I}=[circle,draw=black!75,inner sep=0pt,minimum size=0.8em]
\tikzstyle{J}=[rectangle,draw=black!75,inner sep=0pt,minimum size=0.7em]
\tikzstyle{vertex}=[circle,inner sep=2,minimum size =2mm,semithick,fill=white!80!blue, draw=black]

\newcommand{\mc}{\mathcal}
\newcommand{\mb}{\mathbb}
\newcommand{\WW}{\textsf{W}}
\newcommand{\WT}{\textsf{W[2]}}
\newcommand{\WO}{\textsf{W[1]}}
\newcommand{\PP}{\textsf{P}}
\newcommand{\PPSPACE}{\textsf{PSPACE}}
\newcommand{\NPP}{\textsf{NP}}
\newcommand{\FPT}{\textsf{FPT}}

\newcommand{\OO}{\mathcal{O}}
\newcommand{\A}{\mathcal{A}}
\newcommand{\B}{\mathcal{B}}
\newcommand{\C}{\mathcal{C}}
\newcommand{\D}{\mathcal{D}}
\newcommand{\E}{\mathcal{E}}
\newcommand{\F}{\mathcal{F}}
\newcommand{\G}{\mathcal{G}}
\newcommand{\M}{\mathcal{M}}
\newcommand{\head}{\mathcal{H}}
\newcommand{\R}{\mathcal{R}}
\newcommand{\T}{\mathcal{T}}
\renewcommand{\S}{\mathcal{S}}
\newcommand{\pwtw}{w}
\newcommand{\fvs}{f}
\newcommand{\degen}{d}
\newcommand{\numb}{r}
\newcommand{\struct}{s}

\newcommand{\Q}{\textsc{Multi-Tape-Rec}}
\newcommand{\AutRec}{\textsc{Tape-Rec}}
\newcommand{\SyncAutRec}{\textsc{Sync-Tape-Rec}}
\newcommand{\SyncPathAutRec}{\textsc{Sync-Path-Tape-Rec}}
\newcommand{\SyncQ}{\textsc{Sync-Multi-Tape-Rec}}
\newcommand{\PathAutRec}{\textsc{Path-Tape-Rec}}
\newcommand{\TSDSR}{\textsc{TS-DSR}}
\newcommand{\TJDSR}{\textsc{TJ-DSR}}

\newcommand{\tw}{\mathrm{tw}}
\newcommand{\Cc}{\mathscr{C}}
\newcommand{\ie}{i.e., }
\newcommand{\eg}{e.g., }
\newcommand{\dsr}{\textsc{TS-DSR}}
\newcommand{\dscr}{\textsc{TS-DCR}}
\newcommand{\dsrfullts}{\textsc{Dominating Set Reconfiguration under Token Sliding }}
\newcommand{\bad}[1]{\tau(#1)}

\newcommand{\EFO}{EF1\textsubscript{outer} }

\begin{document}

\title{Fair Division of Graphs: Beyond Traceability\footnote{The first and fourth author are supported by ANR project ENEDISC (ANR-24-CE48-7768-01).}}

\author[1]{Nicolas Bousquet}
\author[2]{Frank Connor}
\author[3]{Agn\`es Totschnig}
\author[1]{S\'ebastien Zeitoun}

\affil[1]{CNRS, Universit\'e Claude Bernard Lyon 1, INSA Lyon, LIRIS, UMR5205, F-69622 Villeurbanne, France}
\affil[2]{McGill University, Montreal, Canada}
\affil[3]{Massachusetts Institute of Technology, Cambridge, United States}

\date{}

\maketitle

\begin{abstract}
In this paper, we study fair division problems in which resources are structured as graphs and agents must receive connected bundles. This connectivity requirement fundamentally alters the problem, making it significantly more challenging than its classical counterpart. We focus on the fairness notion of EF1\textsubscript{outer}, where envy can be eliminated by removing at most one vertex whose deletion does not disconnect the bundle -- a critical constraint for applications such as land division and network allocation.

Our first result extends prior work by establishing the existence of EF1\textsubscript{outer} allocations for an infinite family of non-traceable graphs (that is, graphs that do not admit a Hamiltonian path), answering a central open question and generalizing Bilò et al.'s result for traceable graphs. We then make progress on a conjecture concerning the EF1\textsubscript{outer} spectrum of trees due to Chen and Zwicker. 
Finally, we complement our structural results with algorithmic insights, showing that deciding the existence of an EF1\textsubscript{outer} allocation is NP-complete even for binary additive valuations, thereby resolving an open complexity question. Taken together, our results deepen the connection between graph theory and fair division, and offer new tools for studying fairness in structured resource environments.
\end{abstract}

\section{Introduction}

Fair division asks how to divide resources among several agents so that each party receives their due share. A central fairness criterion is \emph{envy-freeness} (EF), which requires that no agent prefer another agent's bundle to their own~\cite{Foley1967}; for an overview of fairness notions, see~\cite{Amanatidis2023}. Since EF allocations need not exist with indivisible items, Budish~\cite{Budish2011} introduced \emph{envy-freeness up to one good} (EF1), which requires that any envy can be eliminated by removing at most one item from the envied bundle. This relaxation is always attainable in the standard model of fair division~\cite{Lipton2004}.

Many applications impose feasibility constraints on the allocated bundles, particularly \emph{connectivity}, as in land division and network allocation~\cite{Igarashi2024}. Motivated by these settings, Bouveret et al.~\cite{Bouveret2017} introduced fair division on graphs, where items are represented by vertices and each agent must receive a connected subgraph. A natural adaptation of EF1 to this model is envy-freeness up to one \emph{outer} good (\EFO!), where the removed item may not disconnect the envied bundle~\cite{Bilo2022}. Bilò et al.~\cite{Bilo2022} gave the first structural results for two agents and proved existence on traceable graphs\footnote{A graph is called traceable if it admits a Hamiltonian path.} for up to four agents with monotone preferences. Using topological methods, Igarashi~\cite{Igarashi2023} extended this guarantee to an arbitrary number of agents. Characterizing which graph classes and valuation functions admit \EFO allocations remains an active line of research.

\subsection{Our Contributions}
Whether guarantees similar to the result of Igarashi~\cite{Igarashi2023} persist beyond traceable graphs has remained a central open problem in \EFO graph fair division, even for restricted valuation functions such as common additive valuations. This question is highlighted by Chen and Zwicker~\cite{Chen2024} and Bil{`o} et al.~\cite{Bilo2022}, with the former describing it as ``\emph{the most important open question in the study of \EFO graph fair division}.'' We make progress on this question by showing that traceability is not necessary: there exists an infinite family of non-traceable graphs for which \EFO allocations always exist under common monotone valuations.

\begin{itheorem}[Informal version of Theorem \ref{thm:double_suns}]
    There exists an infinite family of non-traceable graphs for which an \EFO allocation always exists for common monotone valuation functions.
\end{itheorem}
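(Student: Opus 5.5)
The plan is to define the family concretely and then build the allocation by a discrete moving-knife procedure along a long path, with separate care for the few vertices that prevent traceability.

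\emph{The family.} A graph with at least three vertices of degree one is non-traceable, since the two ends of a Hamiltonian path can absorb at most two leaves. I would therefore take a ``double sun'': a spine path $x_1x_2\cdots x_m$, with a small fixed gadget of pendant vertices (suns) hanging at each end, so that there are at least three leaves in total. The defining property to enforce is that every vertex outside the spine is a leaf adjacent to $x_1$ or $x_m$, or to a vertex very close to the ends. Then for each end there is a spanning tree that is a path together with at most one pendant leaf near that end. Non-traceability follows from the leaf count. Varying $m$ and the sun sizes gives an infinite family.

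\emph{Allocation scheme for a common monotone valuation $v$ and $n$ agents.} Since all agents share $v$, EF1\textsubscript{outer} reduces to one condition. For every pair of bundles $A,B$ with $v(A)<v(B)$, some vertex $b$ must exist whose removal leaves $B$ connected and satisfies $v(B\setminus\{b\})\le v(A)$. Among all connected partitions in which every bundle except possibly the two end bundles is a subpath of the spine, I would choose one that lexicographically maximizes the sorted vector of bundle values (a leximin choice). The two end bundles contain the suns together with an initial or final segment of the spine.

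Suppose bundle $B$ violates the condition against the minimum bundle $A$. Then each outer vertex $b$ of $B$ has $v(B\setminus\{b\})>v(A)$. The outer vertices available include both spine endpoints of $B$, and also the sun leaves when $B$ is an end bundle. I would shift such an outer vertex into the neighbouring bundle toward $A$ and propagate the shift along the spine, as in the standard traceable proof for common valuations. Connectivity is preserved because only spine endpoints or leaves move. Monotonicity and the violation hypothesis give a leximin improvement, which contradicts maximality. This is the usual ``move the knife one vertex'' argument, and it works verbatim for the middle bundles, which are intervals of a path.

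\emph{Main obstacle.} The hard part is when the offending bundle, or the bundle receiving a shifted vertex, is an end bundle containing a sun. There a leaf may be the only outer vertex whose removal reduces the value enough, and it cannot be passed along the spine. I would handle this by case analysis on the end bundle $E=\{\text{sun leaves}\}\cup\{x_1,\dots,x_j\}$.

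If $E$ is envied, I first try removing its spine endpoint $x_j$, which is outer whenever $j\ge 2$ or the leaves attach at $x_1$ only. If that fails, I instead give $x_j$ away to the next bundle and re-run the leximin argument.

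If $E$ is the minimum bundle, it can always absorb $x_{j+1}$ from its neighbour, so the middle-bundle argument applies.

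The genuinely new case is when $E$ is the whole envied excess and consists of the hub $x_1$ plus leaves. Here I would use the structure of the sun: every leaf is outer, so envy is removed by deleting the most valuable leaf. I would then show, using the leximin choice again, that this most valuable leaf cannot exceed the gap. I expect this last inequality, which shows that the leaf-removal certificate always suffices at the ends, to need the most care, and it dictates how large the suns may be.
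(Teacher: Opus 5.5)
\textbf{The gap is the choice of family.} For every graph you describe, the universal guarantee already fails for $k=2$ with a common binary additive valuation, so no allocation scheme can succeed on it.

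Your graphs are a path plus pendant leaves, i.e.\ trees with at least three leaves, so each contains a vertex $u$ of degree at least $3$. Give $u$ value $1$ and every other vertex value $0$. With two agents, the bundle avoiding $u$ lies inside one component of $G\setminus\{u\}$. So the bundle containing $u$ contains at least two other whole components, which makes $u$ a cut vertex of it. Deleting any outer vertex therefore still leaves value $1>0$, and the envy survives. This is exactly Lemma~\ref{lem:bad_valuation} applied to the bad set $\{u\}$. Moreover, Theorem~\ref{thm:binary} together with Proposition~\ref{prop:bad_set_at_least_n/2} shows that a proper tree has negative instances for every $2\le k<\tau(T)+2$, where $\tau(T)+2\ge\lfloor n/2\rfloor$.

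Adding chords along the spine does not help, because leaves are intrinsically the wrong source of non-traceability. If two of three leaves share a neighbour, that neighbour again splits the graph into three components and the argument above applies. Otherwise, give value $1$ to the three distinct neighbours $c_1,c_2,c_3$ of the leaves. With two agents, each leaf either is a singleton bundle or shares a bundle with its neighbour. In the first case the other bundle has value $3$, so removing one vertex cannot remove the envy. In the second case some bundle contains two leaf--neighbour pairs. It has value at least $2$, and every value-$1$ vertex in it is a cut vertex, so it keeps its full value after deleting any outer vertex. The other bundle has value at most $1$.

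In particular, the step you single out as the main obstacle cannot be completed. There you need the most valuable \emph{leaf} to be a sufficient certificate for an end bundle. But the value may sit on the hub itself, which is a cut vertex of any bundle containing it, one of its pendant leaves, and some further vertex. No leximin or knife-moving argument can repair this, because no EF1\textsubscript{outer} allocation exists at all.

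What is needed is non-traceability coming from degree-$2$ vertices inside a well-connected graph. The paper's double sun glues three copies of $K_{2,2}$ onto disjoint pairs of a clique. The two tips over a common pair cannot both be interior to a Hamiltonian path, which forces three endpoints, yet the graph is $2$-connected. For $k\ge 4$ the clique lets a greedy rule keep every bundle connected. The rule always extends the least valuable bundle, seeds bundles with tips, and gives a singleton tip bundle its connector next. Otherwise it adds clique vertices (or a tip adjacent to the bundle), with a careful order on tips and connectors. The last vertex added to each bundle then serves as the outer-good certificate. The cases $k\in\{2,3\}$ are handled by applying the path theorem to an almost-Hamiltonian vertex ordering, followed by a small repair.
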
 

Our family of graphs is almost traceable, which ensures the existence of several orderings of the vertices of the graph such that almost all consecutive subsets of vertices are connected. We prove that, by wisely choosing this order depending on the valuation function and the number of agents, we can always find an \EFO allocation in polynomial time. 

Chen and Zwicker~\cite{Chen2024} introduced the \EFO spectrum of a graph, whose $k$-th entry is yes precisely when every instance with $k$ agents in the given valuation class admits an \EFO allocation. Chen and Zwicker conjectured that the following holds:

\begin{conjecture} \cite{Chen2024}\label{conj:spectrum}
    The \EFO spectrum of any connected graph must be composed of an initial yes string, followed by a (possibly empty) no string, followed by an infinite string of yes.
\end{conjecture} 

They give some examples to support this conjecture and define a notion of \emph{generalized cutset}, whose presence in a graph ensures the contiguity of the no string. 
In the particular case of traceable graphs, the result of Igarashi~\cite{Igarashi2023} can be restated as follows: the spectrum of traceable graphs is an infinite string of yes. 
The minimum $k$ for which there always exists an \EFO allocation for any number of agents $k' \ge k$ is called the \emph{threshold} of the spectrum\footnote{It is easy to see that this value exists and is always smaller than the number of vertices of the graph.}.
The second main result of this paper lends credence to this conjecture by proving that it holds for common binary additive valuation functions on trees:

\begin{itheorem}[Informal version of Theorem \ref{thm:binary}]
    Conjecture~\ref{conj:spectrum} holds for common binary additive valuation functions on trees.
\end{itheorem}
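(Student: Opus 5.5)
The plan is to replace the existential statement ``every common binary additive instance with $k$ agents admits an \EFO allocation'' by an explicit combinatorial condition on the pair $(T,k)$, and then check that the set of $k$ failing this condition is an interval of integers. That is exactly Conjecture~\ref{conj:spectrum} restricted to this valuation class on trees.

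\textbf{Step 1 (reformulation).} Fix a tree $T$ and a common binary additive valuation. Let $S$ be the set of valued vertices, $m=|S|$, and $q=\lfloor m/k\rfloor$. An allocation is \EFO if and only if, whenever $v(B_j)>v(B_i)$, we have $v(B_j)=v(B_i)+1$ and $B_j$ has a valued vertex that is a leaf of $T[B_j]$. Since zero-valued vertices are free, the allocation problem becomes cutting $T$ into $k$ connected parts whose $S$-counts lie in $\{q,q+1\}$, possibly with a few parts of count $q-1$ or $q+2$ when the outer-leaf condition allows it. Every part of count above the minimum must have a valued leaf. I would first settle the easy cases:
\begin{itemize}
\item $k\ge |V(T)|$, which gives yes;
\item $q=0$, which gives yes by greedy cutting along $S$ in a DFS order;
\end{itemize}
A counting argument should also show that for $k$ large relative to the maximum degree, a DFS ordering of $T$ splits into connected consecutive pieces. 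This gives the final infinite yes string.

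\textbf{Step 2 (obstruction).} I would show that a no-instance must contain a ``star obstruction''. This is a vertex $c$ such that $T-c$ has components $C_1,\dots,C_d$ whose valued counts $s_1,\dots,s_d$ cannot be grouped so that each part either stays inside one $C_\ell$ or contains $c$. The contradiction comes from the parts containing $c$: at most one bundle contains $c$, so that bundle must absorb all leftover valued vertices of several components. Its count then exceeds $q+1$, or its surplus valued vertices are all internal. I would prove this by taking an allocation that minimizes (max count $-$ min count) and doing local exchange arguments. If no such vertex $c$ exists, one can slide a boundary vertex between adjacent bundles and improve the allocation. The obstruction can thus be expressed as an arithmetic condition: there is no choice of integers $k_\ell$ with $\sum_\ell k_\ell=k-1$ such that all the remainders $s_\ell-k_\ell q$ fit, together with $c$, into one bundle of count at most $q+1$.

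\textbf{Step 3 (interval property).} Here $T$ is fixed and the witnessing valuation ranges over all subsets $S$. I would show the following: if $k_1<k_2<k_3$, the instance fails at $k_1$, and some $S$ witnesses failure at $k_3$, then some $S'$ witnesses failure at $k_2$. The idea is to keep the same obstruction vertex $c$ and rescale the counts $s_\ell$ by adding or removing valued vertices deep inside the components. One adjusts $q$ so that the remainders at $c$ still overflow. Note that failure at $k_1$ uses components of size at least about $q$, and these remain available at $k_2$.

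I expect Step~2 to be the main obstacle: the exact characterization, in particular the edge cases where bundles of count $q-1$ or $q+2$ are allowed. If a clean single-vertex obstruction turns out to be false, I would fall back on Chen and Zwicker's generalized cutsets. The goal would then be to show that, on trees with binary valuations, every no-instance yields a generalized cutset, which by their result already forces the no string to be contiguous.
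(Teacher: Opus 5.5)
There is a genuine gap, and it sits exactly where the difficulty of the statement lies. In Step~1 you declare the case $q=0$ (fewer valued vertices than agents) always positive, but this is false. Take the claw $K_{1,3}$ with only the center $c$ valued and $k=2$. The bundle avoiding $c$ contains at most one leaf, so the bundle containing $c$ contains two leaves, $c$ is not one of its outer vertices, and removing any outer vertex still leaves value $1>0$. Hence no \EFO allocation exists. In fact every negative instance the paper uses to fill the no-string has $q=0$: the indicator valuation of a bad set $X$ with $k=|X|+1$ agents has $m=k-1$ valued vertices. Two smaller points: consecutive segments of a DFS order need not induce connected subtrees, and with binary values \EFO forces all counts into $\{q,q+1\}$ exactly, so your $q-1$ and $q+2$ cases never arise.

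Step~2 therefore looks only at the regime $q\ge 1$, and its single-vertex ``star obstruction'' is not a correct characterization there either. The arithmetic condition assumes every part inside a component carries exactly $q$ valued vertices, whereas a part may carry $q+1$ when it has a valued leaf. It also ignores whether a component can actually be cut into suitable connected pieces. Concretely, take the path $v_1v_2v_3v_4v_5$ with all vertices valued, $k=3$ (so $q=1$), and $c=v_3$. For any $k_1+k_2=2$ the remainders sum to $m-1-(k-1)q=2>q$, so your condition flags an obstruction. Yet $\{v_1,v_2\},\{v_3,v_4\},\{v_5\}$ is an \EFO allocation. The real obstructions are Hall-type deficits counted over a whole vertex set, such as $X=\{b,e,f,g\}$ in Figure~\ref{fig:example_bad_set}, where five components share a boundary of only three vertices. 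This is a global count over several separators, not a condition at one vertex. Consequently Step~3 has no valid obstruction to ``rescale'', and you give no mechanism for it. The generalized-cutset fallback only relocates the problem, since you would still have to show that every negative instance produces one.

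The missing idea is the paper's purely graph-theoretic characterization through bad sets, which replaces both your obstruction and your rescaling. On the negative side, the indicator valuation of a bad set $X$ defeats $|X|+1$ agents (Lemma~\ref{lem:bad_valuation}). Lemma~\ref{lem:smaller_bad_set} shows that a bad set of size $s\ge 2$ can be shrunk to one of size $s-1$: remove a degree-$1$ or degree-$2$ vertex, or use that any nonempty set $Y$ of vertices of degree at least $3$ leaves at least $|Y|+2$ components. So every $k\in\{2,\dots,\tau(T)+1\}$ is a no, with the valuation rebuilt from scratch rather than rescaled. On the positive side, suppose $k\ge\tau(T)+2$.
\begin{itemize}
\item If $m\le k-1$, pad $S$ to a $(k-1)$-set, which cannot be bad, and allocate using a good component and its matching function.
\item If $m\ge k$, use $\tau(T)+2\ge\lceil n/2\rceil$ (with subdivided claws whose branches all have even size treated separately) to partition $T$ into at most $k$ pieces of size $1$ or $2$. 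Then merge zero-valued pieces into neighbours and split to exactly $k$ valid pieces.
\end{itemize}
This pins the no-set to exactly $\{2,\dots,\tau(T)+1\}$. Both directions are needed: contiguity of the no-string cannot follow from constructing negative instances alone.
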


Our result provides, to the best of our knowledge, the first general step toward this conjecture since the work of \cite{Igarashi2023}. Rather than merely establishing the conjecture for common binary additive valuations, our proof gives a precise characterization of certain subsets of vertices, called bad sets, whose presence ensures the existence of a no instance. A bad set is a set such that the number of components incident to it is at least two more than the size of its border. While it is fairly simple to construct a no instance of the desired size if there is a bad set, the hard part of the proof consists in proving that, if an instance on the tree admits no \EFO allocation, then we can find a bad set of a prescribed size --- ensuring that the lower bound matches the upper bound. Moreover, we complete this result with a polynomial-time dynamic programming algorithm computing the maximum size of a bad set, that is the threshold of the spectrum, for common binary valuations on trees. Bad sets are defined in a purely graph-theoretic manner, highlighting that tools from structural graph theory can play a central role in the study of \EFO allocations. We believe that bad sets constitute a natural tool for tackling this conjecture and may be instrumental in determining the spectrum threshold, well beyond the setting of binary valuations and trees.

The only tree which is traceable is a path. 
As soon as a tree contains a branching node, one can observe that, even for binary additive valuation functions, the \EFO guarantee already fails for three agents. More generally, as the number of leaves grows, one can construct bad instances, that is, instances that do not admit any \EFO allocation even for many agents. This naturally raises the question of whether this intuition can be formalized. We show that it is indeed the case, as the following result demonstrates:

\begin{itheorem}[Informal version of Theorem \ref{thm:badset_leaves}]
    There exists a constant $C$ such that, for every $n$-vertex tree $T$ with $\ell$ leaves, the threshold of the spectrum is at least $(1-C \cdot \frac 1\ell)n$ even for common binary additive valuation functions. 
    
    In other words, there always exist a valuation function with no \EFO allocation for $(1 - \frac C\ell)n-1$ agents.
\end{itheorem}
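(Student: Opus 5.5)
The plan is to reduce the statement to a purely graph-theoretic one about bad sets and then prove that one by an averaging argument. By the characterization behind Theorem~\ref{thm:binary}, a bad set $S$ (the number of components of $T-S$ adjacent to $S$ is at least $|\partial S|+2$, where $\partial S$ is the border of $S$) yields a common binary additive instance with no \EFO allocation for about $|S|$ agents. I would put value $1$ on $S$ and value $0$ elsewhere, so that the agents are forced to be too numerous for the pieces outside $S$ to be absorbed. It therefore suffices to show that every $n$-vertex tree with $\ell$ leaves contains a bad set $S$ with $|T\setminus S| \le Cn/\ell$. The $-1$ in the statement accounts for the exact offset between the size of a bad set and the number of agents in the resulting no instance.

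To find such an $S$, I would restrict to bad sets of a special shape. $S$ is a connected subtree, and its complement is a union of \emph{hanging pieces}, that is, subtrees attached to $S$ by a single edge. If a border vertex $u$ carries $c_u$ hanging pieces, the badness condition becomes $\sum_{u\in\partial S}(c_u-1)\ge 2$. The simplest witness is a vertex $v$ of degree $d\ge 3$, with $S$ equal to $v$ together with all but the three smallest branches at $v$; this is bad with border $\{v\}$. More generally, I would use two border vertices, each carrying two hanging pieces. I would work in the reduced tree $T'$ obtained by suppressing degree-$2$ vertices. Its edges correspond to the at most $2\ell-3$ maximal paths (legs) of $T$, whose total length is at most $n$, and exactly $\ell$ of these legs are pendant legs.

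The key step is an averaging lemma. Among the candidate configurations, namely pairs of pendant legs (or, more generally, small hanging subtrees of $T'$) attached at a common branch vertex, some configuration with total size $O(n/\ell)$ has gain at least $2$. The argument I would try first is a charging scheme. Root $T'$ at a centroid, and assign to each branch vertex $v$ the cost of detaching its two cheapest child subtrees. Let $M$ be the set of branch vertices all of whose children in $T'$ are leaves. Every vertex of $M$ yields a candidate, and these candidates use disjoint vertex sets, so if $|M|=\Omega(\ell)$ the cheapest two of them cost $O(n/\ell)$ and together have gain $2$. If $|M|$ is small, then $T'$ has a long path-like or caterpillar-like skeleton with $\Omega(\ell)$ pendant legs attached along it. In that case I would detach a short spine segment together with the pendant legs hanging from it. A segment containing $k$ attachment points has border $2$ but removes about $k$ components, so choosing $k=4$ and the cheapest of $\Omega(\ell)$ disjoint segments again costs $O(n/\ell)$.

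The main obstacle is making this dichotomy rigorous. Trees interpolate between the ``many cherries'' regime and the ``comb'' regime, and the candidate configurations must be chosen pairwise vertex-disjoint and numbering $\Omega(\ell)$ in every case, so that averaging against the total length $n$ gives the $C/\ell$ bound. I expect to handle this by a greedy decomposition of $T'$ into $\Omega(\ell)$ disjoint ``gadgets''. Each gadget has constant size in $T'$ and contains either a branch vertex with two pendant legs, or a spine segment with four attached legs and border at most $2$. The constant $C$ then comes from the gadget size.
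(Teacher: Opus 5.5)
Your reduction to bad sets (value $1$ on $S$, value $0$ elsewhere, as in Lemma~\ref{lem:bad_valuation}) is right. So is your handling of branch vertices carrying two or three pendant legs, which matches the paper's $3$-terminal vertices and pairs of $2$-terminal vertices (Lemmas~\ref{lem:small3branching} and~\ref{lem:small2branching}). The gap is the comb regime, and it comes from restricting to \emph{connected} $S$. In a tree, if $S$ is connected then every component of $T\setminus S$ is adjacent to exactly one vertex of $S$. So a spine segment ``with border $2$'' can never be a complement component, and detaching a segment together with its legs produces a single component anyway.

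No connected bad set works there. Take a spine $x_1,\dots,x_s$ with one leaf on each $x_i$ and a path of $m\approx s$ vertices hanging from $x_1$ and from $x_s$, so $\ell\approx n/4$. All degrees are at most $3$, so a connected bad set with at least three vertices needs two border vertices with $c_u=2$. Each such vertex is some $x_i$ with exactly one neighbor in $S$, and one of its hanging pieces contains a long path. Hence $|T\setminus S|\ge m\approx n/4$, while the theorem requires $|T\setminus S|=O(1)$ for this tree.

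The missing idea is a \emph{disconnected} gadget. Take five consecutive exactly $1$-terminal degree-$3$ spine vertices $x_1,\dots,x_5$. Put everything outside the gadget into $S$, together with $x_1,x_3,x_5$, and leave every other vertex of the gadget out. The complement then has five components: the legs at $x_1,x_3,x_5$, and the components containing $x_2$ and $x_4$. The border is only $\{x_1,x_3,x_5\}$. This is the paper's $5$-bad caterpillar (Lemma~\ref{lem:small_badcaterpillar}). Your choice of four is too small: when consecutive spine vertices are adjacent in $T$, no $S$ containing everything outside a segment with only four such vertices is bad.

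With this gadget, your averaging plan goes through as in the paper. If a constant fraction of the branch nodes are $2$-terminal, pair them. Otherwise, consider the tree on the branch nodes of $T$, where two nodes are adjacent when joined by a path of degree-$2$ vertices. Its leaves are all $2$-terminal, so it has few leaves and few branch nodes. Most branch nodes of $T$ are then exactly $1$-terminal of degree $3$ and lie on long paths of this tree, which you cut into disjoint slices of five.

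Finally, splitting on $|M|$ is the wrong dichotomy. A subdivided star with $\ell$ long legs has $|M|=1$ and no comb, though your three-branch witness handles it at cost at most $3n/\ell$. The case split should count pendant legs per branch node, as in the paper's Case~1, rather than vertices of $M$.
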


The proof relies on the fact that every tree with $\ell$ leaves contains at least one of three structures of small linear size, which ensure the existence of a large bad set, allowing us to conclude by our above results.

This result ensures that the right parameter to evaluate the threshold, in the case of trees, is the number of leaves, rather than the number of branching nodes. More broadly, it suggests that, on general graphs, the number of connected components attached to a small separator might provide an approximation of the threshold of the spectrum.

Finally, we complete our contributions with algorithmic results which have been underexplored so far. Igarashi \cite{Igarashi2023} left open the complexity question of deciding the existence of an allocation even for "\emph{simple classes of valuation functions, e.g. binary additive valuations}". We consider the \textsc{Outer Connected Fair Division of Graphs} problem which takes as input a graph, a set of agents and their respective valuation functions, and outputs yes if and only if there exists an \EFO allocation. We answer Igarashi's question by proving that the following holds:

\begin{itheorem}[Informal version of Theorem \ref{thm:np_result}]
    The \textsc{Outer Connected Fair Division of Graphs} problem is NP-complete for common additive valuation functions even restricted to very structured and simple classes of graphs such as: variants of stars, graphs with bounded bandwidth\footnote{Bandwidth is, in particular, a very strong restriction of treewidth. A formal definition will be given later on.}, planar graphs.
\end{itheorem}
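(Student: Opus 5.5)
We prove the two directions of NP-completeness separately.

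\textbf{Membership in NP.} The certificate is an allocation, i.e.\ a partition of $V(G)$ into $k$ bundles. Verification is polynomial:
\begin{itemize}
\item each bundle must induce a connected subgraph;
\item for every ordered pair of agents $(i,j)$ with $v(B_i) < v(B_j)$, we must find an outer vertex $x$ of $B_j$ with $v(B_i)\ge v(B_j\setminus\{x\})$.
\end{itemize}
Here $x$ is outer if $B_j\setminus\{x\}$ is still connected, which is equivalent to $x$ not being a cut vertex of $G[B_j]$ (or $|B_j|\le 1$). Cut vertices are computed in linear time. Since the valuation is common and additive, it suffices to compare values, and the best choice of $x$ is the outer vertex of maximum value.

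\textbf{Hardness.} I would reduce from \textsc{3-Partition}, which is strongly NP-hard, so all item values can be encoded in unary and stay polynomial. Take an instance $a_1,\dots,a_{3m}$ with target $B$, $B/4<a_i<B/2$ and $\sum a_i=mB$. The plan is to build a graph whose vertices fall into two groups.
\begin{itemize}
\item \emph{Number vertices} $u_1,\dots,u_{3m}$ with values $a_i$, rescaled by a large factor $M$.
\item \emph{Anchor vertices} $h_1,\dots,h_m$, each of a large value $H$, together with a bounded number of zero-value or small connector vertices.
\end{itemize}
The number of agents is chosen so that, in any \EFO allocation, the following hold.
\begin{itemize}
\item Each anchor lies in a distinct bundle.
\item The remaining agents receive singleton ``reference'' bundles whose value fixes the \EFO threshold. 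These reference bundles are pendant heavy vertices with value chosen close to $HM + BM$.
\item Envy toward an anchor bundle $\{h_j\}\cup S_j$ can only be eliminated by removing a single outer number vertex. Because the values are rescaled by $M$, the resulting inequalities force $\sum_{u\in S_j} a_u \le B$ for every $j$. Since $\sum_i a_i = mB$, every $S_j$ must sum to exactly $B$, and the bounds $B/4<a_i<B/2$ then force $|S_j|=3$.
\end{itemize}
For the star variant, the number vertices are attached as leaves (or as short pendant paths) to a hub adjacent to all anchors. One then needs a small gadget ensuring that whoever holds the hub does not swallow too much. For the bounded-bandwidth and planar variants, I would reuse the same value scheme on a layout where anchors and numbers alternate along a path-like structure with bounded-length chords. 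This keeps the bandwidth constant while still letting any triple join any anchor. That flexibility is the delicate part, and may require duplicating number vertices as a ladder.

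\textbf{Correctness.} The forward direction is direct: from a 3-partition $S_1,\dots,S_m$, give anchor $h_j$ together with the triple $S_j$ and the connectors on its route, and give singletons to the reference agents. Then check \EFO pairwise. Each anchor bundle has value exactly $HM+BM$, and each reference agent's envy disappears after removing one outer number vertex.

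The backward direction, together with the choice of the constants $H$, $M$, the reference values and the agent count, is the step I expect to be the main obstacle. Following a counting argument, I must show:
\begin{itemize}
\item no bundle can contain two anchors, since the singleton agents would then envy by more than one item;
\item no bundle can consist only of number vertices;
\item connector vertices cannot be used to circumvent the outer-vertex restriction.
\end{itemize}
The last point is where the connectivity requirement on outer goods must be exploited. I would first try to make connectors zero-valued and to show by a swapping argument that they can be reassigned without loss of generality. Only then would I derive the exact-sum constraint $\sum_{u\in S_j} a_u = B$ for every $j$.
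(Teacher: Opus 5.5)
Your NP-membership argument is correct and matches the paper's. The hardness part, however, has a genuine gap exactly at the step you flag, and tuning $H$, $M$ and the reference values cannot close it.

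\textbf{The value scheme cannot force exact sums.} As long as number vertices can be outer in their bundles, \EFO controls a bundle only up to one number item. A reference singleton yields only $R \ge v(h_j) + M\bigl(\sum_{u\in S_j} a_u - a_x\bigr)$ for some outer $x\in S_j$. Singleton bundles themselves are never envied up to one good, so they add no lower bounds. Two anchor agents only get $\sum_{u\in S_{j'}} a_u \ge \sum_{u\in S_j} a_u - a_x$. Because every number carries the same factor $M$, these inequalities bound $\sum_{u\in S_j} a_u - a_x$, never $\sum_{u\in S_j} a_u$ itself. A threshold accepting all triples of sum $B$ must tolerate a sum-minus-largest of up to $2B/3$ (e.g.\ $(B/3,B/3,B/3)$). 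It then also accepts triples such as $(B/4+1,B/4+1,B/2-1)$ of sum $B+1$, so in general a no-instance admitting a near-partition with triple sums $B\pm 1$ maps to a yes-instance.

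\textbf{The missing idea} is to make every vertex that can be outer worth $0$. Each number vertex (or chain) carries a zero-valued pendant leaf, so it is a cut vertex of any bundle containing it, and the centers are zero-valued or cut vertices. Then \EFO coincides with exact envy-freeness, which forces equal sums directly; no reference agents or heavy anchors are needed. Accordingly the paper reduces from \textsc{Partition} with just two agents. For common additive valuations the numbers are written in binary as vertex values, so weak NP-hardness suffices. \textsc{3-Partition} (appendix) is needed only for binary valuations, where numbers become chains.

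\textbf{Your layouts also fail on connectivity.} If number vertices hang as leaves or pendant paths off a single hub, any bundle meeting two different pendant paths must contain the hub, so at most one anchor can collect a triple. This is why the paper blows up the center: two adjacent zero-valued centers, each adjacent to every chain, for \textsc{Partition}; a clique matched to the chains for \textsc{3-Partition}. That is what ``variants of stars'' refers to.

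For bounded bandwidth the obstruction is structural. In an ordering of bandwidth $b$, a connected bundle with vertices on both sides of a cut contains an edge crossing it, whose left endpoint lies among the $b$ positions just before the cut. So at most $b$ disjoint bundles cross any cut. A partition into triples straddling a balanced cut would need about $m$ crossing bundles, so the flexibility you ask for (any triple can join any anchor) is impossible once $m$ is unbounded.

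The paper sidesteps this with two agents. Vertices $v_i$ carry the values, each with a zero pendant leaf, and two zero-valued railings (paths) have their $i$-th vertices both adjacent to $v_i$. Each bundle collects any subset through its own railing. The graph has bandwidth $4$ and is planar, which also settles the planar case.
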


\subsection{Related Work}

Fair division has been studied in a wide range of settings, spanning both indivisible and divisible resources and encompassing many fairness notions, with envy-freeness playing a particularly prominent role. 
For general background and recent progress on fair division of indivisible goods, we refer to the surveys and overviews~\cite{Thomson2016,Bouveret2015,Markakis2017,Moulin2019,Amanatidis2023}. 
A broad perspective on \emph{constrained} fair division is given by Suksompong~\cite{Suksompong2021}, who surveys numerous feasibility constraints and highlights connectivity as one of the most frequently studied. 
Related constrained variants include contiguity on a line for indivisible items~\cite{Suksompong2019}, as well as extensions to chores and mixed goods/chores models~\cite{Aziz2019,bouveret2019,Hohne2021,segalhalevi2018}. 
In the continuous setting, classic cake-cutting studies dividing an interval~\cite{Stromquist1980,brams1996,segalhalevi2018}, and more recent work considers connectivity constraints on continuous graphs (``graphical cakes'')~\cite{Bei2021,Igarashi2024}.

The graph model central to this paper, where \emph{items are vertices} of a graph and each agent must receive a connected bundle, was introduced by Bouveret et al.~\cite{Bouveret2017}. 
They focus on additive valuations and study several fairness notions, including envy-freeness and maximin share (MMS), and establish strong hardness results for deciding the existence of complete envy-free allocations even on restricted graph classes. 
Further computational results for MMS on graphs are given by Greco and Scarcello~\cite{Greco2020}, while Bei et al.~\cite{Bei2022} study the \emph{price of connectivity}, quantifying how requiring connected bundles impacts achievable fairness guarantees and identifying optimal relaxations in several settings.

A key challenge under connectivity constraints is that the standard EF1 relaxation may be ill-suited: removing an arbitrary item to eliminate envy can disconnect a bundle and thereby violate feasibility. 
Bil\`{o} et al.~\cite{Bilo2022} address this by introducing envy-freeness up to one \emph{outer} good (\EFO), where the removed vertex must be outer (its removal preserves connectivity). 
They provide structural characterizations for two agents and prove existence guarantees on paths for small numbers of agents using cut-and-choose ideas together with Sperner-type topological arguments. 
Chen and Zwicker~\cite{Chen2024} develop additional structural tools (via cutsets) for understanding EF1-type notions on graphs and motivate questions about extending universal guarantees beyond traceable graphs.

Algorithmic and complexity aspects of connected fair division have been further explored by Deligkas et al.~\cite{Deligkas2021}, who study connected EF1 without the outer restriction (CEF1). 
They prove NP-hardness on restricted graph classes and give XP-time algorithms under structural parameterizations such as clique-width and bounded numbers of agents, using dynamic programming over graph decompositions. 
A complementary motivation for imposing structure comes from limited value access: Oh et al.~\cite{Oh2019} study query complexity for computing approximately fair allocations, providing additional evidence that structured settings can lead to more efficient procedures.

More broadly, graphs are an increasingly common structure in multi-agent systems. 
Beyond item-connectivity constraints, graph structure also enters fair allocation through agent-side constraints, such as distributed allocation models where agents interact or trade only with neighbors~\cite{Chevaleyre2017}, and social-comparison models where fairness is evaluated locally with respect to neighbors~\cite{Abebe2017,Eiben2020}. 
A different but related edge-based viewpoint is envy-free orientations, where agents are vertices and value only incident items; Li et al.~\cite{Li2025} study minimum-subsidy envy-free orientations and tight worst-case subsidy bounds in that framework.

\subsection{Paper Outline}
After introducing preliminary definitions and notations in Section~\ref{sec:prelim}, we present the non-traceable double sun graphs, for which we prove universal guarantees in Section~\ref{sec:double_suns}. Next, we give our main characterization for binary valuations on trees in Section~\ref{sec:trees}. We prove that this threshold can moreover be computed in polynomial time in Section~\ref{sec:polytime}. In Section~\ref{sec:manyleaves}, we show that the threshold increases when the number of leaves increases. Finally, in Section~\ref{sec:complexity} we show that the problem is NP-hard even on restricted classes of graphs.

\section{Preliminaries}
\label{sec:prelim}

An instance of the fair division problem consists of a set $N = [n] = \{1, \dots, n\}$ of \emph{agents} with preferences over $m$ \emph{items}, which we associate with the vertices of a graph $G = (V,E)$. Let $X \subseteq V$ be a subset of the vertices. We denote by $G[X]$ the subgraph of $G$ induced by~$X$, obtained by deleting the vertices not in~$X$ and all edges incident to them. 
%We denote by $G \setminus X$ the graph $G[V \setminus X]$. 
We say that a subset $X$ is connected whenever $G[X]$ is connected. We denote by $\mathcal{C}$ the collection of connected subsets of $V$, which we also refer to as \emph{bundles}. The preferences of each agent $i \in N$ are given by a \emph{valuation function} $v_i: \mathcal{C} \rightarrow \mathbb{R}_{\geq 0}$, which satisfies $v_i(\emptyset) = 0$.
A valuation function $v_i$ is called \emph{monotone} if for all bundles $X,Y \in \mathcal{C}$ with $X \subseteq Y$, we have $v_i(X) \leq v_i(Y)$. It is called \emph{additive} if it satisfies $v_i(X) = \sum_{x \in X} v_i(\{x\})$ for each bundle $X \in \mathcal{C}$ and each agent $i \in N$. In particular, additive valuations are monotone. We say that valuations are \emph{common} if $v_i = v_j$ for all agents $i,j \in N$, and \emph{arbitrary} otherwise. The special class of common additive valuations is of central importance in the fair division literature. 

The problem consists in fairly dividing the items among the agents. A (connected) \emph{allocation} $A = \{A_i\}_{i\in N}$ is a partition of $V$ into $n$ parts, which assigns each agent $i$ a connected bundle $A_i$. An allocation $A$ is \emph{envy-free} (EF) if $v_i(A_i) \geq v_i(A_j)$ for all $i,j \in N$, which states that every agent likes their own bundle at least as much as any other allocated bundle. As such allocations might not exist, Budish \cite{Budish2011} introduced the relaxed notion of \emph{envy-freeness up to one good} (EF1), which has been adapted as follows to take into consideration the connectivity constraints:

\begin{definition}[\cite{Bilo2022}]
    An allocation $A$ is \emph{envy-free up to one outer good} \mbox{(\EFO\!)} if for any agents $i,j \in N$, either $A_j = \emptyset$ or there exists an item $x \in A_j$ such that $A_j \setminus \{x\}$ is connected and $v_i(A_i) \geq v_i(A_j \setminus \{x\})$.
\end{definition}

We end this section with some graph-theoretic definitions. Let $T = (V,E)$ be a tree. We say that~$T$ is a~\emph{proper tree} if~$T$ is not a path. A \emph{branching node} in $T$ is a node of degree at least~3. In particular, every proper tree has at least one branching node.
Let $X \subseteq V$, and $u \in X$. Let $T_X$ be the minimal subtree of~$T$ that contains all the vertices of $X$. 
We say that $u$ is \emph{external} in $X$ if $u$ is a leaf of~$T_X$, or equivalently, if at most one connected component of~$T \setminus \{u\}$ contains vertices of $X$.
We say that $u$ is an \emph{external branching node} if it is a branching node and is external with respect to the set of branching nodes. Note that every branching node $u$ is an external branching node if and only if at most one connected component of $T \setminus \{u\}$ is a proper tree.

\section{Universal Guarantee for Double Suns}
\label{sec:double_suns}
	
	Bil{\`o} et al. proved the following result (which was extended to arbitrary monotone functions by Igarashi~\cite{Igarashi2023}):	
	\begin{theorem}[\cite{Bilo2022}]
		\label{thm:ef1path}
		Let~$P$ be an $n$-vertex path and $k \in \{1, \ldots, n\}$. For any common monotone valuation, there exists an \EFO allocation of the vertices of~$P$ to $k$ agents.
	\end{theorem}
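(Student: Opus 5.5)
We prove the statement for a common monotone valuation $v$ on the path $P = p_1 p_2 \cdots p_n$. With common valuations, EF1\textsubscript{outer} only needs checking between bundles. Every bundle is an interval, so its outer goods are exactly its two endpoints. The condition therefore reads: for all bundles $A_i, A_j$, with $A_j$ nonempty, $v(A_i) \geq \min\{v(A_j \setminus \{\text{first}\}), v(A_j \setminus \{\text{last}\})\}$. Define, for a nonempty interval $I$, $v^-(I) = \min$ of the values of $I$ with one endpoint removed. Monotonicity gives $v^-(I) \le v(I)$, and $v^-$ of a single vertex is $v(\emptyset)=0$. It then suffices to find a partition of $P$ into $k$ consecutive intervals $I_1,\dots,I_k$, some possibly empty, such that
\[
\min_i v(I_i) \;\geq\; \max_j v^-(I_j).
\]

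\textbf{Construction.} Take a partition minimizing $\max_j v^-(I_j)$ lexicographically, or more concretely proceed by a discrete descent. Start from any partition and let $j$ be an index violating the inequality. Then some bundle $I_i$ satisfies $v(I_i) < v^-(I_j)$. Shift one endpoint vertex of $I_j$ toward $I_i$: the bundles between them each pass one vertex along, like a conveyor, so each keeps its size while moving one step toward $I_i$, and $I_i$ grows by one vertex. The potential I would use is the sorted vector of the $k$ interval lengths, or equivalently the multiset of breakpoint positions with a lexicographic order. I would prove that each move strictly improves a suitable potential, for instance the lexicographically sorted vector of $v^-$ values, largest first, which must strictly decrease. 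The difficulty is that intermediate bundles change content, so their values can increase.

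\textbf{Cleaner route (the one I would actually try).} Induct on $k$ using a threshold argument. Let $\theta$ be the smallest value such that the greedy left-to-right procedure succeeds. That procedure gives each of the first $k-1$ agents the shortest prefix-continuation $I$ with $v(I) \ge \theta$, and the remaining interval must satisfy $v \geq \theta$ as well. Take $\theta^*$ to be the maximum $\theta$ for which greedy still yields $k$ bundles, each of value $\geq \theta$. The set of achievable thresholds is finite because there are finitely many intervals, so the maximum exists. In the resulting partition every bundle $I_j$ with $j<k$ is minimal with $v(I_j)\ge\theta^*$, so removing its last vertex gives $v^- (I_j)< \theta^* \le \min_i v(I_i)$. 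The only bundle that can fail is the last one, $I_k$. Maximality of $\theta^*$ should imply $v^-(I_k) \le \min_i v(I_i)$: if $v(I_k\setminus\{\text{first}\}) > \theta^*$ with margin, then rerunning greedy at the next achievable threshold would still succeed, a contradiction. This last step is the main obstacle, because raising $\theta$ changes all the cut points simultaneously, so showing the last bundle still reaches the new threshold needs a monotonicity lemma: greedy cut points are nondecreasing in $\theta$. If that fails, I would fall back to the discrete descent above, processing the last bundle by moving its first vertex leftward and re-greedying, with termination guaranteed by the finitely many candidate threshold values.
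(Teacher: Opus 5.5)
The paper does not prove this statement; it imports it from Bil\`o et al.\ (and Igarashi's extension), so your argument has to stand on its own. It has a genuine gap at exactly the step you flag. Your reduction to $\min_i v(I_i)\ge\max_j v^-(I_j)$ is correct, and so is your treatment of the first $k-1$ greedy bundles. However, the claim that maximality of $\theta^*$ controls $v^-(I_k)$ is false.

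Take the path $p_1p_2p_3p_4$ with additive values $1,0,5,0$ and $k=2$.
\begin{itemize}
\item For $\theta\in(0,1]$, your greedy returns $\{p_1\},\{p_2,p_3,p_4\}$ and succeeds.
\item For every $\theta\in(1,6]$, the first bundle is $\{p_1,p_2,p_3\}$ and leaves $\{p_4\}$, of value $0$. For $\theta>6$, no prefix qualifies.
\end{itemize}
Hence $\theta^*=1$, and the output is not EF1\textsubscript{outer}: both outer goods of the last bundle are worthless, so $v^-(\{p_2,p_3,p_4\})=5>1$.

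The monotonicity lemma you want (cut points nondecreasing in $\theta$) is true, but it does not help. It says the last bundle shrinks, not by how much, and here a single threshold step carries the cut past both the worthless $p_2$ and the valuable $p_3$. In fact the only EF1\textsubscript{outer} partition of this instance is $\{p_1,p_2\}\mid\{p_3,p_4\}$. Its first bundle is not the shortest prefix reaching any threshold, so no choice of $\theta$ within your greedy family can work. The missing ingredient is the freedom to let a bundle keep absorbing vertices after it has reached the threshold (here, pushing $p_2$ to the left), together with an argument that controls the knock-on effects when $k\ge 3$.

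Your fallbacks do not supply this ingredient. The descent route has no potential function; you note yourself that intermediate bundles can gain value, so neither termination nor correctness is established. Moving the first vertex of $I_k$ leftward does repair the example above. But for $k\ge 3$, enlarging $I_{k-1}$ can push $v^-(I_{k-1})$ above the minimum bundle value. You also do not say what ``re-greedying'' does afterwards, or why the process ends in an EF1\textsubscript{outer} allocation. Supplying a cut-point rule with that guarantee is precisely the content of the cited theorem.
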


    Theorem~\ref{thm:ef1path} implies in particular that every graph that has a Hamiltonian path satisfies the following universal guarantee: for every number of agents, and any monotone valuation, there exists an \EFO allocation.
    We will prove that graphs that satisfy 
    this universal guarantee do not necessarily have a Hamiltonian path, by providing an infinite family of counterexamples.  This positively answers a question of Chen and Zwicker~\cite{Chen2024}, in the case of common valuations. 
	
    Let $n \geqslant 12$. The \emph{double sun of size~$n$} is a clique of size~$n-6$ on which we glued three $K_{2,2}$ on three disjoint pairs of vertices, for a total of $n$ vertices, as represented on Figure~\ref{fig:double_sun}. In this section, our goal is to show that a double sun does not have a Hamiltonian path, but satisfies the previous universal guarantee.
	
	Let us first introduce some notation. For a double sun of size~$n$, we will denote by $(a_1, a_2)$, $(b_1, b_2)$, $(c_1, c_2)$ the three pairs of vertices on which the copies of $K_{2,2}$ are attached. The vertices in these pairs are called the~\emph{connectors}. We denote by $a_1', a_2', b_1', b_2', c_1', c_2'$ the six additional vertices, that are called the~\emph{tips}. For each $x \in \{a,b,c\}$ and $i \in \{1,2\}$, $x_i'$ is a neighbor of both $x_1$ and~$x_2$. We also say that $x_i'$ is the tip of $x_i$, and that $x_i$ is the connector of~$x_i'$. The set $\{x_i,x_i'\}$ is called a~\emph{ray}, and the set $\{x_1, x_2, x_i'\}$ is called a \textit{beam}. 

	Finally, the vertices that belong to the clique of size~$n-6$ and which are not connectors are called the~\emph{core vertices}.
	The double sun of size~$n$ has thus $n-12$ core vertices, $6$ connectors and~$6$ tips.
	See Figure~\ref{fig:double_sun} for an illustration of a double sun. We begin by proving the non-existence of a Hamiltonian path.
	
	\begin{figure}[h!]
	\centering
	\begin{tikzpicture}[x=0.75pt,y=0.75pt,yscale=-1,xscale=1]
		%uncomment if require: \path (0,455); %set diagram left start at 0, and has height of 455
		
		%Shape: Ellipse [id:dp21929241676847366] 
		\draw  [draw opacity=0][fill={rgb, 255:red, 155; green, 155; blue, 155 }  ,fill opacity=0.6 ] (242,149.04) .. controls (242,120.74) and (265.36,97.81) .. (294.17,97.81) .. controls (322.99,97.81) and (346.35,120.74) .. (346.35,149.04) .. controls (346.35,177.33) and (322.99,200.27) .. (294.17,200.27) .. controls (265.36,200.27) and (242,177.33) .. (242,149.04) -- cycle ;
		%Straight Lines [id:da22316896801393737] 
		\draw    (274.66,228.51) -- (274.66,185.57) ;
		%Straight Lines [id:da07948890162351707] 
		\draw    (274.66,185.57) -- (318.25,228.51) ;
		%Straight Lines [id:da17260610539590215] 
		\draw    (274.66,228.51) -- (318.25,185.57) ;
		%Straight Lines [id:da7242001203628535] 
		\draw    (318.25,228.51) -- (318.25,185.57) ;
		%Shape: Circle [id:dp021226520948690042] 
		\draw  [fill={rgb, 255:red, 255; green, 255; blue, 255 }  ,fill opacity=1 ] (272.04,228.51) .. controls (272.04,227.06) and (273.22,225.89) .. (274.66,225.89) .. controls (276.1,225.89) and (277.28,227.06) .. (277.28,228.51) .. controls (277.28,229.95) and (276.1,231.12) .. (274.66,231.12) .. controls (273.22,231.12) and (272.04,229.95) .. (272.04,228.51) -- cycle ;
		%Shape: Circle [id:dp9846954405471801] 
		\draw  [fill={rgb, 255:red, 255; green, 255; blue, 255 }  ,fill opacity=1 ] (272.04,185.57) .. controls (272.04,184.13) and (273.22,182.96) .. (274.66,182.96) .. controls (276.1,182.96) and (277.28,184.13) .. (277.28,185.57) .. controls (277.28,187.02) and (276.1,188.19) .. (274.66,188.19) .. controls (273.22,188.19) and (272.04,187.02) .. (272.04,185.57) -- cycle ;
		%Shape: Ellipse [id:dp6547704432437755] 
		\draw  [fill={rgb, 255:red, 255; green, 255; blue, 255 }  ,fill opacity=1 ] (315.64,228.51) .. controls (315.64,227.06) and (316.81,225.89) .. (318.25,225.89) .. controls (319.7,225.89) and (320.87,227.06) .. (320.87,228.51) .. controls (320.87,229.95) and (319.7,231.12) .. (318.25,231.12) .. controls (316.81,231.12) and (315.64,229.95) .. (315.64,228.51) -- cycle ;
		%Straight Lines [id:da7727743008973983] 
		\draw    (234.76,88.94) -- (271.94,110.41) ;
		%Straight Lines [id:da7388727636142128] 
		\draw    (271.94,110.41) -- (212.96,126.69) ;
		%Straight Lines [id:da28053158189719674] 
		\draw    (234.76,88.94) -- (250.15,148.16) ;
		%Straight Lines [id:da001982176862341145] 
		\draw    (212.96,126.69) -- (250.15,148.16) ;
		%Shape: Ellipse [id:dp2600960007786549] 
		\draw  [fill={rgb, 255:red, 255; green, 255; blue, 255 }  ,fill opacity=1 ] (236.07,86.67) .. controls (237.32,87.4) and (237.75,89) .. (237.02,90.25) .. controls (236.3,91.5) and (234.7,91.93) .. (233.45,91.2) .. controls (232.2,90.48) and (231.77,88.88) .. (232.49,87.63) .. controls (233.22,86.38) and (234.82,85.95) .. (236.07,86.67) -- cycle ;
		%Shape: Ellipse [id:dp5479955038089053] 
		\draw  [fill={rgb, 255:red, 255; green, 255; blue, 255 }  ,fill opacity=1 ] (251.46,145.89) .. controls (252.71,146.62) and (253.14,148.22) .. (252.41,149.47) .. controls (251.69,150.72) and (250.09,151.15) .. (248.84,150.42) .. controls (247.59,149.7) and (247.16,148.1) .. (247.88,146.85) .. controls (248.61,145.6) and (250.21,145.17) .. (251.46,145.89) -- cycle ;
		%Shape: Ellipse [id:dp689720376819663] 
		\draw  [fill={rgb, 255:red, 255; green, 255; blue, 255 }  ,fill opacity=1 ] (273.25,108.14) .. controls (274.5,108.86) and (274.93,110.46) .. (274.21,111.72) .. controls (273.49,112.97) and (271.89,113.39) .. (270.64,112.67) .. controls (269.39,111.95) and (268.96,110.35) .. (269.68,109.1) .. controls (270.4,107.85) and (272,107.42) .. (273.25,108.14) -- cycle ;
		%Shape: Ellipse [id:dp5626249377851468] 
		\draw  [fill={rgb, 255:red, 255; green, 255; blue, 255 }  ,fill opacity=1 ] (214.27,124.42) .. controls (215.52,125.15) and (215.95,126.75) .. (215.23,128) .. controls (214.51,129.25) and (212.91,129.68) .. (211.66,128.96) .. controls (210.4,128.23) and (209.98,126.63) .. (210.7,125.38) .. controls (211.42,124.13) and (213.02,123.7) .. (214.27,124.42) -- cycle ;
		%Straight Lines [id:da37624329187036165] 
		\draw    (374.38,126.69) -- (337.2,148.16) ;
		%Straight Lines [id:da8863145353080276] 
		\draw    (337.2,148.16) -- (352.59,88.94) ;
		%Straight Lines [id:da02594990804790842] 
		\draw    (374.38,126.69) -- (315.4,110.41) ;
		%Straight Lines [id:da9811309136287922] 
		\draw    (352.59,88.94) -- (315.4,110.41) ;
		%Shape: Ellipse [id:dp5459426071189882] 
		\draw  [fill={rgb, 255:red, 255; green, 255; blue, 255 }  ,fill opacity=1 ] (375.69,128.96) .. controls (374.44,129.68) and (372.84,129.25) .. (372.12,128) .. controls (371.4,126.75) and (371.83,125.15) .. (373.08,124.42) .. controls (374.33,123.7) and (375.93,124.13) .. (376.65,125.38) .. controls (377.37,126.63) and (376.94,128.23) .. (375.69,128.96) -- cycle ;
		%Shape: Circle [id:dp9392717939469599] 
		\draw  [fill={rgb, 255:red, 255; green, 255; blue, 255 }  ,fill opacity=1 ] (316.71,112.67) .. controls (315.46,113.39) and (313.86,112.97) .. (313.14,111.72) .. controls (312.42,110.46) and (312.84,108.86) .. (314.1,108.14) .. controls (315.35,107.42) and (316.95,107.85) .. (317.67,109.1) .. controls (318.39,110.35) and (317.96,111.95) .. (316.71,112.67) -- cycle ;
		%Shape: Ellipse [id:dp0036962582315527293] 
		\draw  [fill={rgb, 255:red, 255; green, 255; blue, 255 }  ,fill opacity=1 ] (338.51,150.42) .. controls (337.26,151.15) and (335.66,150.72) .. (334.93,149.47) .. controls (334.21,148.22) and (334.64,146.62) .. (335.89,145.89) .. controls (337.14,145.17) and (338.74,145.6) .. (339.46,146.85) .. controls (340.19,148.1) and (339.76,149.7) .. (338.51,150.42) -- cycle ;
		%Shape: Ellipse [id:dp19703727641574997] 
		\draw  [fill={rgb, 255:red, 255; green, 255; blue, 255 }  ,fill opacity=1 ] (353.9,91.2) .. controls (352.65,91.93) and (351.05,91.5) .. (350.32,90.25) .. controls (349.6,89) and (350.03,87.4) .. (351.28,86.67) .. controls (352.53,85.95) and (354.13,86.38) .. (354.85,87.63) .. controls (355.58,88.88) and (355.15,90.48) .. (353.9,91.2) -- cycle ;
		%Shape: Ellipse [id:dp8762268263406896] 
		\draw  [fill={rgb, 255:red, 255; green, 255; blue, 255 }  ,fill opacity=1 ] (319.56,183.31) .. controls (320.81,184.03) and (321.24,185.63) .. (320.52,186.88) .. controls (319.79,188.13) and (318.19,188.56) .. (316.94,187.84) .. controls (315.69,187.11) and (315.26,185.51) .. (315.99,184.26) .. controls (316.71,183.01) and (318.31,182.58) .. (319.56,183.31) -- cycle ;
		%Shape: Ellipse [id:dp16561792140591458] 
		\draw  [fill={rgb, 255:red, 255; green, 255; blue, 255 }  ,fill opacity=1 ] (296.42,146.77) .. controls (297.67,147.5) and (298.1,149.1) .. (297.38,150.35) .. controls (296.66,151.6) and (295.06,152.03) .. (293.81,151.3) .. controls (292.56,150.58) and (292.13,148.98) .. (292.85,147.73) .. controls (293.57,146.48) and (295.17,146.05) .. (296.42,146.77) -- cycle ;

		{\large
		% Text Node
		\draw (284,74.75) node [anchor=north west][inner sep=0.75pt]    {$K_{7}$};
		}
		
		{\small
		% Text Node
		\draw (276.66,231.91) node [anchor=north west][inner sep=0.75pt]    {$a'_{1}$};
		% Text Node
		\draw (320.25,231.91) node [anchor=north west][inner sep=0.75pt]    {$a'_{2}$};
		% Text Node
		\draw (378.65,119.78) node [anchor=north west][inner sep=0.75pt]    {$b'_{1}$};
		% Text Node
		\draw (357.85,81.03) node [anchor=north west][inner sep=0.75pt]    {$b'_{2}$};
		% Text Node
		\draw (214.66,80.91) node [anchor=north west][inner sep=0.75pt]    {$c'_{1}$};
		% Text Node
		\draw (193.66,119.91) node [anchor=north west][inner sep=0.75pt]    {$c'_{2}$};
		% Text Node
		\draw (267.66,170.91) node [anchor=north west][inner sep=0.75pt]    {$a_{1}$};
		% Text Node
		\draw (310.66,170.91) node [anchor=north west][inner sep=0.75pt]    {$a_{2}$};
		% Text Node
		\draw (253.46,146.29) node [anchor=north west][inner sep=0.75pt]    {$c_{2}$};
		% Text Node
		\draw (272.94,112.81) node [anchor=north west][inner sep=0.75pt]    {$c_{1}$};
		% Text Node
		\draw (304.94,112.81) node [anchor=north west][inner sep=0.75pt]    {$b_{2}$};
		% Text Node
		\draw (318.94,141.81) node [anchor=north west][inner sep=0.75pt]    {$b_{1}$};
		}

	\end{tikzpicture}
	\caption{The double sun of size~$13$. The gray circle represents a clique: all the~$7$ vertices inside this area are pairwise linked by an edge. There is a single core vertex in this clique.}
	\label{fig:double_sun}
	\end{figure}

	\begin{lemma}
		Let $n \geqslant 12$. The double sun of size~$n$ does not have a Hamiltonian path.
	\end{lemma}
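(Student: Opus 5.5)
Each tip $x_i'$ has exactly two neighbors, $x_1$ and $x_2$. We will show that any Hamiltonian path $P$ must traverse each of the three sets $\{x_1,x_2,x_1',x_2'\}$ in a way that forces an endpoint of $P$ there. Since $P$ has only two endpoints, this gives a contradiction.

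Fix $x \in \{a,b,c\}$ and consider the four vertices $x_1,x_2,x_1',x_2'$. Suppose that neither $x_1'$ nor $x_2'$ is an endpoint of $P$. Then each tip is an interior vertex of $P$, so it has degree $2$ in $P$. Since its only neighbors are $x_1$ and $x_2$, both edges $x_1x_i'$ and $x_2x_i'$ must be used, for $i=1,2$. These four edges form the $4$-cycle $x_1x_1'x_2x_2'x_1$ inside $P$, which is impossible because a path contains no cycle. Hence, for every $x \in \{a,b,c\}$, at least one of the tips $x_1', x_2'$ is an endpoint of $P$.

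The three tip pairs $\{a_1',a_2'\}$, $\{b_1',b_2'\}$, $\{c_1',c_2'\}$ are pairwise disjoint. The previous step therefore yields at least three distinct endpoints of $P$, whereas a path has exactly two. This contradiction proves the lemma.

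The argument is short, and there is no real obstacle. The only point needing care is the degree-counting step. A vertex of degree $2$ in $P$ whose graph neighborhood is exactly $\{x_1,x_2\}$ must use both edges, and this is what creates the cycle. Note also that the argument never uses the clique, so it works for every $n \geq 12$, including $n=12$, where there are no core vertices.
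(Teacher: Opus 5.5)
Your proof is correct and follows the same approach as the paper's. If both tips of a beam were interior, each would have to sit between $x_1$ and $x_2$ on the path; you phrase this as the $4$-cycle $x_1x_1'x_2x_2'x_1$, and the paper phrases it as the two tips not both fitting between $x_1$ and $x_2$. Each of the three disjoint tip pairs therefore contributes an endpoint, which exceeds the two endpoints a path has.
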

	
	\begin{proof}
		Assume by contradiction that the double sun of size~$n$ has a Hamiltonian path. Since the vertex $a_1'$ has degree~two, if it is not an endpoint of the path, it should be between its two neighbors in the path, that is, between $a_1$ and~$a_2$. This is similar for~$a_2'$. Since $a_1'$ and $a_2'$ can not be both between $a_1$ and~$a_2$, at least one vertex among~$\{a_1',a_2'\}$ is an endpoint of the Hamiltonian path. Similarly, at least one vertex among $\{b_1',b_2'\}$ should be an endpoint, and at least one vertex among $\{c_1',c_2'\}$ should also be an endpoint. This is a contradiction, since there are only two endpoints.
	\end{proof}
	
	Now, let us state and prove our main result on double suns:
    \begin{theorem}
        \label{thm:double_suns}
        For every $n \geqslant 12$, every $k \in \{1, \ldots, n\}$, and every common monotone valuation, there exists an \EFO allocation of the double sun of size $n$ for~$k$ agents. 
    \end{theorem}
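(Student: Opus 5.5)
We plan to reduce the double sun to the path case of Theorem~\ref{thm:ef1path}, applied to a carefully chosen vertex ordering. The starting observation is that the double sun is ``almost traceable''. Take any ordering of the form
\[
a_1',\, a_1,\, a_2',\, a_2,\, (\text{core vertices and } b_1,c_1,\dots),\, b_2,\, b_1',\, \dots
\]
in which consecutive vertices are adjacent except at a bounded number of ``jumps'' caused by the tips. Then every interval of this ordering is connected, except those intervals that contain a tip but none of its connectors, or that split a ray at an awkward place. We therefore do not apply Theorem~\ref{thm:ef1path} blindly. Instead we rerun its proof idea for common valuations: the classical argument (e.g.\ the ``leximin-style'' or moving-knife argument for common monotone valuations on a path) produces an \EFO\ partition into consecutive intervals. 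Our task is to choose the ordering, depending on $v$ and $k$, so that the intervals actually produced are connected in the double sun and their outer vertices are outer in the double sun as well.

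\textbf{Step 1.} Fix a Hamiltonian-like sequence. The clique part together with two of the three $K_{2,2}$'s is traceable: go from $a_1'$ to $a_1$, to $a_2'$, to $a_2$, through the core, then to $b_1$, $b_1'$, $b_2$, $b_2'$. Only the third gadget, say $\{c_1,c_2,c_1',c_2'\}$, cannot be threaded. We treat one tip, say $c_1'$, as being inserted next to $c_1$ (path $\dots,c_2,c_2',c_1,c_1'$ is impossible as an end, since both ends are used). Concretely, we use the six possible choices of which gadget is ``bad'' and which tip is displaced, and the fact that any set containing a connector and a clique vertex together with arbitrary tips of that connector's gadget is connected. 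Because of the clique, almost all intervals are connected. The only failure occurs when an interval consists of a single tip, or of a tip together with vertices not adjacent to it.

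\textbf{Step 2.} Run the common-valuation path algorithm on the chosen order. Take an allocation into $k$ intervals maximizing the minimum value, with a lexicographic tie-break, and check that it is \EFO. For a common valuation, \EFO\ reduces to requiring, for the poorest bundle $A_{\min}$ and every $A_j$, an outer vertex $x$ with $v(A_{\min}) \ge v(A_j\setminus\{x\})$. In the path argument, $x$ is an endpoint of $A_j$ adjacent to a neighbouring bundle. We must verify that such endpoints remain outer in the double sun. Tips are always outer (degree two, with adjacent connectors). A connector or core vertex is outer whenever the rest of the bundle still contains a clique vertex, or no tips hanging on it.

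\textbf{Step 3: the main obstacle.} We must handle the disconnected intervals: a bundle that is just the displaced tip, or an interval straddling the jump. Our fix is a case analysis on $k$ and on the values of the tips. If $k$ is small ($k \le n-12$ or so), the bundles can be made to absorb whole gadgets. Here we choose which gadget is the non-threaded one so that its two tips go into bundles containing their connectors, for instance by placing the bad gadget at one end and merging its tips with the end bundle, where the extremal argument still applies. If $k$ is large, many bundles are singletons. In that case we pick the least valuable tips as singletons and use that singletons trivially satisfy \EFO\ toward others (removing the single outer vertex gives $\emptyset$). We then need only that each non-singleton bundle $A_j$ has an outer $x$ with $v(A_j\setminus\{x\})$ at most the minimum. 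We expect to manage this by exchanging tips between bundles to preserve the max-min property.

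I expect the delicate part to be proving that, in every case, the choice among the symmetric orderings can be made so that the max-min interval partition never isolates an uncovered tip. The first thing I would try is to always place the least valuable tip as an endpoint of the sequence, where it is a leaf of the path structure.
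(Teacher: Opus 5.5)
Your proposal stops where the difficulty begins, so there is a genuine gap. Because the double sun is not traceable, every vertex ordering has two consecutive non-adjacent vertices. A path EF1\textsubscript{outer} allocation (or a max--min interval partition) can therefore produce a bundle that is disconnected in $G$, or a bundle whose path-endpoint is not outer in $G$. Step~3 names this obstacle but answers it only with expectations (``we expect to manage this by exchanging tips''), so no nontrivial case is actually proved.

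The concrete fixes you suggest do not work as stated:
\begin{enumerate}
\item A singleton tip is connected, so it is never the problem. With the ordering $a_1',a_1,a_2',a_2,(\ast),b_1,b_1',b_2',b_2,c_1,c_1',c_2,c_2'$, the only disconnected interval is $\{b_1',b_2'\}$, and the only bad removals are from $\{b_i,b_1',b_2'\}$.
\item Letting bundles absorb whole gadgets fails when the value sits on the tips. Take value $1$ on each tip, $0$ elsewhere, and $k=6$. A bundle with two tips keeps value at least $1$ after any single removal, while some bundle then has value $0$. So every bundle must contain exactly one tip, and no bundle can contain a whole gadget.
\item Making the least valuable tips singletons only helps on the envied side. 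It lowers the minimum value that every other bundle must come within one outer good of.
\end{enumerate}

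For $k\leqslant 3$ the paper does roughly what you intend, using the ordering above with its single defect placed in the middle. For $k=2$ the bundles are a prefix and a suffix, so nothing breaks. For $k=3$ it chooses $b$ as the gadget minimising $v(\{x_1,x_1',x_2,x_2'\})$. If the middle bundle is $\{b_1',b_2'\}$ or $\{b_i,b_1',b_2'\}$, it replaces it by the whole $b$-gadget. The outer bundles contain the entire $a$- and $c$-gadgets, so by monotonicity they are worth at least $v_b$, and EF1\textsubscript{outer} is checked directly. This repair needs the neighbours of the bad bundle to contain whole gadgets, which fails for $k\geqslant 4$.

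For $k\geqslant 4$ the paper abandons interval partitions. It builds bundles greedily, always adding one vertex to the currently poorest bundle (empty bundles first, ties broken by identifiers). If connectivity is maintained throughout, the result is EF1\textsubscript{outer}: the last vertex added to a bundle is outer, and the bundle without it was the poorest at that step. All the work lies in choosing which vertex to add:
\begin{enumerate}
\item Seed the bundles with tips. If $k\geqslant 6$ all six tips are used. If $k\in\{4,5\}$ only $k$ ``starting'' tips are used; the ``delayed'' tips are added later only to a bundle already containing one of their connectors, and connectors of delayed tips are distributed before core vertices.
\item Give a singleton tip its own connector whenever that singleton is selected.
\item Distribute the core vertices.
\item Hand out the remaining connectors, starting with the one whose singleton tip has the lowest priority. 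A counting argument shows that no singleton tip is ever left without its connector.
\end{enumerate}
This connectivity-preserving greedy procedure, or an actual resolution of your Step~3, is what your plan is missing.
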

    
   %We give a sketch of the proof below, and a complete proof can be found in Appendix~\ref{appendix:double_suns}. 

\begin{proof}
Let $n \geqslant 12$, and $G = (V,E)$ be a double sun of size~$n$. Let us consider a common monotone valuation, denoted by~$v$, and for every subset of vertices $S \subseteq V$, let $v(S)$ denote the valuation of~$S$. We proceed by distinguishing cases based on $k$.

    \subsection{The case $k \in \{1,2,3\}$}
	
	If $k = 1$, the result is trivial.
	If $k \in \{2,3\}$, let us consider the following order of the vertices, where $(\ast)$ denotes all the core vertices inserted in an arbitrary order:
	\begin{equation}
		\label{eq:order}
		a_1', a_1, a_2', a_2, (\ast), b_1, b_1', b_2', b_2, c_1, c_1', c_2, c_2'
	\end{equation}
	
	Let us remark that this order is not a Hamiltonian path in~$G$ (because there is a missing edge between~$b_1'$ and~$b_2'$), but it satisfies the following property~$\mathcal{P}$: every set formed by consecutive vertices in the order~(\ref{eq:order}) is connected in~$G$, except the set~$\{b_1',b_2'\}$.

	\paragraph*{The case $k = 2$}
	Assume that $k = 2$. Let us consider the path $P$ on the same set of vertices~$V$, in the order~(\ref{eq:order}),  with the same valuation function. By Theorem~\ref{thm:ef1path}, there exists an \EFO allocation of the vertices of~$P$ to two agents. Let us consider this allocation, and let us prove that it is also an \EFO allocation for the double sun. Let us denote by~$X$, $Y$ the two sets of vertices of this allocation. First, since $X$ and~$Y$ form a partition of~$P$ into two connected subgraphs, these two sets are a prefix and a suffix of the order~(\ref{eq:order}). In particular, by property~$\mathcal{P}$, these two sets induce connected subgraphs in~$G$. By symmetry, assume that $v(X) \leqslant v(Y)$. Since this allocation is \EFO in~$P$, there exists a vertex~$u \in Y$ such that $v(Y \setminus \{u\}) \leqslant v(X)$, and $Y \setminus \{u\}$ is connected in~$P$. Thus, $Y \setminus \{u\}$ is a set of consecutive vertices in the order~(\ref{eq:order}). Since $Y$ is a prefix or a suffix, $Y \setminus \{u\}$ is not equal to $\{b_1',b_2'\}$, so by property~$\mathcal{P}$, $Y \setminus \{u\}$ is connected in~$G$. This proves that this allocation is also \EFO in~$G$.

	\paragraph*{The case $k = 3$}
	Assume that $k=3$. For every $x \in \{a,b,c\}$, let $v_x := v(\{x_1, x_1', x_2, x_2'\})$.
	By symmetry, assume that $\min(v_a,v_b,v_c) = v_b$. Consider again the order~(\ref{eq:order}), and the path~$P$ on the same set of vertices, in this order, and the same valuation function.
	By Theorem~\ref{thm:ef1path}, there exists an \EFO allocation of the vertices of~$P$ to three agents. Let us denote by $X,Y,Z$ the three sets of vertices of this allocation. Since these sets are connected in~$P$, each one is a set of consecutive vertices in the order~(\ref{eq:order}). Thus, we can assume that~$X$ is a prefix, $Z$ is a suffix, and~$Y$ is a set of consecutive vertices between~$X$ and~$Z$.
	By property~$\mathcal{P}$, $X$ and~$Z$ are connected in~$G$. Moreover, if~$u$ is the first or the last vertex of $X$ in~$P$, then $X \setminus \{u\}$ is still connected in~$G$, and similarly for~$Z$.
	There now are two cases.
	
	\begin{itemize}
		\item Assume first that~$Y$ satisfies one of the two following conditions:
		\begin{itemize}
			\item $Y$ does not contain both~$b_1'$ and~$b_2'$, or
			\item $Y$ contains both~$b_1'$ and~$b_2'$, and $|Y| \geqslant 4$.
		\end{itemize}
		Then, by property~$\mathcal{P}$, $Y$ is connected in~$G$. Moreover, if for example $v(X) \leqslant v(Y)$, since the allocation is \EFO in~$P$, there exists $u \in Y$ such that $Y \setminus u$ is connected in~$P$ and $v(Y \setminus \{u\}) \leqslant v(X)$. The set $Y \setminus \{u\}$ is a set of consecutive vertices in the order~(\ref{eq:order}), different from~$\{b_1',b_2'\}$ by the assumption made on~$Y$. Thus, by property~$\mathcal{P}$, $Y \setminus \{u\}$ is connected in~$G$. So this allocation is \EFO in~$G$.
		
		\item Assume now that $|Y| \leqslant 3$, and that it contains both~$b_1'$ and~$b_2'$. Then, either $Y = \{b_1',b_2'\}$, or there exists $i \in \{1,2\}$ such that $Y = \{b_i,b_1',b_2'\}$. Consider the following allocation.
        Let $Y' := \{b_1,b_1',b_2',b_2\}$, and $X'$ (resp. $Z'$) be the set of all the vertices before~$b_1$ (resp. after~$b_2$) in the order~(\ref{eq:order}). The sets $X',Y',Z'$ still contain consecutive vertices in the order~(\ref{eq:order}), and by property~$\mathcal{P}$, each of these sets is connected in~$G$.
		Since~$v$ is monotone, we also have~$v(X') \leqslant v(X)$, $v(Z') \leqslant v(Z)$ and $v(Y') \geqslant v(Y)$.
        Moreover, we assumed that $v_b \leqslant v_a$, and we have $v(Y') = v_b$ and $\{a_1,a_1',a_2,a_2'\} \subseteq X'$. This implies, by monotonicity, that $v(Y') \leqslant v(X')$. Similarly, we have $v(Y') \leqslant v(Z')$. By symmetry between $X'$ and $Z'$, we can assume without loss of generality that $v(Y') \leqslant v(Z') \leqslant v(X')$. Let us prove that there is no envy between $X', Y', Z'$.
		
		Since the allocation $X,Y,Z$ is \EFO in~$P$, there exists $u \in X$ such that $X\setminus\{u\}$ is connected in~$P$, and $v(X \setminus \{u\}) \leqslant v(Y)$. If $b_1 \in X$ and $u = b_1$, then $X \setminus \{u\} = X'$, so we get $v(X') \leqslant v(Y) \leqslant v(Y')$. Otherwise, we get $v(X' \setminus \{u\}) \leqslant v(X \setminus \{u\}) \leqslant v(Y) \leqslant v(Y')$, and $X' \setminus \{u\}$ is still connected in~$P$, so it is also connected in~$G$ by property~$\mathcal{P}$. Thus, $Y'$ does not envy~$X'$.
		Similarly, $Y'$ does not envy $Z'$.
		Finally, since $v(Z') \geqslant v(Y')$ and since $Y'$ does not envy $X'$, then $Z'$ does not envy $X'$ neither. Thus, the allocation $X',Y',Z'$ is \EFO.
	\end{itemize}

	\subsection{The case $k \geqslant 4$}
	
	If $k \geqslant 4$, we construct algorithmically an \EFO allocation, with the following procedure which has $n$ steps. For convenience, we assume that each vertex has a unique identifier (this will be useful only to break the symmetry in our procedure in the case where two sets of vertices have the same value).
	At each step of the procedure, we have $k$ disjoint sets, and some vertices that do not belong to any set, that we call \emph{unassigned vertices}. Initially, each set is empty. Each step of the procedure consists in adding an unassigned vertex to one of the~$k$ sets. Thus, after~$n$ steps, there is no unassigned vertex remaining, and the~$k$ sets form a partition of vertices (and we will prove that this partition is an \EFO allocation).
	
	If $X, Y$ are two disjoint and non-empty sets, we say that~\emph{$X$ has higher priority than $Y$} if one of the two following conditions is satisfied: 
	\begin{itemize}
		\item $v(X) < v(Y)$, or
		\item $v(X) = v(Y)$ and the minimum identifier of a vertex in~$X$ is smaller than the minimum identifier of a vertex in~$Y$.
	\end{itemize}
	Note that the relation of having higher priority is transitive. Moreover, two disjoint non-empty sets are always comparable for this relation, and any collection of disjoint non-empty sets has a set with highest priority and a set with lowest priority.
	
	For every \mbox{$s \in \{1, \ldots, n\}$}, we denote by~$U_s$ the set of unassigned vertices at the beginning of step~$s$. We have $|U_s| = n-s+1$. For a set~$X$, we say that $X$ is~\emph{critical} if it is a singleton, and the vertex it contains is a tip.
	Step~$s$ consists in the following.
	\begin{enumerate}[label=(\roman*)]
		\item If at least one set is empty, we select an arbitrary empty set. Else, we select the set with highest priority. \label{item:i}
		
		\item We add a carefully chosen vertex in $U_s$ to the set we selected. \label{item:ii}
	\end{enumerate}
	
	If, at each step, we can maintain the connectivity of each set, then the allocation we obtain is \EFO. Indeed, let us consider two sets $X,Y$ at the end of the procedure such that \mbox{$v(X) > v(Y)$}. If~$u$ is the last vertex which was added to~$X$, removing $u$ from~$X$ leaves a connected set, because the connectivity of each set was maintained through the procedure. Moreover, we have \mbox{$v(X \setminus \{u\}) \leqslant v(Y)$} because the fact that $u$ was added to~$X \setminus \{u\}$ at some step implies that $X \setminus \{u\}$ was the set with minimum value at this step.
	In the following, we detail the two cases $k \geqslant 6$ and $k \in \{4,5\}$ (in this order, by increasing technicality).
	What distinguishes these three cases is~\ref{item:ii}, that is, the way we chose the unassigned vertex to add to the set with highest priority at each step of the procedure.

	\paragraph*{The case $k \geqslant 6$}
	
	Assume that $k \geqslant 6$.
	Let $s \in \{1, \ldots, n\}$. Let us denote by~$X$ the set with highest priority that was selected in~\ref{item:i}, and let us describe how we chose the unassigned vertex to add to~$X$ in~\ref{item:ii} at step~$s$. See Figure~\ref{fig:doublesun_k=6} for an illustration with~$k=6$.

	\begin{enumerate}
		\item If $X$ is a critical set, then~$X$ contains only one vertex which is a tip $x_i'$ for some $x \in \{a,b,c\}$ and $i \in \{1,2\}$. Then, we add its connector~$x_i$ to~$X$ (we will prove thereafter that it is indeed possible because its connector $x_i$ is unassigned). \label{item:1_k>=6}
		
		\item If $X$ is not critical, there are some sub-cases.
		
		\begin{enumerate}
			\item If there is a tip in~$U_s$, we add an arbitrary tip to~$X$. \label{item:a_k>=6}
			
			\item Else, if there is a core vertex in~$U_s$, we add an arbitrary core vertex to~$X$. \label{item:b_k>=6}
			
			\item Else, $U_s$ contains only connectors, so $U_s$ is a subset of~$\{a_1,a_2,b_1,b_2,c_1,c_2\}$. Let $x \in \{a,b,c\}$ and $i \in \{1,2\}$ such that $x_i \in U_s$ and $\{x_i'\}$ is the tip with the lowest priority among them. We add~$x_i$ to~$X$. \label{item:c_k>=6}
		\end{enumerate}
	\end{enumerate}

	\begin{figure}[h!]
		\centering
		
		% [inline block 0: 1 envs, 82104 chars -> data_tex | \begin{tikzpicture}[x=0.43pt,y=0.43pt,yscale=-1,xscale=1] 			%uncomment if require: \path (0,439); %set diagram left sta...]

		
		\caption{Illustration of our procedure for $k=6$ with a double sun of size~$13$. Only the last steps are represented here (the~$6$ first steps consist in adding the tips to the~$6$ initially empty sets). In this example, the valuation is additive, so only the value of each vertex is written. Unique identifiers are not represented to simplify the figure. At the end of the procedure, we obtain an \EFO allocation for~$6$ agents.}
		\label{fig:doublesun_k=6}
		
	\end{figure}

	It remains to prove that the connectivity of each set is maintained at each step, and that in case~\ref{item:1_k>=6}, the connector is indeed unassigned.
	
	Let us first prove that the connectivity is maintained. In Case~\ref{item:1_k>=6}, since~$X$ contains only a tip at the beginning of step~$s$ and we add its connector, then~$X$ is still connected after step~$s$. Note that, at the beginning of the procedure, all the sets are empty, and while at least one set is empty among the~$k$ sets, our procedure selects an empty set in~\ref{item:i}. Since an empty set is not critical, and there are more sets than tips because $k \geqslant 6$, the six first steps consists in adding a tip to an empty set. Thus, in Case~\ref{item:a_k>=6}, the connectivity is maintained, since all tips are added to empty sets. Finally, in Cases~\ref{item:b_k>=6} and~\ref{item:c_k>=6}, the connectivity is also maintained, because $X$ is not critical at the beginning of step~$s$, so it is either empty, or it contains a core vertex or a connector, which is a neighbor of all other core vertices and connectors.
	
	Let us finally prove that in Case~\ref{item:1_k>=6}, the connector is indeed unassigned.
	Assume for the sake of contradiction that Case~\ref{item:1_k>=6} is applied at step~$s$, and that $x_i$ was already assigned. The vertex $x_i$ was thus added to some set at step~$s' < s$, and Case~\ref{item:c_k>=6} was applied at step~$s'$. At the beginning of step~$s'$, all core vertices were already assigned, so $U_{s'}$ contains only connectors. For every $y \in \{a,b,c\}$ and $j \in \{1,2\}$ such that $y_j \in U_{s'}$, one of the~$k$ sets at the beginning of step~$s'$ is the critical set $\{y_j'\}$.
	Let us denote by $\ell$ the size of~$U_{s'}$. At the end of step~$s'$, there are $\ell-1$ remaining steps (because there are $\ell-1$ unassigned vertices remaining). Moreover, at step~$s'$, since $x_i$ was the vertex selected in~\ref{item:c_k>=6}, it implies that $\{x_i'\}$ has the lowest priority among the tips such that $x_i \in U_s$.
	Thus, at the end of step~$s'$, there are $\ell-1$ critical sets that have higher priority than~$\{x_i'\}$, and since there are only~$\ell-1$ steps remaining, the set~$\{x_i'\}$ will never be selected by our procedure in~\ref{item:i}. This is a contradiction since it is selected at step~$s$. Thus, $x_i$ was indeed unassigned at step~$s$.
	This concludes the proof in the case~$k \geqslant 6$.

	\paragraph*{The case $k \in \{4,5\}$}
	
	Assume that $k \in \{4,5\}$.
	By symmetry, assume that the ray with highest priority among all rays is $\{a_1, a_1'\}$.
	Again by symmetry, assume that the ray with highest priority among rays of the form~$\{x_i,x_i'\}$ with $x \in \{b,c\}$ and $i \in \{1,2\}$ is $\{b_1, b_1'\}$. 
	We partition tips into two sets, \emph{starting tips} and \emph{delayed tips}.
	
	\begin{itemize}
		\item If $k = 5$, the set of starting tips is $\{a_1',b_1',b_2',c_1',c_2'\}$, and the set of delayed tips is $\{a_2'\}$.
		
		\item If $k = 4$, the set of starting tips is $\{a_1',b_1',c_1',c_2'\}$, and the set of delayed tips is $\{a_2',b_2'\}$.
	\end{itemize}
	
	Let $s \in \{1, \ldots, n\}$. Let us denote by~$X$ the set with highest priority that was selected in~\ref{item:i}, and let us describe how we chose the unassigned vertex to add to~$X$ in~\ref{item:ii} at step~$s$. See Figure~\ref{fig:doublesun_k=4} for an example. 
	
	\begin{enumerate}
		\item If $X$ is a critical set, then~$X$ contains only one vertex, which is a tip $x_i'$ for some $x \in \{a,b,c\}$ and $i \in \{1,2\}$. We add its connector~$x_i$ to~$X$ (we will prove thereafter that it is indeed possible because its connector $x_i$ is unassigned). \label{item:1_k<=5}
		
		\item If $X$ is not critical, there are some sub-cases. \label{item:2_k<=5}
		
		\begin{enumerate}
			\item If there is a starting tip in~$U_s$, we add an arbitrary starting tip to~$X$. \label{item:a_k<=5}
			
			\item Else, if there exists a delayed tip in $U_s$ adjacent to some vertex of~$X$, we add this delayed tip to~$X$. \label{item:b_k<=5}
			
			\item Else, if there is a connector of a delayed tip in~$U_s$, we add an arbitrary connector of a delayed tip to~$X$. \label{item:c_k<=5}
			
			\item Else, if there is a core vertex in~$U_s$, we add an arbitrary core vertex to~$X$. \label{item:d_k<=5}
			
			\item Else, $U_s$ contains only connectors of starting tips (we will prove this afterwards). Let $x \in \{a,b,c\}$ and $i \in \{1,2\}$ such that $x_i \in U_s$ and $\{x_i'\}$ is the starting tip with the lowest priority among them. We add~$x_i$ to~$X$. \label{item:e_k<=5}
		\end{enumerate}
	\end{enumerate}
	
	\begin{figure}[h!]
		\centering
		
		% [inline block 1: 1 envs, 97832 chars -> data_tex | \begin{tikzpicture}[x=0.43pt,y=0.43pt,yscale=-1,xscale=1] 			%uncomment if require: \path (0,439); %set diagram left sta...]

		\caption{Illustration of our procedure for $k=4$ with a double sun of size~$13$. The starting tips are the tips with value $1,5,10,14$, and the delayed tips are the tips with value $2,13$. Only the last steps are represented here (the~$4$ first steps consist in adding the starting tips to the~$4$ initially empty sets). In this example, the valuation is additive, so only the value of each vertex is written. Unique identifiers are not represented to simplify the figure. At the end of the procedure, we obtain an \EFO allocation for~$4$ agents.}
		\label{fig:doublesun_k=4}
	\end{figure}
	
	It remains to prove that the connectivity of each set is maintained at each step, that in Case~\ref{item:1_k<=5}, the connector is indeed unassigned, and finally, that in Case~\ref{item:e_k<=5}, $U_s$ contains only connectors of starting tips. The proofs of the two first items (namely, the connectivity of each set, and the fact that the connector is unassigned in case~\ref{item:1_k<=5}) are exactly the same as for the case~$k \geqslant 6$, so we detail only the proof of the third item.
	So let us prove that, in case~\ref{item:e_k<=5}, $U_s$ contains only connectors of starting tips. Assume that case~\ref{item:e_k<=5} was applied at step~$s$. Since the previous cases were not applied, $U_s$ do not contain neither starting tips, nor connectors of delayed tips, nor core vertices. It could possibly contain a delayed tip non-adjacent to~$X$, but we will prove that it not the case.
	
	The~$k$ first steps of our procedure consist in adding a starting tip to an empty set. Thus, as soon as a non-empty set is selected, all starting tips are assigned. Moreover, at the beginning of step~$k+1$, all sets are critical, and each contains a starting tip. For every $s \geqslant k+1$, and for every $x \in \{a,b,c\}$, $i \in \{1,2\}$ such that $x_i'$ is a starting tip, we denote by $X_{s}^{x_i'}$ the set that contains the tip $x_i'$ at the beginning of step~$s$.
	
	Let us consider the delayed tip~$a_2'$. Its connector $a_2$ was assigned to some set before step~$s$, at a step for which case~\ref{item:c_k<=5} was applied. Let $s'<s$ be the first step in which case~\ref{item:2_k<=5} was applied and the set~$X$ selected at step~$s'$ is not empty.
	This set~$X$ is thus a ray, and since the ray with higher priority is $\{a_1,a_1'\}$, we have $X = X_{s'}^{a_1'} = \{a_1,a_1'\}$.  At step~$s'$, case~\ref{item:a_k<=5} is not applied (because all starting tips are already assigned), so case~\ref{item:b_k<=5} is applied. Thus, the delayed tip~$a_2'$ is added to~$X$ at step~$s'$, so $a_2' \notin U_s$.
	
	Finally, if $k=4$, let us consider the delayed tip~$b_2'$. Assume by contradiction that $b_2' \in U_s$. First, let us prove that $X_{s}^{b_1'} \subseteq \{b_1,b_1'\}$. We have $X_{k+1}^{b_1'} = \{b_1'\}$. If the set $\{b_1'\}$ is selected by our procedure in~\ref{item:i} at some step to receive a second vertex, case~\ref{item:1_k<=5} is applied, so it receives its connector~$b_1$ and this set becomes~$\{b_1,b_1'\}$. If this set is selected again in~\ref{item:i} to receive a third vertex before step~$s$, since $b_2'$ is unassigned, case~\ref{item:b_k<=5} is applied and it receives $b_2'$ (which is adjacent to~$b_1$). This is not the case, since by assumption $b_2' \in U_s$, so we indeed have $X_{s}^{b_1'} \subseteq \{b_1,b_1'\}$. Since $\{b_1, b_1'\}$ has higher priority than $\{c_1, c_1'\}$ and $\{c_2, c_2'\}$, we also have  $X_{s}^{c_i'} \subseteq \{c_i,c_i'\}$ for each $i \in \{1,2\}$ (indeed, if $X_{s'}^{c_i'}$ is selected to receive a third vertex in~\ref{item:i} at some step $s'<s$, then $X_{s''}^{b_1'}$ should have been selected at an earlier step $s''<s'$ to receive a third vertex, and this is not the case).
	Since $U_s$ do not contain the connector $b_2$, then $b_2$ was previously assigned to some set. Since $|X_{s}^{x_i'}| \subseteq \{x_i,x_i'\}$ for every $x_i \in \{b_1,c_1,c_2\}$, then $b_2 \in X_{s}^{a_1'}$.
	Since case~\ref{item:2_k<=5} was applied at step~$s$, then the set selected at step~$s$ has size at least~$2$. Moreover, again since $\{b_1, b_1'\}$ has higher priority than $\{c_1, c_1'\}$ and $\{c_2, c_2'\}$, the set selected at step~$s$ can not be neither $X_{s}^{c_1'}$ nor $X_{s}^{c_2'}$. Thus, it is either $X_{s}^{a_1'}$ or $X_{s}^{b_1'}$. If it $X_{s}^{a_1'}$, then case~\ref{item:b_k<=5} should have been applied instead of case~\ref{item:e_k<=5}, because $b_2 \in X_{s}^{a_1'}$ and $b_2'$ is adjacent to~$b_2$, a contradiction. If it is $X_{s}^{b_1'}$, then since the set selected in~\ref{item:i} at step~$s$ has size at least~$2$, we have $X_{s}^{b_1'} = \{b_1',b_1\}$. It is again a contradiction, because case~\ref{item:b_k<=5} should have been applied instead of case~\ref{item:e_k<=5}, since $b_1 \in X_{s}^{b_1'}$ and $b_2'$ is adjacent to~$b_1$.
	This concludes the proof. 
\end{proof}

%\vspace{2cm}
%\textcolor{blue}{Remark: this proof shows that if a graph on~$a+b$ vertices has a clique of size $a$, and $b$ additional vertices with $b \leqslant a$ that can be matched to some vertices of the clique, the answer is always positive for $k=b$ (same proof as for $k=6$ previously).}

\section{Spectrum of Trees with Binary Valuations}
\label{sec:trees}
In Section~\ref{sec:double_suns} we established that traceability is not necessary for an all-yes spectrum: there exist non-traceable graphs which admit an \EFO allocation for common monotone valuations for every $k$. We now study the onset of no-instances on trees, an important class of non-traceable graphs, which generalize paths, the focus of most prior work.
We introduce good and bad sets—purely graph-theoretic obstructions that will drive both the characterization and the polynomial-time computation of the spectrum threshold.

Let $G$ be a graph, and $X \subseteq V(G)$. Let $\mathcal{C}_X$ be the set of connected components of $G \setminus X$. For every $C \in \mathcal{C}_X$, the \emph{boundary} of $C$ is defined as the subset of~$X$ containing all the vertices of~$X$ adjacent to at least one vertex of~$C$. We say that $C \in \mathcal{C}_X$ is a \emph{good component} if there is an injective function $f : \mathcal{C}_X \setminus C \to X$ such that, for every $C' \in \mathcal{C}_X \setminus C$, $f(C')$ belongs to the boundary of~$C'$ (in this case, we say that $f$ is the \emph{matching function} for $C$ and \emph{matches} $C'$ to $f(C')$ for every $C' \in \mathcal{C}_X \setminus C$). Finally, we say that $X$ is a \emph{good set} for $G$ if $\mathcal{C}_X$ contains a good component, and a \emph{bad set} otherwise.  See Figure~\ref{fig:example_good_component} and Figure~\ref{fig:example_bad_set} for examples of a good component and a bad set.

\begin{figure}[h!]
    \centering
    \begin{tikzpicture}[x=0.75pt,y=0.75pt,yscale=-1,xscale=1]
%uncomment if require: \path (0,300); %set diagram left start at 0, and has height of 300

%Straight Lines [id:da8043150739980819] 
\draw    (388,56) -- (376.5,27) ;
%Straight Lines [id:da3782239463714072] 
\draw    (388,56) -- (399.5,27) ;
%Straight Lines [id:da3153434300280563] 
\draw    (187,85) -- (220.5,85) ;
%Straight Lines [id:da8263634859194179] 
\draw    (220.5,85) -- (254,85) ;
%Straight Lines [id:da6841629172908128] 
\draw    (254,85) -- (287.5,85) ;
%Straight Lines [id:da2500160982119267] 
\draw    (287.5,85) -- (321,85) ;
%Straight Lines [id:da3373777248163913] 
\draw    (321,85) -- (354.5,85) ;
%Straight Lines [id:da38962727122151775] 
\draw    (354.5,85) -- (388,85) ;
%Straight Lines [id:da10864763457286819] 
\draw    (388,85) -- (421.5,85) ;
%Straight Lines [id:da33192928128839416] 
\draw    (321,85) -- (321,56) ;
%Straight Lines [id:da9616805020039595] 
\draw    (388,85) -- (388,56) ;
%Shape: Circle [id:dp23129232623807305] 
\draw  [fill={rgb, 255:red, 255; green, 255; blue, 255 }  ,fill opacity=1 ] (182,85) .. controls (182,82.24) and (184.24,80) .. (187,80) .. controls (189.76,80) and (192,82.24) .. (192,85) .. controls (192,87.76) and (189.76,90) .. (187,90) .. controls (184.24,90) and (182,87.76) .. (182,85) -- cycle ;
%Shape: Circle [id:dp833356915964317] 
\draw  [fill={rgb, 255:red, 255; green, 255; blue, 255 }  ,fill opacity=1 ] (215.5,85) .. controls (215.5,82.24) and (217.74,80) .. (220.5,80) .. controls (223.26,80) and (225.5,82.24) .. (225.5,85) .. controls (225.5,87.76) and (223.26,90) .. (220.5,90) .. controls (217.74,90) and (215.5,87.76) .. (215.5,85) -- cycle ;
%Shape: Circle [id:dp48059563893673196] 
\draw  [fill={rgb, 255:red, 208; green, 2; blue, 27 }  ,fill opacity=1 ] (249,85) .. controls (249,82.24) and (251.24,80) .. (254,80) .. controls (256.76,80) and (259,82.24) .. (259,85) .. controls (259,87.76) and (256.76,90) .. (254,90) .. controls (251.24,90) and (249,87.76) .. (249,85) -- cycle ;
%Shape: Circle [id:dp002537906028607484] 
\draw  [fill={rgb, 255:red, 255; green, 255; blue, 255 }  ,fill opacity=1 ] (282.5,85) .. controls (282.5,82.24) and (284.74,80) .. (287.5,80) .. controls (290.26,80) and (292.5,82.24) .. (292.5,85) .. controls (292.5,87.76) and (290.26,90) .. (287.5,90) .. controls (284.74,90) and (282.5,87.76) .. (282.5,85) -- cycle ;
%Shape: Circle [id:dp32691542837669507] 
\draw  [fill={rgb, 255:red, 255; green, 255; blue, 255 }  ,fill opacity=1 ] (316,85) .. controls (316,82.24) and (318.24,80) .. (321,80) .. controls (323.76,80) and (326,82.24) .. (326,85) .. controls (326,87.76) and (323.76,90) .. (321,90) .. controls (318.24,90) and (316,87.76) .. (316,85) -- cycle ;
%Shape: Circle [id:dp8955523622970727] 
\draw  [fill={rgb, 255:red, 255; green, 255; blue, 255 }  ,fill opacity=1 ] (349.5,85) .. controls (349.5,82.24) and (351.74,80) .. (354.5,80) .. controls (357.26,80) and (359.5,82.24) .. (359.5,85) .. controls (359.5,87.76) and (357.26,90) .. (354.5,90) .. controls (351.74,90) and (349.5,87.76) .. (349.5,85) -- cycle ;
%Shape: Circle [id:dp2952860485589799] 
\draw  [fill={rgb, 255:red, 208; green, 2; blue, 27 }  ,fill opacity=1 ] (383,85) .. controls (383,82.24) and (385.24,80) .. (388,80) .. controls (390.76,80) and (393,82.24) .. (393,85) .. controls (393,87.76) and (390.76,90) .. (388,90) .. controls (385.24,90) and (383,87.76) .. (383,85) -- cycle ;
%Shape: Circle [id:dp5782729803993517] 
\draw  [fill={rgb, 255:red, 255; green, 255; blue, 255 }  ,fill opacity=1 ] (416.5,85) .. controls (416.5,82.24) and (418.74,80) .. (421.5,80) .. controls (424.26,80) and (426.5,82.24) .. (426.5,85) .. controls (426.5,87.76) and (424.26,90) .. (421.5,90) .. controls (418.74,90) and (416.5,87.76) .. (416.5,85) -- cycle ;
%Shape: Circle [id:dp23320221219323] 
\draw  [fill={rgb, 255:red, 208; green, 2; blue, 27 }  ,fill opacity=1 ] (316,56) .. controls (316,53.24) and (318.24,51) .. (321,51) .. controls (323.76,51) and (326,53.24) .. (326,56) .. controls (326,58.76) and (323.76,61) .. (321,61) .. controls (318.24,61) and (316,58.76) .. (316,56) -- cycle ;
%Shape: Circle [id:dp07333799177269418] 
\draw  [fill={rgb, 255:red, 208; green, 2; blue, 27 }  ,fill opacity=1 ] (383,56) .. controls (383,53.24) and (385.24,51) .. (388,51) .. controls (390.76,51) and (393,53.24) .. (393,56) .. controls (393,58.76) and (390.76,61) .. (388,61) .. controls (385.24,61) and (383,58.76) .. (383,56) -- cycle ;
%Shape: Circle [id:dp7441304212654724] 
\draw  [fill={rgb, 255:red, 255; green, 255; blue, 255 }  ,fill opacity=1 ] (394.5,27) .. controls (394.5,24.24) and (396.74,22) .. (399.5,22) .. controls (402.26,22) and (404.5,24.24) .. (404.5,27) .. controls (404.5,29.76) and (402.26,32) .. (399.5,32) .. controls (396.74,32) and (394.5,29.76) .. (394.5,27) -- cycle ;
%Shape: Circle [id:dp03285159237149271] 
\draw  [fill={rgb, 255:red, 255; green, 255; blue, 255 }  ,fill opacity=1 ] (371.5,27) .. controls (371.5,24.24) and (373.74,22) .. (376.5,22) .. controls (379.26,22) and (381.5,24.24) .. (381.5,27) .. controls (381.5,29.76) and (379.26,32) .. (376.5,32) .. controls (373.74,32) and (371.5,29.76) .. (371.5,27) -- cycle ;
%Shape: Polygon Curved [id:ds7807827823734887] 
\draw  [dash pattern={on 4.5pt off 4.5pt}] (276.5,87) .. controls (272.5,70) and (303.5,71) .. (321,70.5) .. controls (338.5,70) and (369.5,68) .. (367.5,89) .. controls (365.5,110) and (358.5,109) .. (329.5,107) .. controls (300.5,105) and (280.5,104) .. (276.5,87) -- cycle ;
%Shape: Polygon Curved [id:ds01804151278026589] 
\draw  [dash pattern={on 4.5pt off 4.5pt}] (174.5,90) .. controls (172.5,79) and (187.5,72) .. (205.5,71) .. controls (223.5,70) and (237.5,72) .. (237.25,85) .. controls (237,98) and (238.5,107) .. (208.5,107) .. controls (178.5,107) and (176.5,101) .. (174.5,90) -- cycle ;
%Shape: Polygon Curved [id:ds3238785955213135] 
\draw  [dash pattern={on 4.5pt off 4.5pt}] (408.5,94) .. controls (417,118) and (458.5,106) .. (444.5,87) .. controls (430.5,68) and (400,70) .. (408.5,94) -- cycle ;
%Shape: Polygon Curved [id:ds7433946876416038] 
\draw  [dash pattern={on 4.5pt off 4.5pt}] (358.5,43) .. controls (368,68) and (399.5,36) .. (383.5,15) .. controls (367.5,-6) and (349,18) .. (358.5,43) -- cycle ;
%Curve Lines [id:da1282352573747737] 
\draw [line width=1.5]    (359.5,52) .. controls (345.25,65.3) and (355.37,67.77) .. (378.71,61.98) ;
\draw [shift={(382.5,61)}, rotate = 164.93] [fill={rgb, 255:red, 0; green, 0; blue, 0 }  ][line width=0.08]  [draw opacity=0] (11.61,-5.58) -- (0,0) -- (11.61,5.58) -- cycle    ;
%Curve Lines [id:da8673804573859186] 
\draw [line width=1.5]    (428,73) .. controls (431.8,44.5) and (410.33,58.42) .. (395.31,77.03) ;
\draw [shift={(393,80)}, rotate = 306.87] [fill={rgb, 255:red, 0; green, 0; blue, 0 }  ][line width=0.08]  [draw opacity=0] (11.61,-5.58) -- (0,0) -- (11.61,5.58) -- cycle    ;
%Curve Lines [id:da7912875571236077] 
\draw [line width=1.5]    (287,71) .. controls (278.97,37.92) and (291.92,37) .. (311.09,47.97) ;
\draw [shift={(314.5,50)}, rotate = 211.76] [fill={rgb, 255:red, 0; green, 0; blue, 0 }  ][line width=0.08]  [draw opacity=0] (11.61,-5.58) -- (0,0) -- (11.61,5.58) -- cycle    ;
%Curve Lines [id:da8760396510477308] 
\draw [line width=1.5]    (214.5,68) .. controls (221.22,49.76) and (229.78,54.56) .. (246.38,77.09) ;
\draw [shift={(248.5,80)}, rotate = 234.25] [fill={rgb, 255:red, 0; green, 0; blue, 0 }  ][line width=0.08]  [draw opacity=0] (11.61,-5.58) -- (0,0) -- (11.61,5.58) -- cycle    ;

% Text Node
\draw (189,92.4) node [anchor=north west][inner sep=0.75pt]    {$a$};
% Text Node
\draw (323,92.4) node [anchor=north west][inner sep=0.75pt]    {$e$};
% Text Node
\draw (356.5,88.4) node [anchor=north west][inner sep=0.75pt]    {$f$};
% Text Node
\draw (390,90.4) node [anchor=north west][inner sep=0.75pt]    {$g$};
% Text Node
\draw (222.5,88.4) node [anchor=north west][inner sep=0.75pt]    {$b$};
% Text Node
\draw (256,92.4) node [anchor=north west][inner sep=0.75pt]    {$c$};
% Text Node
\draw (289.5,88.4) node [anchor=north west][inner sep=0.75pt]    {$d$};
% Text Node
\draw (423.5,90.4) node [anchor=north west][inner sep=0.75pt]    {$h$};
% Text Node
\draw (329,54.4) node [anchor=north west][inner sep=0.75pt]    {$i$};
% Text Node
\draw (407,24.4) node [anchor=north west][inner sep=0.75pt]    {$l$};
% Text Node
\draw (359,25.4) node [anchor=north west][inner sep=0.75pt]    {$k$};
% Text Node
\draw (395,50.4) node [anchor=north west][inner sep=0.75pt]    {$j$};

\end{tikzpicture}

    \caption{Let us consider the set of red vertices $X := \{c, g, i, j\}$. $\mathcal{C}_X$ contains five components (which are $\{a,b\}, \{d,e,f\}, \{h\}, \{k\}, \{l\}$). The component $\{l\} \in \mathcal{C}_X$ is a good component, because it is possible to map injectively each other component to a vertex in its boundary.}
    \label{fig:example_good_component}
\end{figure}

\begin{figure}[h!]
    \centering
    
    \begin{tikzpicture}[x=0.75pt,y=0.75pt,yscale=-1,xscale=1]
%uncomment if require: \path (0,300); %set diagram left start at 0, and has height of 300

%Straight Lines [id:da8043150739980819] 
\draw    (388,56) -- (376.5,27) ;
%Straight Lines [id:da3782239463714072] 
\draw    (388,56) -- (399.5,27) ;
%Straight Lines [id:da3153434300280563] 
\draw    (187,85) -- (220.5,85) ;
%Straight Lines [id:da8263634859194179] 
\draw    (220.5,85) -- (254,85) ;
%Straight Lines [id:da6841629172908128] 
\draw    (254,85) -- (287.5,85) ;
%Straight Lines [id:da2500160982119267] 
\draw    (287.5,85) -- (321,85) ;
%Straight Lines [id:da3373777248163913] 
\draw    (321,85) -- (354.5,85) ;
%Straight Lines [id:da38962727122151775] 
\draw    (354.5,85) -- (388,85) ;
%Straight Lines [id:da10864763457286819] 
\draw    (388,85) -- (421.5,85) ;
%Straight Lines [id:da33192928128839416] 
\draw    (321,85) -- (321,56) ;
%Straight Lines [id:da9616805020039595] 
\draw    (388,85) -- (388,56) ;
%Shape: Circle [id:dp23129232623807305] 
\draw  [fill={rgb, 255:red, 255; green, 255; blue, 255 }  ,fill opacity=1 ] (182,85) .. controls (182,82.24) and (184.24,80) .. (187,80) .. controls (189.76,80) and (192,82.24) .. (192,85) .. controls (192,87.76) and (189.76,90) .. (187,90) .. controls (184.24,90) and (182,87.76) .. (182,85) -- cycle ;
%Shape: Circle [id:dp833356915964317] 
\draw  [fill={rgb, 255:red, 208; green, 2; blue, 27 }  ,fill opacity=1 ] (215.5,85) .. controls (215.5,82.24) and (217.74,80) .. (220.5,80) .. controls (223.26,80) and (225.5,82.24) .. (225.5,85) .. controls (225.5,87.76) and (223.26,90) .. (220.5,90) .. controls (217.74,90) and (215.5,87.76) .. (215.5,85) -- cycle ;
%Shape: Circle [id:dp48059563893673196] 
\draw  [fill={rgb, 255:red, 255; green, 255; blue, 255 }  ,fill opacity=1 ] (249,85) .. controls (249,82.24) and (251.24,80) .. (254,80) .. controls (256.76,80) and (259,82.24) .. (259,85) .. controls (259,87.76) and (256.76,90) .. (254,90) .. controls (251.24,90) and (249,87.76) .. (249,85) -- cycle ;
%Shape: Circle [id:dp002537906028607484] 
\draw  [fill={rgb, 255:red, 255; green, 255; blue, 255 }  ,fill opacity=1 ] (282.5,85) .. controls (282.5,82.24) and (284.74,80) .. (287.5,80) .. controls (290.26,80) and (292.5,82.24) .. (292.5,85) .. controls (292.5,87.76) and (290.26,90) .. (287.5,90) .. controls (284.74,90) and (282.5,87.76) .. (282.5,85) -- cycle ;
%Shape: Circle [id:dp32691542837669507] 
\draw  [fill={rgb, 255:red, 208; green, 2; blue, 27 }  ,fill opacity=1 ] (316,85) .. controls (316,82.24) and (318.24,80) .. (321,80) .. controls (323.76,80) and (326,82.24) .. (326,85) .. controls (326,87.76) and (323.76,90) .. (321,90) .. controls (318.24,90) and (316,87.76) .. (316,85) -- cycle ;
%Shape: Circle [id:dp8955523622970727] 
\draw  [fill={rgb, 255:red, 208; green, 2; blue, 27 }  ,fill opacity=1 ] (349.5,85) .. controls (349.5,82.24) and (351.74,80) .. (354.5,80) .. controls (357.26,80) and (359.5,82.24) .. (359.5,85) .. controls (359.5,87.76) and (357.26,90) .. (354.5,90) .. controls (351.74,90) and (349.5,87.76) .. (349.5,85) -- cycle ;
%Shape: Circle [id:dp2952860485589799] 
\draw  [fill={rgb, 255:red, 208; green, 2; blue, 27 }  ,fill opacity=1 ] (383,85) .. controls (383,82.24) and (385.24,80) .. (388,80) .. controls (390.76,80) and (393,82.24) .. (393,85) .. controls (393,87.76) and (390.76,90) .. (388,90) .. controls (385.24,90) and (383,87.76) .. (383,85) -- cycle ;
%Shape: Circle [id:dp5782729803993517] 
\draw  [fill={rgb, 255:red, 255; green, 255; blue, 255 }  ,fill opacity=1 ] (416.5,85) .. controls (416.5,82.24) and (418.74,80) .. (421.5,80) .. controls (424.26,80) and (426.5,82.24) .. (426.5,85) .. controls (426.5,87.76) and (424.26,90) .. (421.5,90) .. controls (418.74,90) and (416.5,87.76) .. (416.5,85) -- cycle ;
%Shape: Circle [id:dp23320221219323] 
\draw  [fill={rgb, 255:red, 255; green, 255; blue, 255 }  ,fill opacity=1 ] (316,56) .. controls (316,53.24) and (318.24,51) .. (321,51) .. controls (323.76,51) and (326,53.24) .. (326,56) .. controls (326,58.76) and (323.76,61) .. (321,61) .. controls (318.24,61) and (316,58.76) .. (316,56) -- cycle ;
%Shape: Circle [id:dp07333799177269418] 
\draw  [fill={rgb, 255:red, 255; green, 255; blue, 255 }  ,fill opacity=1 ] (383,56) .. controls (383,53.24) and (385.24,51) .. (388,51) .. controls (390.76,51) and (393,53.24) .. (393,56) .. controls (393,58.76) and (390.76,61) .. (388,61) .. controls (385.24,61) and (383,58.76) .. (383,56) -- cycle ;
%Shape: Circle [id:dp7441304212654724] 
\draw  [fill={rgb, 255:red, 255; green, 255; blue, 255 }  ,fill opacity=1 ] (394.5,27) .. controls (394.5,24.24) and (396.74,22) .. (399.5,22) .. controls (402.26,22) and (404.5,24.24) .. (404.5,27) .. controls (404.5,29.76) and (402.26,32) .. (399.5,32) .. controls (396.74,32) and (394.5,29.76) .. (394.5,27) -- cycle ;
%Shape: Circle [id:dp03285159237149271] 
\draw  [fill={rgb, 255:red, 255; green, 255; blue, 255 }  ,fill opacity=1 ] (371.5,27) .. controls (371.5,24.24) and (373.74,22) .. (376.5,22) .. controls (379.26,22) and (381.5,24.24) .. (381.5,27) .. controls (381.5,29.76) and (379.26,32) .. (376.5,32) .. controls (373.74,32) and (371.5,29.76) .. (371.5,27) -- cycle ;

% Text Node
\draw (189,92.4) node [anchor=north west][inner sep=0.75pt]    {$a$};
% Text Node
\draw (323,92.4) node [anchor=north west][inner sep=0.75pt]    {$e$};
% Text Node
\draw (356.5,88.4) node [anchor=north west][inner sep=0.75pt]    {$f$};
% Text Node
\draw (390,90.4) node [anchor=north west][inner sep=0.75pt]    {$g$};
% Text Node
\draw (222.5,88.4) node [anchor=north west][inner sep=0.75pt]    {$b$};
% Text Node
\draw (256,92.4) node [anchor=north west][inner sep=0.75pt]    {$c$};
% Text Node
\draw (289.5,88.4) node [anchor=north west][inner sep=0.75pt]    {$d$};
% Text Node
\draw (423.5,90.4) node [anchor=north west][inner sep=0.75pt]    {$h$};
% Text Node
\draw (329,54.4) node [anchor=north west][inner sep=0.75pt]    {$i$};
% Text Node
\draw (407,24.4) node [anchor=north west][inner sep=0.75pt]    {$l$};
% Text Node
\draw (359,25.4) node [anchor=north west][inner sep=0.75pt]    {$k$};
% Text Node
\draw (394,54.4) node [anchor=north west][inner sep=0.75pt]    {$j$};

\end{tikzpicture}

    \caption{Let us consider the set of red vertices $X:=\{b, e, f, g\}$. $\mathcal{C}_X$ contains five components (which are $\{a\}, \{c,d\}, \{i\}, \{j,k,l\}, \{h\}$) and there are only three vertices in the union of their boundaries (which are $b, e, g$). Thus, by a simple cardinality argument, $\mathcal{C}_X$ does not contain any good component, so $X$ is a bad set.}
    \label{fig:example_bad_set}
\end{figure}

\begin{definition}
\label{def:tau}
For every graph~$G$, we denote by~$\bad{G}$ the maximum size of a bad set in~$G$ (and if $G$ does not have any bad set, we set $\bad{G} := 0$ by convention).     
\end{definition}

The aim of this section is to prove the following result:
\begin{theorem}
    \label{thm:binary}
    Let $T$ be a $n$-vertex tree and $k \in \{2, \ldots, n\}$. There exists an \EFO allocation for $k$ agents for every common additive binary valuation if and only if $k \geqslant \bad{T}+2$.
\end{theorem}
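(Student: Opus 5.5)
We prove the two implications separately. The easy direction constructs a no-instance from a bad set. The hard direction extracts a large bad set from any instance with no \EFO allocation.

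\textbf{No-instance from a bad set.} Let $X$ be a bad set of size $t \geqslant 1$. Put value $1$ on the vertices of $X$ and $0$ elsewhere, and take $k=t+1$ agents. Suppose there is an \EFO allocation. There are only $t$ valued vertices, so some bundle $A_0$ has value $0$. Any bundle of value $\geqslant 2$ is envied by $A_0$ even after removing one outer vertex, so every bundle has value at most $1$. Counting then gives exactly $t$ bundles that each contain exactly one vertex $x\in X$, plus the bundle $A_0$.

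For such a bundle $A_x \ni x$ of size at least $2$, removing an outer vertex other than $x$ leaves value $1$. Hence the removable vertex must be $x$ itself, so $x$ is a leaf of $T[A_x]$. In the tree this means $A_x\setminus\{x\}$ lies in a single component of $T\setminus X$, and that component has $x$ in its boundary.

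The bundle $A_0$ lies in some component $C\in\mathcal{C}_X$. Every other component $C'$ is covered by bundles $A_x$, so we may map $C'$ to one such $x$. This map is injective, since each $x$ reaches at most one component. Thus $C$ is a good component, a contradiction, and $k=t+1$ is a no-instance.

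To cover every $k\in\{2,\ldots,\bad{T}+1\}$, we also prove a monotonicity lemma: if $T$ has a bad set of size $t\geqslant 2$, it has one of size $t-1$. The idea is to delete from $X$ a well-chosen vertex, for instance one incident to few components, which merges its incident components. We then check, via Hall's condition (badness is exactly a Hall deficiency of at least $2$ after deleting any single component), that the deficiency is preserved. We expect a short case analysis on trees to suffice.

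\textbf{Large bad set from a no-instance.} This is the main part: if some binary instance with $k$ agents has no \EFO allocation, we exhibit a bad set of size $\geqslant k-1$.

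Let $W$ be the set of valued vertices. Among all connected allocations, choose one that lexicographically maximizes the sorted vector of values, and then minimizes a secondary potential such as the total size of the poorest bundles. Let $m$ be the minimum value. Non-\EFO-ness gives a poor bundle $P$ (of value $m$) and a bundle $B$ envied by $P$ even after removing any outer vertex. Because valuations are binary, an envied bundle has value $m+1$ with all outer vertices valueless, or value at least $m+2$.

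Using local exchange moves along the tree, we aim to show that the extremal allocation has a rigid structure. The moves are: shifting a subtree from a rich bundle to an adjacent poorer bundle, and re-rooting where a bundle splits. In the resulting structure, each non-poor bundle is pinned to a vertex $x_B$, its unique removable valued vertex or a cut vertex. These pinned vertices, together with vertices chosen from the other bundles, form a set $X$ of size $k-1$. The components of $T\setminus X$ then correspond to regions that exchanges cannot reach, and any good component with its matching function would translate back into an exchange improving the potential.

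\textbf{Main obstacle.} The hard step is proving rigidity: every attempted matching of the components of $T\setminus X$ into boundaries must yield an improving reallocation. Our first plan is an augmenting-path argument, mirroring the proof of Hall's theorem: a good component gives an alternating chain of bundles along which we shift one vertex each, raising the poorest bundle without creating new envy. The delicate part is checking that every shifted bundle stays connected. It may instead be cleaner to argue by induction on $n$: contract a leaf of value $0$, or split at a vertex separating the tree, and track how bad sets lift back to $T$.
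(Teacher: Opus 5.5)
Your first direction is sound. The argument runs: every bundle has value at most $1$; each valued bundle of size at least $2$ must have its unique valued vertex $x$ as a leaf; this gives an injective matching of $\mathcal{C}_X\setminus C$. That is essentially the paper's Lemma~\ref{lem:bad_valuation}.

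The monotonicity lemma, however, is only announced, and the choice of the deleted vertex genuinely matters. Take the tree with edges $au,bu,wu,zu,cz,dz$. The set $X=\{u,w,z\}$ is bad: it has four singleton components and boundary $\{u,z\}$. Yet deleting $u$, which is incident to only two components, gives $\{w,z\}$, for which $\{c\}$ is a good component via $\{a,u,b\}\mapsto w$ and $\{d\}\mapsto z$. Deleting the leaf $w$ instead keeps the set bad. The paper chooses the vertex by its degree in $T$:
\begin{itemize}
\item a vertex of degree $1$ or $2$ can always be removed, by a short check of how components merge;
\item if every vertex of $X$ has degree at least $3$, any nonempty such $Y$ satisfies $|\mathcal{C}_Y|\geqslant |Y|+2$ (induction on an external vertex), so it is bad by counting.
\end{itemize}

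The real gap is the converse, which carries most of the theorem. You only outline a strategy whose central step is left open. The extremal allocation, the pinned vertices $x_B$ and the set $X$ of size $k-1$ are never specified. The claim that a good component of $T\setminus X$ would translate into an improving reallocation is exactly what needs proof. As you note yourself, chains of one-vertex shifts between bundles of a tree generally break connectivity, so the augmenting-path step of Hall's theorem has no evident analogue.

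The paper uses no exchange argument; it splits on $n_1=|V_1|$.

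If $n_1\leqslant k-1$, take any $X\supseteq V_1$ with $|X|=k-1>\bad{T}$. It has a good component $C$ with a matching function. Give $C$ to one agent, and give each $u\in X$, together with the component matched to it (if any), to a separate agent. This is an \EFO allocation; it is your first-direction structure read backwards.

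If $n_1\geqslant k$, the witnesses are instance-independent. The paper greedily partitions $T$ into pieces of size $1$ or $2$ and shows their number $t_1+t_2$ is at most $k$:
\begin{itemize}
\item when $t_1\geqslant 2$, by the bad set formed by the centres of the $2$-pieces plus $t_1-2$ further vertices, which is bad because the non-centres form an independent set;
\item when $t_1\leqslant 1$, by the global bound $\bad{T}+2\geqslant\lceil n/2\rceil$ of Proposition~\ref{prop:bad_set_at_least_n/2}. This bound comes from explicit bad sets alternating along pendant paths at external branching nodes. Subdivided claws with all branches of even size are handled separately.
\end{itemize}
Value-$0$ pieces are then merged into neighbours. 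Pieces are split until there are exactly $k$, each with one valued vertex or with two valued vertices one of which is a leaf.

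Your plan has no substitute for these structural lower bounds on $\bad{T}$. It also does not indicate how a bad set of size $k-1$ would be read off from the bundles of an extremal allocation in this regime.
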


Note that paths do not have any bad set. Thus, if~$T$ is a path, we have~$\bad{T}=0$ and the result of Theorem~\ref{thm:binary} holds, because paths always admit an \EFO allocation for monotone valuations~\cite{Igarashi2023}. It thus suffices to prove Theorem~\ref{thm:binary} for proper trees only.

%\subsection{First implication of Theorem~\ref{thm:binary}}
\subsection{Negative Instances in the Presence of Bad Sets}

In this section, we prove the first implication of Theorem~\ref{thm:binary}, as a direct consequence of the two following lemmas. First, Lemma \ref{lem:smaller_bad_set} ensures that if we have a bad set of some size $k$, there is a bad set of size $k-1$. Second, Lemma \ref{lem:bad_valuation} ensures that if we have a bad set of size $k$, then there is a negative instance for $k+1$ agents. The combination of these two lemmas proves that, for every $k \in \{2, \ldots, \bad{T}+1\}$, there exists a negative instance. 

\begin{restatable}{lemma}{LemmaSmallerBadSet}
    \label{lem:smaller_bad_set}
    Let $T$ be a tree and $X$ be a bad set for $T$. If $|X| \geqslant 2$, then there exists $u \in X$ such that a $Y := X \setminus \{u\}$ is still a bad set.
\end{restatable}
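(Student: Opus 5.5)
The plan is to recast badness as a Hall-type deficiency condition on an auxiliary bipartite forest, and then find a vertex $u$ whose removal preserves a Hall violator.

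\textbf{Reformulation.} Let $H_X$ be the bipartite graph with parts $\mathcal{C}_X$ and $X$, where $C \in \mathcal{C}_X$ is joined to $x \in X$ whenever $x$ lies in the boundary of $C$.

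A component $C$ is good exactly when $\mathcal{C}_X \setminus \{C\}$ can be matched into $X$ in $H_X$. Hence $X$ is good iff the maximum matching of $H_X$ has size at least $|\mathcal{C}_X|-1$. By the defect version of Hall's theorem, $X$ is therefore bad iff there is a set $S \subseteq \mathcal{C}_X$ with $|S| - |N_{H_X}(S)| \geqslant 2$.

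We also note that $H_X$ is a forest. A cycle $C_1 x_1 C_2 x_2 \cdots$ in $H_X$ would lift to a cycle in $T$, since each $C_i$ is connected and the $x_i$ are distinct.

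\textbf{Choosing $u$.} Fix a Hall violator $S$ with $s := |S|$, $N := N_{H_X}(S)$ and $s - |N| \geqslant 2$. We then distinguish two cases.

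First suppose some $u \in X \setminus N$ exists. Put $Y := X \setminus \{u\}$. No component of $S$ is adjacent to $u$, so every member of $S$ is still a component of $T \setminus Y$, with the same boundary. Thus $S$ is still a Hall violator in $H_Y$, and $Y$ is bad.

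Otherwise $N = X$, and we set $x := |X| \geqslant 2$, so $s \geqslant x+2$. For $u \in X$, let $t_u \geqslant 1$ be the number of components of $S$ adjacent to $u$. In $T \setminus (X\setminus\{u\})$, these $t_u$ components merge with $u$ into a single component $M$. The boundary of $M$ is contained in $N \setminus \{u\} = X \setminus \{u\}$.

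Set $S' := (S \setminus \{\text{components adjacent to } u\}) \cup \{M\}$. Then $|S'| = s - t_u + 1$ and $N_{H_Y}(S') \subseteq X \setminus \{u\}$. This gives
\[
|S'| - |N_{H_Y}(S')| \geqslant s - t_u + 1 - (x-1) = s - x - t_u + 2 .
\]
So it suffices to find $u$ with $t_u \leqslant s - x$.

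\textbf{The counting step (main obstacle).} The crux is guaranteeing a $u$ with few neighbours in $S$, and this is where the tree structure enters. The subgraph $H_X[S \cup X]$ is a forest on $s + x$ vertices, so it has at most $s+x-1$ edges. Therefore $\sum_{u\in X} t_u \leqslant s+x-1$, and by integrality some $u$ satisfies
\[
t_u \leqslant \left\lfloor \tfrac{s+x-1}{x} \right\rfloor .
\]
Write $s = x+2+j$ with $j \geqslant 0$. The bound becomes $2 + \lfloor (j+1)/x \rfloor \leqslant 2 + j = s - x$, using $x \geqslant 2$. For $j = 0$ this relies on $\lfloor 1/x \rfloor = 0$, which is exactly where the floor and the hypothesis $|X|\geqslant 2$ are needed.

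With this choice of $u$, the set $S'$ is a Hall violator for $Y = X\setminus\{u\}$, so $Y$ is a bad set, which proves Lemma~\ref{lem:smaller_bad_set}. The only genuinely delicate points are the forest property of $H_X$ and this tight integrality estimate.
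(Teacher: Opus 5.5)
Your proof is correct, and it takes a genuinely different route from the paper.

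\textbf{The paper's route.} The paper argues locally, according to the degree in $T$ of a vertex of $X$:
\begin{enumerate}
\item a leaf of $T$ lying in $X$ can always be removed;
\item a degree-$2$ vertex of $X$ can be removed, which is checked in three subcases according to whether its neighbours lie in $X$, essentially by transporting a good component of $Y$ and its matching function back to $X$;
\item if every vertex of $X$ has degree at least $3$, a separate induction shows $|\mathcal{C}_Y|\geqslant |Y|+2$ for every nonempty such $Y$, so any vertex can be removed.
\end{enumerate}

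\textbf{Your route.} You pass to the K\H{o}nig--Hall deficiency formulation and handle all cases uniformly. Either a Hall violator $S$ misses some $u\in X$, and that $u$ can be dropped for free. Or $N(S)=X$, and you merge the $S$-components around a vertex $u$ with small $t_u$. The existence of such a $u$ comes from the edge bound $\sum_u t_u\leqslant s+x-1$ in the forest $H_X[S\cup X]$, and your integrality check for $j=0$ is exactly where $|X|\geqslant 2$ enters.

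\textbf{What each buys.} The paper's argument is fully elementary, with no Hall's theorem. It also tells you explicitly which vertex to delete: any leaf or degree-$2$ vertex of $T$ lying in $X$, and otherwise an arbitrary vertex. Yours is shorter and more uniform. It isolates acyclicity of the component--boundary incidence graph as the only property of trees being used. It also fits naturally with the Hall-based characterization of maximum bad sets that the paper itself uses in Section~\ref{sec:polytime}.

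\textbf{A degenerate case.} Your equivalence ``$X$ is good iff the maximum matching has size at least $|\mathcal{C}_X|-1$'' needs $\mathcal{C}_X\neq\emptyset$. For $X=V(T)$, the literal definition makes $X$ vacuously bad, yet every $X\setminus\{u\}$ is good, so the lemma as literally stated fails there. Both your proof and the paper's implicitly adopt the deficiency-$\geqslant 2$ reading, which excludes this case. This is a defect of the statement, not of your argument.
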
 

\begin{proof}
    If $S$ is a set of vertices of~$T$ and~$u \in V(T) \setminus S$, we denote by $C_u^S$ the connected component of $T \setminus S$ to which~$u$ belongs (so we have $C_u^S \in \C_S$). We distinguish several cases.
\smallskip

\noindent\textbf{Case 1.}
If $X$ contains a degree-1 vertex~$u$, let~$v$ be the neighbor of~$u$. Let us prove that $Y := X \setminus \{u\}$ is a bad set. We distinguish two cases based on whether $v$ belongs to $X$ or not. If $v \notin X$, removing $u$ from $X$ just adds one vertex in~$C_v^X$ and removes $u$ from its boundary, so $Y$ is still a bad set. If $v \in X$, then removing $u$ from $X$ just adds one connected component to $T \setminus X$ and does not change all the other ones and their boundaries, so $Y$ is also a bad set.
\smallskip

\noindent\textbf{Case 2.} If $X$ contains a degree-2 vertex~$u$, let~$v$ and~$w$ be its two neighbors. Let us prove that $Y := X \setminus \{u\}$ is a bad set. There are several subcases, see Figure~\ref{fig:cases_degree2} for an illustration.

        \begin{figure}[h!]
            \centering
            
\begin{tikzpicture}[x=0.75pt,y=0.75pt,yscale=-1,xscale=1]
%uncomment if require: \path (0,300); %set diagram left start at 0, and has height of 300

%Straight Lines [id:da16970228181221714] 
\draw    (49,86) -- (82.5,86) ;
%Straight Lines [id:da014400691949230104] 
\draw    (82.5,86) -- (116,86) ;
%Shape: Circle [id:dp3329177072823849] 
\draw  [fill={rgb, 255:red, 208; green, 2; blue, 27 }  ,fill opacity=1 ] (44,86) .. controls (44,83.24) and (46.24,81) .. (49,81) .. controls (51.76,81) and (54,83.24) .. (54,86) .. controls (54,88.76) and (51.76,91) .. (49,91) .. controls (46.24,91) and (44,88.76) .. (44,86) -- cycle ;
%Shape: Circle [id:dp22781048498921497] 
\draw  [fill={rgb, 255:red, 208; green, 2; blue, 27 }  ,fill opacity=1 ] (77.5,86) .. controls (77.5,83.24) and (79.74,81) .. (82.5,81) .. controls (85.26,81) and (87.5,83.24) .. (87.5,86) .. controls (87.5,88.76) and (85.26,91) .. (82.5,91) .. controls (79.74,91) and (77.5,88.76) .. (77.5,86) -- cycle ;
%Shape: Circle [id:dp7860815594208476] 
\draw  [fill={rgb, 255:red, 208; green, 2; blue, 27 }  ,fill opacity=1 ] (111,86) .. controls (111,83.24) and (113.24,81) .. (116,81) .. controls (118.76,81) and (121,83.24) .. (121,86) .. controls (121,88.76) and (118.76,91) .. (116,91) .. controls (113.24,91) and (111,88.76) .. (111,86) -- cycle ;
%Straight Lines [id:da27391924913802657] 
\draw    (212.5,86) -- (246,86) ;
%Straight Lines [id:da5705779011585469] 
\draw    (246,86) -- (279.5,86) ;
%Shape: Circle [id:dp05286008434871248] 
\draw  [fill={rgb, 255:red, 255; green, 255; blue, 255 }  ,fill opacity=1 ] (207.5,86) .. controls (207.5,83.24) and (209.74,81) .. (212.5,81) .. controls (215.26,81) and (217.5,83.24) .. (217.5,86) .. controls (217.5,88.76) and (215.26,91) .. (212.5,91) .. controls (209.74,91) and (207.5,88.76) .. (207.5,86) -- cycle ;
%Shape: Circle [id:dp13683972709446823] 
\draw  [fill={rgb, 255:red, 208; green, 2; blue, 27 }  ,fill opacity=1 ] (241,86) .. controls (241,83.24) and (243.24,81) .. (246,81) .. controls (248.76,81) and (251,83.24) .. (251,86) .. controls (251,88.76) and (248.76,91) .. (246,91) .. controls (243.24,91) and (241,88.76) .. (241,86) -- cycle ;
%Shape: Circle [id:dp5911508373558654] 
\draw  [fill={rgb, 255:red, 208; green, 2; blue, 27 }  ,fill opacity=1 ] (274.5,86) .. controls (274.5,83.24) and (276.74,81) .. (279.5,81) .. controls (282.26,81) and (284.5,83.24) .. (284.5,86) .. controls (284.5,88.76) and (282.26,91) .. (279.5,91) .. controls (276.74,91) and (274.5,88.76) .. (274.5,86) -- cycle ;
%Straight Lines [id:da4970846492905965] 
\draw    (376,86) -- (409.5,86) ;
%Straight Lines [id:da25592309918037404] 
\draw    (409.5,86) -- (443,86) ;
%Shape: Circle [id:dp9870919248408667] 
\draw  [fill={rgb, 255:red, 255; green, 255; blue, 255 }  ,fill opacity=1 ] (371,86) .. controls (371,83.24) and (373.24,81) .. (376,81) .. controls (378.76,81) and (381,83.24) .. (381,86) .. controls (381,88.76) and (378.76,91) .. (376,91) .. controls (373.24,91) and (371,88.76) .. (371,86) -- cycle ;
%Shape: Circle [id:dp15601507365025213] 
\draw  [fill={rgb, 255:red, 208; green, 2; blue, 27 }  ,fill opacity=1 ] (404.5,86) .. controls (404.5,83.24) and (406.74,81) .. (409.5,81) .. controls (412.26,81) and (414.5,83.24) .. (414.5,86) .. controls (414.5,88.76) and (412.26,91) .. (409.5,91) .. controls (406.74,91) and (404.5,88.76) .. (404.5,86) -- cycle ;
%Shape: Circle [id:dp38770203924145075] 
\draw  [fill={rgb, 255:red, 255; green, 255; blue, 255 }  ,fill opacity=1 ] (438,86) .. controls (438,83.24) and (440.24,81) .. (443,81) .. controls (445.76,81) and (448,83.24) .. (448,86) .. controls (448,88.76) and (445.76,91) .. (443,91) .. controls (440.24,91) and (438,88.76) .. (438,86) -- cycle ;

% Text Node
\draw (84.5,89.4) node [anchor=north west][inner sep=0.75pt]    {$u$};
% Text Node
\draw (51,89.4) node [anchor=north west][inner sep=0.75pt]    {$v$};
% Text Node
\draw (118,89.4) node [anchor=north west][inner sep=0.75pt]    {$w$};
% Text Node
\draw (248,89.4) node [anchor=north west][inner sep=0.75pt]    {$u$};
% Text Node
\draw (214.5,89.4) node [anchor=north west][inner sep=0.75pt]    {$v$};
% Text Node
\draw (281.5,89.4) node [anchor=north west][inner sep=0.75pt]    {$w$};
% Text Node
\draw (411.5,89.4) node [anchor=north west][inner sep=0.75pt]    {$u$};
% Text Node
\draw (378,89.4) node [anchor=north west][inner sep=0.75pt]    {$v$};
% Text Node
\draw (445,89.4) node [anchor=north west][inner sep=0.75pt]    {$w$};
% Text Node
\draw (73,110) node [anchor=north west][inner sep=0.75pt]   [align=left] {(a)};
% Text Node
\draw (236,110) node [anchor=north west][inner sep=0.75pt]   [align=left] {(b)};
% Text Node
\draw (400,110) node [anchor=north west][inner sep=0.75pt]   [align=left] {(c)};

\end{tikzpicture}

            \caption{The three subcases. Red vertices are vertices of $X$. Only $N[u]$ is represented here.}
            \label{fig:cases_degree2}
        \end{figure}
        \begin{enumerate}
            \item If $v \in X$ and~$w \in X$, then removing~$u$ from~$X$ only creates a new connected component in $T \setminus X$ and does not affect the boundaries of the other components, so $Y$ is a bad set.
            
            \item If $v \notin X$ and~$w \in X$, removing $u$ from~$X$ adds $u$ to $C_v^X$. In other words, we have $C_v^Y = C_v^X \cup \{u\}$. By contradiction, assume that $Y$ is not a bad set, and let us show that $X$ is not a bad set either. Let $C \in \mathcal{C}_Y$ be a good component, and $f : \mathcal{C}_Y \setminus C \to Y$ be its matching function. If $C = C_{v}^Y$, then $C_v^X$ would be a good component in $\mathcal{C}_X$ with the same matching function $f$. If $C \neq C_{v}^Y$, then $C$ is a good component in $\mathcal{C}_X$ with the matching function~$g:\mathcal{C}_X \setminus C \to X$ defined as follows: we set $g(C_v^X) := u$, and  $g(C') = f(C')$ for every $C' \in \mathcal{C}_X \setminus \{C, C_v^X\}$. This is a contradiction, because $X$ is a bad set. So $Y$ is a bad set too.
            \item If $v \notin X$ and~$w \notin X$, removing $u$ from~$X$ merges~$u$ with $C_v$ and $C_w$. In other words, we have $C_v^Y = C_w^Y = C_v^X \cup C_w^X \cup \{u\}$. By contradiction, assume that $Y$ is not a bad set, and let us show that $X$ is not a bad set either. Let $C \in \mathcal{C}_Y$ be a good component, and $f : \mathcal{C}_Y \setminus C \to Y$ be its matching function. If $C = C_{v}^Y$, then $C_v^X$ is a good component in $\mathcal{C}_X$ by taking the same matching function $f$ and matching $C_w^X$ to~$u$. If $C \neq C_{v}^Y$, then $C$ is a good component in $\mathcal{C}_X$ with the matching function~$g:\mathcal{C}_X \setminus C \to X$ defined as follows: we set $g(C_v^X) := f(C_{v}^Y)$, $g(C_w^X):=u$, and  $g(C') = f(C')$ for every $C' \in \mathcal{C}_X \setminus \{C, C_v^X,C_w^X\}$. This is a contradiction, because $X$ is a bad set. So $Y$ is a bad set too.
        \end{enumerate}

\noindent\textbf{Case 3.} If $X$ contains only vertices of degree at least~$3$, it is sufficient to prove the following claim: 
\begin{center}
If $Y$ is a non-empty set containing only nodes of degree at least~3, then $Y$ is a bad set.
\end{center}
    Using this claim, we get that for every $u \in X$, $X \setminus u$ is still a bad set. Let us finally prove this claim. By a simple cardinality argument, it is sufficient to prove that for every non-empty set~$Y$ containing only vertices of degree at least~3, we have $|\mathcal{C}_Y| \geqslant |Y| + 2$.
    Let us prove this by induction on~$|Y|$. If $|Y| = 1$, we have $|\mathcal{C}_Y| \geqslant 3$ because the vertex in~$Y$ has degree at least~3. Assume now that $|Y| \geqslant 2$, and let $u \in Y$ be an external vertex in $Y$. Let $Y' := Y \setminus \{u\}$. Since $u$ has degree at least~3, there are at least two connected components of $T \setminus\{u\}$ that do not intersect~$Y'$. Thus, adding $u$ to~$Y'$ splits $C_u^{Y'}$ in at least two connected components. In other words, $C_u^{Y'}$ is the union of $\{u\}$ and at least two connected components in $\C_Y$. So we finally get $|\mathcal{C}_Y| \geqslant |\mathcal{C}_{Y'}|+1 \geqslant |Y'| + 3 \geqslant |Y| + 2$ (where the second inequality holds by the induction hypothesis on~$Y'$).
\end{proof}

\begin{lemma}
    \label{lem:bad_valuation}
    Let $G$ be a graph and $X$ be a bad set for $G$. Then, there exists a binary additive valuation that does not admit any \EFO allocation with $k:=|X|+1$ agents.
\end{lemma}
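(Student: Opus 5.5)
We plan to use the simplest possible witness valuation: give value $1$ to every vertex of $X$ and value $0$ to every vertex of $V(G)\setminus X$. We then argue by contradiction. Suppose that $A=\{A_1,\dots,A_k\}$ is an \EFO allocation for $k=|X|+1$ agents. We will extract from $A$ a good component of $\mathcal{C}_X$, which contradicts the assumption that $X$ is bad.

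\emph{Step 1: every bundle has value at most $1$.} The total value is $|X|=k-1$, so some bundle $A_0$ has value $0$; it may be empty. Suppose some bundle $B$ has $v(B)\ge 2$. Removing a single vertex lowers the value by at most $1$, so $v(B\setminus\{y\})\ge 1>v(A_0)$ for every outer vertex $y$ of $B$. Then the agent holding $A_0$ violates \EFO. Hence every bundle contains at most one vertex of $X$. Counting then gives exactly $|X|$ bundles $B_x$, one for each $x\in X$, with $B_x\cap X=\{x\}$, and exactly one bundle $A_0$ disjoint from $X$.

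\emph{Step 2: each $x$ is outer in $B_x$.} The agent holding $A_0$ must not envy $B_x$ up to one outer good. The only removal that brings the value down to $0$ is removing $x$ itself. So either $B_x=\{x\}$, or $x$ is outer in $B_x$ and $B_x\setminus\{x\}$ is connected. In the latter case $B_x\setminus\{x\}$ is a connected set avoiding $X$, so it lies inside a single component of $\mathcal{C}_X$. In both cases $B_x$ meets at most one component of $\mathcal{C}_X$. Moreover, if $B_x$ meets a component $C'$, then connectivity of $B_x$ forces $x$ to be adjacent to $C'$, so $x$ lies in the boundary of $C'$.

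\emph{Step 3: building a good component.} The set $A_0$ is connected and avoids $X$. If it is nonempty, let $C$ be the component of $\mathcal{C}_X$ containing it. If $A_0$ is empty, let $C$ be an arbitrary component; note that if $\mathcal{C}_X=\emptyset$ then $X$ is trivially bad only vacuously, so we may assume components exist. Take any $C'\in\mathcal{C}_X\setminus\{C\}$. It is nonempty and disjoint from $A_0$, so it is covered by bundles of the form $B_x$. Pick one such $x$ and set $f(C'):=x$; by Step 2, $x$ lies in the boundary of $C'$. The map $f$ is injective because each $B_x$ meets at most one component. Thus $f$ is a matching function for $C$, so $C$ is a good component and $X$ is a good set, a contradiction.

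The main subtlety I expect is the bookkeeping in Step 2. One must handle the degenerate cases $B_x=\{x\}$ and $A_0=\emptyset$, and make sure the outer-vertex condition really forces each $B_x$ into a single component. Once that is settled, the injectivity in Step 3 is immediate.
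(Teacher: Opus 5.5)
Your proof is correct and is essentially the paper's argument in contrapositive form. It uses the same indicator valuation on $X$ and the same pigeonhole step: a bundle containing two vertices of $X$ is envied by a deprived agent. It also uses the same map, sending each component other than the deprived agent's to the $X$-vertex of a bundle that meets it. The only difference is direction: the paper uses badness to get non-injectivity and then envy toward the bundle whose vertex is a cut vertex, while you use the \EFO condition to get injectivity and hence a good component.

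Your aside about $\mathcal{C}_X=\emptyset$ does not actually justify the assumption. Read literally, $X=V(G)$ is vacuously bad and the lemma fails there: giving each vertex to its own agent and the empty bundle to the last agent is an \EFO allocation for every valuation. The paper's proof makes the same tacit assumption $\mathcal{C}_X\neq\emptyset$, which matches the intended notion of a bad set, so this is not a gap relative to the paper.
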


\begin{proof}
    Let us consider the following binary additive valuation: vertices in~$X$ get value~$1$, and vertices in $V(G) \setminus X$ get value~$0$. Let us prove that this valuation does not admit any \EFO allocation with $k := |X|+1$ agents.
    Consider any allocation to $k$ agents. By the pigeonhole principle, there exists at least one agent that has only vertices in $V(G) \setminus X$. Let us call such an agent a \emph{deprived agent}. 
    There are now two cases.

    \begin{itemize}
        \item If there are at least two deprived agents, by the pigeonhole principle, there exists another agent that has two vertices in~$X$. Thus, by definition of the valuation, this allocation is not \EFO. Indeed, some agent receives a set of value two which is envied up to more than one item by the deprived agents.

        \item If there is a unique deprived agent, all its vertices are contained in some connected component of $G \setminus X$. Let $C \in \mathcal{C}_X$ be this component.
        Consider the function $f : \mathcal{C}_X \setminus C \to X$ defined as follows. For any component of $C' \in \mathcal{C}_X \setminus C$, consider an arbitrary $u \in C'$. There exists a non-deprived agent that receives the vertex~$u$ in its subtree. Since the agent is not deprived, it also receives a vertex $v \in X$. We set $f(C'):=v$. 
        
        By definition of a bad set, $C$~is a bad component, so $f$ is not injective. Thus, there exists two distinct components $C_1, C_2 \in \mathcal{C}_X \setminus C$ and $v \in X$ such that $f(C_1) = f(C_2) = v$. By definition of~$f$, there exists an agent that receives a set containing $v$, and at least one vertex from $C_1$ and at least one vertex from~$C_2$. In particular, removing $v$ from its subtree disconects it. Thus, the removal of any leaf of the set of $v$ leaves a set of value one. So~this agent is envied by the deprived agent, and this allocation is not \EFO. \qedhere
    \end{itemize}
    
\end{proof}

Lemma~\ref{lem:smaller_bad_set} and Lemma~\ref{lem:bad_valuation} immediately prove one of the two directions of Theorem~\ref{thm:binary}. Indeed, let $T$ be a proper tree, and assume that $2 \leqslant k \leqslant \bad{T} + 1$. By Lemma~\ref{lem:smaller_bad_set}, there exists a bad set~$X$ such that $|X| = k-1$. Then, Lemma~\ref{lem:bad_valuation} ensures the existence of a valuation that does not admit any \EFO allocation.
\medskip

%\subsection{Second implication of Theorem~\ref{thm:binary}}
\subsection{Positive Allocations in the Absence of Bad Sets}

Let us now prove the other implication of Theorem~\ref{thm:binary}. Namely, let $T$ be an $n$-vertex proper tree, assume that $k \geqslant \bad{T} + 2$, consider a common additive binary valuation, and let us show that there exists an \EFO allocation. Let us denote by $V_0$ (resp. by $V_1$) the subset of $V(T)$ containing all the vertices with value~$0$ (resp. with value~$1$). Let~$n_0 := |V_0|$ and $n_1 := |V_1|$. We have $n = n_0 + n_1$.

Assume first that $n_1 \leqslant k-1$. In this case, let~$X \subseteq V(T)$ be a set of vertices containing all vertices in $V_1$ plus $k-n_1-1$ other additional vertices in $V_0$, chosen arbitrarily.
We have $V_1 \subseteq X$ and $|X| = k-1 \geqslant \bad{T}+1$, so $X$ is not a bad set. Let $C \in \mathcal{C}_X$ be a good component, and let $f: \mathcal{C}_X \setminus C \to X$ be its matching function. Let us denote by $u_1, \ldots, u_{k-1}$ the vertices of $X$.
Let us consider $k$ agents $A_1, \ldots, A_k$ and the following allocation.
For each $i \in \{1, \ldots, k-1\}$, the vertex $u_i$ is allocated to $A_i$. If there exists a component $C' \in \mathcal{C}_X \setminus C$ such that $f(C')=u_i$, $A_i$ also receives all the vertices of $C'$ (in this case, such a $C'$ is unique since $f$ is injective).
Finally, $A_k$ receives all the vertices in $C$. Let us prove that this allocation is~\EFO. Note that, for every $i \in \{1, \ldots, k\}$, the set of vertices allocated to $A_i$ is connected. Moreover, the total value of the vertices allocated to $A_k$ is~$0$ (because $C \subseteq V(T) \setminus X \subseteq V_0$), and for each $i \in \{1, \ldots, k-1\}$, the total value of the vertices allocated to $A_i$ is~$1$ if $u_i \in V_1$, and~$0$ otherwise.
Since for each $i \in \{1, \ldots, k-1\}$, $u_i$ is not a cut-vertex in the part of $A_i$, this ensures that this allocation is \EFO.

\begin{figure}[h!]
    \centering
    \begin{tikzpicture}[x=0.75pt,y=0.75pt,yscale=-1,xscale=1]
%uncomment if require: \path (0,300); %set diagram left start at 0, and has height of 300

%Straight Lines [id:da7091076788737796] 
\draw    (482.5,81) -- (482.5,105.5) ;
%Straight Lines [id:da8160536866558529] 
\draw    (482.5,130) -- (469,154.5) ;
%Straight Lines [id:da7545023651081156] 
\draw    (482.5,130) -- (496,154.5) ;
%Straight Lines [id:da40880187708550997] 
\draw    (482.5,105.5) -- (482.5,130) ;
%Straight Lines [id:da6161795296396849] 
\draw    (496,154.5) -- (509.5,179) ;
%Straight Lines [id:da15886589476693036] 
\draw    (407.5,105.5) -- (407.5,130) ;
%Straight Lines [id:da024881774345310803] 
\draw    (407.5,130) -- (394,154.5) ;
%Straight Lines [id:da6314519054956839] 
\draw    (407.5,130) -- (421,154.5) ;
%Straight Lines [id:da5467423090426149] 
\draw    (286.5,109) -- (286.5,130) ;
%Straight Lines [id:da6368684461988834] 
\draw    (286.5,88) -- (286.5,109) ;
%Straight Lines [id:da8326877328625868] 
\draw    (286.5,67) -- (286.5,88) ;
%Straight Lines [id:da1520797020719069] 
\draw    (286.5,130) -- (286.5,151) ;
%Straight Lines [id:da26622265574067117] 
\draw    (286.5,151) -- (286.5,172) ;
%Straight Lines [id:da21632096896824216] 
\draw    (286.5,130) -- (309.5,130) ;
%Straight Lines [id:da6216552870081155] 
\draw    (309.5,130) -- (332.5,130) ;
%Straight Lines [id:da8377536818518887] 
\draw    (332.5,109) -- (332.5,130) ;
%Straight Lines [id:da8904327569114151] 
\draw    (332.5,88) -- (332.5,109) ;
%Straight Lines [id:da7398928838127611] 
\draw    (332.5,130) -- (332.5,151) ;
%Straight Lines [id:da5218374443197644] 
\draw    (177,130) -- (177,163) ;
%Straight Lines [id:da4479932125656946] 
\draw    (177,97) -- (177,130) ;
%Straight Lines [id:da9219319311005134] 
\draw    (211.5,130) -- (177,130) ;
%Straight Lines [id:da039838313478612664] 
\draw    (211.5,130) -- (211.5,163) ;
%Straight Lines [id:da6037028046956741] 
\draw    (211.5,97) -- (211.5,130) ;
%Shape: Circle [id:dp5604140176243656] 
\draw  [fill={rgb, 255:red, 255; green, 255; blue, 255 }  ,fill opacity=1 ] (172,163) .. controls (172,160.24) and (174.24,158) .. (177,158) .. controls (179.76,158) and (182,160.24) .. (182,163) .. controls (182,165.76) and (179.76,168) .. (177,168) .. controls (174.24,168) and (172,165.76) .. (172,163) -- cycle ;
%Shape: Circle [id:dp7787762887091321] 
\draw  [fill={rgb, 255:red, 255; green, 255; blue, 255 }  ,fill opacity=1 ] (206.5,97) .. controls (206.5,94.24) and (208.74,92) .. (211.5,92) .. controls (214.26,92) and (216.5,94.24) .. (216.5,97) .. controls (216.5,99.76) and (214.26,102) .. (211.5,102) .. controls (208.74,102) and (206.5,99.76) .. (206.5,97) -- cycle ;
%Shape: Circle [id:dp1629349086771925] 
\draw  [fill={rgb, 255:red, 255; green, 255; blue, 255 }  ,fill opacity=1 ] (206.5,163) .. controls (206.5,160.24) and (208.74,158) .. (211.5,158) .. controls (214.26,158) and (216.5,160.24) .. (216.5,163) .. controls (216.5,165.76) and (214.26,168) .. (211.5,168) .. controls (208.74,168) and (206.5,165.76) .. (206.5,163) -- cycle ;
%Shape: Circle [id:dp3954198385429871] 
\draw  [fill={rgb, 255:red, 255; green, 255; blue, 255 }  ,fill opacity=1 ] (206.5,130) .. controls (206.5,127.24) and (208.74,125) .. (211.5,125) .. controls (214.26,125) and (216.5,127.24) .. (216.5,130) .. controls (216.5,132.76) and (214.26,135) .. (211.5,135) .. controls (208.74,135) and (206.5,132.76) .. (206.5,130) -- cycle ;
%Shape: Circle [id:dp43467685125867805] 
\draw  [fill={rgb, 255:red, 255; green, 255; blue, 255 }  ,fill opacity=1 ] (172,97) .. controls (172,94.24) and (174.24,92) .. (177,92) .. controls (179.76,92) and (182,94.24) .. (182,97) .. controls (182,99.76) and (179.76,102) .. (177,102) .. controls (174.24,102) and (172,99.76) .. (172,97) -- cycle ;
%Shape: Circle [id:dp2809927085452779] 
\draw  [fill={rgb, 255:red, 255; green, 255; blue, 255 }  ,fill opacity=1 ] (172,130) .. controls (172,127.24) and (174.24,125) .. (177,125) .. controls (179.76,125) and (182,127.24) .. (182,130) .. controls (182,132.76) and (179.76,135) .. (177,135) .. controls (174.24,135) and (172,132.76) .. (172,130) -- cycle ;
%Shape: Circle [id:dp3105028930646696] 
\draw  [fill={rgb, 255:red, 255; green, 255; blue, 255 }  ,fill opacity=1 ] (281.5,88) .. controls (281.5,85.24) and (283.74,83) .. (286.5,83) .. controls (289.26,83) and (291.5,85.24) .. (291.5,88) .. controls (291.5,90.76) and (289.26,93) .. (286.5,93) .. controls (283.74,93) and (281.5,90.76) .. (281.5,88) -- cycle ;
%Shape: Circle [id:dp7142070306001946] 
\draw  [fill={rgb, 255:red, 255; green, 255; blue, 255 }  ,fill opacity=1 ] (281.5,67) .. controls (281.5,64.24) and (283.74,62) .. (286.5,62) .. controls (289.26,62) and (291.5,64.24) .. (291.5,67) .. controls (291.5,69.76) and (289.26,72) .. (286.5,72) .. controls (283.74,72) and (281.5,69.76) .. (281.5,67) -- cycle ;
%Shape: Circle [id:dp3537620247864416] 
\draw  [fill={rgb, 255:red, 255; green, 255; blue, 255 }  ,fill opacity=1 ] (281.5,172) .. controls (281.5,169.24) and (283.74,167) .. (286.5,167) .. controls (289.26,167) and (291.5,169.24) .. (291.5,172) .. controls (291.5,174.76) and (289.26,177) .. (286.5,177) .. controls (283.74,177) and (281.5,174.76) .. (281.5,172) -- cycle ;
%Shape: Circle [id:dp8167923727110462] 
\draw  [fill={rgb, 255:red, 255; green, 255; blue, 255 }  ,fill opacity=1 ] (281.5,151) .. controls (281.5,148.24) and (283.74,146) .. (286.5,146) .. controls (289.26,146) and (291.5,148.24) .. (291.5,151) .. controls (291.5,153.76) and (289.26,156) .. (286.5,156) .. controls (283.74,156) and (281.5,153.76) .. (281.5,151) -- cycle ;
%Shape: Circle [id:dp3913376843963784] 
\draw  [fill={rgb, 255:red, 255; green, 255; blue, 255 }  ,fill opacity=1 ] (281.5,109) .. controls (281.5,106.24) and (283.74,104) .. (286.5,104) .. controls (289.26,104) and (291.5,106.24) .. (291.5,109) .. controls (291.5,111.76) and (289.26,114) .. (286.5,114) .. controls (283.74,114) and (281.5,111.76) .. (281.5,109) -- cycle ;
%Shape: Circle [id:dp4951508849310634] 
\draw  [fill={rgb, 255:red, 255; green, 255; blue, 255 }  ,fill opacity=1 ] (281.5,130) .. controls (281.5,127.24) and (283.74,125) .. (286.5,125) .. controls (289.26,125) and (291.5,127.24) .. (291.5,130) .. controls (291.5,132.76) and (289.26,135) .. (286.5,135) .. controls (283.74,135) and (281.5,132.76) .. (281.5,130) -- cycle ;
%Shape: Circle [id:dp7059662446994728] 
\draw  [fill={rgb, 255:red, 255; green, 255; blue, 255 }  ,fill opacity=1 ] (304.5,130) .. controls (304.5,127.24) and (306.74,125) .. (309.5,125) .. controls (312.26,125) and (314.5,127.24) .. (314.5,130) .. controls (314.5,132.76) and (312.26,135) .. (309.5,135) .. controls (306.74,135) and (304.5,132.76) .. (304.5,130) -- cycle ;
%Shape: Circle [id:dp908214381091036] 
\draw  [fill={rgb, 255:red, 255; green, 255; blue, 255 }  ,fill opacity=1 ] (327.5,130) .. controls (327.5,127.24) and (329.74,125) .. (332.5,125) .. controls (335.26,125) and (337.5,127.24) .. (337.5,130) .. controls (337.5,132.76) and (335.26,135) .. (332.5,135) .. controls (329.74,135) and (327.5,132.76) .. (327.5,130) -- cycle ;
%Shape: Circle [id:dp1582673878681473] 
\draw  [fill={rgb, 255:red, 255; green, 255; blue, 255 }  ,fill opacity=1 ] (327.5,109) .. controls (327.5,106.24) and (329.74,104) .. (332.5,104) .. controls (335.26,104) and (337.5,106.24) .. (337.5,109) .. controls (337.5,111.76) and (335.26,114) .. (332.5,114) .. controls (329.74,114) and (327.5,111.76) .. (327.5,109) -- cycle ;
%Shape: Circle [id:dp7510712383216565] 
\draw  [fill={rgb, 255:red, 255; green, 255; blue, 255 }  ,fill opacity=1 ] (327.5,88) .. controls (327.5,85.24) and (329.74,83) .. (332.5,83) .. controls (335.26,83) and (337.5,85.24) .. (337.5,88) .. controls (337.5,90.76) and (335.26,93) .. (332.5,93) .. controls (329.74,93) and (327.5,90.76) .. (327.5,88) -- cycle ;
%Shape: Circle [id:dp6201418873594458] 
\draw  [fill={rgb, 255:red, 255; green, 255; blue, 255 }  ,fill opacity=1 ] (327.5,151) .. controls (327.5,148.24) and (329.74,146) .. (332.5,146) .. controls (335.26,146) and (337.5,148.24) .. (337.5,151) .. controls (337.5,153.76) and (335.26,156) .. (332.5,156) .. controls (329.74,156) and (327.5,153.76) .. (327.5,151) -- cycle ;
%Shape: Circle [id:dp9673245161692609] 
\draw  [fill={rgb, 255:red, 255; green, 255; blue, 255 }  ,fill opacity=1 ] (402.5,130) .. controls (402.5,127.24) and (404.74,125) .. (407.5,125) .. controls (410.26,125) and (412.5,127.24) .. (412.5,130) .. controls (412.5,132.76) and (410.26,135) .. (407.5,135) .. controls (404.74,135) and (402.5,132.76) .. (402.5,130) -- cycle ;
%Shape: Circle [id:dp9893667006188185] 
\draw  [fill={rgb, 255:red, 255; green, 255; blue, 255 }  ,fill opacity=1 ] (416,154.5) .. controls (416,151.74) and (418.24,149.5) .. (421,149.5) .. controls (423.76,149.5) and (426,151.74) .. (426,154.5) .. controls (426,157.26) and (423.76,159.5) .. (421,159.5) .. controls (418.24,159.5) and (416,157.26) .. (416,154.5) -- cycle ;
%Shape: Circle [id:dp13151405878312206] 
\draw  [fill={rgb, 255:red, 255; green, 255; blue, 255 }  ,fill opacity=1 ] (389,154.5) .. controls (389,151.74) and (391.24,149.5) .. (394,149.5) .. controls (396.76,149.5) and (399,151.74) .. (399,154.5) .. controls (399,157.26) and (396.76,159.5) .. (394,159.5) .. controls (391.24,159.5) and (389,157.26) .. (389,154.5) -- cycle ;
%Shape: Circle [id:dp38287703917174165] 
\draw  [fill={rgb, 255:red, 255; green, 255; blue, 255 }  ,fill opacity=1 ] (402.5,105.5) .. controls (402.5,102.74) and (404.74,100.5) .. (407.5,100.5) .. controls (410.26,100.5) and (412.5,102.74) .. (412.5,105.5) .. controls (412.5,108.26) and (410.26,110.5) .. (407.5,110.5) .. controls (404.74,110.5) and (402.5,108.26) .. (402.5,105.5) -- cycle ;
%Shape: Circle [id:dp4511827536333234] 
\draw  [fill={rgb, 255:red, 255; green, 255; blue, 255 }  ,fill opacity=1 ] (477.5,130) .. controls (477.5,127.24) and (479.74,125) .. (482.5,125) .. controls (485.26,125) and (487.5,127.24) .. (487.5,130) .. controls (487.5,132.76) and (485.26,135) .. (482.5,135) .. controls (479.74,135) and (477.5,132.76) .. (477.5,130) -- cycle ;
%Shape: Circle [id:dp2231722441884677] 
\draw  [fill={rgb, 255:red, 255; green, 255; blue, 255 }  ,fill opacity=1 ] (504.5,179) .. controls (504.5,176.24) and (506.74,174) .. (509.5,174) .. controls (512.26,174) and (514.5,176.24) .. (514.5,179) .. controls (514.5,181.76) and (512.26,184) .. (509.5,184) .. controls (506.74,184) and (504.5,181.76) .. (504.5,179) -- cycle ;
%Shape: Circle [id:dp06993401709751113] 
\draw  [fill={rgb, 255:red, 255; green, 255; blue, 255 }  ,fill opacity=1 ] (464,154.5) .. controls (464,151.74) and (466.24,149.5) .. (469,149.5) .. controls (471.76,149.5) and (474,151.74) .. (474,154.5) .. controls (474,157.26) and (471.76,159.5) .. (469,159.5) .. controls (466.24,159.5) and (464,157.26) .. (464,154.5) -- cycle ;
%Shape: Circle [id:dp3546501398790314] 
\draw  [fill={rgb, 255:red, 255; green, 255; blue, 255 }  ,fill opacity=1 ] (477.5,105.5) .. controls (477.5,102.74) and (479.74,100.5) .. (482.5,100.5) .. controls (485.26,100.5) and (487.5,102.74) .. (487.5,105.5) .. controls (487.5,108.26) and (485.26,110.5) .. (482.5,110.5) .. controls (479.74,110.5) and (477.5,108.26) .. (477.5,105.5) -- cycle ;
%Shape: Circle [id:dp8854501813727735] 
\draw  [fill={rgb, 255:red, 255; green, 255; blue, 255 }  ,fill opacity=1 ] (477.5,81) .. controls (477.5,78.24) and (479.74,76) .. (482.5,76) .. controls (485.26,76) and (487.5,78.24) .. (487.5,81) .. controls (487.5,83.76) and (485.26,86) .. (482.5,86) .. controls (479.74,86) and (477.5,83.76) .. (477.5,81) -- cycle ;
%Shape: Circle [id:dp9692514782982684] 
\draw  [fill={rgb, 255:red, 255; green, 255; blue, 255 }  ,fill opacity=1 ] (491,154.5) .. controls (491,151.74) and (493.24,149.5) .. (496,149.5) .. controls (498.76,149.5) and (501,151.74) .. (501,154.5) .. controls (501,157.26) and (498.76,159.5) .. (496,159.5) .. controls (493.24,159.5) and (491,157.26) .. (491,154.5) -- cycle ;

% Text Node
\draw (183,185) node [anchor=north west][inner sep=0.75pt]   [align=left] {(1)};
% Text Node
\draw (301,185) node [anchor=north west][inner sep=0.75pt]   [align=left] {(2)};
% Text Node
\draw (397,185) node [anchor=north west][inner sep=0.75pt]   [align=left] {(3)};
% Text Node
\draw (473,185) node [anchor=north west][inner sep=0.75pt]   [align=left] {(4)};

\end{tikzpicture}

    \caption{(1) The $H$-graph (which is also a $H_0$-graph). (2) A $H_1$-graph with branches of size~3,2,2,1. (3) The claw. (4) A subdivided claw with branches of size~2,2,1.}
    \label{fig:H_graph}
\end{figure}

In the rest of the proof, we will assume that $n_1 \geqslant k$. 
Let us first define a few special graphs represented in Figure~\ref{fig:H_graph}.
The \emph{$H$-graph} is the graph on $6$ vertices obtained by adding an edge between the middle vertices of two paths of length~$3$.
For $s \in \mathbb{N}$, a \emph{$H_s$-graph} is a subdivision of the $H$-graph, with the constraint that the middle edge is subdivided exactly $s$~times (and all the other edges are subdivided an arbitrary number of times, possibly zero).
The \emph{claw} is the graph $K_{1,3}$. A \emph{subdivided claw} is any subdivision of the claw graph. The \emph{branches} of a $H_s$-graph (resp. of a subdivided claw) are the connected components obtained when removing the central path of length~$s+2$ (resp. the central vertex).

We need the following result.

\begin{restatable}{proposition}{PropBadSetHalfVertices}
    \label{prop:bad_set_at_least_n/2}
    Let $T$ be a proper $n$-vertex tree.
    Then, we have $\bad{T} +2 \geqslant \left\lfloor \frac{n}{2} \right\rfloor$. Moreover, if $T$ is not a subdivided claw with all branches of even size, we have $\bad{T} +2 \geqslant \left\lceil \frac{n}{2} \right\rceil$.
\end{restatable}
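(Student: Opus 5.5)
The idea is to build an explicit bad set from one colour class of the bipartition of $T$. We certify badness by the cardinality criterion used in Figure~\ref{fig:example_bad_set} and in Case~3 of Lemma~\ref{lem:smaller_bad_set}: if $|\mathcal{C}_X| \geqslant |X|+2$, then no component can be good. Indeed, an injection $f$ from $\mathcal{C}_X \setminus C$ into $X$ would need $|\mathcal{C}_X|-1 \leqslant |X|$.

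\textbf{Step 1: counting components of an independent set.} If $X$ is an independent set of $T$, then deleting the vertices of $X$ one by one (each is not adjacent to the previously deleted ones) shows
\[
|\mathcal{C}_X| = 1 + \sum_{x\in X}(\deg x - 1).
\]
Hence $|\mathcal{C}_X| \geqslant |X|+2$ if and only if
\[
\sigma(X) := \sum_{x \in X}(\deg x - 2) \geqslant 1.
\]
So it suffices to find a large independent set with $\sigma(X)\geqslant 1$. Leaves contribute $-1$ to $\sigma$, degree-2 vertices contribute $0$, and branching nodes contribute at least $+1$.

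\textbf{Step 2: colour classes.} Let $A,B$ be the two colour classes of $T$. Since $\sum_{a\in A}\deg a = n-1$, we get
\[
\sigma(A) = |B|-|A|-1 \quad\text{and}\quad \sigma(B) = |A|-|B|-1.
\]
Let $S$ be a class containing a branching node $c$ (one exists since $T$ is proper), and let $S'$ be the other class. Start from $X=S$ and repeatedly delete a leaf of $T$ lying in $X$. Each deletion raises $\sigma$ by exactly one and keeps $X$ independent. Once no leaves remain, $\sigma(X)\geqslant \deg c - 2 \geqslant 1$, so the process stops after $r=\max(0,\,1-\sigma(S))$ deletions with a bad set. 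Its size is
\[
|X| = |S| - r = \min\bigl(|S|,\; |S'|-2\bigr).
\]
If $|S'|\geqslant |S|+2$ this is $|S|$; otherwise it is $|S'|-2$. In either case a short case check gives $|X| \geqslant \lfloor n/2\rfloor - 2$, which proves the first inequality.

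\textbf{Step 3: the ceiling bound, which is the main obstacle.} When $n$ is odd we must gain one more vertex. The loss comes from the case $|X|=|S'|-2$ with $|S'|<|S|$, or with $|S'|=|S|+1$ and $S$ the only class containing branching nodes. I would first try the alternatives below.

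First, if both classes contain branching nodes, choose $S$ to be the larger class. Then $|X| = |S'|-2$ becomes $\max(|A|,|B|)-2 \geqslant \lceil n/2\rceil - 2$ whenever $|S| > |S'|$ or the sizes allow it.

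Second, if there are two branching nodes, or a branching node of degree at least $4$, the final $\sigma$ after removing all leaves is at least $2$. So we may stop one deletion earlier, or add back a leaf, and this gains the missing vertex.

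Third, abandon independence locally. Add to $X$ a neighbour of $c$ of degree $2$. This keeps $|\mathcal{C}_X|$ and raises $|X|$ by one, at the cost of checking the count directly rather than through $\sigma$.

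The remaining configuration is a tree with a unique branching node of degree exactly $3$, that is, a subdivided claw. There the count is tight, and a parity computation on the three branch lengths should show that the ceiling bound fails exactly when all branches have even size. This matches the exception in the statement. I expect this final parity case analysis, distinguishing subdivided claws from $H_s$-like configurations and making the extra vertex argument rigorous, to be the most delicate part.
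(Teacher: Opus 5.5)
\textbf{The bound in Step~2 is false, and the independent-set method cannot repair it.}

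Your construction yields $|X|=\min(|S|,|S'|-2)$. Nothing forces the two colour classes of a tree to be balanced, so either term can be far below $\lfloor n/2\rfloor-2$.

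Take the star $K_{1,m}$ with $m\ge 3$. The only branching node is the centre $c$, so $S=\{c\}$ and $\sigma(S)=m-2\ge 1$. No leaf is removed, and you get $X=\{c\}$, of size $1$. Meanwhile $\lfloor n/2\rfloor-2=\lfloor (m+1)/2\rfloor-2$ grows with $m$.

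Unbalanced double stars defeat both choices of $S$ at once. Take two adjacent centres carrying $2$ and $10$ leaves. Then $n=14$, the classes have sizes $3$ and $11$, and your construction gives at most $3<5=\lfloor n/2\rfloor-2$.

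No choice of colour class helps. In $K_{1,m}$, every independent set with $\sigma(X)\ge 1$ must contain $c$, and is therefore $\{c\}$. The Step~3 adjustments (stopping one deletion early, adding one degree-2 neighbour) gain only $O(1)$ vertices, which cannot recover a deficit linear in $n$. Step~3 is also still a plan rather than an argument.

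\textbf{Root cause: the criterion charges every vertex of $X$.} The test $|\mathcal{C}_X|\ge |X|+2$ counts all of $X$. Badness only needs $|\mathcal{C}_X|\ge |\partial X|+2$, where $\partial X$ is the set of vertices of $X$ having a neighbour outside $X$; any matching function as in the definition lands in $\partial X$. Vertices of $X$ whose neighbours all lie in $X$ are therefore free. Large bad sets typically consist mostly of such vertices. In $K_{1,m}$, the centre together with all but three leaves is bad, so $\tau(K_{1,m})\ge n-3$.

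\textbf{How the paper uses this.}
\begin{itemize}
\item It takes one or two external branching nodes and puts into $X$ every vertex outside three or four pendant paths hanging from them.
\item On those paths it alternates, starting and ending outside $X$, so that $|\mathcal{C}_X|=|\partial X|+2$ while $|X|+2\ge n/2+m_0/2-2$. Here $m_0$ is the number of vertices outside the paths.
\item This settles three cases: at least three branching nodes, two branching nodes at distance at least $3$, or an external branching node of degree at least $4$.
\item What remains are $H_0$-graphs, $H_1$-graphs and subdivided claws, each handled by a direct alternating construction. Only for the claw with all branches even does parity cost the extra vertex.
\end{itemize}

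Your parity intuition for the exceptional case is right. But you need the boundary-based count, or at least non-independent sets, to obtain even the first inequality.
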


The proof of Proposition~\ref{prop:bad_set_at_least_n/2} requires itself several lemmas.

\begin{lemma}
    \label{lem:branching_node_deg_4}
    If there exists an external branching node that has degree at least~$4$, then $\bad{T}+2 \geqslant \left\lceil \frac{n}{2} \right\rceil$.
\end{lemma}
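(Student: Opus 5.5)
The plan is to exhibit an explicit bad set of size at least $\lceil n/2\rceil-2$. I will certify badness only through the cardinality argument of Figure~\ref{fig:example_bad_set}. If $|\mathcal{C}_X| \geqslant |X|+2$, then for every $C\in\mathcal{C}_X$ there are at least $|X|+1$ other components to be matched injectively into $X$, which is impossible, so $X$ is bad. In a tree, counting vertices and edges of the forest $T\setminus X$ gives
\[
|\mathcal{C}_X| = \sum_{x\in X}\deg(x) - e(X) - |X| + 1 ,
\]
where $e(X)$ is the number of edges of $T[X]$. Writing $\Phi(X) := \sum_{x\in X}(\deg(x)-2) - e(X)$, the condition $|\mathcal{C}_X|\geqslant |X|+2$ becomes $\Phi(X)\geqslant 1$. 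So it suffices to find $X$ with $\Phi(X)\geqslant 1$ and $|X|\geqslant \lceil n/2\rceil-2$.

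Next I use the structure around the external branching node $u$, of degree $d\geqslant 4$. At most one component $R$ of $T\setminus\{u\}$ is a proper tree. Every other component is a path containing no branching node, so it is attached to $u$ through one of its endpoints. Hence $u$ carries at least $d-1\geqslant 3$ pendant paths $P_1,\dots,P_{d-1}$ (and possibly $R$). Putting $u$ in $X$ contributes $d-2\geqslant 2$ to $\Phi$; this is the budget I will spend.

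On each pendant path $P_i = p_1p_2\cdots p_{L_i}$, with $p_1$ adjacent to $u$, I put into $X$ the vertices $p_2,p_4,\dots$. These vertices are pairwise non-adjacent and not adjacent to $u$. Each internal one contributes $0$ to $\Phi$, and only a chosen leaf $p_{L_i}$ (when $L_i$ is even) contributes $-1$. So $P_i$ yields $\lfloor L_i/2\rfloor$ vertices at cost $0$ or $1$. Alternatively, it yields $\lfloor (L_i-1)/2\rfloor$ vertices at cost $0$ by omitting the leaf.

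On $R$, rooted at the neighbour $r$ of $u$, I take the colour class of $R$ not containing $r$. It is independent and not adjacent to $u$, and by the same edge count its contribution to $\Phi$ is $|R|-1-2a$, where $a$ is its size. If this is negative, I delete leaves from it, each deletion raising $\Phi$ by $1$. Alternatively I switch to the other colour class containing $r$, which costs $1$ for the edge $ur$. Whichever option is chosen, the identity $\Phi = n-1-2|X|$ holds for every such "class-by-class" choice. Therefore I can always add or delete vertices until $|X| = \lfloor (n-2)/2\rfloor$ while keeping $\Phi\geqslant 1$. The required size then follows because $\lfloor (n-2)/2\rfloor \geqslant \lceil n/2\rceil - 2$.

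The main obstacle is making this size adjustment rigorous. Removing a vertex of large degree from $X$ decreases $\Phi$, so I must show there are enough cheap vertices to remove or add. The cheap ones are leaves on the pendant paths, and the vertices $p_1$ or $p_{L_i}$ whose parity can be flipped at cost exactly $1$. I would first try to handle all parity losses on the pendant paths and in $R$ using the slack $d-2\geqslant 2$ from $u$. I would then verify by a short case analysis, on the parities of the $L_i$ and of $|R|$, that the total loss never exceeds this slack when the target size is $\lceil n/2\rceil-2$. Here $d\geqslant 4$, rather than $3$, is exactly what provides the extra unit of slack needed for the ceiling instead of the floor.
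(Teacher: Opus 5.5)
Your proposal has a genuine gap, and it lies in the certificate you fix at the outset rather than only in the unfinished case analysis. The cardinality certificate for badness only needs $|\mathcal{C}_X|$ to exceed by two the number of vertices in the \emph{union of the boundaries}. Your test $|\mathcal{C}_X|\geqslant |X|+2$ also charges vertices of $X$ that have no neighbour outside $X$. It would not even certify the example of Figure~\ref{fig:example_bad_set}, where $|X|=4$ and there are five components.

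Since $\Phi(X)=n-1-2|X|-e(T\setminus X)$, your identity $\Phi=n-1-2|X|$ holds only while $V\setminus X$ is independent. Your test reaches $|X|\geqslant\lceil n/2\rceil-2$ only if $V\setminus X$ is an independent set of size $\lceil n/2\rceil+1$, or a set of size $\lfloor n/2\rfloor+2$ spanning at most two edges (at most one if $n$ is odd). Trees with a (near-)perfect matching defeat this.

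Take $u$ with four pendant paths $u\,a_i\,b_i$, $i=1,\dots,4$. Here $n=9$, $u$ is an external branching node of degree $4$, and the lemma asks for a bad set of size $3$. Your class-by-class choice is $X=\{u,b_1,\dots,b_4\}$ with $\Phi=-2$. Each removal of some $b_i$ creates the edge $a_ib_i$ in $T\setminus X$ and raises $\Phi$ by only one, so you stop at $|X|=2$. No other choice helps:
\begin{itemize}
\item If $|X|=3$, the six vertices outside $X$ span at least two edges. When $u\in X$, two whole legs lie outside $X$; when $u\notin X$, either two edges $ua_i$ appear or the path $u\,a_i\,b_i$ does. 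So $\Phi(X)\leqslant 0$.
\item If $|X|\geqslant 4$, then $|\mathcal{C}_X|\leqslant n-|X|<|X|+2$.
\end{itemize}
Hence the case analysis you plan, showing that the losses never exceed the slack $d-2$, cannot succeed.

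The missing idea is that vertices of $X$ all of whose neighbours lie in $X$ cost nothing. The paper puts all of $Y=V\setminus(P^{(1)}\cup P^{(2)}\cup P^{(3)})$ into $X$; among these only $u$ is a boundary vertex. It then alternates on three pendant paths, starting and ending outside $X$, so that each path contributes one more component than boundary vertices. The number of components then equals the size of the union of the boundaries plus two. Meanwhile $|X|=m_0+\sum_i\lfloor (m_i-1)/2\rfloor$ gives $\tau(T)+2\geqslant n/2+m_0/2-1\geqslant n/2$, because $m_0=|Y|\geqslant 2$ thanks to the fourth neighbour of $u$. This, rather than extra slack in $\Phi$, is where degree at least $4$ is used. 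In the example above it yields $X=\{u,a_4,b_4\}$, whose three components all have boundary $\{u\}$.
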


\begin{proof}
    Let $u$ be an external branching node in~$T$ that has degree at least~4. Then, $T \setminus~\{u\}$ has at least four connected components, and at most one is not a path. Let $P^{(1)}, P^{(2)}, P^{(3)}$ be three connected components of $T \setminus \{u\}$ that are paths, and for every $i \in \{1,2,3\}$, let $m_i:=|P^{(i)}|$. Let $Y$ be the set of vertices in $V \setminus \bigcup_{i=1}^3 P^{(i)}$, and $m_0 := |Y|$. We have $n = \sum_{i=0}^3 m_i$.
    Finally, let $X$ be the set consisting of all the vertices of $Y$, plus one vertex out of two in $P^{(i)}$ for every $i \in \{1,2,3\}$, starting and ending by a vertex not in~$X$. Then, by a simple counting argument, $X$ is a bat set, because the number of components in $\mathcal{C}_X$ is equal to the number of vertices in the union of their boundaries plus two. See Figure~\ref{fig:bad_set_branching_node_deg4} for an example.

    \begin{figure}[h!]
        \centering
        \begin{tikzpicture}[x=0.75pt,y=0.75pt,yscale=-1,xscale=1]
%uncomment if require: \path (0,536); %set diagram left start at 0, and has height of 536

%Straight Lines [id:da851038714323336] 
\draw    (289.5,91) -- (302.5,116) ;
%Straight Lines [id:da16393219942753956] 
\draw    (302.5,116) -- (325,138.5) ;
%Straight Lines [id:da4826670860212594] 
\draw    (485,140.5) -- (455.5,134) ;
%Straight Lines [id:da5470445038905127] 
\draw    (455.5,134) -- (423.5,129) ;
%Straight Lines [id:da05039107272369103] 
\draw    (180,164.5) -- (151,157.5) ;
%Straight Lines [id:da8503013335588411] 
\draw    (423.5,129) -- (391.5,129) ;
%Straight Lines [id:da6009179544890023] 
\draw    (330,173.5) -- (348,200.5) ;
%Straight Lines [id:da1996649595631972] 
\draw    (325,138.5) -- (330,173.5) ;
%Straight Lines [id:da9426518342970487] 
\draw    (343.5,81) -- (331.5,110) ;
%Straight Lines [id:da5425074944247688] 
\draw    (325,138.5) -- (298,152.5) ;
%Straight Lines [id:da8913814798034984] 
\draw    (269.5,162) -- (243,165.5) ;
%Straight Lines [id:da0481778897623355] 
\draw    (243,165.5) -- (212,166.5) ;
%Straight Lines [id:da21986512282467863] 
\draw    (212,166.5) -- (180,164.5) ;
%Straight Lines [id:da7372009613262848] 
\draw    (348.5,51) -- (343.5,81) ;
%Straight Lines [id:da5946088733752817] 
\draw    (317.5,82) -- (331.5,110) ;
%Straight Lines [id:da7353438497076654] 
\draw    (359.5,136) -- (325,138.5) ;
%Straight Lines [id:da3412271168120218] 
\draw    (331.5,110) -- (325,138.5) ;
%Straight Lines [id:da4020846975331607] 
\draw    (386.5,129) -- (359.5,136) ;
%Straight Lines [id:da587911896183378] 
\draw    (298,152.5) -- (269.5,162) ;
%Shape: Circle [id:dp13027301388570967] 
\draw  [fill={rgb, 255:red, 208; green, 2; blue, 27 }  ,fill opacity=1 ] (264.5,162) .. controls (264.5,159.24) and (266.74,157) .. (269.5,157) .. controls (272.26,157) and (274.5,159.24) .. (274.5,162) .. controls (274.5,164.76) and (272.26,167) .. (269.5,167) .. controls (266.74,167) and (264.5,164.76) .. (264.5,162) -- cycle ;
%Shape: Circle [id:dp09814865698431163] 
\draw  [fill={rgb, 255:red, 208; green, 2; blue, 27 }  ,fill opacity=1 ][line width=1.5]  (320,138.5) .. controls (320,135.74) and (322.24,133.5) .. (325,133.5) .. controls (327.76,133.5) and (330,135.74) .. (330,138.5) .. controls (330,141.26) and (327.76,143.5) .. (325,143.5) .. controls (322.24,143.5) and (320,141.26) .. (320,138.5) -- cycle ;
%Shape: Circle [id:dp32680019551275286] 
\draw  [fill={rgb, 255:red, 208; green, 2; blue, 27 }  ,fill opacity=1 ] (343.5,51) .. controls (343.5,48.24) and (345.74,46) .. (348.5,46) .. controls (351.26,46) and (353.5,48.24) .. (353.5,51) .. controls (353.5,53.76) and (351.26,56) .. (348.5,56) .. controls (345.74,56) and (343.5,53.76) .. (343.5,51) -- cycle ;
%Shape: Circle [id:dp5260350169407157] 
\draw  [fill={rgb, 255:red, 255; green, 255; blue, 255 }  ,fill opacity=1 ] (175,164.5) .. controls (175,161.74) and (177.24,159.5) .. (180,159.5) .. controls (182.76,159.5) and (185,161.74) .. (185,164.5) .. controls (185,167.26) and (182.76,169.5) .. (180,169.5) .. controls (177.24,169.5) and (175,167.26) .. (175,164.5) -- cycle ;
%Shape: Circle [id:dp4875756138769306] 
\draw  [fill={rgb, 255:red, 208; green, 2; blue, 27 }  ,fill opacity=1 ] (207,166.5) .. controls (207,163.74) and (209.24,161.5) .. (212,161.5) .. controls (214.76,161.5) and (217,163.74) .. (217,166.5) .. controls (217,169.26) and (214.76,171.5) .. (212,171.5) .. controls (209.24,171.5) and (207,169.26) .. (207,166.5) -- cycle ;
%Shape: Circle [id:dp5205936351763446] 
\draw  [fill={rgb, 255:red, 255; green, 255; blue, 255 }  ,fill opacity=1 ]  (238,165.5) .. controls (238,162.74) and (240.24,160.5) .. (243,160.5) .. controls (245.76,160.5) and (248,162.74) .. (248,165.5) .. controls (248,168.26) and (245.76,170.5) .. (243,170.5) .. controls (240.24,170.5) and (238,168.26) .. (238,165.5) -- cycle ;
%Shape: Circle [id:dp36982359412320487] 
\draw  [fill={rgb, 255:red, 208; green, 2; blue, 27 }  ,fill opacity=1 ] (338.5,81) .. controls (338.5,78.24) and (340.74,76) .. (343.5,76) .. controls (346.26,76) and (348.5,78.24) .. (348.5,81) .. controls (348.5,83.76) and (346.26,86) .. (343.5,86) .. controls (340.74,86) and (338.5,83.76) .. (338.5,81) -- cycle ;
%Shape: Circle [id:dp28342352667004034] 
\draw  [fill={rgb, 255:red, 208; green, 2; blue, 27 }  ,fill opacity=1 ] (312.5,82) .. controls (312.5,79.24) and (314.74,77) .. (317.5,77) .. controls (320.26,77) and (322.5,79.24) .. (322.5,82) .. controls (322.5,84.76) and (320.26,87) .. (317.5,87) .. controls (314.74,87) and (312.5,84.76) .. (312.5,82) -- cycle ;
%Shape: Circle [id:dp15102330475828252] 
\draw  [fill={rgb, 255:red, 208; green, 2; blue, 27 }  ,fill opacity=1 ] (386.5,129) .. controls (386.5,126.24) and (388.74,124) .. (391.5,124) .. controls (394.26,124) and (396.5,126.24) .. (396.5,129) .. controls (396.5,131.76) and (394.26,134) .. (391.5,134) .. controls (388.74,134) and (386.5,131.76) .. (386.5,129) -- cycle ;
%Shape: Circle [id:dp27658428026450665] 
\draw  [fill={rgb, 255:red, 208; green, 2; blue, 27 }  ,fill opacity=1 ] (326.5,110) .. controls (326.5,107.24) and (328.74,105) .. (331.5,105) .. controls (334.26,105) and (336.5,107.24) .. (336.5,110) .. controls (336.5,112.76) and (334.26,115) .. (331.5,115) .. controls (328.74,115) and (326.5,112.76) .. (326.5,110) -- cycle ;
%Shape: Circle [id:dp313871448575894] 
\draw  [fill={rgb, 255:red, 255; green, 255; blue, 255 }  ,fill opacity=1 ] (354.5,136) .. controls (354.5,133.24) and (356.74,131) .. (359.5,131) .. controls (362.26,131) and (364.5,133.24) .. (364.5,136) .. controls (364.5,138.76) and (362.26,141) .. (359.5,141) .. controls (356.74,141) and (354.5,138.76) .. (354.5,136) -- cycle ;
%Shape: Circle [id:dp9189953538315903] 
\draw  [fill={rgb, 255:red, 255; green, 255; blue, 255 }  ,fill opacity=1 ] (293,152.5) .. controls (293,149.74) and (295.24,147.5) .. (298,147.5) .. controls (300.76,147.5) and (303,149.74) .. (303,152.5) .. controls (303,155.26) and (300.76,157.5) .. (298,157.5) .. controls (295.24,157.5) and (293,155.26) .. (293,152.5) -- cycle ;
%Shape: Circle [id:dp16004628617394923] 
\draw  [fill={rgb, 255:red, 255; green, 255; blue, 255 }  ,fill opacity=1 ] (325,173.5) .. controls (325,170.74) and (327.24,168.5) .. (330,168.5) .. controls (332.76,168.5) and (335,170.74) .. (335,173.5) .. controls (335,176.26) and (332.76,178.5) .. (330,178.5) .. controls (327.24,178.5) and (325,176.26) .. (325,173.5) -- cycle ;
%Shape: Circle [id:dp024080458748005018] 
\draw  [fill={rgb, 255:red, 255; green, 255; blue, 255 }  ,fill opacity=1 ] (343,200.5) .. controls (343,197.74) and (345.24,195.5) .. (348,195.5) .. controls (350.76,195.5) and (353,197.74) .. (353,200.5) .. controls (353,203.26) and (350.76,205.5) .. (348,205.5) .. controls (345.24,205.5) and (343,203.26) .. (343,200.5) -- cycle ;
%Shape: Circle [id:dp4039083507931521] 
\draw  [fill={rgb, 255:red, 255; green, 255; blue, 255 }  ,fill opacity=1 ] (418.5,129) .. controls (418.5,126.24) and (420.74,124) .. (423.5,124) .. controls (426.26,124) and (428.5,126.24) .. (428.5,129) .. controls (428.5,131.76) and (426.26,134) .. (423.5,134) .. controls (420.74,134) and (418.5,131.76) .. (418.5,129) -- cycle ;
%Shape: Circle [id:dp8713270215866521] 
\draw  [fill={rgb, 255:red, 255; green, 255; blue, 255 }  ,fill opacity=1 ] (146,157.5) .. controls (146,154.74) and (148.24,152.5) .. (151,152.5) .. controls (153.76,152.5) and (156,154.74) .. (156,157.5) .. controls (156,160.26) and (153.76,162.5) .. (151,162.5) .. controls (148.24,162.5) and (146,160.26) .. (146,157.5) -- cycle ;
%Shape: Circle [id:dp40143366202498154] 
\draw  [fill={rgb, 255:red, 208; green, 2; blue, 27 }  ,fill opacity=1 ] (450.5,134) .. controls (450.5,131.24) and (452.74,129) .. (455.5,129) .. controls (458.26,129) and (460.5,131.24) .. (460.5,134) .. controls (460.5,136.76) and (458.26,139) .. (455.5,139) .. controls (452.74,139) and (450.5,136.76) .. (450.5,134) -- cycle ;
%Shape: Circle [id:dp4714757815202335] 
\draw  [fill={rgb, 255:red, 255; green, 255; blue, 255 }  ,fill opacity=1 ] (480,140.5) .. controls (480,137.74) and (482.24,135.5) .. (485,135.5) .. controls (487.76,135.5) and (490,137.74) .. (490,140.5) .. controls (490,143.26) and (487.76,145.5) .. (485,145.5) .. controls (482.24,145.5) and (480,143.26) .. (480,140.5) -- cycle ;
%Shape: Circle [id:dp40176221323208294] 
\draw  [fill={rgb, 255:red, 208; green, 2; blue, 27 }  ,fill opacity=1 ] (297.5,116) .. controls (297.5,113.24) and (299.74,111) .. (302.5,111) .. controls (305.26,111) and (307.5,113.24) .. (307.5,116) .. controls (307.5,118.76) and (305.26,121) .. (302.5,121) .. controls (299.74,121) and (297.5,118.76) .. (297.5,116) -- cycle ;
%Shape: Polygon Curved [id:ds4980664536804321] 
\draw  [dash pattern={on 4.5pt off 4.5pt}] (305.5,64) .. controls (330.5,61) and (324.5,32) .. (351.5,35) .. controls (378.5,38) and (365.5,76) .. (354.5,111) .. controls (343.5,146) and (320.5,167) .. (292.5,127) .. controls (264.5,87) and (280.5,67) .. (305.5,64) -- cycle ;
%Shape: Circle [id:dp41797765005221854] 
\draw  [fill={rgb, 255:red, 208; green, 2; blue, 27 }  ,fill opacity=1 ] (284.5,91) .. controls (284.5,88.24) and (286.74,86) .. (289.5,86) .. controls (292.26,86) and (294.5,88.24) .. (294.5,91) .. controls (294.5,93.76) and (292.26,96) .. (289.5,96) .. controls (286.74,96) and (284.5,93.76) .. (284.5,91) -- cycle ;

% Text Node
\draw (332,146.9) node [anchor=north west][inner sep=0.75pt]    {$u$};
% Text Node
\draw (303,36.9) node [anchor=north west][inner sep=0.75pt]    {$Y$};
% Text Node
\draw (192,129.9) node [anchor=north west][inner sep=0.75pt]    {$P^{( 1)}$};
% Text Node
\draw (402,90.9) node [anchor=north west][inner sep=0.75pt]    {$P^{( 3)}$};
% Text Node
\draw (354,167.9) node [anchor=north west][inner sep=0.75pt]    {$P^{( 2)}$};

\end{tikzpicture}

        \caption{In this example, $u$ is an external branching node which has degree at least~4. The nodes in~$X$ are colored in red. Here, we have $n = 20$, $m_0 = 7$, $m_1 = 6$, $m_2 = 2$ and $m_3 = 5$. There are seven components in $\mathcal{C}_X$, and five vertices in the union of their boundaries, so $X$ is a bad~set.}
        \label{fig:bad_set_branching_node_deg4}
    \end{figure}

    Since $X$ is a bad set, we have $\bad{T} \geqslant |X|$. Moreover, the number of vertices in $X$ is equal to the number of vertices in~$Y$, which is~$m_0$, plus the number of vertices selected in the paths $P^{(i)}$ for every $i \in \{1,2,3\}$, which is $\left\lfloor\frac{m_i-1}{2}\right\rfloor$. Thus, we get:
    
    \begin{align*}
        \bad{T}+2 &\geqslant m_0 + \left(\sum_{i=1}^{3}\left\lfloor\frac{m_i-1}{2}\right\rfloor\right) + 2 \\
        &\geqslant m_0 + \left(\sum_{i=1}^{3}\frac{m_i}{2}-1\right) + 2\\
        &\geqslant \frac{n}{2} + \frac{m_0}{2} - 1
    \end{align*}

    Finally, since $m_0 \geqslant 2$ (because $u$ has degree at least~$4$), and since $\bad{T}$ is an integer, we get $\bad{T} + 2 \geqslant \left\lceil\frac{n}{2}\right\rceil$.
\end{proof}

\begin{lemma}
    \label{lem:3_branching_nodes}
    If $T$ has at least three branching nodes, or if $T$ has exactly two branching nodes which are at distance at least three, then $\bad{T}+2 \geqslant \left\lceil \frac{n}{2} \right\rceil$.
\end{lemma}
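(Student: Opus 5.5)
The plan is to mimic the proof of Lemma~\ref{lem:branching_node_deg_4}: build an explicit bad set $X$ that contains almost every vertex of $T$ and every other vertex of a few pendant paths. The difference is that the pendant paths are now split between two external branching nodes instead of hanging from one node of degree at least~$4$.

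\textbf{Choosing two external branching nodes.} In both cases $T$ has at least two branching nodes. The minimal subtree spanning the branching nodes has at least two leaves, so there are two distinct external branching nodes $u$ and $v$. Since $u$ is external, at most one component of $T\setminus\{u\}$ is a proper tree, and the component containing $v$ is one such candidate. Hence at least two components of $T \setminus \{u\}$ are paths not containing $v$; call them $P^{(1)},P^{(2)}$. Symmetrically, take two path components $P^{(3)},P^{(4)}$ of $T\setminus\{v\}$ not containing $u$. These four paths are pairwise disjoint. Let $m_i := |P^{(i)}|$, let $Y := V\setminus\bigcup_{i=1}^4 P^{(i)}$, and let $m_0:=|Y|$.

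\textbf{The set $X$ and why it is bad.} Let $X$ consist of $Y$ together with every other vertex of each $P^{(i)}$, starting and ending with a vertex not in $X$. Then $P^{(i)}$ contributes $x_i:=\lfloor (m_i-1)/2\rfloor$ vertices to $X$ and $x_i+1$ components to $\mathcal{C}_X$. These are all the components, since $Y\subseteq X$, so $|\mathcal{C}_X| = \sum_i x_i + 4$. The union of the boundaries is contained in $(X\cap \bigcup_i P^{(i)})\cup\{u,v\}$, which has at most $\sum_i x_i+2$ vertices. A good component would require an injection from $|\mathcal{C}_X|-1=\sum_i x_i+3$ components into this union, which is impossible. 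So $X$ is bad, exactly as in Lemma~\ref{lem:branching_node_deg_4}.

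\textbf{Counting.} We have $\bad{T} \geqslant |X| = m_0+\sum_i x_i \geqslant m_0 + \sum_i (m_i/2 - 1) = n/2 + m_0/2 - 4$. Hence $\bad{T}+2 \geqslant n/2 + m_0/2 - 2$, and it suffices to show $m_0\geqslant 4$; integrality of $\bad{T}$ then gives $\bad{T}+2\geqslant\lceil n/2\rceil$.
\begin{itemize}
\item If there are exactly two branching nodes at distance at least three, then $Y$ contains the whole $u$--$v$ path, which has at least four vertices.
\item If there are at least three branching nodes, let $w$ be one other than $u,v$. The set $Y$ contains $u$, $v$, $w$, and every vertex whose path to $w$ avoids $u$ and $v$, because the pendant paths are reachable from $w$ only through $u$ or $v$. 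Since $w$ has degree at least $3$, and $u,v$ lie in at most two components of $T\setminus\{w\}$, some component of $T \setminus\{w\}$ lies entirely in $Y$. This gives $m_0\geqslant 4$.
\end{itemize}

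The main point needing care is the choice of $u$ and $v$: one must verify that two disjoint pendant paths exist at each of them, and that the boundary union really misses all of $Y$ except $u$ and $v$. Both follow from externality and from $Y\subseteq X$.
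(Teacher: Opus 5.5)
Your proof is correct and follows the paper's argument. You use the same bad set $X$, namely $Y$ together with alternating vertices on four pendant paths at two external branching nodes $u,v$; you get the same bound $\bad{T}+2 \geqslant n/2 + m_0/2 - 2$ and the same reduction to showing $m_0 \geqslant 4$. You also fill in details the paper leaves implicit: why the four paths exist, are disjoint and avoid the $u$--$v$ path, and why $Y$ contains a whole component of $T\setminus\{w\}$. The one phrase to tighten is the claim that the component containing $v$ is the only candidate for a proper tree. The reason is that any other proper-tree component of $T\setminus\{u\}$ would contain a branching node, contradicting that $u$ is a leaf of the minimal subtree spanning the branching nodes.
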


\begin{proof}
    Since T has at least two branching nodes, it has at least two external branching nodes (because the minimal subtree of~$T$ containing all branching nodes has at least two leaves).
    So let $u$, $v$ be two distinct external branching nodes.
    Let us consider four connected components of $T \setminus \{u,v\}$ that are paths, denoted by $P^{(1)}, P^{(2)}, P^{(3)}, P^{(4)}$. For each $i \in \{1, 2, 3, 4\}$, let us denote by $m_i$ the number of vertices in $P^{(i)}$. Let $Y$ be the set of vertices in $T \setminus \bigcup_{i=1}^{4} P^{(i)}$, and $m_0:=|Y|$. We have $n = \sum_{i=0}^{4} m_i$. Let $X$ be the set consisting of all the vertices of~$Y$, plus one vertex out of two in $P^{(i)}$ for every $i \in \{1, \ldots, 4\}$, starting and ending by a vertex not in~$X$. Then, by a simple counting argument, $X$ is a bad set, because the number of components in $\mathcal{C}_X$ is equal to the number of vertices in the union of their boundaries plus two. See Figure~\ref{fig:bad_set_2_branching_nodes} for an example.
    
    \begin{figure}[h!]
        \centering
        \begin{tikzpicture}[x=0.75pt,y=0.75pt,yscale=-1,xscale=1]
%uncomment if require: \path (0,536); %set diagram left start at 0, and has height of 536

%Straight Lines [id:da9702165553412809] 
\draw    (164,198.5) -- (130,195.5) ;
%Straight Lines [id:da8712876634855502] 
\draw    (480,149.5) -- (450.5,143) ;
%Straight Lines [id:da18850975550576543] 
\draw    (450.5,143) -- (423.5,134) ;
%Straight Lines [id:da8922777269533128] 
\draw    (195,194.5) -- (164,198.5) ;
%Straight Lines [id:da2856980863151598] 
\draw    (423.5,134) -- (391.5,129) ;
%Straight Lines [id:da28146744936138446] 
\draw    (330,173.5) -- (348,200.5) ;
%Straight Lines [id:da08235187315648096] 
\draw    (325,138.5) -- (330,173.5) ;
%Straight Lines [id:da31849648758689086] 
\draw    (269.5,162) -- (259,194.5) ;
%Straight Lines [id:da4544164313129484] 
\draw    (325,138.5) -- (298,152.5) ;
%Straight Lines [id:da14434498250607064] 
\draw    (269.5,162) -- (243,165.5) ;
%Straight Lines [id:da504623965247179] 
\draw    (243,165.5) -- (219,182.5) ;
%Straight Lines [id:da4679483422308077] 
\draw    (219,182.5) -- (195,194.5) ;
%Straight Lines [id:da697778168293573] 
\draw    (259,194.5) -- (242,217.5) ;
%Straight Lines [id:da2804907015199576] 
\draw    (259,194.5) -- (273,218.5) ;
%Straight Lines [id:da4050797062586843] 
\draw    (354.5,129) -- (325,138.5) ;
%Straight Lines [id:da08121297364481528] 
\draw    (210,144.5) -- (243,165.5) ;
%Straight Lines [id:da1070185839612725] 
\draw    (326.5,110) -- (325,138.5) ;
%Straight Lines [id:da7542915800712738] 
\draw    (386.5,129) -- (354.5,129) ;
%Straight Lines [id:da7823214474298615] 
\draw    (298,152.5) -- (269.5,162) ;
%Shape: Circle [id:dp5647601291546733] 
\draw  [fill={rgb, 255:red, 208; green, 2; blue, 27 }  ,fill opacity=1 ] (264.5,162) .. controls (264.5,159.24) and (266.74,157) .. (269.5,157) .. controls (272.26,157) and (274.5,159.24) .. (274.5,162) .. controls (274.5,164.76) and (272.26,167) .. (269.5,167) .. controls (266.74,167) and (264.5,164.76) .. (264.5,162) -- cycle ;
%Shape: Circle [id:dp8779460188493665] 
\draw  [fill={rgb, 255:red, 208; green, 2; blue, 27 }  ,fill opacity=1 ][line width=1.5]  (320,138.5) .. controls (320,135.74) and (322.24,133.5) .. (325,133.5) .. controls (327.76,133.5) and (330,135.74) .. (330,138.5) .. controls (330,141.26) and (327.76,143.5) .. (325,143.5) .. controls (322.24,143.5) and (320,141.26) .. (320,138.5) -- cycle ;
%Shape: Circle [id:dp7051885630524245] 
\draw  [fill={rgb, 255:red, 208; green, 2; blue, 27 }  ,fill opacity=1 ] (237,217.5) .. controls (237,214.74) and (239.24,212.5) .. (242,212.5) .. controls (244.76,212.5) and (247,214.74) .. (247,217.5) .. controls (247,220.26) and (244.76,222.5) .. (242,222.5) .. controls (239.24,222.5) and (237,220.26) .. (237,217.5) -- cycle ;
%Shape: Circle [id:dp3813179747465063] 
\draw  [fill={rgb, 255:red, 208; green, 2; blue, 27 }  ,fill opacity=1 ] (190,194.5) .. controls (190,191.74) and (192.24,189.5) .. (195,189.5) .. controls (197.76,189.5) and (200,191.74) .. (200,194.5) .. controls (200,197.26) and (197.76,199.5) .. (195,199.5) .. controls (192.24,199.5) and (190,197.26) .. (190,194.5) -- cycle ;
%Shape: Circle [id:dp600947876000403] 
\draw  [fill={rgb, 255:red, 255; green, 255; blue, 255 }  ,fill opacity=1 ] (214,182.5) .. controls (214,179.74) and (216.24,177.5) .. (219,177.5) .. controls (221.76,177.5) and (224,179.74) .. (224,182.5) .. controls (224,185.26) and (221.76,187.5) .. (219,187.5) .. controls (216.24,187.5) and (214,185.26) .. (214,182.5) -- cycle ;
%Shape: Circle [id:dp6080833655458162] 
\draw  [fill={rgb, 255:red, 208; green, 2; blue, 27 }  ,fill opacity=1 ][line width=1.5]  (238,165.5) .. controls (238,162.74) and (240.24,160.5) .. (243,160.5) .. controls (245.76,160.5) and (248,162.74) .. (248,165.5) .. controls (248,168.26) and (245.76,170.5) .. (243,170.5) .. controls (240.24,170.5) and (238,168.26) .. (238,165.5) -- cycle ;
%Shape: Circle [id:dp8414124630433598] 
\draw  [fill={rgb, 255:red, 208; green, 2; blue, 27 }  ,fill opacity=1 ] (254,194.5) .. controls (254,191.74) and (256.24,189.5) .. (259,189.5) .. controls (261.76,189.5) and (264,191.74) .. (264,194.5) .. controls (264,197.26) and (261.76,199.5) .. (259,199.5) .. controls (256.24,199.5) and (254,197.26) .. (254,194.5) -- cycle ;
%Shape: Circle [id:dp48112729727624604] 
\draw  [fill={rgb, 255:red, 208; green, 2; blue, 27 }  ,fill opacity=1 ] (268,218.5) .. controls (268,215.74) and (270.24,213.5) .. (273,213.5) .. controls (275.76,213.5) and (278,215.74) .. (278,218.5) .. controls (278,221.26) and (275.76,223.5) .. (273,223.5) .. controls (270.24,223.5) and (268,221.26) .. (268,218.5) -- cycle ;
%Shape: Circle [id:dp4755267185216989] 
\draw  [fill={rgb, 255:red, 255; green, 255; blue, 255 }  ,fill opacity=1 ] (205,144.5) .. controls (205,141.74) and (207.24,139.5) .. (210,139.5) .. controls (212.76,139.5) and (215,141.74) .. (215,144.5) .. controls (215,147.26) and (212.76,149.5) .. (210,149.5) .. controls (207.24,149.5) and (205,147.26) .. (205,144.5) -- cycle ;
%Shape: Circle [id:dp6250548784245415] 
\draw  [fill={rgb, 255:red, 208; green, 2; blue, 27 }  ,fill opacity=1 ] (386.5,129) .. controls (386.5,126.24) and (388.74,124) .. (391.5,124) .. controls (394.26,124) and (396.5,126.24) .. (396.5,129) .. controls (396.5,131.76) and (394.26,134) .. (391.5,134) .. controls (388.74,134) and (386.5,131.76) .. (386.5,129) -- cycle ;
%Shape: Circle [id:dp563946532444999] 
\draw  [fill={rgb, 255:red, 208; green, 2; blue, 27 }  ,fill opacity=1 ] (321.5,110) .. controls (321.5,107.24) and (323.74,105) .. (326.5,105) .. controls (329.26,105) and (331.5,107.24) .. (331.5,110) .. controls (331.5,112.76) and (329.26,115) .. (326.5,115) .. controls (323.74,115) and (321.5,112.76) .. (321.5,110) -- cycle ;
%Shape: Circle [id:dp09941219810338953] 
\draw  [fill={rgb, 255:red, 255; green, 255; blue, 255 }  ,fill opacity=1 ] (349.5,129) .. controls (349.5,126.24) and (351.74,124) .. (354.5,124) .. controls (357.26,124) and (359.5,126.24) .. (359.5,129) .. controls (359.5,131.76) and (357.26,134) .. (354.5,134) .. controls (351.74,134) and (349.5,131.76) .. (349.5,129) -- cycle ;
%Shape: Circle [id:dp4903061976229728] 
\draw  [fill={rgb, 255:red, 208; green, 2; blue, 27 }  ,fill opacity=1 ] (293,152.5) .. controls (293,149.74) and (295.24,147.5) .. (298,147.5) .. controls (300.76,147.5) and (303,149.74) .. (303,152.5) .. controls (303,155.26) and (300.76,157.5) .. (298,157.5) .. controls (295.24,157.5) and (293,155.26) .. (293,152.5) -- cycle ;
%Shape: Circle [id:dp9574691571282662] 
\draw  [fill={rgb, 255:red, 255; green, 255; blue, 255 }  ,fill opacity=1 ] (325,173.5) .. controls (325,170.74) and (327.24,168.5) .. (330,168.5) .. controls (332.76,168.5) and (335,170.74) .. (335,173.5) .. controls (335,176.26) and (332.76,178.5) .. (330,178.5) .. controls (327.24,178.5) and (325,176.26) .. (325,173.5) -- cycle ;
%Shape: Circle [id:dp8434245729912899] 
\draw  [fill={rgb, 255:red, 255; green, 255; blue, 255 }  ,fill opacity=1 ] (343,200.5) .. controls (343,197.74) and (345.24,195.5) .. (348,195.5) .. controls (350.76,195.5) and (353,197.74) .. (353,200.5) .. controls (353,203.26) and (350.76,205.5) .. (348,205.5) .. controls (345.24,205.5) and (343,203.26) .. (343,200.5) -- cycle ;
%Shape: Circle [id:dp38673981120812606] 
\draw  [fill={rgb, 255:red, 255; green, 255; blue, 255 }  ,fill opacity=1 ] (418.5,134) .. controls (418.5,131.24) and (420.74,129) .. (423.5,129) .. controls (426.26,129) and (428.5,131.24) .. (428.5,134) .. controls (428.5,136.76) and (426.26,139) .. (423.5,139) .. controls (420.74,139) and (418.5,136.76) .. (418.5,134) -- cycle ;
%Shape: Circle [id:dp4866696572778383] 
\draw  [fill={rgb, 255:red, 255; green, 255; blue, 255 }  ,fill opacity=1 ] (159,198.5) .. controls (159,195.74) and (161.24,193.5) .. (164,193.5) .. controls (166.76,193.5) and (169,195.74) .. (169,198.5) .. controls (169,201.26) and (166.76,203.5) .. (164,203.5) .. controls (161.24,203.5) and (159,201.26) .. (159,198.5) -- cycle ;
%Shape: Circle [id:dp846274151223042] 
\draw  [fill={rgb, 255:red, 208; green, 2; blue, 27 }  ,fill opacity=1 ] (445.5,143) .. controls (445.5,140.24) and (447.74,138) .. (450.5,138) .. controls (453.26,138) and (455.5,140.24) .. (455.5,143) .. controls (455.5,145.76) and (453.26,148) .. (450.5,148) .. controls (447.74,148) and (445.5,145.76) .. (445.5,143) -- cycle ;
%Shape: Circle [id:dp2570712328926511] 
\draw  [fill={rgb, 255:red, 255; green, 255; blue, 255 }  ,fill opacity=1 ] (475,149.5) .. controls (475,146.74) and (477.24,144.5) .. (480,144.5) .. controls (482.76,144.5) and (485,146.74) .. (485,149.5) .. controls (485,152.26) and (482.76,154.5) .. (480,154.5) .. controls (477.24,154.5) and (475,152.26) .. (475,149.5) -- cycle ;
%Shape: Circle [id:dp14581733617272408] 
\draw  [fill={rgb, 255:red, 255; green, 255; blue, 255 }  ,fill opacity=1 ] (125,195.5) .. controls (125,192.74) and (127.24,190.5) .. (130,190.5) .. controls (132.76,190.5) and (135,192.74) .. (135,195.5) .. controls (135,198.26) and (132.76,200.5) .. (130,200.5) .. controls (127.24,200.5) and (125,198.26) .. (125,195.5) -- cycle ;
%Shape: Polygon Curved [id:ds08968847609013941] 
\draw  [dash pattern={on 4.5pt off 4.5pt}] (300,136.5) .. controls (314,117.5) and (308,95.5) .. (326,93.5) .. controls (344,91.5) and (342,126.5) .. (335,140.5) .. controls (328,154.5) and (301,170.5) .. (284,174.5) .. controls (267,178.5) and (295,214.5) .. (289,227.5) .. controls (283,240.5) and (240,233.5) .. (230,225.5) .. controls (220,217.5) and (256,194.5) .. (249,183.5) .. controls (242,172.5) and (226,183.5) .. (229,166.5) .. controls (232,149.5) and (286,155.5) .. (300,136.5) -- cycle ;

% Text Node
\draw (244,144) node [anchor=north west][inner sep=0.75pt]    {$u$};
% Text Node
\draw (334,141.9) node [anchor=north west][inner sep=0.75pt]    {$v$};
% Text Node
\draw (297,234) node [anchor=north west][inner sep=0.75pt]    {$Y$};
% Text Node
\draw (117,160.9) node [anchor=north west][inner sep=0.75pt]    {$P^{( 1)}$};
% Text Node
\draw (193,106.9) node [anchor=north west][inner sep=0.75pt]    {$P^{( 2)}$};
% Text Node
\draw (417,96.9) node [anchor=north west][inner sep=0.75pt]    {$P^{( 3)}$};
% Text Node
\draw (359,167.9) node [anchor=north west][inner sep=0.75pt]    {$P^{( 4)}$};

\end{tikzpicture}

        \caption{In this example, $u$ and $v$ are two external branching nodes. The nodes in~$X$ are colored in red. Here, we have $n = 20$, $m_1 = 4$, $m_2 = 1$, $m_3 = 5$, $m_4 = 2$ and $m_5 = 8$. There are seven components in $\mathcal{C}_X$, and five vertices in the union of their boundaries, so $X$ is a bad~set.}
        \label{fig:bad_set_2_branching_nodes}
    \end{figure}

    Since $X$ is a bad set, we have $\bad{T} \geqslant |X|$. Moreover, the number of vertices in $X$ is equal to the number of vertices in~$Y$, which is~$m_0$, plus the number of vertices selected in the paths $P^{(i)}$ for every $i \in \{1,2,3,4\}$, which is $\left\lfloor\frac{m_i-1}{2}\right\rfloor$. Thus, we get:
    \begin{align*}
        \bad{T}+2 &\geqslant m_0 + \left(\sum_{i=1}^{4}\left\lfloor\frac{m_i-1}{2}\right\rfloor\right) + 2 \\
        &\geqslant m_0 + \left(\sum_{i=1}^{4}\frac{m_i}{2}-1\right) + 2\\
        &\geqslant \frac{n}{2} + \frac{m_0}{2} - 2
    \end{align*}

    If $m_0 \geqslant 4$, since $\bad{T}$ is an integer, we get $\bad{T}+2 \geqslant \left\lceil\frac{n}{2}\right\rceil$.
    This is the case if there are at least two other vertices than $u$ and~$v$ in~$Y$.
    In particular, this is the case if $u$ and~$v$ are at distance at least three, or if there is another branching node~$w$ (in this case, $Y$ also contains $w$ and at least one other vertex since $w$ is a branching node).
\end{proof}

\begin{lemma}
    \label{lem:H1}
    If $T$ is a $H_1$-graph, then $\bad{T}+2 \geqslant \left\lceil \frac{n}{2} \right\rceil$.
\end{lemma}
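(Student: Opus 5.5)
The plan is to reuse the construction from the proof of Lemma~\ref{lem:3_branching_nodes} and then exploit parity. Let $u$ and $v$ be the two branching nodes of the $H_1$-graph $T$, and let $w$ be the unique vertex between them. Let $P^{(1)},P^{(2)}$ be the two branches attached to $u$ and $P^{(3)},P^{(4)}$ the two attached to $v$, with $m_i:=|P^{(i)}|$. Then $n=m_1+m_2+m_3+m_4+3$. Set $Y:=\{u,w,v\}$, so $m_0=3$. Let $X$ consist of $Y$ together with every other vertex of each $P^{(i)}$, starting and ending with a vertex not in $X$.

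First I would check that $X$ is bad. The path $P^{(i)}$ contributes $\lfloor\frac{m_i-1}{2}\rfloor+1$ components of $\mathcal{C}_X$ and $\lfloor\frac{m_i-1}{2}\rfloor$ boundary vertices of its own. Only $u$ and $v$ are boundary vertices from $Y$, since $w$ has no neighbour outside $X$. So $|\mathcal{C}_X|$ equals the size of the union of the boundaries plus two, and the cardinality argument already used in Lemmas~\ref{lem:branching_node_deg_4} and~\ref{lem:3_branching_nodes} shows that $X$ is bad. The computation of Lemma~\ref{lem:3_branching_nodes} then gives
\[
\bad{T}+2 \;\geqslant\; 3+\sum_{i=1}^4\left\lfloor\frac{m_i-1}{2}\right\rfloor+2 .
\]
This is the point where the generic argument falls short: with $m_0=3$ the general bound is only $\frac n2-\frac12$.

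The inequality $\lfloor\frac{m_i-1}{2}\rfloor\geqslant\frac{m_i}{2}-1$ is strict, by exactly $\frac12$, when $m_i$ is odd. So if at least one $m_i$ is odd we gain $\frac12$, reaching $\bad{T}+2\geqslant \frac n2$, and integrality gives $\lceil\frac n2\rceil$. The case with some odd branch is therefore routine.

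The main obstacle is the case where all four $m_i$ are even. Then $n$ is odd, the bound is exactly $\frac{n-1}{2}$, and we need one more vertex in a bad set. Counting alone cannot achieve this: whenever $|\mathcal{C}_X|$ exceeds the boundary size by two, the slack is used up. The extra vertex must instead come from a Hall-type violation, so that $X$ is bad even though $|\mathcal{C}_X|$ is only one more than the boundary size. For instance, one can keep $\{u,w,v\}$ in $X$, make one branch of $u$ and one branch of $v$ single components attached only to $u$ (respectively only to $v$), and pack the remaining branches more densely. Then every choice of the excluded component $C$ leaves two components whose only boundary vertex is the same vertex $u$ or $v$.

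I would try this first on the smallest instance, $m_i=2$ for all $i$ and $n=11$, where it is not clear to me that a bad set of size $4$ exists. Small attempts there either give good sets or only reach size $3$. If the target bound resists, I would re-examine whether the all-even case must be excluded, in the same spirit as the subdivided-claw exception in Proposition~\ref{prop:bad_set_at_least_n/2}.
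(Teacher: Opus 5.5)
Your argument is correct when some branch has odd size. The all-even case, however, is a genuine gap: you exhibit no bad set there, and your explanation of why is mistaken. The counting construction is not out of slack; the loss comes from your choice $Y=\{u,w,v\}$. Putting $w$ into $X$ spends a vertex that earns nothing. Leaving it out instead makes $\{w\}$ a component whose boundary $\{u,v\}$ is already counted, and that frees room to put a whole branch into $X$ at no cost.

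Concretely, leave $w\notin X$ and take $X=\{u,v\}\cup V(P^{(1)})$, together with every other vertex of $P^{(2)},P^{(3)},P^{(4)}$, starting and ending with a vertex not in $X$.
\begin{itemize}
\item The vertices of $P^{(1)}$ have all their neighbours in $X$, so they are not boundary vertices. Both $u$ and $v$ are boundary vertices, since they are adjacent to $w$.
\item $\mathcal{C}_X$ consists of $\{w\}$ plus $\left\lfloor\frac{m_i-1}{2}\right\rfloor+1$ components in each $P^{(i)}$, $i\in\{2,3,4\}$.
\item The union of the boundaries has $2+\sum_{i=2}^4\left\lfloor\frac{m_i-1}{2}\right\rfloor$ vertices.
\end{itemize}
So the difference is again exactly two, and $X$ is bad. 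Its size is $|X|=m_1+2+\sum_{i=2}^4\left\lfloor\frac{m_i-1}{2}\right\rfloor$. This gives $\bad{T}+2\geqslant\frac n2+\frac{m_1-1}{2}\geqslant\frac n2$ since $m_1\geqslant 1$, and integrality finishes. No parity split is needed; this is the paper's proof.

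In your own framework, this amounts to running the construction of Lemma~\ref{lem:3_branching_nodes} with the four path components of $T\setminus\{u,v\}$ chosen as $\{w\},P^{(2)},P^{(3)},P^{(4)}$ rather than the four branches. Then $m_0=m_1+2$, and the size-one path $\{w\}$ contributes the missing $\frac12$.

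For your test case $m_i=2$ for all $i$ ($n=11$), this is the set $X=\{u,v\}\cup V(P^{(1)})$ of size $4$. Here $T\setminus X$ has the four components $\{w\},P^{(2)},P^{(3)},P^{(4)}$, all with boundary inside $\{u,v\}$. So the all-even case does not need to be excluded, and the Hall-type detour you sketch is unnecessary.
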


\begin{proof}
    Let us denote the four branches by $P^{(1)}, P^{(2)}, P^{(3)}, P^{(4)}$ and by $m_1, m_2, m_3, m_4$ their sizes. We have $n = 3 + \sum_{i=1}^4 m_i$.
    Let $X$ be the set of vertices containing the two branching nodes, all the vertices in $P^{(1)}$, and for every $i \in \{2,3,4\}$, one vertex out of two in $P^{(i)}$, starting and ending by a vertex not in~$X$. Then, by a simple counting argument, $X$ is a bad set, because the number of components in~$\mathcal{C}_X$ is equal to the number of vertices in the union of their boundaries plus two. See Figure~\ref{fig:H1} for an example.

    \begin{figure}[h!]
        \centering
        \begin{tikzpicture}[x=0.75pt,y=0.75pt,yscale=-1,xscale=1]
%uncomment if require: \path (0,300); %set diagram left start at 0, and has height of 300

%Straight Lines [id:da1510382038271626] 
\draw    (466.5,157) -- (487.5,157) ;
%Straight Lines [id:da5599606569515323] 
\draw    (445.5,157) -- (466.5,157) ;
%Straight Lines [id:da13192369919648872] 
\draw    (319.5,157) -- (340.5,157) ;
%Straight Lines [id:da0867549944245668] 
\draw    (340.5,111) -- (361.5,111) ;
%Straight Lines [id:da30671996025673487] 
\draw    (319.5,111) -- (340.5,111) ;
%Straight Lines [id:da5842948004747573] 
\draw    (424.5,111) -- (445.5,111) ;
%Straight Lines [id:da7283896766154507] 
\draw    (445.5,111) -- (466.5,111) ;
%Straight Lines [id:da6008751066637773] 
\draw    (382.5,157) -- (403.5,157) ;
%Straight Lines [id:da34124311114560624] 
\draw    (361.5,157) -- (382.5,157) ;
%Straight Lines [id:da4689426224871478] 
\draw    (340.5,157) -- (361.5,157) ;
%Straight Lines [id:da44159274650311653] 
\draw    (403.5,157) -- (424.5,157) ;
%Straight Lines [id:da263906819065303] 
\draw    (424.5,157) -- (445.5,157) ;
%Straight Lines [id:da39493300795591424] 
\draw    (403.5,157) -- (403.5,134) ;
%Straight Lines [id:da09130044608623211] 
\draw    (403.5,134) -- (403.5,111) ;
%Straight Lines [id:da379498495153585] 
\draw    (382.5,111) -- (403.5,111) ;
%Straight Lines [id:da6366725359245404] 
\draw    (361.5,111) -- (382.5,111) ;
%Straight Lines [id:da4187383528593672] 
\draw    (403.5,111) -- (424.5,111) ;
%Shape: Circle [id:dp48582983027777726] 
\draw  [fill={rgb, 255:red, 208; green, 2; blue, 27 }  ,fill opacity=1 ] (361.5,162) .. controls (358.74,162) and (356.5,159.76) .. (356.5,157) .. controls (356.5,154.24) and (358.74,152) .. (361.5,152) .. controls (364.26,152) and (366.5,154.24) .. (366.5,157) .. controls (366.5,159.76) and (364.26,162) .. (361.5,162) -- cycle ;
%Shape: Circle [id:dp2229579680723598] 
\draw  [fill={rgb, 255:red, 208; green, 2; blue, 27 }  ,fill opacity=1 ] (340.5,162) .. controls (337.74,162) and (335.5,159.76) .. (335.5,157) .. controls (335.5,154.24) and (337.74,152) .. (340.5,152) .. controls (343.26,152) and (345.5,154.24) .. (345.5,157) .. controls (345.5,159.76) and (343.26,162) .. (340.5,162) -- cycle ;
%Shape: Circle [id:dp7320716439041451] 
\draw  [fill={rgb, 255:red, 208; green, 2; blue, 27 }  ,fill opacity=1 ] (445.5,162) .. controls (442.74,162) and (440.5,159.76) .. (440.5,157) .. controls (440.5,154.24) and (442.74,152) .. (445.5,152) .. controls (448.26,152) and (450.5,154.24) .. (450.5,157) .. controls (450.5,159.76) and (448.26,162) .. (445.5,162) -- cycle ;
%Shape: Circle [id:dp5604273094042708] 
\draw  [fill={rgb, 255:red, 255; green, 255; blue, 255 }  ,fill opacity=1 ] (424.5,162) .. controls (421.74,162) and (419.5,159.76) .. (419.5,157) .. controls (419.5,154.24) and (421.74,152) .. (424.5,152) .. controls (427.26,152) and (429.5,154.24) .. (429.5,157) .. controls (429.5,159.76) and (427.26,162) .. (424.5,162) -- cycle ;
%Shape: Circle [id:dp6568586810063457] 
\draw  [fill={rgb, 255:red, 208; green, 2; blue, 27 }  ,fill opacity=1 ] (382.5,162) .. controls (379.74,162) and (377.5,159.76) .. (377.5,157) .. controls (377.5,154.24) and (379.74,152) .. (382.5,152) .. controls (385.26,152) and (387.5,154.24) .. (387.5,157) .. controls (387.5,159.76) and (385.26,162) .. (382.5,162) -- cycle ;
%Shape: Circle [id:dp08375944795584778] 
\draw  [fill={rgb, 255:red, 208; green, 2; blue, 27 }  ,fill opacity=1 ] (403.5,162) .. controls (400.74,162) and (398.5,159.76) .. (398.5,157) .. controls (398.5,154.24) and (400.74,152) .. (403.5,152) .. controls (406.26,152) and (408.5,154.24) .. (408.5,157) .. controls (408.5,159.76) and (406.26,162) .. (403.5,162) -- cycle ;
%Shape: Circle [id:dp4084817011893256] 
\draw  [fill={rgb, 255:red, 255; green, 255; blue, 255 }  ,fill opacity=1 ] (403.5,139) .. controls (400.74,139) and (398.5,136.76) .. (398.5,134) .. controls (398.5,131.24) and (400.74,129) .. (403.5,129) .. controls (406.26,129) and (408.5,131.24) .. (408.5,134) .. controls (408.5,136.76) and (406.26,139) .. (403.5,139) -- cycle ;
%Shape: Circle [id:dp9733872613256555] 
\draw  [fill={rgb, 255:red, 208; green, 2; blue, 27 }  ,fill opacity=1 ] (403.5,116) .. controls (400.74,116) and (398.5,113.76) .. (398.5,111) .. controls (398.5,108.24) and (400.74,106) .. (403.5,106) .. controls (406.26,106) and (408.5,108.24) .. (408.5,111) .. controls (408.5,113.76) and (406.26,116) .. (403.5,116) -- cycle ;
%Shape: Circle [id:dp4469920284745468] 
\draw  [fill={rgb, 255:red, 255; green, 255; blue, 255 }  ,fill opacity=1 ] (382.5,116) .. controls (379.74,116) and (377.5,113.76) .. (377.5,111) .. controls (377.5,108.24) and (379.74,106) .. (382.5,106) .. controls (385.26,106) and (387.5,108.24) .. (387.5,111) .. controls (387.5,113.76) and (385.26,116) .. (382.5,116) -- cycle ;
%Shape: Circle [id:dp17377819134393346] 
\draw  [fill={rgb, 255:red, 208; green, 2; blue, 27 }  ,fill opacity=1 ] (361.5,116) .. controls (358.74,116) and (356.5,113.76) .. (356.5,111) .. controls (356.5,108.24) and (358.74,106) .. (361.5,106) .. controls (364.26,106) and (366.5,108.24) .. (366.5,111) .. controls (366.5,113.76) and (364.26,116) .. (361.5,116) -- cycle ;
%Shape: Circle [id:dp6466578772752772] 
\draw  [fill={rgb, 255:red, 255; green, 255; blue, 255 }  ,fill opacity=1 ] (424.5,116) .. controls (421.74,116) and (419.5,113.76) .. (419.5,111) .. controls (419.5,108.24) and (421.74,106) .. (424.5,106) .. controls (427.26,106) and (429.5,108.24) .. (429.5,111) .. controls (429.5,113.76) and (427.26,116) .. (424.5,116) -- cycle ;
%Shape: Circle [id:dp6028999798423749] 
\draw  [fill={rgb, 255:red, 208; green, 2; blue, 27 }  ,fill opacity=1 ] (319.5,162) .. controls (316.74,162) and (314.5,159.76) .. (314.5,157) .. controls (314.5,154.24) and (316.74,152) .. (319.5,152) .. controls (322.26,152) and (324.5,154.24) .. (324.5,157) .. controls (324.5,159.76) and (322.26,162) .. (319.5,162) -- cycle ;
%Shape: Circle [id:dp3851050995331229] 
\draw  [fill={rgb, 255:red, 255; green, 255; blue, 255 }  ,fill opacity=1 ] (340.5,116) .. controls (337.74,116) and (335.5,113.76) .. (335.5,111) .. controls (335.5,108.24) and (337.74,106) .. (340.5,106) .. controls (343.26,106) and (345.5,108.24) .. (345.5,111) .. controls (345.5,113.76) and (343.26,116) .. (340.5,116) -- cycle ;
%Shape: Circle [id:dp26268161435358617] 
\draw  [fill={rgb, 255:red, 255; green, 255; blue, 255 }  ,fill opacity=1 ] (319.5,116) .. controls (316.74,116) and (314.5,113.76) .. (314.5,111) .. controls (314.5,108.24) and (316.74,106) .. (319.5,106) .. controls (322.26,106) and (324.5,108.24) .. (324.5,111) .. controls (324.5,113.76) and (322.26,116) .. (319.5,116) -- cycle ;
%Shape: Circle [id:dp22389007500907787] 
\draw  [fill={rgb, 255:red, 255; green, 255; blue, 255 }  ,fill opacity=1 ] (466.5,116) .. controls (463.74,116) and (461.5,113.76) .. (461.5,111) .. controls (461.5,108.24) and (463.74,106) .. (466.5,106) .. controls (469.26,106) and (471.5,108.24) .. (471.5,111) .. controls (471.5,113.76) and (469.26,116) .. (466.5,116) -- cycle ;
%Shape: Circle [id:dp04173384497647581] 
\draw  [fill={rgb, 255:red, 208; green, 2; blue, 27 }  ,fill opacity=1 ] (445.5,116) .. controls (442.74,116) and (440.5,113.76) .. (440.5,111) .. controls (440.5,108.24) and (442.74,106) .. (445.5,106) .. controls (448.26,106) and (450.5,108.24) .. (450.5,111) .. controls (450.5,113.76) and (448.26,116) .. (445.5,116) -- cycle ;
%Shape: Circle [id:dp7778221988712029] 
\draw  [fill={rgb, 255:red, 255; green, 255; blue, 255 }  ,fill opacity=1 ] (466.5,162) .. controls (463.74,162) and (461.5,159.76) .. (461.5,157) .. controls (461.5,154.24) and (463.74,152) .. (466.5,152) .. controls (469.26,152) and (471.5,154.24) .. (471.5,157) .. controls (471.5,159.76) and (469.26,162) .. (466.5,162) -- cycle ;
%Shape: Circle [id:dp35517307565770884] 
\draw  [fill={rgb, 255:red, 255; green, 255; blue, 255 }  ,fill opacity=1 ] (487.5,162) .. controls (484.74,162) and (482.5,159.76) .. (482.5,157) .. controls (482.5,154.24) and (484.74,152) .. (487.5,152) .. controls (490.26,152) and (492.5,154.24) .. (492.5,157) .. controls (492.5,159.76) and (490.26,162) .. (487.5,162) -- cycle ;

\end{tikzpicture}

        \caption{Illustration of the proof of Lemma~\ref{lem:H1}. Here, we have $n=18$, and the size of the branches are~$4,4,4,3$. The nodes in~$X$ are colored in red, and $X$ is a bad set.}
        \label{fig:H1}
    \end{figure}

    Since $X$ is a bad set, we have $\bad{T} \geqslant |X|$. Moreover, the number of vertices in $X$ is equal to the number of vertices in~$P^{(1)}$, which is~$m_1$, plus the number of vertices selected in the paths $P^{(i)}$ for every $i \in \{2,3,4\}$, which is $\left\lfloor\frac{m_i-1}{2}\right\rfloor$, plus~$2$ for the two branching nodes. Thus, we get:
    \begin{align*}
        \bad{T}+2 &\geqslant m_1 + \left(\sum_{i=2}^{4}\left\lfloor\frac{m_i-1}{2}\right\rfloor\right) + 4 \\
        &\geqslant m_1 + \left(\sum_{i=2}^{4}\frac{m_i}{2}-1\right) + 4\\
        &\geqslant \frac{n}{2} + \frac{m_1-1}{2}
    \end{align*}

    Finally, since $m_1 \geqslant 1$ and since $\bad{T}$ is an integer, we get $\bad{T}+2 \geqslant \left\lceil\frac{n}{2}\right\rceil$.
\end{proof}

\begin{lemma}
    \label{lem:H0}
    If $T$ is a $H_0$-graph, then $\bad{T}+2 \geqslant \left\lceil \frac{n}{2} \right\rceil$.
\end{lemma}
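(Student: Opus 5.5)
The plan is to copy the previous lemmas: exhibit an explicit bad set $X$ and count its vertices. Write the $H_0$-graph as two adjacent branching nodes $u,v$, with branches $P^{(1)},P^{(2)}$ attached to $u$ and $P^{(3)},P^{(4)}$ attached to $v$. Let $m_i := |P^{(i)}| \geqslant 1$, so that $n = 2+\sum_{i=1}^4 m_i$. Badness will always be certified by the same cardinality argument as before. If the number of components of $\mathcal{C}_X$ is at least the size of the union of their boundaries plus two, then no component can be good: removing one component still leaves more components than available boundary vertices, so no injective matching function exists.

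\textbf{Construction A.} Take $X := \{u,v\}$ together with one vertex out of two in each $P^{(i)}$, starting and ending with a vertex not in $X$. Each branch then contributes $\lfloor (m_i-1)/2\rfloor + 1$ components, so $|\mathcal{C}_X| = |X|+2$ and $X$ is bad. This gives $\bad{T} \geqslant 2+\sum_i \lfloor (m_i-1)/2 \rfloor$. When all $m_i$ are odd, this equals $n/2-1$, which suffices. Each even branch, however, costs $1/2$, so construction A alone fails when several branches are even; for instance, if all four are even it only gives $\bad{T}+2 \geqslant n/2-1$.

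\textbf{Construction B.} Remove $u$ from the set of branching nodes used. View $P^{(1)}\cup\{u\}\cup P^{(2)}$ as a single path of $m_1+m_2+1$ vertices. Take $X := \{v\}$, one vertex out of two along this long path (ends not in $X$), and one vertex out of two in $P^{(3)}$ and $P^{(4)}$ as before. The component count goes as follows. The long path gives $x+1$ components for its $x$ selected vertices, and $P^{(3)},P^{(4)}$ each give one more component than selected vertices. Only $v$ is unbalanced, so again $|\mathcal{C}_X| = |X|+2$ and $X$ is bad. Its size is
\[
1+\left\lfloor \frac{m_1+m_2}{2}\right\rfloor+\left\lfloor \frac{m_3-1}{2}\right\rfloor+\left\lfloor \frac{m_4-1}{2}\right\rfloor .
\]
The symmetric construction B$'$ does the same with $v$ on the long path through $P^{(3)}$ and $P^{(4)}$, and keeps $u$ in $X$. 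The point of B is that the two parity losses on $P^{(1)},P^{(2)}$ merge into at most one loss. For example, if all $m_i$ are even, B gives $|X| = n/2-2$, hence $\bad{T}+2 \geqslant n/2 = \lceil n/2\rceil$.

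\textbf{Case analysis.} Finish with a case analysis on the parities of $m_1,\dots,m_4$, using the symmetries $P^{(1)}\leftrightarrow P^{(2)}$, $P^{(3)}\leftrightarrow P^{(4)}$ and $u\leftrightarrow v$. In each case, pick whichever of A, B, B$'$ is largest and check that it has size at least $\lceil n/2\rceil - 2$. This parity bookkeeping is the main obstacle. The delicate cases are those where one side has, say, $m_1$ odd and $m_2$ even, so that merging through $u$ gains nothing. If some case still falls short, the first thing to try is shifting the alternation on the long path in B so that $u$ itself lies in $X$; the resulting set remains bad because $u$ is adjacent to $v$. The fallback is a mixed construction that takes one branch entirely into $X$, in the spirit of Lemma~\ref{lem:H1}.
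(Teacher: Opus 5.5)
Your three constructions are exactly the sets the paper uses. The paper views $T$ as the two paths $P^{(1)}\cup\{u\}\cup P^{(2)}$ and $P^{(3)}\cup\{v\}\cup P^{(4)}$ joined by the central edge; it takes your set A when both paths have even length (each side has one odd and one even branch), and otherwise takes your set B, alternating along an odd-length path with the other branching node in $X$.

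The parity check you left open closes with no fallback. A has size $\frac{n}{2}-1-\frac{e}{2}$, where $e$ is the number of even branches, which is enough for $e\leqslant 2$. B through a side whose two branches have equal parity has size $\frac{n}{2}-1-\frac{e'}{2}$, where $e'\leqslant 2$ counts the even branches on the other side, which always suffices; such a side exists whenever $e\geqslant 3$.
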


\begin{proof}
    If $T$ is a $H_0$-graph, let~$P^{(1)}$ and~$P^{(2)}$ be the two paths obtained by removing the central edge, and let~$m_1$ and~$m_2$ denote their respective lengths. Let $u_1$ (resp. $u_2$) be the vertex in~$P^{(1)}$ (resp. in $P^{(2)}$) that is an endpoint of the central edge. We have~$n = m_1+m_2$. There are two cases, which are illustrated on Figure~\ref{fig:H0}.

    \begin{figure}[h!]
        \centering
        \begin{tikzpicture}[x=0.75pt,y=0.75pt,yscale=-1,xscale=1]
%uncomment if require: \path (0,300); %set diagram left start at 0, and has height of 300

%Straight Lines [id:da9166760626305058] 
\draw    (198.5,94) -- (219.5,94) ;
%Straight Lines [id:da405199732220453] 
\draw    (177.5,94) -- (198.5,94) ;
%Straight Lines [id:da6724575890446367] 
\draw    (156.5,94) -- (177.5,94) ;
%Straight Lines [id:da5981779935485758] 
\draw    (30.5,94) -- (51.5,94) ;
%Straight Lines [id:da8107261454300966] 
\draw    (135.5,48) -- (156.5,48) ;
%Straight Lines [id:da017512730429570245] 
\draw    (156.5,48) -- (177.5,48) ;
%Straight Lines [id:da34358019007526275] 
\draw    (93.5,94) -- (114.5,94) ;
%Straight Lines [id:da26810840341971864] 
\draw    (72.5,94) -- (93.5,94) ;
%Straight Lines [id:da4558244562370106] 
\draw    (51.5,94) -- (72.5,94) ;
%Straight Lines [id:da48490578270332985] 
\draw    (114.5,94) -- (135.5,94) ;
%Straight Lines [id:da3397978249385405] 
\draw    (135.5,94) -- (156.5,94) ;
%Straight Lines [id:da3780116886317858] 
\draw    (114.5,94) -- (114.5,71) ;
%Straight Lines [id:da2938209536660572] 
\draw    (114.5,71) -- (114.5,48) ;
%Straight Lines [id:da5409540528607487] 
\draw    (93.5,48) -- (114.5,48) ;
%Straight Lines [id:da4427688962757882] 
\draw    (72.5,48) -- (93.5,48) ;
%Straight Lines [id:da4242441374108158] 
\draw    (114.5,48) -- (135.5,48) ;
%Shape: Circle [id:dp18651834111424936] 
\draw  [fill={rgb, 255:red, 208; green, 2; blue, 27 }  ,fill opacity=1 ] (72.5,99) .. controls (69.74,99) and (67.5,96.76) .. (67.5,94) .. controls (67.5,91.24) and (69.74,89) .. (72.5,89) .. controls (75.26,89) and (77.5,91.24) .. (77.5,94) .. controls (77.5,96.76) and (75.26,99) .. (72.5,99) -- cycle ;
%Shape: Circle [id:dp5021760268861571] 
\draw  [fill={rgb, 255:red, 255; green, 255; blue, 255 }  ,fill opacity=1 ] (51.5,99) .. controls (48.74,99) and (46.5,96.76) .. (46.5,94) .. controls (46.5,91.24) and (48.74,89) .. (51.5,89) .. controls (54.26,89) and (56.5,91.24) .. (56.5,94) .. controls (56.5,96.76) and (54.26,99) .. (51.5,99) -- cycle ;
%Shape: Circle [id:dp06861896374294851] 
\draw  [fill={rgb, 255:red, 208; green, 2; blue, 27 }  ,fill opacity=1 ] (156.5,99) .. controls (153.74,99) and (151.5,96.76) .. (151.5,94) .. controls (151.5,91.24) and (153.74,89) .. (156.5,89) .. controls (159.26,89) and (161.5,91.24) .. (161.5,94) .. controls (161.5,96.76) and (159.26,99) .. (156.5,99) -- cycle ;
%Shape: Circle [id:dp7152014833018251] 
\draw  [fill={rgb, 255:red, 255; green, 255; blue, 255 }  ,fill opacity=1 ] (135.5,99) .. controls (132.74,99) and (130.5,96.76) .. (130.5,94) .. controls (130.5,91.24) and (132.74,89) .. (135.5,89) .. controls (138.26,89) and (140.5,91.24) .. (140.5,94) .. controls (140.5,96.76) and (138.26,99) .. (135.5,99) -- cycle ;
%Shape: Circle [id:dp7470575730874089] 
\draw  [fill={rgb, 255:red, 255; green, 255; blue, 255 }  ,fill opacity=1 ] (93.5,99) .. controls (90.74,99) and (88.5,96.76) .. (88.5,94) .. controls (88.5,91.24) and (90.74,89) .. (93.5,89) .. controls (96.26,89) and (98.5,91.24) .. (98.5,94) .. controls (98.5,96.76) and (96.26,99) .. (93.5,99) -- cycle ;
%Shape: Circle [id:dp1442130853157928] 
\draw  [fill={rgb, 255:red, 208; green, 2; blue, 27 }  ,fill opacity=1 ] (114.5,99) .. controls (111.74,99) and (109.5,96.76) .. (109.5,94) .. controls (109.5,91.24) and (111.74,89) .. (114.5,89) .. controls (117.26,89) and (119.5,91.24) .. (119.5,94) .. controls (119.5,96.76) and (117.26,99) .. (114.5,99) -- cycle ;
%Shape: Circle [id:dp7883382766894546] 
\draw  [fill={rgb, 255:red, 208; green, 2; blue, 27 }  ,fill opacity=1 ] (114.5,53) .. controls (111.74,53) and (109.5,50.76) .. (109.5,48) .. controls (109.5,45.24) and (111.74,43) .. (114.5,43) .. controls (117.26,43) and (119.5,45.24) .. (119.5,48) .. controls (119.5,50.76) and (117.26,53) .. (114.5,53) -- cycle ;
%Shape: Circle [id:dp6551603697838927] 
\draw  [fill={rgb, 255:red, 255; green, 255; blue, 255 }  ,fill opacity=1 ] (93.5,53) .. controls (90.74,53) and (88.5,50.76) .. (88.5,48) .. controls (88.5,45.24) and (90.74,43) .. (93.5,43) .. controls (96.26,43) and (98.5,45.24) .. (98.5,48) .. controls (98.5,50.76) and (96.26,53) .. (93.5,53) -- cycle ;
%Shape: Circle [id:dp9076336313776601] 
\draw  [fill={rgb, 255:red, 255; green, 255; blue, 255 }  ,fill opacity=1 ] (72.5,53) .. controls (69.74,53) and (67.5,50.76) .. (67.5,48) .. controls (67.5,45.24) and (69.74,43) .. (72.5,43) .. controls (75.26,43) and (77.5,45.24) .. (77.5,48) .. controls (77.5,50.76) and (75.26,53) .. (72.5,53) -- cycle ;
%Shape: Circle [id:dp5500530314285211] 
\draw  [fill={rgb, 255:red, 255; green, 255; blue, 255 }  ,fill opacity=1 ] (135.5,53) .. controls (132.74,53) and (130.5,50.76) .. (130.5,48) .. controls (130.5,45.24) and (132.74,43) .. (135.5,43) .. controls (138.26,43) and (140.5,45.24) .. (140.5,48) .. controls (140.5,50.76) and (138.26,53) .. (135.5,53) -- cycle ;
%Shape: Circle [id:dp8736085417338938] 
\draw  [fill={rgb, 255:red, 255; green, 255; blue, 255 }  ,fill opacity=1 ] (30.5,99) .. controls (27.74,99) and (25.5,96.76) .. (25.5,94) .. controls (25.5,91.24) and (27.74,89) .. (30.5,89) .. controls (33.26,89) and (35.5,91.24) .. (35.5,94) .. controls (35.5,96.76) and (33.26,99) .. (30.5,99) -- cycle ;
%Shape: Circle [id:dp5644438377652462] 
\draw  [fill={rgb, 255:red, 255; green, 255; blue, 255 }  ,fill opacity=1 ] (177.5,53) .. controls (174.74,53) and (172.5,50.76) .. (172.5,48) .. controls (172.5,45.24) and (174.74,43) .. (177.5,43) .. controls (180.26,43) and (182.5,45.24) .. (182.5,48) .. controls (182.5,50.76) and (180.26,53) .. (177.5,53) -- cycle ;
%Shape: Circle [id:dp07179887065945623] 
\draw  [fill={rgb, 255:red, 208; green, 2; blue, 27 }  ,fill opacity=1 ] (156.5,53) .. controls (153.74,53) and (151.5,50.76) .. (151.5,48) .. controls (151.5,45.24) and (153.74,43) .. (156.5,43) .. controls (159.26,43) and (161.5,45.24) .. (161.5,48) .. controls (161.5,50.76) and (159.26,53) .. (156.5,53) -- cycle ;
%Shape: Circle [id:dp4150977916270726] 
\draw  [fill={rgb, 255:red, 255; green, 255; blue, 255 }  ,fill opacity=1 ] (177.5,99) .. controls (174.74,99) and (172.5,96.76) .. (172.5,94) .. controls (172.5,91.24) and (174.74,89) .. (177.5,89) .. controls (180.26,89) and (182.5,91.24) .. (182.5,94) .. controls (182.5,96.76) and (180.26,99) .. (177.5,99) -- cycle ;
%Shape: Circle [id:dp0408741870438144] 
\draw  [fill={rgb, 255:red, 208; green, 2; blue, 27 }  ,fill opacity=1 ] (198.5,99) .. controls (195.74,99) and (193.5,96.76) .. (193.5,94) .. controls (193.5,91.24) and (195.74,89) .. (198.5,89) .. controls (201.26,89) and (203.5,91.24) .. (203.5,94) .. controls (203.5,96.76) and (201.26,99) .. (198.5,99) -- cycle ;
%Shape: Circle [id:dp03394673000432713] 
\draw  [fill={rgb, 255:red, 255; green, 255; blue, 255 }  ,fill opacity=1 ] (219.5,99) .. controls (216.74,99) and (214.5,96.76) .. (214.5,94) .. controls (214.5,91.24) and (216.74,89) .. (219.5,89) .. controls (222.26,89) and (224.5,91.24) .. (224.5,94) .. controls (224.5,96.76) and (222.26,99) .. (219.5,99) -- cycle ;

%Straight Lines [id:da1593408591682356] 
\draw    (325,94) -- (346,94) ;
%Straight Lines [id:da06968641654117036] 
\draw    (346,94) -- (367,94) ;
%Straight Lines [id:da9080696094249888] 
\draw    (346,48) -- (367,48) ;
%Straight Lines [id:da002487642310386895] 
\draw    (325,48) -- (346,48) ;
%Straight Lines [id:da10721328547207631] 
\draw    (430,48) -- (451,48) ;
%Straight Lines [id:da008019555727570515] 
\draw    (388,94) -- (409,94) ;
%Straight Lines [id:da6265038794944834] 
\draw    (367,94) -- (388,94) ;
%Straight Lines [id:da4510882535025523] 
\draw    (409,94) -- (430,94) ;
%Straight Lines [id:da34224393230831407] 
\draw    (430,94) -- (451,94) ;
%Straight Lines [id:da42090447516410634] 
\draw    (409,94) -- (409,71) ;
%Straight Lines [id:da7558780747005893] 
\draw    (409,71) -- (409,48) ;
%Straight Lines [id:da9893606353432844] 
\draw    (388,48) -- (409,48) ;
%Straight Lines [id:da34924085678353856] 
\draw    (367,48) -- (388,48) ;
%Straight Lines [id:da15046640848599613] 
\draw    (409,48) -- (430,48) ;
%Shape: Circle [id:dp9494209751831042] 
\draw  [fill={rgb, 255:red, 208; green, 2; blue, 27 }  ,fill opacity=1 ] (367,99) .. controls (364.24,99) and (362,96.76) .. (362,94) .. controls (362,91.24) and (364.24,89) .. (367,89) .. controls (369.76,89) and (372,91.24) .. (372,94) .. controls (372,96.76) and (369.76,99) .. (367,99) -- cycle ;
%Shape: Circle [id:dp2716080156959121] 
\draw  [fill={rgb, 255:red, 255; green, 255; blue, 255 }  ,fill opacity=1 ] (346,99) .. controls (343.24,99) and (341,96.76) .. (341,94) .. controls (341,91.24) and (343.24,89) .. (346,89) .. controls (348.76,89) and (351,91.24) .. (351,94) .. controls (351,96.76) and (348.76,99) .. (346,99) -- cycle ;
%Shape: Circle [id:dp9463102386772745] 
\draw  [fill={rgb, 255:red, 255; green, 255; blue, 255 }  ,fill opacity=1 ] (430,99) .. controls (427.24,99) and (425,96.76) .. (425,94) .. controls (425,91.24) and (427.24,89) .. (430,89) .. controls (432.76,89) and (435,91.24) .. (435,94) .. controls (435,96.76) and (432.76,99) .. (430,99) -- cycle ;
%Shape: Circle [id:dp883377759316855] 
\draw  [fill={rgb, 255:red, 255; green, 255; blue, 255 }  ,fill opacity=1 ] (388,99) .. controls (385.24,99) and (383,96.76) .. (383,94) .. controls (383,91.24) and (385.24,89) .. (388,89) .. controls (390.76,89) and (393,91.24) .. (393,94) .. controls (393,96.76) and (390.76,99) .. (388,99) -- cycle ;
%Shape: Circle [id:dp5489451117540562] 
\draw  [fill={rgb, 255:red, 208; green, 2; blue, 27 }  ,fill opacity=1 ] (409,99) .. controls (406.24,99) and (404,96.76) .. (404,94) .. controls (404,91.24) and (406.24,89) .. (409,89) .. controls (411.76,89) and (414,91.24) .. (414,94) .. controls (414,96.76) and (411.76,99) .. (409,99) -- cycle ;
%Shape: Circle [id:dp33600466081317726] 
\draw  [fill={rgb, 255:red, 255; green, 255; blue, 255 }  ,fill opacity=1 ] (409,53) .. controls (406.24,53) and (404,50.76) .. (404,48) .. controls (404,45.24) and (406.24,43) .. (409,43) .. controls (411.76,43) and (414,45.24) .. (414,48) .. controls (414,50.76) and (411.76,53) .. (409,53) -- cycle ;
%Shape: Circle [id:dp0027182324017188675] 
\draw  [fill={rgb, 255:red, 208; green, 2; blue, 27 }  ,fill opacity=1 ] (388,53) .. controls (385.24,53) and (383,50.76) .. (383,48) .. controls (383,45.24) and (385.24,43) .. (388,43) .. controls (390.76,43) and (393,45.24) .. (393,48) .. controls (393,50.76) and (390.76,53) .. (388,53) -- cycle ;
%Shape: Circle [id:dp2665021381083943] 
\draw  [fill={rgb, 255:red, 255; green, 255; blue, 255 }  ,fill opacity=1 ] (367,53) .. controls (364.24,53) and (362,50.76) .. (362,48) .. controls (362,45.24) and (364.24,43) .. (367,43) .. controls (369.76,43) and (372,45.24) .. (372,48) .. controls (372,50.76) and (369.76,53) .. (367,53) -- cycle ;
%Shape: Circle [id:dp5727637827193055] 
\draw  [fill={rgb, 255:red, 208; green, 2; blue, 27 }  ,fill opacity=1 ] (430,53) .. controls (427.24,53) and (425,50.76) .. (425,48) .. controls (425,45.24) and (427.24,43) .. (430,43) .. controls (432.76,43) and (435,45.24) .. (435,48) .. controls (435,50.76) and (432.76,53) .. (430,53) -- cycle ;
%Shape: Circle [id:dp2743646721934506] 
\draw  [fill={rgb, 255:red, 208; green, 2; blue, 27 }  ,fill opacity=1 ] (346,53) .. controls (343.24,53) and (341,50.76) .. (341,48) .. controls (341,45.24) and (343.24,43) .. (346,43) .. controls (348.76,43) and (351,45.24) .. (351,48) .. controls (351,50.76) and (348.76,53) .. (346,53) -- cycle ;
%Shape: Circle [id:dp9176894859200173] 
\draw  [fill={rgb, 255:red, 255; green, 255; blue, 255 }  ,fill opacity=1 ] (325,53) .. controls (322.24,53) and (320,50.76) .. (320,48) .. controls (320,45.24) and (322.24,43) .. (325,43) .. controls (327.76,43) and (330,45.24) .. (330,48) .. controls (330,50.76) and (327.76,53) .. (325,53) -- cycle ;
%Shape: Circle [id:dp43546999774887707] 
\draw  [fill={rgb, 255:red, 255; green, 255; blue, 255 }  ,fill opacity=1 ] (451,53) .. controls (448.24,53) and (446,50.76) .. (446,48) .. controls (446,45.24) and (448.24,43) .. (451,43) .. controls (453.76,43) and (456,45.24) .. (456,48) .. controls (456,50.76) and (453.76,53) .. (451,53) -- cycle ;
%Shape: Circle [id:dp9017902678019508] 
\draw  [fill={rgb, 255:red, 255; green, 255; blue, 255 }  ,fill opacity=1 ] (451,99) .. controls (448.24,99) and (446,96.76) .. (446,94) .. controls (446,91.24) and (448.24,89) .. (451,89) .. controls (453.76,89) and (456,91.24) .. (456,94) .. controls (456,96.76) and (453.76,99) .. (451,99) -- cycle ;
%Shape: Circle [id:dp6039309796793231] 
\draw  [fill={rgb, 255:red, 255; green, 255; blue, 255 }  ,fill opacity=1 ] (325,99) .. controls (322.24,99) and (320,96.76) .. (320,94) .. controls (320,91.24) and (322.24,89) .. (325,89) .. controls (327.76,89) and (330,91.24) .. (330,94) .. controls (330,96.76) and (327.76,99) .. (325,99) -- cycle ;

% Text Node
\draw (-12,39.4) node [anchor=north west][inner sep=0.75pt]    {$P^{( 1)}$};
% Text Node
\draw (-12,85.4) node [anchor=north west][inner sep=0.75pt]    {$P^{( 2)}$};
% Text Node
\draw (286,40.4) node [anchor=north west][inner sep=0.75pt]    {$P^{( 1)}$};
% Text Node
\draw (286,86.4) node [anchor=north west][inner sep=0.75pt]    {$P^{( 2)}$};
% Text Node
\draw (106,26.4) node [anchor=north west][inner sep=0.75pt]    {$u_{1}$};
% Text Node
\draw (400,26.4) node [anchor=north west][inner sep=0.75pt]    {$u_{1}$};
% Text Node
\draw (106,102.4) node [anchor=north west][inner sep=0.75pt]    {$u_{2}$};
% Text Node
\draw (400,102.4) node [anchor=north west][inner sep=0.75pt]    {$u_{2}$};

\end{tikzpicture}

        \caption{Illustration of the proof of Lemma~\ref{lem:H0}. Left: the first case, $m_1$ and~$m_2$ are both even. Right: the second case, $m_1$ is odd. On each picture, the vertices in~$X$ are the red vertices.}
        \label{fig:H0}
    \end{figure}
    
    \begin{itemize}
        \item If $m_1$ and $m_2$ are both even, then we construct a bad set~$X$ as follows. For each~$i \in \{1,2\}$, we add in~$X$ one vertex of~$P^{(i)}$ out of two, with the constraint that both endpoints are not in~$X$, and $u_i$ is in~$X$ (see Figure~\ref{fig:H0}). Then, $X$ is a bad set and contains~$\frac{m_1}{2} + \frac{m_2}{2}-2$ vertices. We thus get~$\bad{T}+2 \geqslant \left\lceil \frac{n}{2} \right\rceil$.
        \item If $m_1$ is odd, we construct a bad set~$X$ as follows. We put one vertex out of two in~$P^{(1)}$, starting and ending by a vertex not in~$X$. The number of vertices in~$P^{(1)}$ we put is~$X$ is $\frac{m_1-1}{2}$. In~$P^{(2)}$, we also put one vertex out of two in~$X$, starting and ending by a vertex not in~$X$, with the constraint that~$u_2$ is in~$X$ (see Figure~\ref{fig:H0}). The number of vertices in~$P^{(2)}$ added in~$X$ is at least $\left\lfloor \frac{m_2}{2} \right\rfloor-1$. Thus, we obtain~$|X| \geqslant \frac{m_1}{2} + \frac{m_2}{2} - 2$. So we finally get $\bad{T}+2 \geqslant \left\lceil \frac{n}{2}\right\rceil$.\qedhere
    \end{itemize}

\end{proof}

\begin{lemma}
    \label{lem:subdivided_claw}
    Let~$T$ be a subdivided claw. We have $\bad{T}+2 \geqslant \left \lfloor \frac{n}{2}\right \rfloor$. Moreover, if at least one branch has odd size, then $\bad{T}+2 \geqslant \left \lceil \frac{n}{2}\right \rceil$.
\end{lemma}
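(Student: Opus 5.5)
Following Lemmas~\ref{lem:branching_node_deg_4}--\ref{lem:H0}, we construct an explicit bad set with the alternating pattern along the branches. Let $c$ be the central vertex of the subdivided claw, and let $P^{(1)}, P^{(2)}, P^{(3)}$ be the three branches, of sizes $m_1, m_2, m_3 \geqslant 1$, so that $n = 1 + m_1 + m_2 + m_3$. Call $w_i$ the vertex of $P^{(i)}$ adjacent to $c$ and $\ell_i$ its leaf endpoint.

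First we put $c$ in $X$. In each branch $P^{(i)}$ we then add one vertex out of two, starting from the leaf. We never put $\ell_i$ in $X$, and we choose the alternation so that the last vertex before $c$ is also not in $X$ when possible. A branch with $m_i$ vertices whose two extremities are excluded contributes $\lfloor (m_i-1)/2\rfloor$ vertices. It creates one more component than the number of its selected vertices, and each component is bordered by the selected vertices or by $c$.

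Counting as in the previous lemmas, the components of $\mathcal{C}_X$ are the $\lfloor (m_i-1)/2\rfloor+1$ segments in each branch, for a total of $|X|-1+3 = |X|+2$ components. Every component has a boundary of size at most two, so $X$ is bad by the cardinality argument of Figure~\ref{fig:example_bad_set}. To make this rigorous, we check that for any candidate good component $C$, the remaining $|X|+1$ components would have to be matched injectively into $|X|$ vertices, which is impossible. This gives
\[
\bad{T}+2 \geqslant 1+\sum_{i=1}^3 \left\lfloor \frac{m_i-1}{2}\right\rfloor + 2 .
\]

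When $m_i$ is odd, $\lfloor (m_i-1)/2\rfloor = (m_i-1)/2$. When $m_i$ is even, it equals $m_i/2-1$. Write $e$ for the number of branches of even size. The bound becomes
\[
\bad{T}+2 \geqslant 3 + \frac{n-1}{2} - \frac{3}{2} - \frac{e}{2} = \frac{n}{2} + \frac{1-e}{2}.
\]
If $e \leqslant 1$, this is at least $n/2$, hence at least $\lceil n/2\rceil$ by integrality. If $e=3$, then $n$ is odd and the bound is $n/2-1 = \lfloor n/2\rfloor - \tfrac12$, which gives $\lfloor n/2\rfloor$ by integrality. This settles the all-even case.

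The case $e=2$ is the main obstacle: the naive bound is only $(n-1)/2$, while $n$ is even, so $\lceil n/2\rceil = n/2$ and we are short by $\tfrac12$. We plan to fix this by modifying $X$ in one even branch, say $P^{(1)}$. We drop $c$ from $X$ and instead take the alternation in $P^{(1)}$ that includes $w_1$, excluding $\ell_1$. This makes $P^{(1)}$ contribute $m_1/2$ vertices. The other two branches are merged with $c$ into components whose boundaries are $w_1$ and their own selected vertices.

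We then recount: components minus $|X|$ must still be at least $2$. Alternatively, we put $c$ back in $X$ and take $w_1$ and $w_2$ in $X$ in the two even branches, in the spirit of Lemma~\ref{lem:H0}. Then the single components adjacent to $c$ disappear and the count changes. We will verify by a small case check on which of these two modifications keeps the component excess at $2$ while gaining one vertex. This yields $\bad{T}+2 \geqslant n/2 = \lceil n/2\rceil$ whenever at least one branch has odd size. The degenerate cases where some even branch has size $2$, so that $w_i$ is adjacent to the leaf, need a separate direct check.
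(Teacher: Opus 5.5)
Your construction is the paper's own ($c\in X$ together with an alternating choice in each branch that excludes both extremities), and your inequality $\tau(T)+2\geqslant 3+\sum_{i=1}^3\lfloor (m_i-1)/2\rfloor$ is correct. The problem is an algebra slip in the next line. In fact $3+\frac{n-1}{2}-\frac{3}{2}-\frac{e}{2}=\frac{n}{2}+\frac{2-e}{2}$, not $\frac{n}{2}+\frac{1-e}{2}$. As a sanity check, on $K_{1,3}$ the set $X=\{c\}$ already gives $\tau(T)+2\geqslant 3=\frac{4}{2}+1$.

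With the correct value, every case with at least one odd branch ($e\leqslant 2$) gives $\tau(T)+2\geqslant \frac{n}{2}$, hence $\tau(T)+2\geqslant\lceil n/2\rceil$ by integrality. The case $e=3$ gives exactly $\frac{n-1}{2}=\lfloor n/2\rfloor$. So the ``$e=2$ obstacle'' is an artifact of the miscalculation. As submitted, however, your proof leaves that case to an unperformed check, so it is incomplete.

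The two repairs you sketch would also not have closed it. Take the other branches unchanged in each case.
\begin{itemize}
\item Keeping $c\in X$ and adding $w_1,w_2$ makes the number of components exactly $|X|$.
\item Dropping $c$ and adding $w_1$ leaves only one more component than $|X|$, so the cardinality argument fails.
\end{itemize}
Concretely, for $m_1=m_2=2$, $m_3=1$ they give $X=\{c,w_1,w_2\}$ and $X=\{w_1\}$. In the first, each leaf component has its own private boundary vertex. In the second, both components are good. Neither set is bad. The fix is simply to correct the arithmetic and delete your last two paragraphs.

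The paper also avoids splitting on $e$. It uses $\lfloor (m_i-1)/2\rfloor\geqslant \frac{m_i}{2}-1$ for all three branches to get $\frac{n-1}{2}$. When some $m_1$ is odd, it replaces that single term by $\frac{m_1-1}{2}$ to reach $\frac{n}{2}$.
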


\begin{proof}
Let us denote by $P^{(1)}, P^{(2)}, P^{(3)}$ the three branches of~$T$, and by~$m_1, m_2, m_3$ their respective lengths. We have~$n = 1 + \sum_{i=1}^3 m_i$. Let~$u$ be the branching node. Let us consider the following bad set~$X$. In~$X$, we put the vertex~$u$, and then for every~$i \in \{1,2,3\}$, we add one vertex of $P^{(i)}$ out of two in~$X$, starting and ending by a vertex not in~$X$. See Figure~\ref{fig:subdivided_claw} for an example. The set~$X$ is a bad set, so we have $\bad{T} \geqslant |X|$, and its size is the sum of the number of vertices selected in~$P^{(i)}$ for every~$i \in \{1,2,3\}$, which is $\left\lfloor\frac{m_i-1}{2}\right\rfloor$, plus one for~$u$. So we have:

\begin{align}
    \bad{T} + 2
    &\geqslant 3 + \sum_{i=1}^3 \left\lfloor\frac{m_i-1}{2}\right\rfloor \label{eq:inegality1} \\
    &\geqslant 3 + \sum_{i=1}^3\left(\frac{m_i}{2}-1\right) \label{eq:inegality2}\\
    &\geqslant \frac{n}{2} - \frac{1}{2} \label{eq:inegality3}
\end{align}

Since $\bad{T}$ is an integer, inequality~(\ref{eq:inegality3}) implies $\bad{T}+2 \geqslant \left\lfloor\frac{n}{2}\right\rfloor$. Moreover, if at least one branch has odd size, assume by symmetry that it is~$P^{(1)}$. The number of vertices in~$P^{(1)}$ that are in~$X$ is then $\frac{m_1-1}{2}$. So we can replace $\left(\frac{m_1}{2}-1\right)$ by~$\frac{m_1-1}{2}$ in inequality~(\ref{eq:inegality2}), and inequality~(\ref{eq:inegality3}) becomes~$\bad{T}+2 \geqslant \frac{n}{2}$. Since $\bad{T}$ is an integer, we finally get~\mbox{$\bad{T}+2 \geqslant \left\lceil\frac{n}{2}\right\rceil$} in that case. \qedhere

\begin{figure}[h!]
    \centering
    \begin{tikzpicture}[x=0.75pt,y=0.75pt,yscale=-1,xscale=1]
%uncomment if require: \path (0,300); %set diagram left start at 0, and has height of 300

%Straight Lines [id:da24139526426190117] 
\draw    (216.5,215.5) -- (190.5,230) ;
%Straight Lines [id:da730942269893526] 
\draw    (398.5,230) -- (424.5,244.5) ;
%Straight Lines [id:da15404446098358793] 
\draw    (372.5,215.5) -- (398.5,230) ;
%Straight Lines [id:da3945960816543814] 
\draw    (268.5,186.5) -- (242.5,201) ;
%Straight Lines [id:da9943710151998266] 
\draw    (242.5,201) -- (216.5,215.5) ;
%Straight Lines [id:da8428396824284238] 
\draw    (320.5,186.5) -- (346.5,201) ;
%Straight Lines [id:da16200037657482946] 
\draw    (346.5,201) -- (372.5,215.5) ;
%Straight Lines [id:da6004045196702029] 
\draw    (294.5,123) -- (294.5,147.5) ;
%Straight Lines [id:da005538361667071334] 
\draw    (294.5,172) -- (268.5,186.5) ;
%Straight Lines [id:da3011704503555923] 
\draw    (294.5,172) -- (320.5,186.5) ;
%Straight Lines [id:da6421647061188668] 
\draw    (294.5,147.5) -- (294.5,172) ;
%Shape: Circle [id:dp5341915597622731] 
\draw  [fill={rgb, 255:red, 208; green, 2; blue, 27 }  ,fill opacity=1 ] (289.5,172) .. controls (289.5,169.24) and (291.74,167) .. (294.5,167) .. controls (297.26,167) and (299.5,169.24) .. (299.5,172) .. controls (299.5,174.76) and (297.26,177) .. (294.5,177) .. controls (291.74,177) and (289.5,174.76) .. (289.5,172) -- cycle ;
%Shape: Circle [id:dp7288285223008791] 
\draw  [fill={rgb, 255:red, 208; green, 2; blue, 27 }  ,fill opacity=1 ] (341.5,201) .. controls (341.5,198.24) and (343.74,196) .. (346.5,196) .. controls (349.26,196) and (351.5,198.24) .. (351.5,201) .. controls (351.5,203.76) and (349.26,206) .. (346.5,206) .. controls (343.74,206) and (341.5,203.76) .. (341.5,201) -- cycle ;
%Shape: Circle [id:dp6023800848189296] 
\draw  [fill={rgb, 255:red, 255; green, 255; blue, 255 }  ,fill opacity=1 ] (263.5,186.5) .. controls (263.5,183.74) and (265.74,181.5) .. (268.5,181.5) .. controls (271.26,181.5) and (273.5,183.74) .. (273.5,186.5) .. controls (273.5,189.26) and (271.26,191.5) .. (268.5,191.5) .. controls (265.74,191.5) and (263.5,189.26) .. (263.5,186.5) -- cycle ;
%Shape: Circle [id:dp9061692833049527] 
\draw  [fill={rgb, 255:red, 255; green, 255; blue, 255 }  ,fill opacity=1 ] (289.5,147.5) .. controls (289.5,144.74) and (291.74,142.5) .. (294.5,142.5) .. controls (297.26,142.5) and (299.5,144.74) .. (299.5,147.5) .. controls (299.5,150.26) and (297.26,152.5) .. (294.5,152.5) .. controls (291.74,152.5) and (289.5,150.26) .. (289.5,147.5) -- cycle ;
%Shape: Circle [id:dp8945639251977581] 
\draw  [fill={rgb, 255:red, 255; green, 255; blue, 255 }  ,fill opacity=1 ] (289.5,123) .. controls (289.5,120.24) and (291.74,118) .. (294.5,118) .. controls (297.26,118) and (299.5,120.24) .. (299.5,123) .. controls (299.5,125.76) and (297.26,128) .. (294.5,128) .. controls (291.74,128) and (289.5,125.76) .. (289.5,123) -- cycle ;
%Shape: Circle [id:dp0893144556479436] 
\draw  [fill={rgb, 255:red, 255; green, 255; blue, 255 }  ,fill opacity=1 ] (315.5,186.5) .. controls (315.5,183.74) and (317.74,181.5) .. (320.5,181.5) .. controls (323.26,181.5) and (325.5,183.74) .. (325.5,186.5) .. controls (325.5,189.26) and (323.26,191.5) .. (320.5,191.5) .. controls (317.74,191.5) and (315.5,189.26) .. (315.5,186.5) -- cycle ;
%Shape: Circle [id:dp3793904030001152] 
\draw  [fill={rgb, 255:red, 255; green, 255; blue, 255 }  ,fill opacity=1 ] (367.5,215.5) .. controls (367.5,212.74) and (369.74,210.5) .. (372.5,210.5) .. controls (375.26,210.5) and (377.5,212.74) .. (377.5,215.5) .. controls (377.5,218.26) and (375.26,220.5) .. (372.5,220.5) .. controls (369.74,220.5) and (367.5,218.26) .. (367.5,215.5) -- cycle ;
%Shape: Circle [id:dp821786526995522] 
\draw  [fill={rgb, 255:red, 208; green, 2; blue, 27 }  ,fill opacity=1 ] (237.5,201) .. controls (237.5,198.24) and (239.74,196) .. (242.5,196) .. controls (245.26,196) and (247.5,198.24) .. (247.5,201) .. controls (247.5,203.76) and (245.26,206) .. (242.5,206) .. controls (239.74,206) and (237.5,203.76) .. (237.5,201) -- cycle ;
%Shape: Circle [id:dp9850607409089521] 
\draw  [fill={rgb, 255:red, 255; green, 255; blue, 255 }  ,fill opacity=1 ] (211.5,215.5) .. controls (211.5,212.74) and (213.74,210.5) .. (216.5,210.5) .. controls (219.26,210.5) and (221.5,212.74) .. (221.5,215.5) .. controls (221.5,218.26) and (219.26,220.5) .. (216.5,220.5) .. controls (213.74,220.5) and (211.5,218.26) .. (211.5,215.5) -- cycle ;
%Shape: Circle [id:dp02172742463742916] 
\draw  [fill={rgb, 255:red, 208; green, 2; blue, 27 }  ,fill opacity=1 ] (393.5,230) .. controls (393.5,227.24) and (395.74,225) .. (398.5,225) .. controls (401.26,225) and (403.5,227.24) .. (403.5,230) .. controls (403.5,232.76) and (401.26,235) .. (398.5,235) .. controls (395.74,235) and (393.5,232.76) .. (393.5,230) -- cycle ;
%Shape: Circle [id:dp8946104756441109] 
\draw  [fill={rgb, 255:red, 255; green, 255; blue, 255 }  ,fill opacity=1 ] (419.5,244.5) .. controls (419.5,241.74) and (421.74,239.5) .. (424.5,239.5) .. controls (427.26,239.5) and (429.5,241.74) .. (429.5,244.5) .. controls (429.5,247.26) and (427.26,249.5) .. (424.5,249.5) .. controls (421.74,249.5) and (419.5,247.26) .. (419.5,244.5) -- cycle ;
%Shape: Circle [id:dp9082623116162253] 
\draw  [fill={rgb, 255:red, 255; green, 255; blue, 255 }  ,fill opacity=1 ] (185.5,230) .. controls (185.5,227.24) and (187.74,225) .. (190.5,225) .. controls (193.26,225) and (195.5,227.24) .. (195.5,230) .. controls (195.5,232.76) and (193.26,235) .. (190.5,235) .. controls (187.74,235) and (185.5,232.76) .. (185.5,230) -- cycle ;

\end{tikzpicture}

    \caption{Illustration of the proof of Lemma~\ref{lem:subdivided_claw}. The vertices in~$X$ are the red vertices.}
    \label{fig:subdivided_claw}
\end{figure}
\end{proof}

\begin{proof}[Proof of Proposition~\ref{prop:bad_set_at_least_n/2}.]
Let~$T$ be a proper $n$-vertex tree. By Lemma~\ref{lem:3_branching_nodes}, we know that if there are at least three branching nodes, or if there are two branching nodes at distance at least~$3$, then the result holds. So we can assume that there are either one or two branching nodes, and if there are two, that they are at distance at most~$2$. In particular, all branching nodes are external. By Lemma~\ref{lem:branching_node_deg_4}, we know that if one of the branching nodes has degree at least~$4$, then the result holds. So we can assume that all branching nodes have degree exactly~$3$. The remaining cases are thus $H_0$-graphs, $H_1$-graphs, and subdivided claws. Finally, we can simply use Lemmas~\ref{lem:H1}, \ref{lem:H0} and~\ref{lem:subdivided_claw} to conclude the proof.
\end{proof}

We can now finish the proof of Theorem~\ref{thm:binary}.
In Sections~\ref{sec:overview} to~\ref{sec:splitting}, we assume that~$T$ is not a subdivided claw with all branches of even size. We treat this case separately afterwards, in Appendix~\ref{app:subdivided_claw}.

\subsubsection{Overview of our Algorithm}
\label{sec:overview}

Since~$T$ is not a subdivided claw with all branches of even size, by Proposition~\ref{prop:bad_set_at_least_n/2}, we have $\bad{T}+2 \geqslant \left\lceil\frac{n}{2}\right\rceil$, so $n_1 \leqslant n \leqslant 2n_1$.
Our goal is to create a partition of~$T$ into~$k$ subtrees, such that each subtree either contains a unique vertex in~$V_1$, or contains exactly two vertices in~$V_1$ and one of them is a leaf. Every such partition is an~\EFO allocation.
To do so, we apply an algorithm having three main steps, which are the following ones:
\begin{enumerate}
    \item (Partitioning step) We create a partition of~$T$ into subtrees of size~$1$ or~$2$, such that the total number of subtrees is at most~$k$.
    \item (Merging step) We merge subtrees that contain only vertices in~$V_0$ to adjacent subtrees, until each subtree contains at least one vertex in~$V_1$.
    \item (Splitting step) We split some subtrees if needed, to obtain, at the end of the process, exactly~$k$ subtrees satisfying the requirements above.
\end{enumerate}

Let us detail each of these steps separately.

\subsubsection{Partitioning Step}
\label{sec:partitioning}

Let us first focus on step~$1$, the partitioning step. We apply the following recursive procedure, by distinguishing cases as follows, until~$T$ is empty. At the same time, our procedure creates a set~$X$ of vertices. See Figure~\ref{fig:partition} for an example.
\begin{enumerate}[label=(\alph*)]
\item If~$T$ has a single vertex~$u$, we add~$\{u\}$ to the partition and we delete~$u$ from~$T$.
\item If $T$ contains a leaf~$u$ such that the neighbor~$v$ of $u$ is not a branching node, we add the set~$\{u,v\}$ to the partition, we add~$v$ to~$X$ and we delete these two vertices from~$T$.
\item Else, let~$v$ be an external branching node. By definition, in~$T \setminus \{v\}$, at most one connected component is not a path. Moreover, every such path is a single vertex, otherwise we are in the previous case. Let us denote the degree of~$v$ by~$d \geqslant 3$. Then, $v$ has at least~$d-1$ adjacent leaves.  Let~$u$ be some leaf adjacent to~$v$, chosen arbitrarily. We add the set~$\{u,v\}$ to the partition. For every other leaf~$w$ adjacent to~$v$, we add the set~$\{w\}$ to the partition. Finally, we add~$v$ to~$X$, and we delete $v$ and all its adjacent leaves from~$T$.
\end{enumerate}

\begin{figure}[h!]
    \centering

    % [inline block 2: 1 envs, 41874 chars -> data_tex | \begin{tikzpicture}[x=0.75pt,y=0.75pt,yscale=-1,xscale=1] %Straight Lines [id:da8619186562947028] ...]


    \caption{Left: a tree~$T$ and the partition into sets of size~$1$ or~$2$ obtained with our procedure. Right: execution of our partitioning procedure on that tree. The vertices in~$X$ are colored in red.}
    \label{fig:partition}
\end{figure}

Let us prove that the number of subtrees in that partition is at most~$k$. Let $t_1$ (resp. $t_2$) be the number of subtrees of size~$1$ (resp. of size~$2$) in the partition.
We have $n = t_1 + 2t_2$, and we want to show that $k \geqslant t_1 + t_2$.
If $t_1 \in \{0,1\}$, we have $t_1 + t_2 =  \left\lceil\frac{n}{2}\right\rceil$. Since $k \geqslant \bad{T}+2$, Proposition~\ref{prop:bad_set_at_least_n/2} immediately implies that $k \geqslant t_1+t_2$.
Let us now assume that $t_1 \geqslant 2$.

\begin{claim}
    \label{claim:T-X indep}
    The graph $T \setminus X$ is an independent set (where~$X$ is the set constructed in our procedure).
\end{claim}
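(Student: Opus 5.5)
We prove that every vertex of $V(T)\setminus X$ has all of its neighbours (in the original tree) inside $X$. Equivalently, every edge of $T$ has at least one endpoint in $X$, so $T\setminus X$ has no edges and is an independent set. The natural tool is an invariant maintained along the recursive procedure. We write $T_j$ for the current tree before the $j$-th step and $X_j$ for the set built so far. The invariant is: every vertex deleted so far either belongs to $X_j$, or all its neighbours in $T$ belong to $X_j$. Moreover, every edge of $T$ between a deleted vertex and a vertex still in $T_j$ has its deleted endpoint in $X_j$.

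Before the first step nothing has been deleted, so the invariant holds trivially. We then check each case of the procedure.

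In case~(b), we delete a leaf $u$ of $T_j$ together with its neighbour $v$, and $v$ is added to $X$. Since $u$ is a leaf of $T_j$, its only neighbour in $T_j$ is $v\in X$. Any other neighbour of $u$ in $T$ was deleted earlier, and by the second part of the invariant that earlier-deleted endpoint lies in $X$. So all neighbours of $u$ are in $X$. The vertex $v$ is in $X$, so every edge from $v$ to the remaining tree has its deleted endpoint in $X$, as required.

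In case~(c), $v$ is added to $X$, and each deleted leaf $w$ adjacent to $v$ has $v$ as its only neighbour in $T_j$. The same argument as in case~(b) then applies to each such $w$.

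In case~(a), the single vertex $u$ is isolated in $T_j$. All of its neighbours in $T$ were deleted earlier, and since those edges run to a then-remaining vertex, those neighbours lie in $X$ by the invariant. After this step nothing remains, so the second part of the invariant is vacuous.

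The only delicate point is to make sure the tree never becomes disconnected, which would change what ``leaf'' and ``single vertex'' mean. Case~(b) removes a leaf and a degree-$\le 2$ neighbour. If $v$ has degree 2 in $T_j$, removing $u$ and $v$ leaves a tree. If $v$ has degree 1, then $T_j$ was exactly the edge $uv$ and the procedure ends. Case~(c) removes $v$ together with all its leaf neighbours; at most one component of $T_j\setminus\{v\}$ is not a single vertex, so a tree remains. Hence at every step $T_j$ is a tree, and all the leaf and degree arguments above are valid.

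At termination every vertex has been deleted. Each vertex outside $X$ was deleted in case~(a), (b) or (c) as a non-$X$ vertex, and all its $T$-neighbours lie in $X$. Hence $T\setminus X$ is an independent set. I expect the main obstacle to be only the careful bookkeeping for vertices removed in case~(a) and the verification that the tree stays connected; both are handled by the edge-part of the invariant.
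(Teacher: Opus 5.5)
Your argument is correct and is essentially the paper's proof: the paper inducts on $|T|$, noting that in cases~(b) and~(c) each deleted vertex outside $X$ has $v\in X$ as its only neighbour and no neighbour in the remaining tree $T'$, which is your invariant written as a forward iteration. Your bookkeeping for edges to earlier-deleted vertices, and your check that $T_j$ stays a tree, are details the recursive formulation absorbs implicitly when it applies the induction hypothesis to $T'$.
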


\begin{proof}
    We prove it by induction on~$|T|$. It is true if~$T$ is empty. Else, there are three cases, depending on which case of our procedure is applied. For each of the three cases, let us use the same notations as in our procedure.
    
    If we are in case~(a), the result holds because~$T$ is a single vertex. If we are in case~(b), let $T':=T \setminus\{u,v\}$. The vertex~$v$ is added to~$X$, and~$u$ does not have any neighbor in~$T'$. Thus, the result holds by applying the induction hypothesis for~$T'$. Finally, if we are in case~(c), let us denote by~$L_v$ the set of leaves adjacent to~$v$, and let~$T' := T \setminus (\{v\} \cup L_v)$. The vertex~$v$ is added to~$X$, and none of the vertices in~$L_v$ have any neighbor in~$T'$. Thus, the result holds by applying the induction hypothesis for~$T'$.
\end{proof}

By construction of~$X$, we have $|X| = t_2$. Let~$X'$ be a set containing $t_1-2$ arbitrary vertices in~$T \setminus X$. Let us prove that $X \cup X'$ is a bad set. We have $|X \cup X'| = t_1 + t_2 - 2$, and $|T \setminus (X \cup X')| = t_2+2$. Since $T \setminus X$ is an independent set by Claim~\ref{claim:T-X indep}, it is also the case of $T \setminus (X \cup X')$. Thus, the number of connected components in $\mathcal{C}_{X \cup X'}$ is the number of vertices in $T \setminus (X \cup X')$, that is, $t_2+2$. Moreover, by Claim~\ref{claim:T-X indep}, the vertices in $T \setminus X$ have all their neighbors in $X$. Thus, the union of the border of the vertices in $T \setminus (X \cup X')$ is included in~$X$, so its size is at most~$t_2$. Hence, by a simple cardinality argument, $X \cup X'$ is a bad set. Finally, we get:
$$k \geqslant \bad{T}+2 \geqslant |X \cup X'|+2 \geqslant t_1 + t_2,$$
which is precisely the inequality we wanted to prove.

\subsubsection{Merging Step}
\label{sec:merging}

Let us now turn to step~2, the merging step. Let us consider the partition obtained at the end of step~$1$. If there is any subtree of this partition which is included in~$V_0$, we merge it with an adjacent subtree (chosen arbitrarily among the adjacent subtrees). We repeat this until all subtrees contain at least one vertex from~$V_1$. Note that the maximum number of vertices of~$V_1$ in a subtree of the partition is invariant during the merging step, because when merging two subtrees, at least one is included in~$V_0$. Thus, at the end of the merging step, all subtrees contain either one or two vertices in~$V_1$. See Figure~\ref{fig:merging} for an illustration.

\begin{figure}[h!]
    \centering
    \begin{tikzpicture}[x=0.75pt,y=0.75pt,yscale=-0.9,xscale=0.9]
%uncomment if require: \path (0,443); %set diagram left start at 0, and has height of 443

%Straight Lines [id:da8456000181735318] 
\draw    (157.7,149.45) -- (155.65,185) ;
%Straight Lines [id:da9045520520253254] 
\draw    (127.68,190.98) -- (155.65,185) ;
%Straight Lines [id:da10957620326546602] 
\draw    (123.65,148.72) -- (127.68,190.98) ;
%Straight Lines [id:da6675039598868313] 
\draw    (155.65,185) -- (174.73,207.2) ;
%Straight Lines [id:da23921096903266836] 
\draw    (155.65,185) -- (185.51,175.36) ;
%Straight Lines [id:da6345240956177087] 
\draw    (208.2,152.7) -- (185.51,175.36) ;
%Straight Lines [id:da3840427709571075] 
\draw    (100.76,213.9) -- (127.68,190.98) ;
%Straight Lines [id:da6352455643639959] 
\draw    (97.16,178.7) -- (127.68,190.98) ;
%Straight Lines [id:da9903963521020633] 
\draw    (69.62,174.38) -- (97.16,178.7) ;
%Straight Lines [id:da33733608335515086] 
\draw    (43.89,233.71) -- (71.86,227.73) ;
%Straight Lines [id:da6456068866162207] 
\draw    (36.43,170.41) -- (69.62,174.38) ;
%Shape: Circle [id:dp34732667702843223] 
\draw  [fill={rgb, 255:red, 255; green, 255; blue, 255 }  ,fill opacity=1 ] (33.12,174.16) .. controls (31.05,172.33) and (30.86,169.17) .. (32.68,167.1) .. controls (34.51,165.03) and (37.67,164.84) .. (39.74,166.66) .. controls (41.81,168.49) and (42.01,171.65) .. (40.18,173.72) .. controls (38.35,175.79) and (35.19,175.99) .. (33.12,174.16) -- cycle ;
%Straight Lines [id:da6661205390942416] 
\draw    (71.86,227.73) -- (100.76,213.9) ;
%Shape: Circle [id:dp6014240296668381] 
\draw  [fill={rgb, 255:red, 0; green, 0; blue, 0 }  ,fill opacity=1 ] (97.45,217.65) .. controls (95.38,215.82) and (95.18,212.66) .. (97.01,210.59) .. controls (98.84,208.52) and (102,208.32) .. (104.07,210.15) .. controls (106.14,211.98) and (106.33,215.14) .. (104.5,217.21) .. controls (102.68,219.28) and (99.52,219.47) .. (97.45,217.65) -- cycle ;
%Shape: Circle [id:dp816698340985456] 
\draw  [fill={rgb, 255:red, 255; green, 255; blue, 255 }  ,fill opacity=1 ] (68.55,231.48) .. controls (66.48,229.65) and (66.28,226.49) .. (68.11,224.42) .. controls (69.94,222.35) and (73.1,222.16) .. (75.17,223.98) .. controls (77.24,225.81) and (77.43,228.97) .. (75.6,231.04) .. controls (73.78,233.11) and (70.62,233.31) .. (68.55,231.48) -- cycle ;
%Shape: Circle [id:dp5177904997367696] 
\draw  [fill={rgb, 255:red, 0; green, 0; blue, 0 }  ,fill opacity=1 ] (124.37,194.73) .. controls (122.3,192.9) and (122.1,189.74) .. (123.93,187.67) .. controls (125.76,185.6) and (128.92,185.4) .. (130.99,187.23) .. controls (133.06,189.06) and (133.25,192.22) .. (131.43,194.29) .. controls (129.6,196.36) and (126.44,196.56) .. (124.37,194.73) -- cycle ;
%Shape: Circle [id:dp4271046719164028] 
\draw  [fill={rgb, 255:red, 255; green, 255; blue, 255 }  ,fill opacity=1 ] (93.85,182.45) .. controls (91.78,180.62) and (91.59,177.46) .. (93.42,175.39) .. controls (95.24,173.32) and (98.4,173.12) .. (100.47,174.95) .. controls (102.54,176.78) and (102.74,179.94) .. (100.91,182.01) .. controls (99.08,184.08) and (95.92,184.27) .. (93.85,182.45) -- cycle ;
%Shape: Circle [id:dp3741869670689263] 
\draw  [fill={rgb, 255:red, 255; green, 255; blue, 255 }  ,fill opacity=1 ] (40.58,237.46) .. controls (38.51,235.63) and (38.31,232.47) .. (40.14,230.4) .. controls (41.97,228.33) and (45.13,228.14) .. (47.2,229.96) .. controls (49.27,231.79) and (49.46,234.95) .. (47.64,237.02) .. controls (45.81,239.09) and (42.65,239.29) .. (40.58,237.46) -- cycle ;
%Shape: Circle [id:dp5303194858629644] 
\draw  [fill={rgb, 255:red, 255; green, 255; blue, 255 }  ,fill opacity=1 ] (66.31,178.13) .. controls (64.24,176.3) and (64.04,173.14) .. (65.87,171.07) .. controls (67.7,169) and (70.86,168.81) .. (72.93,170.63) .. controls (75,172.46) and (75.2,175.62) .. (73.37,177.69) .. controls (71.54,179.76) and (68.38,179.96) .. (66.31,178.13) -- cycle ;
%Straight Lines [id:da4285466170356156] 
\draw    (97.48,92.25) -- (110.61,121.19) ;
%Straight Lines [id:da1207697791829122] 
\draw    (123.65,148.72) -- (110.61,121.19) ;
%Shape: Circle [id:dp3637642439637653] 
\draw  [fill={rgb, 255:red, 0; green, 0; blue, 0 }  ,fill opacity=1 ] (120.34,152.47) .. controls (118.27,150.64) and (118.07,147.48) .. (119.9,145.41) .. controls (121.73,143.34) and (124.89,143.15) .. (126.96,144.97) .. controls (129.03,146.8) and (129.22,149.96) .. (127.39,152.03) .. controls (125.57,154.1) and (122.41,154.3) .. (120.34,152.47) -- cycle ;
%Shape: Circle [id:dp6896165646886178] 
\draw  [fill={rgb, 255:red, 0; green, 0; blue, 0 }  ,fill opacity=1 ] (152.34,188.75) .. controls (150.27,186.92) and (150.07,183.76) .. (151.9,181.69) .. controls (153.73,179.62) and (156.89,179.42) .. (158.96,181.25) .. controls (161.03,183.08) and (161.22,186.24) .. (159.39,188.31) .. controls (157.57,190.38) and (154.41,190.57) .. (152.34,188.75) -- cycle ;
%Shape: Circle [id:dp9378278477854591] 
\draw  [fill={rgb, 255:red, 0; green, 0; blue, 0 }  ,fill opacity=1 ] (94.17,95.99) .. controls (92.1,94.17) and (91.91,91.01) .. (93.74,88.94) .. controls (95.56,86.87) and (98.72,86.67) .. (100.79,88.5) .. controls (102.86,90.33) and (103.06,93.49) .. (101.23,95.56) .. controls (99.4,97.63) and (96.24,97.82) .. (94.17,95.99) -- cycle ;
%Shape: Circle [id:dp3488059780093318] 
\draw  [fill={rgb, 255:red, 0; green, 0; blue, 0 }  ,fill opacity=1 ] (107.3,124.94) .. controls (105.23,123.11) and (105.03,119.95) .. (106.86,117.88) .. controls (108.69,115.81) and (111.85,115.61) .. (113.92,117.44) .. controls (115.99,119.27) and (116.18,122.43) .. (114.36,124.5) .. controls (112.53,126.57) and (109.37,126.77) .. (107.3,124.94) -- cycle ;
%Shape: Circle [id:dp8468577520851289] 
\draw  [fill={rgb, 255:red, 255; green, 255; blue, 255 }  ,fill opacity=1 ] (171.42,210.94) .. controls (169.35,209.12) and (169.16,205.96) .. (170.98,203.89) .. controls (172.81,201.82) and (175.97,201.62) .. (178.04,203.45) .. controls (180.11,205.28) and (180.31,208.44) .. (178.48,210.51) .. controls (176.65,212.58) and (173.49,212.77) .. (171.42,210.94) -- cycle ;
%Shape: Circle [id:dp07032550683368066] 
\draw  [fill={rgb, 255:red, 255; green, 255; blue, 255 }  ,fill opacity=1 ] (182.2,179.11) .. controls (180.13,177.28) and (179.94,174.12) .. (181.77,172.05) .. controls (183.59,169.98) and (186.75,169.78) .. (188.82,171.61) .. controls (190.89,173.44) and (191.09,176.6) .. (189.26,178.67) .. controls (187.43,180.74) and (184.27,180.93) .. (182.2,179.11) -- cycle ;
%Shape: Circle [id:dp6875419394188359] 
\draw  [fill={rgb, 255:red, 0; green, 0; blue, 0 }  ,fill opacity=1 ] (204.89,156.45) .. controls (202.82,154.62) and (202.62,151.46) .. (204.45,149.39) .. controls (206.28,147.32) and (209.44,147.13) .. (211.51,148.95) .. controls (213.58,150.78) and (213.77,153.94) .. (211.95,156.01) .. controls (210.12,158.08) and (206.96,158.28) .. (204.89,156.45) -- cycle ;
%Shape: Circle [id:dp7486926307138599] 
\draw  [fill={rgb, 255:red, 0; green, 0; blue, 0 }  ,fill opacity=1 ] (154.39,153.2) .. controls (152.32,151.37) and (152.12,148.21) .. (153.95,146.14) .. controls (155.78,144.07) and (158.94,143.87) .. (161.01,145.7) .. controls (163.08,147.53) and (163.27,150.69) .. (161.44,152.76) .. controls (159.62,154.83) and (156.46,155.02) .. (154.39,153.2) -- cycle ;
%Shape: Ellipse [id:dp7420259089149404] 
\draw  [line width=1.5]  (33.88,236.58) .. controls (32.05,229.09) and (41.31,220.39) .. (54.56,217.15) .. controls (67.81,213.92) and (80.04,217.37) .. (81.87,224.86) .. controls (83.7,232.35) and (74.44,241.05) .. (61.19,244.29) .. controls (47.94,247.53) and (35.71,244.08) .. (33.88,236.58) -- cycle ;
%Shape: Ellipse [id:dp007666068173858842] 
\draw  [line width=1.5]  (24.16,169.85) .. controls (24.82,162.34) and (38.28,157.4) .. (54.22,158.81) .. controls (70.17,160.21) and (82.56,167.44) .. (81.9,174.94) .. controls (81.23,182.45) and (67.77,187.39) .. (51.83,185.99) .. controls (35.88,184.58) and (23.49,177.36) .. (24.16,169.85) -- cycle ;
%Shape: Ellipse [id:dp791196217167453] 
\draw  [line width=1.5]  (91.56,80.56) .. controls (97.84,77.57) and (108.52,86.85) .. (115.41,101.29) .. controls (122.31,115.74) and (122.8,129.88) .. (116.53,132.87) .. controls (110.25,135.87) and (99.57,126.59) .. (92.68,112.14) .. controls (85.79,97.7) and (85.29,83.56) .. (91.56,80.56) -- cycle ;
%Shape: Ellipse [id:dp1782685695004791] 
\draw  [line width=1.5]  (175.77,183.92) .. controls (170.46,178.28) and (175.58,164.8) .. (187.23,153.82) .. controls (198.87,142.84) and (212.62,138.5) .. (217.94,144.14) .. controls (223.26,149.78) and (218.13,163.26) .. (206.49,174.24) .. controls (194.84,185.22) and (181.09,189.56) .. (175.77,183.92) -- cycle ;
%Shape: Ellipse [id:dp30069966114155] 
\draw  [line width=1.5]  (154.28,196.11) .. controls (147.17,195.52) and (142.49,182.11) .. (143.81,166.16) .. controls (145.13,150.21) and (151.96,137.75) .. (159.07,138.34) .. controls (166.17,138.93) and (170.86,152.34) .. (169.53,168.29) .. controls (168.21,184.24) and (161.38,196.7) .. (154.28,196.11) -- cycle ;
%Shape: Ellipse [id:dp020668453098644846] 
\draw  [line width=1.5]  (92.84,222.01) .. controls (87.86,216.57) and (93.4,203.4) .. (105.2,192.59) .. controls (117.01,181.78) and (130.61,177.43) .. (135.59,182.87) .. controls (140.57,188.31) and (135.04,201.48) .. (123.23,212.29) .. controls (111.43,223.1) and (97.82,227.45) .. (92.84,222.01) -- cycle ;
%Shape: Circle [id:dp17237684034192202] 
\draw  [line width=1.5]  (88.68,188.3) .. controls (83.38,183.62) and (82.88,175.52) .. (87.56,170.22) .. controls (92.25,164.91) and (100.34,164.41) .. (105.65,169.1) .. controls (110.95,173.78) and (111.45,181.88) .. (106.76,187.18) .. controls (102.08,192.48) and (93.98,192.99) .. (88.68,188.3) -- cycle ;
%Shape: Circle [id:dp6965343797325375] 
\draw  [line width=1.5]  (115.16,158.32) .. controls (109.86,153.64) and (109.36,145.54) .. (114.04,140.24) .. controls (118.73,134.94) and (126.83,134.43) .. (132.13,139.12) .. controls (137.43,143.8) and (137.93,151.9) .. (133.25,157.2) .. controls (128.56,162.51) and (120.47,163.01) .. (115.16,158.32) -- cycle ;
%Shape: Circle [id:dp20461809360122496] 
\draw  [line width=1.5]  (166.25,216.8) .. controls (160.94,212.11) and (160.44,204.02) .. (165.13,198.71) .. controls (169.81,193.41) and (177.91,192.91) .. (183.21,197.6) .. controls (188.52,202.28) and (189.02,210.38) .. (184.33,215.68) .. controls (179.65,220.98) and (171.55,221.48) .. (166.25,216.8) -- cycle ;

%Straight Lines [id:da833608285989884] 
\draw [line width=2.25]    (262,164) -- (309.5,164) ;
\draw [shift={(314.5,164)}, rotate = 180] [fill={rgb, 255:red, 0; green, 0; blue, 0 }  ][line width=0.08]  [draw opacity=0] (16.07,-7.72) -- (0,0) -- (16.07,7.72) -- (10.67,0) -- cycle    ;
%Straight Lines [id:da557066392475115] 
\draw    (485.7,149.45) -- (483.65,185) ;
%Straight Lines [id:da1183347194831329] 
\draw    (455.68,190.98) -- (483.65,185) ;
%Straight Lines [id:da7043627509632235] 
\draw    (451.65,148.72) -- (455.68,190.98) ;
%Straight Lines [id:da41250976004875206] 
\draw    (483.65,185) -- (502.73,207.2) ;
%Straight Lines [id:da5364140860506336] 
\draw    (483.65,185) -- (513.51,175.36) ;
%Straight Lines [id:da28560447909745157] 
\draw    (536.2,152.7) -- (513.51,175.36) ;
%Straight Lines [id:da23546831182963213] 
\draw    (428.76,213.9) -- (455.68,190.98) ;
%Straight Lines [id:da8951382359942104] 
\draw    (425.16,178.7) -- (455.68,190.98) ;
%Straight Lines [id:da5119493162870873] 
\draw    (397.62,174.38) -- (425.16,178.7) ;
%Straight Lines [id:da883570909185129] 
\draw    (371.89,233.71) -- (399.86,227.73) ;
%Straight Lines [id:da5186786387881275] 
\draw    (364.43,170.41) -- (397.62,174.38) ;
%Shape: Circle [id:dp6261996926105942] 
\draw  [fill={rgb, 255:red, 255; green, 255; blue, 255 }  ,fill opacity=1 ] (361.12,174.16) .. controls (359.05,172.33) and (358.86,169.17) .. (360.68,167.1) .. controls (362.51,165.03) and (365.67,164.84) .. (367.74,166.66) .. controls (369.81,168.49) and (370.01,171.65) .. (368.18,173.72) .. controls (366.35,175.79) and (363.19,175.99) .. (361.12,174.16) -- cycle ;
%Straight Lines [id:da8143868779620337] 
\draw    (399.86,227.73) -- (428.76,213.9) ;
%Shape: Circle [id:dp5193883731160708] 
\draw  [fill={rgb, 255:red, 0; green, 0; blue, 0 }  ,fill opacity=1 ] (425.45,217.65) .. controls (423.38,215.82) and (423.18,212.66) .. (425.01,210.59) .. controls (426.84,208.52) and (430,208.32) .. (432.07,210.15) .. controls (434.14,211.98) and (434.33,215.14) .. (432.5,217.21) .. controls (430.68,219.28) and (427.52,219.47) .. (425.45,217.65) -- cycle ;
%Shape: Circle [id:dp35677022527252755] 
\draw  [fill={rgb, 255:red, 255; green, 255; blue, 255 }  ,fill opacity=1 ] (396.55,231.48) .. controls (394.48,229.65) and (394.28,226.49) .. (396.11,224.42) .. controls (397.94,222.35) and (401.1,222.16) .. (403.17,223.98) .. controls (405.24,225.81) and (405.43,228.97) .. (403.6,231.04) .. controls (401.78,233.11) and (398.62,233.31) .. (396.55,231.48) -- cycle ;
%Shape: Circle [id:dp15312956089120688] 
\draw  [fill={rgb, 255:red, 0; green, 0; blue, 0 }  ,fill opacity=1 ] (452.37,194.73) .. controls (450.3,192.9) and (450.1,189.74) .. (451.93,187.67) .. controls (453.76,185.6) and (456.92,185.4) .. (458.99,187.23) .. controls (461.06,189.06) and (461.25,192.22) .. (459.43,194.29) .. controls (457.6,196.36) and (454.44,196.56) .. (452.37,194.73) -- cycle ;
%Shape: Circle [id:dp35028009471714094] 
\draw  [fill={rgb, 255:red, 255; green, 255; blue, 255 }  ,fill opacity=1 ] (421.85,182.45) .. controls (419.78,180.62) and (419.59,177.46) .. (421.42,175.39) .. controls (423.24,173.32) and (426.4,173.12) .. (428.47,174.95) .. controls (430.54,176.78) and (430.74,179.94) .. (428.91,182.01) .. controls (427.08,184.08) and (423.92,184.27) .. (421.85,182.45) -- cycle ;
%Shape: Circle [id:dp5578233951530956] 
\draw  [fill={rgb, 255:red, 255; green, 255; blue, 255 }  ,fill opacity=1 ] (368.58,237.46) .. controls (366.51,235.63) and (366.31,232.47) .. (368.14,230.4) .. controls (369.97,228.33) and (373.13,228.14) .. (375.2,229.96) .. controls (377.27,231.79) and (377.46,234.95) .. (375.64,237.02) .. controls (373.81,239.09) and (370.65,239.29) .. (368.58,237.46) -- cycle ;
%Shape: Circle [id:dp7970071599226868] 
\draw  [fill={rgb, 255:red, 255; green, 255; blue, 255 }  ,fill opacity=1 ] (394.31,178.13) .. controls (392.24,176.3) and (392.04,173.14) .. (393.87,171.07) .. controls (395.7,169) and (398.86,168.81) .. (400.93,170.63) .. controls (403,172.46) and (403.2,175.62) .. (401.37,177.69) .. controls (399.54,179.76) and (396.38,179.96) .. (394.31,178.13) -- cycle ;
%Straight Lines [id:da9754695779310514] 
\draw    (425.48,92.25) -- (438.61,121.19) ;
%Straight Lines [id:da5443994522139035] 
\draw    (451.65,148.72) -- (438.61,121.19) ;
%Shape: Circle [id:dp4735786712770905] 
\draw  [fill={rgb, 255:red, 0; green, 0; blue, 0 }  ,fill opacity=1 ] (448.34,152.47) .. controls (446.27,150.64) and (446.07,147.48) .. (447.9,145.41) .. controls (449.73,143.34) and (452.89,143.15) .. (454.96,144.97) .. controls (457.03,146.8) and (457.22,149.96) .. (455.39,152.03) .. controls (453.57,154.1) and (450.41,154.3) .. (448.34,152.47) -- cycle ;
%Shape: Circle [id:dp2845860897432211] 
\draw  [fill={rgb, 255:red, 0; green, 0; blue, 0 }  ,fill opacity=1 ] (480.34,188.75) .. controls (478.27,186.92) and (478.07,183.76) .. (479.9,181.69) .. controls (481.73,179.62) and (484.89,179.42) .. (486.96,181.25) .. controls (489.03,183.08) and (489.22,186.24) .. (487.39,188.31) .. controls (485.57,190.38) and (482.41,190.57) .. (480.34,188.75) -- cycle ;
%Shape: Circle [id:dp8572480821134184] 
\draw  [fill={rgb, 255:red, 0; green, 0; blue, 0 }  ,fill opacity=1 ] (422.17,95.99) .. controls (420.1,94.17) and (419.91,91.01) .. (421.74,88.94) .. controls (423.56,86.87) and (426.72,86.67) .. (428.79,88.5) .. controls (430.86,90.33) and (431.06,93.49) .. (429.23,95.56) .. controls (427.4,97.63) and (424.24,97.82) .. (422.17,95.99) -- cycle ;
%Shape: Circle [id:dp658551788029342] 
\draw  [fill={rgb, 255:red, 0; green, 0; blue, 0 }  ,fill opacity=1 ] (435.3,124.94) .. controls (433.23,123.11) and (433.03,119.95) .. (434.86,117.88) .. controls (436.69,115.81) and (439.85,115.61) .. (441.92,117.44) .. controls (443.99,119.27) and (444.18,122.43) .. (442.36,124.5) .. controls (440.53,126.57) and (437.37,126.77) .. (435.3,124.94) -- cycle ;
%Shape: Circle [id:dp17493773265019785] 
\draw  [fill={rgb, 255:red, 255; green, 255; blue, 255 }  ,fill opacity=1 ] (499.42,210.94) .. controls (497.35,209.12) and (497.16,205.96) .. (498.98,203.89) .. controls (500.81,201.82) and (503.97,201.62) .. (506.04,203.45) .. controls (508.11,205.28) and (508.31,208.44) .. (506.48,210.51) .. controls (504.65,212.58) and (501.49,212.77) .. (499.42,210.94) -- cycle ;
%Shape: Circle [id:dp5833185390284086] 
\draw  [fill={rgb, 255:red, 255; green, 255; blue, 255 }  ,fill opacity=1 ] (510.2,179.11) .. controls (508.13,177.28) and (507.94,174.12) .. (509.77,172.05) .. controls (511.59,169.98) and (514.75,169.78) .. (516.82,171.61) .. controls (518.89,173.44) and (519.09,176.6) .. (517.26,178.67) .. controls (515.43,180.74) and (512.27,180.93) .. (510.2,179.11) -- cycle ;
%Shape: Circle [id:dp9703433304811658] 
\draw  [fill={rgb, 255:red, 0; green, 0; blue, 0 }  ,fill opacity=1 ] (532.89,156.45) .. controls (530.82,154.62) and (530.62,151.46) .. (532.45,149.39) .. controls (534.28,147.32) and (537.44,147.13) .. (539.51,148.95) .. controls (541.58,150.78) and (541.77,153.94) .. (539.95,156.01) .. controls (538.12,158.08) and (534.96,158.28) .. (532.89,156.45) -- cycle ;
%Shape: Circle [id:dp6959723411212233] 
\draw  [fill={rgb, 255:red, 0; green, 0; blue, 0 }  ,fill opacity=1 ] (482.39,153.2) .. controls (480.32,151.37) and (480.12,148.21) .. (481.95,146.14) .. controls (483.78,144.07) and (486.94,143.87) .. (489.01,145.7) .. controls (491.08,147.53) and (491.27,150.69) .. (489.44,152.76) .. controls (487.62,154.83) and (484.46,155.02) .. (482.39,153.2) -- cycle ;
%Shape: Ellipse [id:dp7680533540870695] 
\draw  [line width=1.5]  (503.77,183.92) .. controls (498.46,178.28) and (503.58,164.8) .. (515.23,153.82) .. controls (526.87,142.84) and (540.62,138.5) .. (545.94,144.14) .. controls (551.26,149.78) and (546.13,163.26) .. (534.49,174.24) .. controls (522.84,185.22) and (509.09,189.56) .. (503.77,183.92) -- cycle ;
%Shape: Polygon Curved [id:ds23555236676404678] 
\draw  [line width=1.5]  (358,237) .. controls (346,218) and (441,205) .. (438,196) .. controls (435,187) and (345,195) .. (351,168) .. controls (357,141) and (410,161) .. (417,163) .. controls (424,165) and (460,174) .. (469.66,187.99) .. controls (479.32,201.98) and (448,218) .. (426,231) .. controls (404,244) and (370,256) .. (358,237) -- cycle ;
%Shape: Polygon Curved [id:ds9202259000427551] 
\draw  [line width=1.5]  (486,136) .. controls (497,136) and (494.2,163.36) .. (495.6,180.18) .. controls (497,197) and (523,199) .. (515,216) .. controls (507,233) and (479,205) .. (475,189) .. controls (471,173) and (475,136) .. (486,136) -- cycle ;
%Shape: Ellipse [id:dp09333787102195568] 
\draw  [line width=1.5]  (419.56,80.56) .. controls (425.84,77.57) and (436.52,86.85) .. (443.41,101.29) .. controls (450.31,115.74) and (450.8,129.88) .. (444.53,132.87) .. controls (438.25,135.87) and (427.57,126.59) .. (420.68,112.14) .. controls (413.79,97.7) and (413.29,83.56) .. (419.56,80.56) -- cycle ;
%Shape: Circle [id:dp6490699103415422] 
\draw  [line width=1.5]  (443.16,158.32) .. controls (437.86,153.64) and (437.36,145.54) .. (442.04,140.24) .. controls (446.73,134.94) and (454.83,134.43) .. (460.13,139.12) .. controls (465.43,143.8) and (465.93,151.9) .. (461.25,157.2) .. controls (456.56,162.51) and (448.47,163.01) .. (443.16,158.32) -- cycle ;

\end{tikzpicture}

    \caption{Illustration of the merging step on some tree. Vertices in~$V_0$ are colored white, vertices in~$V_1$ are colored black.}
    \label{fig:merging}
\end{figure}

\subsubsection{Splitting Step}
\label{sec:splitting}

Let us finally explain how is performed step~3, the splitting step.
Recall that, at the end of the splitting step, our goal is to have exactly~$k$ subtrees in the partition, and we want each subtree to contain either one vertex in~$V_1$, or two vertices in~$V_1$ with at least one being a leaf.
Such a subtree will be called a \emph{valid} subtree.
Otherwise it is called an \emph{invalid} subtree.
Let us also denote by~$t$ (resp. by~$t'$) the number of subtrees in the partition at the beginning (resp. at the end) of the merging step.
We have $t' \leqslant t$. Let us denote by $m:=t-t'$ the number of merges in the merging step.
The splitting step consists itself in two phases: first, we will split invalid subtrees, in order to have only valid subtrees remaining. We will prove that, after doing this first phase, the number of subtrees in our partition is still at most~$k$. Then, the second phase will consist in splitting some valid subtrees to increase their number, to finally obtain exactly~$k$ valid subtrees at the end, which will be an \EFO allocation.

Let us detail the first phase. Let $T'$ be an invalid subtree of~$T$ in the partition at the end of the merging step. Since all subtrees intersecting~$V_1$ at the end of the partitioning step are valid, this implies that $T'$ was obtained by a merge during the merging step. Thus, the number of invalid subtrees is at most~$m$. We split each invalid subtree~$T'$ into two subtrees~$T'_1,T'_2$ such that $|T'_1 \cap V_1| = 1$ and $|T'_2 \cap V_1| = 1$.
Since we have at most~$m$ invalid subtrees after the merging step, we can bound the number~$t''$ of subtrees in the partition at the end of the first phase: we have $t'' \leqslant t' + m \leqslant t \leqslant k$.

Let us detail the second phase. At the end of the first phase, we have only valid subtrees remaining in our partition, and there are at most~$k$ of them. Let us denote by~$t_1''$ (resp. by~$t_2''$) the number of subtrees~$T'$ after the first phase, such that $|T' \cap V_1| = 1$ (resp. $|T' \cap V_1| = 2$). We have $t'' = t_1''+t_2''$. Moreover, the total number of vertices in $V_1$ is $n_1 = t_1'' + 2t_2'' = t'' + t_2''$, and by assumption we have $n_1 \geqslant k$. Thus, $t_2'' \geqslant k-t''$. Let us consider $k-t''$ distinct subtrees~$T'$ such that $|T' \cap V_1| = 2$ at the end of the first phase. We split each of these $k-t''$ subtrees into two subtrees~$T'_1,T'_2$ such that $|T'_1 \cap V_1| = 1$ and $|T'_2 \cap V_1| = 1$. All the subtrees obtained by this splits are still valids, and there are $t'' + (k-t'') = k$ of them at the end of this second phase. Thus, at the end of the second phase, we obtain an \EFO allocation. See Figure~\ref{fig:splitting} for an illustration of the two phases of the splitting step.

\begin{figure}[h!]
    \centering
    % [inline block 3: 1 envs, 29565 chars -> data_tex | \begin{tikzpicture}[x=0.75pt,y=0.75pt,yscale=-0.95,xscale=0.95] %uncomment if require: \path (0,443); %set diagram left ...]


    \caption{Illustration of the splitting step. Vertices in~$V_0$ are colored white, vertices in~$V_1$ are colored black. At the beginning, there is one invalid subtree, so we split it during the first phase to have only valid subtrees remaining. In this example, we have~$k=7$, and there are only~$6$ subtrees at the end of the first phase. Thus, we split one subtree in the second phase.}
    \label{fig:splitting}
\end{figure}

\subsection{Computing the Maximum Size of a Bad Set in Polynomial Time}\label{sec:polytime}

In this section, we note that the maximum size of a bad set can be computed in polynomial time on trees, using a dynamic algorithm that computes the maximum size of so-called $r$-locally bad sets, whose maximum size we show to be equivalent to bad sets.

\begin{restatable}{theorem}{ThmPoly}
\label{thm:poly}
The maximum size of a bad set can be computed in polynomial time on trees. In particular, the threshold number of trees for binary common additive valuation can be computed in polynomial time.
\end{restatable}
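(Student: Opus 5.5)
We plan to first replace the global condition ``no component of $\mathcal{C}_X$ is good'' by a local, Hall-type certificate, and then compute the maximum size of a set admitting such a certificate by dynamic programming over a rooted version of~$T$.

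\textbf{Reformulating badness.} Fix $X \subseteq V(T)$ and let $B_X$ be the bipartite graph with the components $\mathcal{C}_X$ on one side and $X$ on the other, where a component is joined to the vertices of its boundary. A component $C$ is good exactly when $B_X$ has a matching saturating $\mathcal{C}_X \setminus \{C\}$. Hence $X$ is bad if and only if every maximum matching of $B_X$ misses at least two components. By the deficiency version of Hall's theorem, this holds if and only if there is a family $S \subseteq \mathcal{C}_X$ with $|N_{B_X}(S)| \leqslant |S| - 2$.

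Since $T$ is a tree, $B_X$ is a forest (contracting each component gives a minor of $T$). We may therefore take $S$ minimal, so that $B_X[S \cup N(S)]$ is connected. Then $S \cup N(S)$ spans a subtree $T_S$ of $T$, consisting of the components in $S$ together with their boundary vertices.

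We call $X$ \emph{$r$-locally bad} if it contains such a witness $W = N(S)$ with $|S| - |W| \geqslant r$; the bad sets are exactly the $2$-locally bad sets. The first lemma to prove is that the maximum size of a bad set equals the maximum size of a $2$-locally bad set. This is immediate from the equivalence above, but we will state it in this form so that the DP only has to certify a local witness.

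\textbf{Shape of an optimal bad set.} Next we normalize optimal solutions. Outside the witness region $T_S$, every vertex can be put in $X$: doing so does not change the components in $S$ or their boundaries, so the family $S$ still certifies badness. Inside $T_S$, the components in $S$ are not in $X$, and the vertices of $W$ are. Consequently
\[
|X| = n - \sum_{C \in S} |C|,
\]
and maximizing $|X|$ amounts to choosing a connected witness region that minimizes the total size of its $S$-components, subject to $|S| \geqslant |W| + 2$.

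\textbf{The dynamic program.} Root $T$ at an arbitrary vertex. For each vertex $v$, record a state describing $v$:
\begin{itemize}
\item in $X$ and in $W$, or in $X$ and not in $W$;
\item in a component belonging to $S$, or in a component not in $S$;
\item whether the component or witness structure containing $v$ is still open towards the parent, which encodes which boundary adjacencies are already accounted for.
\end{itemize}
For each vertex and state, store, for every value of the counter $r = |S| - |W|$ restricted to the subtree (ranging over $\{-n, \ldots, n\}$), the maximum number of vertices of $X$ in that subtree. Children are combined by a knapsack-style convolution over $r$, which costs $O(n^2)$ per edge and hence polynomial time overall. The answer is the maximum stored value with $r \geqslant 2$ at the root, and Theorem~\ref{thm:binary} then yields the threshold $\bad{T} + 2$.

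\textbf{Main obstacle.} The delicate step is the correctness of the local counting. A component in $S$ must be counted exactly once even though it spans several subtrees of the rooted tree. Its boundary vertices must be charged to $W$ exactly once. And we must make sure that vertices of $X$ outside $W$ are never adjacent to $S$-components, since otherwise $N(S)$ would be larger than the $W$ we counted. We will handle this by charging each component to its topmost vertex and each $W$-vertex at its own node, and by forbidding in the transitions any edge between an $S$-component vertex and a vertex of $X \setminus W$. With these rules, each DP value corresponds to a genuine $r$-locally bad set.
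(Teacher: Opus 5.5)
\textbf{One false step.} You claim that, because $B_X$ is a forest, a minimal family $S$ with $|S|-|N(S)|\geqslant 2$ makes $B_X[S\cup N(S)]$ connected. You then conclude that the witness spans a subtree $T_S$, so maximizing $|X|$ reduces to choosing a \emph{connected} witness region. This is false, and it has to be removed. In a forest, a connected piece with component set $S'$ has deficiency $1+\sum_{w\in N(S')}(\deg_{S'}(w)-2)$, where $\deg_{S'}(w)$ counts the components of $S'$ adjacent to $w$. So a minimal deficient family can be the disjoint union of two pieces of deficiency exactly one, neither of which suffices alone.

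Concretely, let $T$ be the path $x\,a\,b\,y$ with two pendant leaves $\ell_1,\ell_2$ attached to $x$ and two pendant leaves $\ell_3,\ell_4$ attached to $y$. The set $\{x,a,b,y\}$ is bad; it is exactly the set built in the proof of Lemma~\ref{lem:3_branching_nodes}. Its only family of components with deficiency at least two is the four leaves, with $N(S)=\{x,y\}$, and $S\cup N(S)$ is disconnected both in $B_X$ and in $T$. No $5$-set is bad, since that would require three pendant leaves at one vertex. One also checks that $\{x,a,b,y\}$ is the only bad $4$-set. Hence $\tau(T)=4$, whereas any search restricted to connected witness regions returns at most $3$.

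The remedy is simply to delete the connectivity claim. Your normalization $X=V(T)\setminus\bigcup S$, $W=N(S)$ never uses it. Since the counter $|S|-|W|$ is additive over the rooted tree, the DP handles witnesses spread over several pieces, provided the ``open towards the parent'' flags only ensure each component and boundary vertex is charged once and never force the witness to be connected. The paper's own proof of its auxiliary lemma in fact assembles the witness from two separate deficient families $\mathcal{C}'$ and $\mathcal{C}''$.

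\textbf{Comparison with the paper.} With that fixed, your argument is correct, and it takes a different route from the paper in the algorithmic part. Your normalized reformulation is the paper's auxiliary lemma: a maximum bad set is a set $X$ of maximum size for which $T\setminus X$ has at least two more components than there are boundary vertices. You derive it directly from the deficiency version of Hall's theorem rather than by the paper's two-stage argument.

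The paper's dynamic program keeps only the entries $A_u(r),B_u(r),C_u(r)$ for $r\in\{0,1,2,3\}$. It relies on exchange arguments (absorbing a child subtree into $X$, or adding $u$ to $X$) to show two things: only boundedly many children of $u$ are special, and every value it does not compute exactly is dominated by $A_u(2)+|T\setminus T_u|$. You instead keep the full range $r\in\{-n,\dots,n\}$ and merge children by knapsack convolution. This costs a factor $n$ in table size, but correctness becomes routine bookkeeping, with none of those delicate domination arguments.

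Your states can also be trimmed. After normalization, $Z=V(T)\setminus X$ is exactly the union of the $S$-components and $W=N(Z)$. So it suffices to minimize $|Z|$ subject to ``number of components of $T[Z]$ minus $|N(Z)|$ is at least $2$''. One flag per vertex records whether its contribution (a new component, or membership in $N(Z)$) has already been charged.
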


To prove Theorem~\ref{thm:poly}, we will prove that there exists an alternative definition of maximum bad sets. Before stating the lemma, let us recall the following classical statement of Hall:

\begin{lemma}[Hall's theorem]
Let $G(V_1, V_2, E) $be a bipartite graph with $|V_1| \le |V_2|$. Then $G$ has a matching saturating\footnote{A mathcing $M$ \emph{saturates} $V_1$ if every vertex of $V_1$ is an endpoint of an edge of $M$.} every vertex of $V_1$ if and only if, for every subset $S$ of $V_1$, we have that $|S| \le |N(S)|$.
\end{lemma}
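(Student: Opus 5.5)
The necessity direction is immediate. If a matching $M$ saturates $V_1$, then for every $S \subseteq V_1$ the partners of the vertices of $S$ under $M$ are $|S|$ distinct vertices, all lying in $N(S)$. Hence $|S| \le |N(S)|$. All the work is in sufficiency, which I would prove by strong induction on $|V_1|$, following the classical Halmos--Vaughan argument. The hypothesis $|V_1| \le |V_2|$ plays no role beyond being implied by Hall's condition with $S = V_1$. The base case $|V_1| = 1$ is trivial: the unique vertex has a neighbour, since $|N(\{v\})| \ge 1$.

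For the inductive step, assume Hall's condition holds for $V_1$ with $|V_1| \ge 2$. I would split into two cases according to whether the condition holds with slack.

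\textbf{Case 1: slack everywhere.} Suppose every nonempty proper subset $S \subsetneq V_1$ satisfies $|N(S)| \ge |S|+1$. Pick any $v \in V_1$ and any neighbour $w$ of $v$, and delete both from the graph. In the remaining bipartite graph, every nonempty $S \subseteq V_1 \setminus \{v\}$ has lost at most the single neighbour $w$. So it still satisfies $|N(S)| \ge |S|$. Induction gives a matching saturating $V_1 \setminus \{v\}$ that avoids $w$, and adding the edge $vw$ finishes this case.

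\textbf{Case 2: a tight set.} Otherwise there is a nonempty proper subset $S_0 \subsetneq V_1$ with $|N(S_0)| = |S_0|$. I would apply induction twice.

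First, restrict to the bipartite graph between $S_0$ and $N(S_0)$. Every subset of $S_0$ keeps all its neighbours there, so Hall's condition holds, and induction yields a perfect matching $M_1$ between $S_0$ and $N(S_0)$.

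Second, consider the graph between $V_1 \setminus S_0$ and $V_2 \setminus N(S_0)$. For $S \subseteq V_1 \setminus S_0$, write $N'(S) = N(S) \setminus N(S_0)$. The key computation, which I expect to be the only delicate step, is
\[
|N'(S)| = |N(S \cup S_0)| - |N(S_0)| \ge |S \cup S_0| - |S_0| = |S|,
\]
where the inequality uses Hall's condition for $S \cup S_0$ together with the tightness $|N(S_0)| = |S_0|$. So Hall's condition holds in this second graph, and induction yields a matching $M_2$ saturating $V_1 \setminus S_0$ using only vertices outside $N(S_0)$.

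The two matchings use disjoint vertex sets, so $M_1 \cup M_2$ is a matching saturating $V_1$. This completes the induction.

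An alternative plan, if a shorter route is preferred, is to derive the result from K\"onig's theorem or from max-flow/min-cut. A minimum vertex cover $A \cup B$, with $A \subseteq V_1$ and $B \subseteq V_2$, satisfies $N(V_1 \setminus A) \subseteq B$. Hall's condition then gives $|B| \ge |V_1 \setminus A|$, so the cover, and hence the maximum matching, has size at least $|V_1|$. Since the paper only invokes Hall's theorem as a classical tool, the self-contained induction above is the version I would write.
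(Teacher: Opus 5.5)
Your proof is correct. The paper gives no proof of this lemma to compare against: it recalls Hall's theorem as a classical result and uses it as a black box in the lemma characterizing maximum bad sets.

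You give the standard Halmos--Vaughan induction, and every step checks out.
\begin{itemize}
\item \textbf{Necessity.} The $|S|$ distinct $M$-partners of $S$ all lie in $N(S)$.
\item \textbf{Slack case.} Every nonempty $S \subseteq V_1 \setminus \{v\}$ is a proper subset of $V_1$. Losing the single vertex $w$ therefore still leaves it at least $|S|$ neighbours.
\item \textbf{Tight case.} Since $S$ and $S_0$ are disjoint and $N(S \cup S_0) = (N(S)\setminus N(S_0)) \cup N(S_0)$ is a disjoint union, your displayed inequality follows exactly. Both sub-instances are nonempty and strictly smaller, so the induction hypothesis applies. The vertex sets of $M_1$ and $M_2$ are disjoint, so $M_1 \cup M_2$ is a matching saturating $V_1$.
\end{itemize}

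You are also right that the hypothesis $|V_1| \le |V_2|$ is redundant, since it follows from Hall's condition with $S = V_1$. This is what lets you apply the induction hypothesis to sub-instances without checking that hypothesis separately.

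The K\"onig-based alternative is also correct. Every edge at $V_1 \setminus A$ must be covered by $B$, so any vertex cover has size at least $|A| + |V_1 \setminus A| = |V_1|$. It is the shorter route if you are willing to cite a min--max theorem. The inductive version has the advantage of being fully self-contained, though the paper's usage only needs the statement.
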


\begin{lemma}
A set $X$ is a maximum bad set if and only if it is a set $X$ of maximum size such that the number of connected components of $T \setminus X$ is exactly two more than the number of vertices in $N(T \setminus X) \cap X$.
\end{lemma}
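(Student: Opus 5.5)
The plan is to reduce badness to a deficiency condition via Hall's theorem, and then compare the two maximization problems through a few local moves on $X$. For $X \subseteq V(T)$, write $B(X) := N(T\setminus X)\cap X$ for the union of the boundaries of the components in $\mathcal{C}_X$. Let $H_X$ be the bipartite graph with sides $\mathcal{C}_X$ and $X$, where a component is adjacent to the vertices of its boundary. Let $\delta(X) := \max_{S\subseteq \mathcal{C}_X}\bigl(|S|-|N_{H_X}(S)|\bigr)\geqslant 0$ be its deficiency. Call $X$ \emph{tight} if $|\mathcal{C}_X| = |B(X)|+2$.

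\textbf{Step 1: $X$ is bad if and only if $\delta(X)\geqslant 2$.} By the deficiency version of Hall's theorem, a maximum matching of $H_X$ leaves exactly $\delta(X)$ components unmatched.
\begin{itemize}
\item If $\delta(X)\leqslant 1$, remove an unmatched component $C$, or an arbitrary component if $\delta(X)=0$. The rest of $\mathcal{C}_X$ is then matched injectively into boundary vertices, so $C$ is good.
\item If $\delta(X)\geqslant 2$, take a violator $S$ with $|S|-|N(S)|\geqslant 2$. Removing any single component $C$ leaves $S\setminus\{C\}$ with $|S\setminus\{C\}|>|N(S\setminus\{C\})|$, so by Hall no matching exists and $C$ is not good.
\end{itemize}
In particular, any $X$ with $|\mathcal{C}_X|\geqslant |B(X)|+2$ is bad: take $S=\mathcal{C}_X$, which gives the cardinality argument already used in the paper. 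Hence every tight set is bad.

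\textbf{Step 2: every maximum bad set $X$ is tight.} Take a violator $S$ with $|S|-|N(S)|\geqslant 2$. Let $X'$ be $X$ together with all vertices of the components in $\mathcal{C}_X\setminus S$. Then $\mathcal{C}_{X'}=S$ and $B(X')=N(S)$, so $X'$ is bad by Step 1 and $|X'|\geqslant |X|$. By maximality, $\mathcal{C}_X\setminus S$ contributes no vertices, i.e.\ $S=\mathcal{C}_X$, and so $|\mathcal{C}_X|-|B(X)|\geqslant 2$.

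Suppose for contradiction that $|\mathcal{C}_X|-|B(X)|\geqslant 3$. Pick any $C\in\mathcal{C}_X$; there is one, since otherwise $X=V$ and the difference would be $0$. Let $v$ be a leaf of $T[C]$, or its unique vertex if $|C|=1$, and set $X''=X\cup\{v\}$.
\begin{itemize}
\item If $|C|=1$, one component disappears and the boundary can only shrink.
\item If $|C|\geqslant 2$, the number of components is unchanged, since removing a leaf of $T[C]$ keeps $C\setminus\{v\}$ connected, and the boundary grows by at most the single vertex $v$.
\end{itemize}
In both cases the difference drops by at most $1$, so it stays at least $2$. By Step 1, $X''$ is then a bad set larger than $X$, a contradiction. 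Hence $X$ is tight.

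\textbf{Step 3: conclude.} Let $m$ be the maximum size of a tight set. Step 1 gives $m\leqslant \bad{T}$, and Step 2 gives $\bad{T}\leqslant m$, so the two quantities are equal. Then a maximum bad set is tight (Step 2) and has size $m$, so it is a tight set of maximum size. Conversely, a tight set of size $m=\bad{T}$ is bad (Step 1) and therefore a maximum bad set.

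The step I expect to be the main obstacle is Step 2. One must check carefully that the absorption $X\mapsto X'$ really yields $\mathcal{C}_{X'}=S$ and $B(X')=N(S)$: vertices of $X$ adjacent only to absorbed components drop out of the boundary. One must also check that the leaf-addition move changes the quantity $|\mathcal{C}|-|B|$ by at most one. The tree structure is what makes the leaf move clean, because $T[C]$ is a tree and so has a leaf whose removal keeps the component connected.
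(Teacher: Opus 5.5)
Your proof is correct. It shares with the paper the Hall-theorem viewpoint and the key absorption move: put every component outside a deficient subfamily into $X$, which preserves badness and does not decrease $|X|$. You differ in how the excess is pinned to exactly two.

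\textbf{The paper's route.} It takes a minimal Hall violator $\mathcal{C}'$. Such a violator has deficiency exactly one and becomes perfectly matchable onto $Y=N(\mathcal{C}')\cap X$ after deleting any member. The paper then uses badness of $X$ to extract a second minimal violator $\mathcal{C}''\subseteq\mathcal{C}\setminus\mathcal{C}'$ with respect to $X\setminus Y$. The union $\mathcal{C}'\cup\mathcal{C}''$ then has deficiency exactly two, and the complement of its vertices is a tight set of size at least $|X|$.

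\textbf{Your route.} You first prove the criterion ``$X$ is bad iff $H_X$ has deficiency at least two'' via the deficiency (K\"{o}nig--Ore) form of Hall's theorem. You use absorption only to show that, at a maximum bad set, the violator must be all of $\mathcal{C}_X$. You then rule out excess at least three by adding one non-cut vertex of a component, which lowers $|\mathcal{C}_X|-|B(X)|$ by at most one.

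\textbf{What each buys.} The paper's argument is a one-shot matching construction that needs only the plain Hall theorem. Yours isolates a reusable characterization of bad sets. It also states explicitly that every maximum bad set is itself tight; in the paper this is only implicit, since the constructed tight set contains $X$ and so equals it by maximality.

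\textbf{Two small remarks.} First, neither argument uses that $T$ is a tree. In any graph you can take $v$ to be a leaf of a spanning tree of $G[C]$, so your closing comment overstates the role of the tree structure. Second, your Step~1, like the paper's proof, tacitly assumes $\mathcal{C}_X\neq\emptyset$. Under the literal definition, $X=V(T)$ leaves no component at all and is therefore vacuously bad, so the convention $X\neq V(T)$ should be stated.
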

\begin{proof}
Assume first that we have a set $X$ such that the number of connected components of $T \setminus X$ is exactly two more than the number of vertices in $N(T \setminus X) \cap X$. Then, even if we remove one connected component of $T \setminus X$, the cardinality assumption ensures that it is impossible to match these connectec components to the vertices of $N(T \setminus X) \cap X$, which ensures that $X$ is a bad set.

Assume now that $X$ is a bad set of maximum size. 
Let us denote by $\mathcal{C}$ the set of connected components of $T \setminus X$. Since it is impossible to match all the components of $\mathcal{C}$ to a subset of $X$, by Hall's theorem, there exists a set of components $\mathcal{C}' \subseteq \mathcal{C}$ such that $|\mathcal{C}'| > |N(\cup_{C \in \mathcal{C}'} C \cap X|$. Let us denote by $Y$ the set $N(\cup_{C \in \mathcal{C}'} C) \cap X$. Let us choose such a set of minimum size. Note that we can assume that the difference is exactly one since otherwise, we can remove elements of $\mathcal{C}'$ to reach that bound. By Hall's theorem, the minimality of $\mathcal{C}'$ ensures that, if we remove any element $C \in \mathcal{C}'$, there is a perfect matching between the components of $\mathcal{C}'\setminus C$ and $Y$.
 Let us consider the bipartite graph between the components of $\mathcal{C}\setminus \mathcal{C}'$ and the vertices incident to them in $X \setminus Y$. If there is a perfect matching between them, then there is a matching between $\mathcal{C}$ and $X$ that matches all the components but $C$, a contradiction with the definition of bad set. So, by Hall's theorem, there exists a subset $\mathcal{C}''$ of $\mathcal{C} \setminus \mathcal{C''}$ such that its neighborhood in $X \setminus Y$ is smaller than its size. Again we can take such a set of minimum size.
 
 Note that the neighborhood in $X$ of $\mathcal{C}' \cup \mathcal{C}''$ has size  $|\mathcal{C}' \cup \mathcal{C}''|-2$. Thus the complement of $\mathcal{C}' \cup \mathcal{C}''$ provides a bad set whose size is at least the size of $X$ that satisfies the condition of the lemma, which completes the proof. 
\end{proof}

Using this lemma, we are now ready to prove Theorem~\ref{thm:poly}.

\begin{proof}[Proof of Theorem~\ref{thm:poly}.]
The proof is based on a dynamic programming algorithm that will be performed bottom-up from the leaves. We root the tree $T$ on an arbitrary root $r$. For every $u$, we denote by $T_u$ the subtree of $T$ rooted at $u$ (that is containing the subtree containing $u$ and all its descendant in the tree rooted in $r$). By abuse of notation, we will denote by $T_u$ both the subtree rooted at $u$ and the set of vertices in the subtree $T_u$

Let $u$ be a vertex and $X \subseteq V(T_u)$. For every integer $r$, we say that $X$ is \emph{$r$-locally bad for $u$}\footnote{When $u$ is clear from context, we will simply write \emph{$r$-locally bad}.} if the number of connected components in $T_u \setminus X$ is at least $r$ plus the number of vertices in $Y$ with neighbors in $T_X \setminus Y$. In other words $|N(T_u \setminus Y) \cap Y|$ is at most the number of connected components of $T_u \setminus Y$ minus $r$. We will call these vertices the \emph{border vertices of $Y$}.

For each node $u$ of the tree, the dynamic programming will compute the following values:
\begin{itemize}
\item For every $r \in \{1,2\}$, the maximum size $A_u(r)$ (if it exists) of a set $X$ such that the root $u$ is in the set $X$ and $X$ is $r$-locally bad and all the neighbors of $u$ are in $X$.
\item For every $r \in \{0,1,2\}$, the maximum size $B_u(r)$ (if it exists) of a set $X$ such that the root $u$ is in the set $X$ and $X$ is $r$-locally bad and there is a neighbor of $u$ not in $X$. 
\item For every $r \in \{1,2,3\}$, the maximum size $C_u(r)$ (if it exists) of a set $X$ such that the root $u$ is \textbf{not} in the set $X$ and $X$ is $r$-locally bad. 
\end{itemize}

The rest of the proof consists in proving that we can compute these five values using a bottom-up dynamic programming algorithm.  Actually, the truth is slightly more complicated.  One can remark that if $B_r(u)$ has some value $v_u$, then it is easy to find a bad set of size $v_u+|T \setminus T_u|$ since the complement of a $2$-locally bad set $X$ for $u$ such that $u \in X$ is a bad set. In particular, it implies that if, for $i \in \{1,2,3\}$, $C_u(i) < A_u(2)$, then computing $C_u(i)$ is useless. In the rest of the proof, we will prove that all the values are either correct or useless.

\smallskip

\noindent\textbf{Case 1.} $u$ is a leaf. \\
In that case, if we decide to add $u$ to $X$, no set of $T_u$ can be (weakly) locally bad, so the values of the first bullet are not defined. If we decide to select the empty set $X$, then we get a weakly locally bad set in the second item which has size $0$. (Note that when sets do not exist, we give their entry value $+\infty$).
\smallskip

\noindent\textbf{Case 2.} $u$ is not a leaf and $u \in X$. \\
Let us denote by $v_1,\ldots,v_\ell$ the children of $u$ in the rooted tree (possibly with $\ell=1$). Let us explain how we can compute the values using the previous ones.

Let us prove the following claims that will be used in the proof in the case where $u$ is in the set $X$:

\begin{claim}\label{clm:progdyn}
Let $u$ be a vertex and $T_u$ be the subtree rooted on $u$ and $r \le 3$. Then every set $X$ which is a maximum bad set  which is $r$-bad component for $u$ and contains $u$. Then
\begin{itemize}
    \item Assume that $X$ contains $T_u$. If we denote by $v_1,\ldots,v_\ell$ the children of $u$ in $T_u$ and by $b(v_i)$ the integer such that $X$ is $b(v_i)$-locally bad for $v_i$. Then we have that $X$ is $(\sum b(v_i))$-locally bad if $v_1,\ldots,v_r$ are in $X$ and is $(\sum b(v_i)) -1$-locally bad otherwise.
\item For every $v_i$, we have $b(v_i)$ is non-negative if $v_i \in X$ and is positive if $v_i \notin X$.
\item $X$ contains all the vertices of all the connected components of $T_u \setminus \{u\}$ but at most $r$. 
\end{itemize}
\end{claim}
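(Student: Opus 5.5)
The statement as printed is garbled, so I will prove it in the following sense. Let $X$ be a set of maximum size among those that contain $u$ and are $r$-locally bad for $u$; write $d_u(X)$ for the \emph{deficit}, i.e.\ the number of components of $T_u\setminus X$ minus the number of border vertices of $X$ inside $T_u$, so that $r$-locally bad means $d_u(X)\ge r$. Throughout, $b(v_i):=d_{v_i}(X\cap T_{v_i})$. The three bullets then read:
\begin{enumerate}
\item additivity: $d_u(X)=\sum_i b(v_i)$ if all children $v_i$ lie in $X$, and $d_u(X)=\sum_i b(v_i)-1$ otherwise;
\item sign: $b(v_i)\ge 0$ when $v_i\in X$, and $b(v_i)\ge 1$ when $v_i\notin X$;
\item sparsity: all but at most $r$ of the subtrees $T_{v_i}$ are entirely contained in $X$.
\end{enumerate}
The whole proof is local bookkeeping followed by exchange arguments using the maximality of $|X|$. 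The main obstacle is the counting in (3), in particular handling the $-1$ correction from (1).

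For (1), I will use the fact that $u\in X$. Then every component of $T_u\setminus X$ lies inside a single $T_{v_i}$, so the component counts simply add up. Every border vertex other than $u$ lies in some $T_{v_i}$, and its neighbours outside $X$ also lie in $T_{v_i}$. The only exception is $v_i\in X$, whose neighbour $u$ is in $X$ anyway. Hence border vertices add up as well, except that $u$ itself is a border vertex exactly when some child is not in $X$. This gives the formula with the $-1$.

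For (2), I will replace $X\cap T_{v_i}$ by a canonical set, which changes nothing outside $T_{v_i}$. In particular the membership of $v_i$ is preserved, so $u$'s border status is unchanged. If $v_i\in X$, I replace it by all of $T_{v_i}$: the local deficit becomes $0$ and the size does not decrease. If $v_i\notin X$, I replace it by $T_{v_i}\setminus\{v_i\}$: this leaves the single component $\{v_i\}$, and its border vertices are children of $v_i$, which are not border vertices for $T_{v_i}$ in the counted sense. So the local deficit is at least $1$, and the size is $|T_{v_i}|-1\ge |X\cap T_{v_i}|$. By (1), if $b(v_i)$ were below these values, the replacement would not decrease $d_u$ and would not decrease $|X|$. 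Taking $X$ maximum, and breaking ties toward the canonical choice, we may therefore assume the bounds in (2).

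For (3), I first sharpen (2) by another exchange. If $v_i\in X$, $T_{v_i}\not\subseteq X$ and $b(v_i)\le 0$, then filling $T_{v_i}$ keeps $d_u$ at least its old value and strictly increases $|X|$, contradicting maximality. So every non-full subtree has $b(v_i)\ge 1$. Next, suppose more than $r$ subtrees are non-full (if some child is outside $X$, the bound to beat is $r+1$ because of the $-1$). Then filling one non-full subtree with $v_i\in X$ lowers the deficit by $b(v_i)$ but keeps it at least $r$, while strictly increasing $|X|$, again a contradiction. The subtle case is when all the non-full subtrees have $v_i\notin X$. There I would fill one such subtree: this may also remove $u$ from the border, gaining $1$. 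I would check carefully that $d_u$ stays at least $r$ using the bound $\sum_i b(v_i)\ge (\#\text{non-full})$. This is the step I expect to need the most care, and the exact constant ($r$ versus $r+1$) may need adjusting to match its use in the dynamic programming.
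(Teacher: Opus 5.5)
Your part (1) is correct. The filling exchange you use when $v_i\in X$ (and again in (3)) is exactly the paper's argument.

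The gap is the case $v_i\notin X$ of (2). When you replace $X\cap T_{v_i}$ by $T_{v_i}\setminus\{v_i\}$, the children of $v_i$ lie in the new set and are adjacent to $v_i$, which does not. So they \emph{are} border vertices of $T_{v_i}$. The local deficit of your replacement is therefore $1-c$, where $c$ is the number of children of $v_i$. This is at most $0$ whenever $v_i$ is not a leaf, so the comparison meant to force $b(v_i)\ge 1$ yields nothing. For $c\ge 2$ the exchange can even strictly lower $d_u$. Your claim in (3) that every non-full subtree has $b(v_i)\ge 1$ inherits this gap for roots outside $X$.

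The repair is the move you already make in (3): fill all of $T_{v_i}$. If $v_i\notin X$ and $b(v_i)\le 0$, then by (1) the local term rises from $b(v_i)$ to $0$. The $-1$ correction cannot get worse, since $u$ can only leave the border. Meanwhile $|X|$ strictly increases, which contradicts maximality. This works for \emph{every} maximum $X$, whereas your tie-breaking would only produce \emph{some} maximum $X$. It is also how the paper proceeds: it fills any non-full component that is not $1$-locally bad, whether or not its root lies in $X$.

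On the constant in (3), your suspicion is right, and the issue is substantive. Once every non-full subtree has $b\ge 1$, let $m$ be their number. Filling any one of them leaves a deficit of at least $(m-1)-[\exists j\colon v_j\notin X]$. This gives $m\le r$ when all children lie in $X$, and also when exactly one child lies outside $X$ (fill that child's subtree). In general it gives only $m\le r+1$.

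The $+1$ cannot be removed. Let $u$ have $r+1\ge 2$ leaf children and take $X=\{u\}$. Then $d_u(X)=(r+1)-1=r$, and adding any $s\ge 1$ of the leaves drops the deficit below $r$. So $X$ is a maximum $r$-locally bad set containing $u$ with $r+1$ non-full subtrees, and the third bullet as printed is false. The paper's own argument (that more than three non-full components can always be reduced) has the same off-by-one, as $r=3$ with four leaves shows. So state (3) as ``at most $r+1$'', and at most $r$ when at most one child lies outside $X$.
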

\begin{proof}
The proof of the first point is very simple and simply follows from the definition.

Let $X$ be a $1,2$ or $3$-locally bad set of maximum size in $T_u$. Note that if there is a connected component $C$ of $T_u \setminus u$ rooted on $v$ is not $1+$-locally bad, then the number of border vertices in $T_v$ at most the  number of connected components in $T_v \setminus X$. If we add add all the vertices of $C$ to $X$ and denote by $X'$ the new set, the reduction of the number of connected components is at least as large as the reduction on the number of border vertices. So $X'$ is still $1$, $2$ or $3$-locally bad and is larger, a contradiction. 

So we can assume that all the connected components of $T_u \setminus u$ which do not contain all the vertices of $X$ are $k$-locally bad for some $k \ge 1$. If there are more than three such components, we consider any of them and add all the vertices of that component to $X$. The resulting set is still $k'$-locally bad for $k' \ge 3$ and the conclusion follows.
\end{proof}

This claim will allow us to complete the proof of this case. 
\medskip

\noindent \textit{Computation of $A_u(r)$ for $r \in \{1,2\}$.} \\
Claim~\ref{clm:progdyn} ensures that at most three connected components of do not fully contain vertices of $X$. 

Assume that there exists a component $C$ of $T_u \setminus u$ such that $X$ contains its root vertex $v$ and $X$ is $r$-locally bad. Then all the other vertices of $T_u$ are in $X$. To test that option when we compute $A_u(r)$ we simply have to look at the maximum size given, for every component $C$ of $T_u \setminus u$ attached on $v$, of $A_v(r) + |T_u \setminus T_v|$ and of $B_v(r) + |T_u \setminus T_v|$.

When we are looking for $A_u(1)$, since no vertex adjacent to $u$ is not in $X$, it should be true that one of the components attached on $u$ are $1$-locally bad and contain the root vertex. So the previous paragraph permits to compute that value.

When we are looking for $A_u(2)$, we have another additional case to consider. It might be possible that there exists two components $C_1,C_2$ of $T_u \setminus u$ rooted on $v_1$ and $v_2$ which both contain their root vertices and are $1$-locally bad. In that case, all the other vertices of $T_u$ are in $X$ and we can compute $A_u(2)$ using a formula similar to the one of the previous paragraph.
\medskip

\noindent \textit{Computation of the $B_u(r)$ for $r \in \{0,1,2\}$.} \\
Let $X$ be an $r$-locally bad set of maximum size of $T_u$ that contains $u$ and such that one of the neighbors of $u$ is not in $X$.
Let $v_1,v_r$ be the children of $u$ in $T_u$. Let us denote by $b(v_i)$ the integer such that $X$ is $b(v_i)$-locally bad for $v_i$. We have that $r = (\sum b(v_i)) -1$ since the vertex $u$ is a new vertex adjacent to a component of $T_u \setminus X$ which was counted in any of the $T_{v_i}$. 

In particular, the number of components attached on $u$ whose root is not in $X$ is at most $r+1$. Indeed, by Claim~\ref{clm:progdyn}, all these components are $r'$-locally bad for $r' >0$. And since we are only interested in being $r$-locally bad on $u$, the formula of the previous paragraph ensures that at most $r'+1$ children can be in $X$ and do not contain all their vertices in $X$. 

Consider now the vertices $v_i$ whose root is in $X$ and such that $X$ does not contain $T_{v_i}$. We can have at most $r$ such components that are $r'$-locally bad with $r'>0$ by the previous argument. To conclude, let us prove that there is no such components that are $0$-locally bad. Indeed, if there is such a component, we simply add in $X$ all the vertices in $T_{v_i}$. The size of $X$ increases and $X$ is still $r$-locally bad in $T_u$.

So in order to find the value of $B_r(u)$, we simply have to sum over a set of at most $4$ components, which completes the proof. 
\medskip

\noindent\textbf{Case 3.} $u$ is not a leaf and $u \notin X$. \\
Let us prove that we can compute the $C_u(r)$ for $r \in \{1,2,3\}$. 
Let $X$ be a set of maximum size in $T_u$ which is $r$-locally bad and such that $u \notin X$. Recall that, at the beginning of the proof, we have said that we will compute either the values or we will prove that the value is useless since it is smaller than $A_u(2)$. We did not use that fact so far and will use it to compute the $T_u$'s.

Let $r \in \{1,2,3\}$. Let $X$ be a set of maximum size $C_u(r)$ such that the root $u$ is not in the set $X$ and $X$ is $r$-locally bad.
Let $v_1,\ldots,v_r$ be the children of $u$. We will distinguish three types of components attached on $u$. The ones, of Type I, such that $v_i \in X$ and $v_i$ is adjacent to a vertex in $T_v \setminus X$. The ones, said to be of Type II, such that $v_i \notin X$. And finally the ones of Type III such that which corresponds to the $v_i$ such that $v_i \in X$ and all the children of $v_i$ in $T_{v_i}$ also are in $X$.

Assume first that there are at least $3$ components of Type III. Let us consider the set $X'=X \cup \{ u \}$. We have that the number of components of $T_u \setminus X'$ reduces by at most one compared to $T_u \setminus X$. And the number of vertices in $X'$ with a neighbor in $T \setminus X'$ reduces by at least two compared with $T \setminus X$. Indeed, the roots of the three components of Type III satisfied that $u$ was their only neighbor not in $X$. Thus these vertices are not anymore in the neighborhood of $T_u \setminus X'$. Since $X$ was $r$-locally bad, we have that $X'$, whose size is larger, is $2$-locally bad and has its root in $X'$. So, in that case, the value is useless.

So we can assume that there are at most $3$ components of Type III. Assume now at least $2$ components of Type II. Then if we set $X'= X \cup \{ u \}$, the border increases by at most one and the number of connected components of $T \setminus X'$ increases by at least $2$.  So $X'$ is larger than $X$, is $2$-locally bad and has its root in $X'$. So, again in that case, the value is useless. 

So we can finally assume that all the components but at most $5$ are of type I. The values corresponding to components of Type I are computed by the values $B_{v_i}(r')$. One can note that, if $r'$ is positive at least $2$ components, then we can again modify the set $X$ into a larger set $X'$ which is $2$-locally bad and has its root in $X'$. 
So at most one component satisfies that $r'>1$. So there are at most five components that are special, which we can guess, and, for all the others, we should have computed the value $B_{v_i}(0)$. This gives a polynomial number of cases to test which permits to compute the value $C_u(r)$ in polynomial time and completes the proof.
\end{proof}

\section{Spectrum Threshold on Trees with Many Leaves}\label{sec:manyleaves}

In this section, we discuss the maximum size of bad sets depending on the number of leaves of the tree. We showed in Proposition~\ref{prop:bad_set_at_least_n/2} that the size of the largest bad set plus two is almost always larger than half of the vertices.
However, when one tries to design bad examples, that is graphs which do not admit large bad sets, it becomes much harder when the number of leaves in the tree increases. The goal of this section is to formalize this intuition and show that the maximum size of bad sets increases when the number of leaves increases. In other words, the more the tree is branching, the easier it is to construct bad sets.

\begin{theorem}\label{thm:badset_leaves}
    There exists a constant $C$ such that, for every tree $T$ with $\ell$ leaves, the threshold of the spectrum is at least $(1-C \cdot \frac 1\ell)n$ even for common binary additive valuation functions. 
\end{theorem}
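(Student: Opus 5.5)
The plan is to reduce the statement to the construction of one large bad set, and then to build that set from two short "forks". Lemma~\ref{lem:smaller_bad_set} and Lemma~\ref{lem:bad_valuation} show that a bad set of size $b$ yields a common binary additive valuation with no \EFO allocation for every $k \in \{2,\dots,b+1\}$. So the threshold is at least $\bad{T}+2$, and it suffices to show
\[
\bad{T} \geqslant n - C\frac{n}{\ell}
\]
for a suitable absolute constant $C$. If $\ell \leqslant C$ there is nothing to prove, so we may assume $\ell$ is large.

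\textbf{The gadget.} Call a \emph{fork} a branching node $v$ together with two pendant paths $P, P'$ attached to $v$; each of these paths ends in a leaf and has all internal vertices of degree~$2$. The size of the fork is $|P|+|P'|$.

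Given two forks $(v,P^{(1)},P^{(2)})$ and $(w,P^{(3)},P^{(4)})$ with $v \neq w$ and disjoint paths, I would define $X$ exactly as in the proof of Lemma~\ref{lem:3_branching_nodes}. It contains every vertex outside $P^{(1)}\cup\dots\cup P^{(4)}$, together with every other vertex of each $P^{(i)}$, chosen so that both endpoints of $P^{(i)}$ are not in $X$. The same counting as in that lemma shows that the components of $T\setminus X$ outnumber their boundary vertices by two, so $X$ is a bad set. Its complement has size at most
\[
\sum_{i=1}^4 \left(\frac{|P^{(i)}|}{2}+1\right),
\]
and therefore $\bad{T} \geqslant n - \frac12 s - 4$, where $s$ is the total size of the two forks. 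One point must be checked that was automatic in Lemma~\ref{lem:3_branching_nodes}: $v$ and $w$ need not be external branching nodes there. This is harmless, because the counting only uses that each $P^{(i)}$ is a pendant path attached at $v$ or $w$. The case $v=w$ would instead be handled as in Lemma~\ref{lem:branching_node_deg_4}.

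\textbf{Many disjoint forks.} The key combinatorial step is to show that $T$ contains $\Omega(\ell)$ forks whose pendant paths are pairwise disjoint. I would prove this greedily.
\begin{itemize}
\item If the current tree has at least $3$ leaves, it has a branching node $v$ with at least two pendant paths. For instance, take an external branching node, or a deepest branching node after rooting the tree.
\item Record this fork and delete its two pendant paths. Since $v$ could become a leaf, this decreases the number of leaves by at most $2$ and creates no new branching structure inside the recorded paths.
\end{itemize}
Iterating gives at least $\lfloor (\ell-2)/2 \rfloor$ forks with pairwise disjoint paths. Their branching nodes may coincide, but each node can be reused only a bounded way, so I would group or select to get $\Omega(\ell)$ forks with distinct centres. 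Since the paths are disjoint, the sizes sum to at most $n$. Taking the two smallest forks gives $s = O(n/\ell)$, hence
\[
\bad{T} \geqslant n - O\!\left(\frac{n}{\ell}\right) - 4 \geqslant \left(1-\frac{C}{\ell}\right)n - 2
\]
after adjusting $C$, using $n \geqslant \ell$.

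\textbf{Main obstacle.} The delicate part is the bookkeeping in the fork extraction. I need the two chosen forks to have distinct centres, and I need the alternating bad-set construction to remain valid when the deleted paths came from earlier rounds. Concretely, a path $P^{(i)}$ chosen in a later round must be a genuine pendant path of the original $T$, not only of the pruned tree. I would address this by always choosing $v$ to be an external branching node of the current tree. Its pendant paths then consist of vertices whose degrees are unchanged from $T$, except at $v$. If this fails, a fallback is to allow one fork to be replaced by the degree-$\geqslant 4$ pattern of Lemma~\ref{lem:branching_node_deg_4}, which handles the case where two short forks share a centre.
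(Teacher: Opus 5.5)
Your reduction to showing $\bad{T}\geqslant n-Cn/\ell$ is correct. The two-fork bad set is also correct, provided both forks are forks of $T$ itself: the count of Lemma~\ref{lem:3_branching_nodes} only needs the four paths to be pendant in $T$.

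\textbf{The gap.} The step ``Many disjoint forks'' fails. A tree with many leaves need not contain two forks of total size $O(n/\ell)$, and your greedy extraction does not produce forks of $T$. Take a spine $x_1,\dots,x_s$, put a pendant leaf $l_i$ on every $x_i$, and hang an extra path of $s$ vertices from $x_1$ and from $x_s$. Then $n=4s$ and $\ell=s+2$, so the theorem requires a bad set of size $n-O(1)$. Yet the only forks of $T$ sit at $x_1$ and $x_s$, each of size $s+1$, so your construction yields only about $\frac34 n$.

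The greedy procedure breaks exactly where you feared. After removing the fork at $x_1$, the vertex $x_1$ is a leaf of the pruned tree. Now $x_2$ is an external branching node of the pruned tree whose ``pendant path'' is $\{x_1\}$, although $x_1$ has degree $3$ in $T$. So your claim that pendant paths of an external branching node of the current tree keep their degrees from $T$ is false. Plugging $P=\{x_1\}$ into the alternating construction fails too: the component $\{x_1\}$ has three boundary vertices ($x_2$, $l_1$, and the first vertex of the extra path), and one checks that a good component then exists. Your fallback via Lemma~\ref{lem:branching_node_deg_4} does not apply, since every branching node here has degree $3$. This caterpillar is the very example the paper gives to show that $3$-terminal vertices and pairs of $2$-terminal vertices are not enough.

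\textbf{The missing idea.} You need a third gadget for long runs of degree-$3$ branching nodes that each carry exactly one pendant path. The paper uses $5$-bad caterpillars. For five consecutive such nodes $x_1,\dots,x_5$, put into the set everything outside the caterpillar, the nodes $x_1,x_3,x_5$, and alternating vertices along the pendant and connecting paths. This leaves five components against three boundary vertices, and Lemma~\ref{lem:small_badcaterpillar} gives $\bad{T}\geqslant n-t/2-10$ for a caterpillar of weight $t$.

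The paper's proof then splits into cases.
\begin{itemize}
\item If there are at most $\ell/2$ branching nodes, assign each leaf's pendant path to its branching node and discard the heavy ones. Pigeonhole gives a $3$-terminal vertex or a pair of $2$-terminal vertices of weight $O(n/\ell)$ (Lemmas~\ref{lem:small3branching} and~\ref{lem:small2branching}).
\item Otherwise, either a constant fraction of the branching nodes are $2$-terminal, which again yields a light pair, or most branching nodes are exactly-$1$-terminal of degree $3$.
\item In the latter situation, these nodes lie on long chains of the auxiliary tree obtained by contracting degree-$2$ paths. The chains can be cut into $\Omega(\ell)$ vertex-disjoint $5$-bad caterpillars, one of which has weight $O(n/\ell)$.
\end{itemize}

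Your fork argument essentially covers only the first two situations; without the caterpillar case the proof does not go through. Separately, ``each node can be reused only a bounded way'' is unjustified: a star's centre is reused about $\ell/2$ times. That point is easily repaired by Lemma~\ref{lem:small3branching}.
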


To prove it, we will prove that the following holds which is equivalent to Theorem~\ref{thm:badset_leaves} (up to an additive constant) by Theorem~\ref{thm:binary}.

\begin{theorem}\label{thm:badset_leaves2}
    There exists a constant $C$ such that, for every tree $T$ with $k$ leaves, there exists a bad set with at least $(1-\frac Ck)n$ vertices. 
\end{theorem}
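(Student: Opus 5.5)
The plan is to imitate the explicit constructions of Lemmas~\ref{lem:branching_node_deg_4}, \ref{lem:3_branching_nodes} and~\ref{lem:subdivided_claw}. There, the bad set $X$ consists of \emph{all} vertices outside a small region $R$, together with every other vertex inside $R$. Inside $R$, the vertices not in $X$ are placed alternately along pendant paths, starting and ending with a vertex not in $X$. Such a set is bad by the cardinality argument used throughout Section~\ref{sec:trees}: each pendant path of $R$ contributes one more component of $T\setminus X$ than it contributes boundary vertices. What remains is a global balance between the number of pendant paths and the number of their attachment points. The missing vertices number roughly $|R|/2$, so it suffices to find a region $R$ of size $O(n/k)$ in which this excess is at least $2$. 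Theorem~\ref{thm:badset_leaves} then follows from Theorem~\ref{thm:badset_leaves2} via Theorem~\ref{thm:binary} and Lemma~\ref{lem:smaller_bad_set}.

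\textbf{Step 1 (short legs).} Call a \emph{leg} the maximal path from a leaf to its nearest branching node (excluding that node). The legs are pairwise disjoint, there are $k$ of them (assuming $T$ is proper), and their total size is at most $n$. By averaging, at least $k/2$ legs have at most $2n/k$ vertices; call them \emph{short}.

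\textbf{Step 2 (three local structures).} Contract every leg, obtaining the tree $T^\ast$ of branching nodes, and mark the nodes that carry short legs. I will show that one of three structures occupies $O(n/k)$ vertices:
\begin{itemize}
\item[(i)] a branching node carrying three short legs (this is the claw structure of Lemma~\ref{lem:subdivided_claw});
\item[(ii)] two branching nodes, each carrying two short legs, joined by a path with $O(n/k)$ vertices (this is the $H_s$ structure of Lemmas~\ref{lem:H1} and~\ref{lem:H0});
\item[(iii)] a ``caterpillar'': a path $Q$ with $O(n/k)$ vertices such that at least $|Q|/2+2$ short legs hang from it.
\end{itemize}
For (iii), I also take alternate vertices of $Q$ out of $X$, so that the excess remains at least $2$. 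Finding the structure would proceed as follows. Take the minimal subtree of $T$ spanning the marked nodes. Since its total size is at most $n$ and it carries at least $k/2$ short legs, a second averaging argument gives a connected piece of size $O(n/k)$ carrying at least three legs. Such a piece is of type (i), is a path of type (ii) or~(iii), or contains a further branching node; in the last case I reduce to (ii) exactly as in the proof of Lemma~\ref{lem:3_branching_nodes}.

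\textbf{Step 3 (counting).} In each case, I define $X$ as above and verify that $|\mathcal C_X| \ge |\partial| + 2$, where $\partial$ is the union of the boundaries. This verification is routine and mirrors the displayed computations of the earlier lemmas. It then gives $|X| \ge n - |R|/2 - O(1) \ge (1-C/k)n$.

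The main obstacle is Step~2. A tree can have many leaves spread out so that each branching node carries only one leg, for instance a long caterpillar with legs of length one. In that situation, no bounded region contains three legs on a single node. The averaging must therefore produce a \emph{path} region whose number of legs beats half of its length, and I would first attempt this by cutting the spanning subtree into $\Theta(k)$ connected pieces of size $O(n/k)$ and picking a piece that is leg-dense.
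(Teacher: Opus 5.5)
Your plan has the right overall shape. The paper also uses three local structures of weight $O(n/k)$: a $3$-terminal vertex, a pair of $2$-terminal vertices, and a ``$5$-bad caterpillar'', each followed by alternation. The genuine gap is your structure (iii), and with it the claim that Step~2 always finds (i), (ii) or (iii).

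\textbf{Why (iii) fails.} The condition ``at least $|Q|/2+2$ short legs hang from $Q$'' is not what large bad sets need, and in some trees no path satisfies it. Take a spine $y_0,\dots,y_{2s}$ with $s\ge 10$. Attach a pendant leaf $z_i$ to every even-indexed $y_i$, and attach a path on $s$ vertices to $y_0$ and to $y_{2s}$.
\begin{itemize}
\item Then $k=s+3$ and $n=5s+2$. The branching nodes are exactly the $y_{2j}$, and each carries exactly one short leg (the two end paths are long, since $s>2n/k$). So (i) and (ii) never occur.
\item Any path $Q$ meets at most $(|Q|+1)/2$ even-indexed spine vertices, so (iii) never occurs either. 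Your closing heuristic, that averaging ``must produce a path region whose number of legs beats half of its length'', is therefore false.
\item Yet $X=V(T)\setminus\{z_2,z_4,z_6,z_8,z_{10},y_3,y_4,y_5,y_7,y_8,y_9\}$ is bad. Indeed, $T\setminus X$ has the five components $\{z_2\},\{z_6\},\{z_{10}\},\{y_3,y_4,y_5,z_4\},\{y_7,y_8,y_9,z_8\}$, and their boundaries consist only of $y_2,y_6,y_{10}$. So $\tau(T)\ge n-11$.
\end{itemize}
Condition (iii) is also not sufficient as stated. If internal vertices of $Q$ have further subtrees attached, every vertex of $Q$ you leave out of $X$ makes its outside neighbour a boundary vertex, and the excess can drop to $0$. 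So your Step~3 is not routine: the region must meet the rest of $T$ only at its two ends.

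\textbf{The missing idea.} You need the $5$-bad caterpillar of Lemma~\ref{lem:small_badcaterpillar}, together with an argument that locates it.
\begin{itemize}
\item \emph{The structure.} Take five consecutive branching nodes $x_1,\dots,x_5$ of degree exactly $3$, each exactly $1$-terminal, with the region meeting the rest of $T$ only at $x_1$ and $x_5$. Put $x_1,x_3,x_5$ into $X$ and leave $x_2,x_4$ out. This gives the legs of $x_1,x_3,x_5$ plus the two pieces around $x_2$ and $x_4$: five components against three boundary vertices, whatever the spine spacing is. Alternation inside then keeps about half the vertices.
\item \emph{Few branching nodes, or many $2$-terminal ones.} The paper extracts a $3$-terminal node, or a pair of $2$-terminal nodes, of weight $O(n/k)$. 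The distance between the two nodes of the pair is irrelevant, because the connecting path goes entirely into $X$ (Lemma~\ref{lem:small2branching}). Your shortness requirement in (ii) is therefore unnecessary.
\item \emph{Otherwise.} In the auxiliary tree $T'$ on the branching nodes, every leaf is $2$-terminal. So $T'$ has few leaves, hence few branching nodes. Removing these together with the $2$-terminal nodes leaves few paths, containing most branching nodes, all of them degree-$3$ and exactly $1$-terminal. Cutting these paths into $\Omega(k)$ disjoint slices of five, disjointness gives one slice of weight $O(n/k)$.
\end{itemize}

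\textbf{Your fallback also fails.} You propose to reduce a piece containing a further branching node ``to (ii) as in Lemma~\ref{lem:3_branching_nodes}''. That lemma relies on two \emph{external} branching nodes of $T$, each with two pendant paths. A piece of your spanning subtree need not contain such nodes, and their pendant paths need not be short.
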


Note that we did not try to optimize the size of the constant $C$ and instead focus on the simplicity of the proof. The idea of the proof consists in first exhibiting three structures whose existence ensures the existence of large bad sets and then proving every tree with a large enough number of leaves contains one of these structures.

\paragraph*{Relevant structures.}

\begin{figure}[h!]
    \centering
    \includegraphics[scale=0.65]{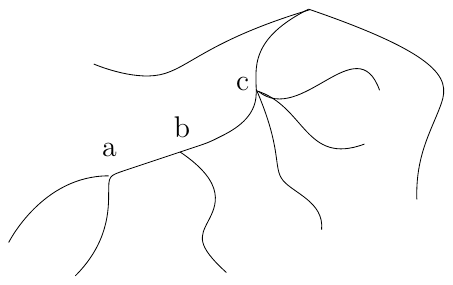}
    \caption{The vertex $b$ is $1$-terminal, $a$ is exactly $2$-terminal (and then $1$ and $2$-terminal) and $c$ is exactly $3$-terminal.}
    \label{fig:terminal}
\end{figure}

Let $T$ be a tree. We say that a vertex $v$ is a \emph{$\ell$-terminal vertex} if $T \setminus v$ has at least $\ell$ connected components $P_1,\ldots,P_\ell$ such that, for every $i$, $P_i$ is a path and $v$ is connected to an endpoint of $P_i$. Differently, it means that at least $\ell$  leaves of the tree are attached to $v$ by a path. The paths $P_1,\ldots,P_\ell$ are called the \emph{probes} of $v$. The vertex $v$ is an \emph{exactly $\ell$-terminal vertex} if exactly $\ell$ such connected components exist.
The \emph{$\ell$-weight}\footnote{When $\ell$ is clear from context, we will simply write \emph{weight}.} of an $\ell$-terminal vertex is the sum of the sizes of the paths $P_1,\ldots,P_\ell$. When more than $\ell$ paths are attached to $v$, the weight is the sum of the sizes of $\ell$ paths of minimum length leading to leaves attached to $v$.

The following observation, which simply follows from the fact that probes only contain degree at most $2$ vertices and then do not contain branching nodes, will be extensively used in an implicit way:

\begin{remark}
Let $\ell \ge 1$ and $v_1,v_2$ be two $\ell$-terminal vertices. The probes of $v_1$ and $v_2$ are pairwise vertex disjoint.
\end{remark}

Given two $\ell$-terminal vertices $v_1,v_2$, the \emph{$\ell$-weight} of the pair $(v_1,v_2)$ is the sum of the $\ell$-weights of $v_1$ and $v_2$. Using the tools we introduced in previous sections, it is easy to prove that if we have a $3$-terminal vertex or a pair of $2$-terminal vertices of small weight then the theorem holds. For instance, if we have a $3$-terminal vertex, the set consisting of all the vertices of $T$ but the vertices of $\cup_{i=1}^3 P_i$ forms a bad set. We can improve that bound by proving that the following holds:

\begin{lemma}\label{lem:small3branching}
    If $T$ contains a $3$-terminal vertex of weight $t$ then $\tau(G) \ge n - t/2 - 3$.
\end{lemma}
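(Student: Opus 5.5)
Let $v$ be the $3$-terminal vertex and $P_1,P_2,P_3$ three of its probes realizing the weight, so $m_1+m_2+m_3=t$ with $m_i:=|P_i|$. Let $Y:=V(T)\setminus(P_1\cup P_2\cup P_3)$, so $|Y|=n-t$ and $v\in Y$. I reuse the construction from Lemma~\ref{lem:branching_node_deg_4} and Lemma~\ref{lem:subdivided_claw}. Put all of $Y$ into $X$. In each probe $P_i$, listed from the end adjacent to $v$ to the leaf end, put every second vertex into $X$, starting and ending with a vertex not in $X$. This places exactly $\left\lfloor\frac{m_i-1}{2}\right\rfloor$ vertices of $P_i$ in $X$.

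The first step is to verify that $X$ is a bad set, by the same counting as before. Every component of $T\setminus X$ is a single vertex inside some probe, since $Y\subseteq X$. In $P_i$ there are $\left\lfloor\frac{m_i-1}{2}\right\rfloor+1$ such components, while the vertices of $X$ adjacent to them are the $\left\lfloor\frac{m_i-1}{2}\right\rfloor$ chosen vertices of $P_i$ together with $v$. So $|\mathcal{C}_X|=\sum_i\left(\left\lfloor\frac{m_i-1}{2}\right\rfloor+1\right)$, and the union of all boundaries has size $\sum_i\left\lfloor\frac{m_i-1}{2}\right\rfloor+1$, which is two less. Deleting any single component still leaves more components than boundary vertices, so no injective matching function exists. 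Hence no component is good, and $X$ is bad.

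The second step is the count:
\[
\tau(T)\geq |X| = (n-t)+\sum_{i=1}^3\left\lfloor\frac{m_i-1}{2}\right\rfloor \geq n-t+\sum_{i=1}^3\left(\frac{m_i}{2}-1\right) = n-\frac{t}{2}-3.
\]
This is the claimed bound, where $\tau(G)$ in the statement means $\tau(T)$.

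The only point needing care is the degenerate case where some $m_i=1$. Then $P_i$ contributes no vertex to $X$ and one singleton component, and the counting above still holds. If $Y=\{v\}$ the tree is a subdivided claw, which is already covered by Lemma~\ref{lem:subdivided_claw}, and the same construction works anyway. I do not expect a real obstacle here: the argument is a direct adaptation of the bad-set constructions already used in this section.
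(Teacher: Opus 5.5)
Your proof is correct and is essentially the paper's argument: the same set (all of $V(T)\setminus(P_1\cup P_2\cup P_3)$ plus every other probe vertex, with both ends of each probe left out), the same components-versus-boundary count giving a surplus of two, and the same estimate $\left\lfloor\frac{m_i-1}{2}\right\rfloor\geq\frac{m_i}{2}-1$. One small inaccuracy: when $m_i$ is even, two consecutive vertices of $P_i$ must both lie outside $X$, so not every component of $T\setminus X$ is a singleton. Your count of $\left\lfloor\frac{m_i-1}{2}\right\rfloor+1$ components per probe is still right, because the chosen vertices are pairwise non-adjacent and both ends of $P_i$ lie outside $X$.
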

\begin{proof}
Let us construct a bad set $A$ as follows. Let $x$ be a $3$-terminal vertex of weight $t$ and $P_1,P_2,P_3$ be three probes of $x$ such that the sum of the sizes of the $P_i$ is equal to $t$.
All the vertices of $V \setminus (P_1 \cup P_2 \cup P_3)$ are in $A$. Moreover, for each path $P_i$ -whose vertices are denoted by $v_1^i,\ldots,v_r^i$ and such that $v_1^i$ is the neighbor of $x$, the vertex $v_j^i$ if $j$ is in $A$ if $j$ is even and $j \neq r$. The following holds: 
\begin{enumerate}
    \item $A$ is a bad set and,
    \item $|A| \ge n-t/2-6$.
\end{enumerate}

\noindent \textit{Proof of 1.} Let $C$ be the set of connected components of vertices not in $A$. In order to prove that $A$ is a bad set, it suffices to prove that $|C| \ge |N(C)\cap A| +2$. By definition of $A$, for every $i$, the path $P_i$ starts and end with a vertex not in $A$ and $A$ does not contain two consecutive vertices of $P_i$. Hence the number of connected components of vertices not in $A$ is larger than the number of vertices in $A$ in $P_i$. Since the statement holds for every $i \in \{1,2,3\}$ and that $N(P_i) \setminus P_i = \{ v \}$, we have that $|C| \ge |N(C)\cap A| +2$, which completes the proof that the set of $1$ vertices is a bad set.
\smallskip

\noindent
\textit{Proof of 2.} The number of vertices in the probes $P_1,P_2,P_3$ is at most $t$. Moreover, by definition of the valuation function, the number of vertices in $A$ in these paths is at least $\sum_{i=1}^3 (|P_i|/2 -1)$. So the size of $A$ is at least $n - \sum_i |P_i|/2 - 3 \ge n - t/2- 3$, which completes the proof.
\end{proof}

\begin{lemma}\label{lem:small2branching}
If $G$ contains a pair $v_1,v_2$ of two $2$-terminal vertices of weight $t$ then $\tau(G) \ge n - t/2 - 4$.
\end{lemma}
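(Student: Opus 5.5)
The plan is to mimic the proof of Lemma~\ref{lem:small3branching}, now with two $2$-terminal vertices. Let $v_1,v_2$ be the pair, and for each $j\in\{1,2\}$ let $P^j_1,P^j_2$ be two probes of $v_j$ realizing its $2$-weight. By the Remark, these four paths are pairwise vertex disjoint, and their total size is $t$. Note also that $v_1$ and $v_2$ do not lie on each other's probes, since probe vertices have degree at most $2$ and lie in components of $T\setminus v_j$ that are paths ending at leaves. Hence $v_1,v_2\notin \bigcup P^j_i$.

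The bad set is defined as follows. Let $A$ contain every vertex of $V\setminus \bigcup_{i,j} P^j_i$, in particular $v_1$ and $v_2$. On each probe $P=w_1,\dots,w_r$, with $w_1$ adjacent to its terminal vertex, put $w_s$ into $A$ exactly when $s$ is even and $s\neq r$. This is precisely the alternating pattern used before: each probe starts and ends with a vertex outside $A$, and contains no two consecutive vertices of $A$.

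\textbf{Badness.} Each component of $T\setminus A$ lies inside a single probe, because every probe is separated from the rest of the tree by its terminal vertex, which lies in $A$. On a probe containing $a$ vertices of $A$ there are exactly $a+1$ components. The boundary vertices of these components are the $a$ vertices of $A$ on the probe, plus the terminal vertex $v_j$. Summing over the four probes gives
\[
|\mathcal{C}_A| = \sum (a_i+1) = \Big(\sum a_i\Big)+4,
\]
while the union of the boundaries has size at most $\big(\sum a_i\big)+2$, since it consists of the probe vertices in $A$ together with $v_1$ and $v_2$. So for any choice of excluded component $C$, the remaining $|\mathcal{C}_A|-1\geq \big(\sum a_i\big)+3$ components must be mapped injectively into a set of at most $\big(\sum a_i\big)+2$ boundary vertices, which is impossible. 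By this cardinality argument no component is good, and $A$ is a bad set.

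\textbf{Size.} On a probe of size $r$ we select $\lfloor (r-1)/2\rfloor \geq r/2-1$ vertices. Therefore
\[
|A| \;\geq\; n - t + \sum_{i,j}\left(\frac{|P^j_i|}{2}-1\right) \;=\; n - \frac{t}{2} - 4,
\]
which gives $\tau(G)\geq n-t/2-4$.

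The only delicate point is the disjointness and separation claim: that $v_1,v_2$ lie outside all four probes, and that each probe touches $A\setminus P$ only through its own terminal vertex. This follows from the definition of probes as whole path components of $T\setminus v_j$ attached at an endpoint, together with the Remark. In particular, if $v_1$ and $v_2$ are adjacent, nothing changes, since neither lies on a probe.
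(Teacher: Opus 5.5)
Your proof is correct and follows the paper's route: the paper only says to repeat the construction of Lemma~\ref{lem:small3branching} with four probes instead of three, and you have written out that alternating construction, the count ($\sum a_i+4$ components against at most $\sum a_i+2$ boundary vertices), and the size bound in full. The only implicit step, which the paper's Remark also leaves unstated, is that a $2$-terminal vertex of a proper tree has degree at least $3$; this is what gives pairwise disjointness and $v_1,v_2\notin\bigcup P^j_i$, and when $T$ is a path one has $t=2(n-1)$, so the bound is vacuous.
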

\begin{proof}
The proof is similar to the one of Lemma~\ref{lem:small3branching}. We define similarly the set $A$. The difference of $1$ comes from the fact that we have $4$ paths instead of $3$.
\end{proof}

However, one can remark that, even if we have a large number of nodes, we cannot ensure that the graph contains a $3$-terminal vertex of small weight or a pair of two $2$-terminal vertices of small weight. A typical example is as follows: consider a path consisting of $s$ vertices $x_1,\ldots,x_s$. Now we attach paths on these vertices as follows. We first attach a path on $x_1$ and a path on $x_s$ of length $n/2$. Then, for every $i \in \{1,\ldots,s\}$, we attach on $x_i$ a leaf. One can remark that there is no $3$-terminal vertex and that the only $2$-terminal vertices are $x_1$ and $x_s$ which have weight (approximately) $n/2$. But, in the middle of the structure, we have a long path of branching nodes of degree $3$ which are $1$-terminal. In this graph, we can find a third structure that also ensures that there exist large bad sets.

\begin{figure}
    \centering
    \includegraphics[scale=.73]{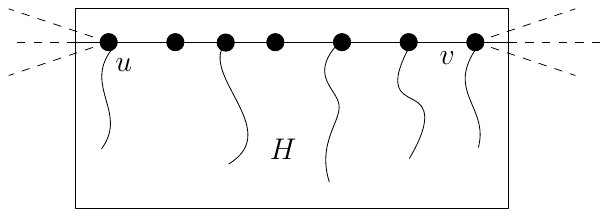}
    \caption{The subgraph $H$ (which is in the squared box) is a bad $5$-caterpillar. There are $5$ branching nodes between $u$ and $v$ (counting them) and there is exactly pending path on each of them, except on $u$ and $v$ on which more paths might exist.}
    \label{fig:badcaterpillar}
\end{figure}

Let $T$ be a tree. A connected subset $H$ of $T$ is a \emph{$\ell$-bad caterpillar} if $H$ contains two branching nodes $u,v$ such that:
\begin{itemize}
\item $u,v$ are the only vertices of $H$ connected to a vertex of $V(G) \setminus V(H)$ and,
\item All the branching nodes of $H$ are on the path between $u$ and $v$ and there are exactly $\ell$ of them (including $u,v$) and,
\item All the branching nodes but $u,v$ are exactly $1$-terminal  and $u,v$ are $1$-terminal and,
\item At least one of the paths of $T \setminus u$ and $T \setminus v$ are in $H$. 
\end{itemize}
(See Figure~\ref{fig:badcaterpillar} for an illustration). Said differently, a $\ell$-bad caterpillar $H$ contains two vertices $u,v$ such that the connected component of $G\setminus \{u,v\}$ adjacent to $u$ and $v$ is a path with $\ell$ branching nodes on which exactly one path is attached and moreover $H$ contains a path attached on~$u$ and on~$v$. 
The \emph{weight} of a $\ell$-bad caterpillar is its number of vertices.

We will prove that if $T$ contains a $5$-bad caterpillar of small weight then there is a large bad set. If we do not want to optimize the constants, one can build a bad set as follows. Let $H$ be a $5$-bad caterpillar and $u,v$ be the two branching nodes with neighbors not $H$. Let us denote by $P$ the path from $u$ to $v$ and $x_2,x_3,x_4$ the three branching nodes on the path $P$ distinct from $u,v$. We claim that the set containing all the vertices of $V(G) \setminus V(H)$ plus $u$, $x_3$ and $v$ form a bad set $X$. Indeed, there are five components in $V \setminus X$: the path attached on $u$, the path attached on $v$, the path attached on $x_3$, the component of $x_2$ and the component of $x_4$. And the border of these five components only contain three vertices (namely $u,v,w_2$). 
As in Lemma~\ref{lem:small3branching} and~\ref{lem:small2branching}, let us prove that we can increase the size of the bad set. 

\begin{lemma}\label{lem:small_badcaterpillar}
If $G$ contains a $5$-bad caterpillar of weight $t$ then $\tau \ge n-t/2-10$. 
\end{lemma}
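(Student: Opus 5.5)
We follow the template of Lemma~\ref{lem:small3branching}: start from the crude bad set described just before the statement and fill the pendant paths of the caterpillar alternately. Let $H$ be a $5$-bad caterpillar of weight $t$, with outer branching nodes $u,v$. Let $x_2,x_3,x_4$ be the internal branching nodes on the $u$--$v$ path $P$. Let $Q_u,Q_v$ be pendant paths of $H$ attached at $u$ and $v$, and $Q_3$ the unique pendant path at $x_3$. We build $A$ as follows:
\begin{itemize}
\item all of $V(T)\setminus V(H)$, together with $u$, $x_3$ and $v$, goes into $A$;
\item on each of $Q_u,Q_3,Q_v$, put every second vertex into $A$, starting and ending with a vertex outside $A$ (as in Lemma~\ref{lem:small3branching});
\item every other vertex of $H$ stays outside $A$.
\end{itemize}
The remaining vertices of $H$ are the subpath of $P$ strictly between $u$ and $x_3$ (with the pendant path at $x_2$), the analogous piece around $x_4$, and any extra pendant paths at $u,v$ lying in $H$. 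The first two pieces are the components of $x_2$ and $x_4$.

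For badness, apply the counting criterion: it suffices that the number of components of $T\setminus A$ is at least the number of border vertices plus two. On each path $Q_i$ the alternating choice creates exactly one more component than it adds border vertices, since the neighbourhood of $Q_i$ outside the path is the single vertex $u$, $x_3$ or $v$. The border vertices are $u,x_3,v$ (plus the chosen path vertices). The components are the two pieces around $x_2$ and $x_4$ and the pieces coming from $Q_u,Q_3,Q_v$. Any extra component of $H\setminus A$, such as a further pendant path at $u$ in $H$, only adds a component with border already counted. Hence components $\ge$ border $+2$, exactly as in the crude count ($5$ components, $3$ border vertices). The point to check is that all these pieces are genuinely separate components. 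This holds because the only attachments of $H$ to the rest of $T$ are through $u$ and $v$, which lie in $A$, and because $x_3\in A$ separates the $x_2$ piece from the $x_4$ piece.

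For the size, $|A|\ge n-t+3+\sum_{i}(|Q_i|/2-1)$. This is too weak on its own, because the $x_2$ and $x_4$ pieces (which may be long) are left wholly outside $A$. The main obstacle is therefore to alternate inside those pieces as well without destroying badness. The plan is to replace each such piece by an alternating pattern: along the subpath of $P$ from $u$ to $x_3$ and along the pendant path at $x_2$, select every second vertex, adjusting parity so that $x_2$ itself is in $A$ whenever it can be. A degree-$2$ vertex put in $A$ between two non-$A$ vertices adds one border vertex and one component, so it leaves the surplus of $2$ unchanged. Placing $x_2\in A$ with its three neighbours outside $A$ adds one border vertex and two net components, which only helps.

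Each piece thus loses at most a constant number of vertices beyond half, and each parity fix costs at most one vertex. Summing over the $O(1)$ paths gives $|A|\ge n-t/2-c$. Tracking the parity adjustments, at most two per path on about five paths, should yield the constant $10$.
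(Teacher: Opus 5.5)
Your proof is correct. The set you build is close to the paper's, but you certify badness by a different mechanism.

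\textbf{The paper's route.} It writes the final set down in one go: $u,x_3,v\in A$ and $x_2,x_4\notin A$, with explicit parity rules on all nine paths (the pendant paths $P_1,\dots,P_5$ and the segments $W_1,\dots,W_4$). It then checks badness by a per-path count of that final set. The paths $P_1,P_3,P_5$ each give one more component than border vertices, the other paths give ties, and the components containing $x_2$ and $x_4$ supply the remaining surplus.

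\textbf{Your route.} You start from the crude set, which has surplus exactly $2$, and alternate on the pendant paths at $u,x_3,v$ as in Lemma~\ref{lem:small3branching}. You then grow $A$ inside the $x_2$- and $x_4$-pieces using one local invariant. Adding a vertex of degree $d$ whose neighbours all lie outside $A$ splits its component into $d$ parts and creates exactly one new border vertex. It cannot remove any other border vertex, since no vertex of $A$ is adjacent to it. So the surplus changes by exactly $d-2\ge 0$.

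\textbf{What each buys.} Your invariant gives a uniform certificate for any independent set of non-leaf vertices avoiding the neighbourhood of $A$. It needs no parity case analysis, and it even allows $x_2,x_4\in A$. The paper's explicit parities, in exchange, make the size count immediate.

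\textbf{The size bookkeeping needs fixing.} You write ``two per path on about five paths'', but there are nine paths: five pendant paths and four segments. A loss of two extra vertices per path would overshoot the constant. What actually holds is this:
\begin{itemize}
\item A path of length $\ell$ whose two end vertices are both forced outside $A$ (the worst case) still contributes at least $\lceil(\ell-2)/2\rceil\ge \ell/2-1$ vertices to $A$.
\item Among the branching nodes, only $x_2$ and $x_4$ can be lost.
\end{itemize}
This gives $|A|\ge (n-t)+3+\bigl((t-5)/2-9\bigr)=n-t/2-17/2$, which is comfortably within $n-t/2-10$.

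\textbf{Extra pendant paths.} If $H$ contains extra pendant paths at $u$ or $v$, either alternate on them as well, or remove them from $H$ (the result is still a $5$-bad caterpillar, of smaller weight). Leaving them wholly outside $A$ keeps $A$ bad but charges their full length against the $t/2$ bound.
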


%We argue as in Lemmas~\ref{lem:small3branching}--\ref{lem:small2branching}: take $A\supseteq V(T)\setminus V(H)$ and choose an alternating  subset of vertices on each pending path and on the spine segments of $H$ so that the same components--boundary counting certifies that $A$ is bad. This keeps all but at most $t/2+O(1)$ vertices of $H$, hence $|A|\ge n-t/2-O(1)$ and in particular $\tau(T)\ge n-t/2-10$; full details are deferred to Appendix \ref{sec: proof-smallbad}.

\begin{proof}
Let $H$ be a $5$-bad caterpillar. Let $x_1:=u,x_2,\ldots,x_5:=v$ be its $5$ branching vertices of $H$.  For every $x_i$, let $P_i$ be the unique connected component of $H \setminus x_i$ that is a path. Let us denote by $v_1^i,\ldots,v_{r_i}^i$ the vertices of $P_i$ starting from the vertex adjacent to $x_i$.
Let $W_i$ be the path (possibly reduced to a single edge) from $x_i$ to $x_{i+1}$ and let $w_1^i,\ldots,w_{q_i}^i$ be the vertices of $W_i$ where $w_1^i$ is adjacent to $x_i$ and $w_{q_i}^i$ is connected to $x_{i+1}$.

 Now we construct a bad set $A$ as follows. All the vertices which are not in $H$ are added in $A$. Now, on the paths $P_1,P_3,P_5$ (resp. $W_1,W_3$), we add  $v_j^i$ (resp. $w_j^i$) in $A$ if and only if $j$ is even and $j$ is not the last vertex of $P_i$ (resp. $W_i$).
Moreover, we add $x_1,x_3,x_5$ in $A$ (and do not include $x_2,x_4$ in $A$).
Finally, on the paths $P_2,P_4,W_2,W_4$  we do not include in $A$ all the vertices whose parity is the parity of the last vertex of the path and add the others in $A$. We claim that:

\begin{enumerate}
    \item $A$ is a bad set and,
    \item $|A| \ge n-t/2-10$.
\end{enumerate}

\noindent Proof of 1.
Let us denote by $C$ the connected components of vertices not in $A$. As in the proof of Lemma~\ref{lem:small3branching}, one can remark that, for each path in $P_1,P_3,P_5$, the number of connected components of $C$ is in these paths larger than the number of vertices of $A$ in these paths. Similarly, one can note that in the paths  $P_2,P_4,W_2,W_4$, the number of connected components of $C$ is at least as large as the number of vertices in $A$. Moreover, if there is a tie, the first vertex of the path is in $A$. So in total, $|C|$ is larger than the number of vertices of $A$ in $H$ plus $2$. Since the vertices of $H$ with neighbors in $T \setminus H$ are in $A$, $|N(C) \cap A | \ge |C|+2$, and then $C$ is a bad set.
\smallskip

\noindent 
Proof of 2.
For every path $P$ which is either a path $P_i$ or $Q_i$, the number of vertices added in $A$ is at least $\lceil |P|/2 \rceil -1$. Moreover, three of the branching nodes are in $A$ and two are not. So in total, the size of $A$ is at least $n-2-(t-5)/2-8 \ge n-t/2-10$.
\end{proof}

\paragraph*{Finding a relevant structure}
To complete the proof of Theorem~\ref{thm:badset_leaves2}, we simply have to show that every tree $T$ with enough leaves contains either a $3$-terminal node of weight $O(n/k)$ or, a pair of $2$-terminal nodes of weight $O(n/k)$ or a bad $5$-caterpillar of weight $O(n/k)$. Note that, free to get a worse constant, we can always assume that $k$ is large enough. 
We will distinguish two cases depending on the number of branching nodes compared to the number of leaves.
\medskip

\noindent \textbf{Case 1. Small number of branching nodes.} Assume that the number of branching nodes is at most $k/2$. For every leaf $\ell_i$, let $P_i$ be the path linking $\ell_i$ to its closest branching node. Note that, for every $i \ne j$, we have that $P_i$ and $P_j$ are vertex disjoint.

Let us consider the bipartite graph with vertex set $B \cup L$ where $L$ is the set of leaves and $B$ is the set of branching nodes of $T$. We create an edge from $\ell_i$ to $b_j$ of weight $w_i$ if $P_i$ ends on $b_j$ and has $w_i$ vertices in total. 
The sum of the weights of the edges in this bipartite graph is at most $n$. Note that at most $k/4$ edges can have weight at least $4n/k$ since otherwise the sum of the weigths would be more than~$n$. Let us remove these edges. It now remains $3k/4$ edges of weight at most $4n/k$. Since $B$ has size at most $k/2$, the average degree of the remaining vertices of $B$ is at least $\frac 32$. Thus, since $k$ is large enough, either there is one vertex of degree $3$ in $B$ (which gives a $3$-terminal node of weight at most $12n/k$) or two degree $2$ vertices (which provides a pair of $2$-terminal nodes of weight at most $16n/k$). 
Lemmas~\ref{lem:small3branching} and~\ref{lem:small2branching}, we have that $\tau(G) \ge n(1-8/k)-c$ where $c$ is a constant, which completes the proof in this case.
\medskip

\noindent \textbf{Case 2. Large number of branching nodes.} So, from now on, we will assume that the number of branching nodes is at least $k/2$. Let us denote by $X$ the set of $2$-terminal branching nodes of $T$.
For every $x \in X$, let us denote by $P_x$ the set of probes of $x$. And let $\mathcal{P}$ be the union of the $P_x$ for $x$ in $X$.

Let $\alpha:=\frac{1}{15}$.
If $X$ contains an $\alpha$-fraction of the branching nodes, then, since the paths of $\mathcal{P}$ are vertex disjoint, there exist two vertices $x,y$ in $X$ such that the union of probes $P_x$ and $P_y$ contains at most $2 \alpha n/ k$ vertices. Indeed, since $|X| \geqslant k/2\alpha$, the average size of union of probes is at most $2\alpha n/k$. So there is pair of $2$-terminal branching nodes of weight $4\alpha n /k$ and then a bad set of size $n(1-2 \alpha / k)-4$ by Lemma~\ref{lem:small2branching}.

So we can assume that $X$ contains less than an $\alpha$-fraction of the branching nodes. Let us denote by $B$ the set of branching nodes of $T$. Let us construct the auxiliary tree $T'$ as follows: the vertex set of $T'$ is $B$. Two vertices $x,x'$ of $T'$ are adjacent if and only if there exists a path in $T$ whose internal vertices are of degree $2$, between $x$ and $x'$. 

By assumption $|X| \le \alpha |T'|$ and $X$ is a vertex subset of $T'$. Note moreover that all the leaves of $T'$ are in $X$. Indeed, a leaf of $T'$ is a vertex $u$ such that $u$ has degree at least $3$ in $T$ and one in $T'$. So at least two paths leading to leaves are attached on $u$ in $T$ and then $u$ is in $X$\footnote{But note that some vertices of $X$ might not be leaves of $T$.}.
Thus, the number of leaves of $T'$ is at most $\alpha |T'|$. So the number of branching nodes $N$ of $T'$ also has size at most $\alpha |T'|$. Now let $R=N \cup X$. Note that $|R| \le 2\alpha |T'|$. Note that all the connected components of  $T' \setminus N$, denoted by $\mathcal{C}$, are paths of degree $2$ vertices in $T'$. 

Since all the vertices of $T'$ are branching nodes, all the vertices of $T' \setminus R$ have degree at least $3$ in $T$. Let $x$ be such a vertex. Since the degree of $x$ in $T$ is larger than in $T'$, it means that one connected component of $T \setminus x$ is a path. In other words, $x$ is a $1$-terminal vertex. Moreover, $x$ is not a $2$-terminal vertex since otherwise it would be in $X$. Thus $x$ has degree $3$ in $T$ and is an exactly-$1$-terminal vertex.

Let $P$ be a connected component of $\mathcal{C}$, which is a path. Let us denote by $v_1,\ldots,v_r$ the vertices of $P$. For every $i \ge 0$  such that $i \le r/5$, the \emph{$i$-slice $S_i$ of $P$} is subpath $v_{5i+1},\ldots,v_{5i+5}$. The vertices $v_{5i+1}$ and $v_{5i+5}$ are called the \emph{extremities} of the slice. Note that all the vertices of each connected component of $\mathcal{C}$ belong to a slice but at most $4$ (which form a leftover we will ignore).
Let us denote by $\mathcal{S}$ the union of the slices. Since $R$ contains at most $2\alpha |T'|$ vertices with at most $\alpha |T'|$ branching nodes, the number of components in $\mathcal{C}$ is at most $2\alpha |T'|$. Since all the vertices of a connected component of $\mathcal{C}$ but at most $4$ per component are in slices, at most $10 \alpha |T'|$ vertices are not in a slice of $\mathcal{S}$. Since all the slices contain $5$ vertices of $T'$, we have that the following holds:

\begin{remark}
    The number of slices in $\mathcal{S}$ is at least $((1-10\alpha)/5) |T'|$.
\end{remark}

Let $S \in \mathcal{S}$. The \emph{extension $T_S$} of $S$ is the subtree of $T$ restricted to the paths between the extremities of the slice plus the (unique) probe attached on each of the five branching vertices in $S$. Let us prove some very simple fact on extensions of the slices that easily follow from their definition:

\begin{lemma}
    For every $S,S'$ in $\mathcal{S}$ such that $S \ne S'$ then $T_S$ and $T_{S'}$ are vertex disjoint.
\end{lemma}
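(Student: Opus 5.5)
We need to show that the extensions $T_S$ and $T_{S'}$ of two distinct slices $S \ne S'$ in $\mathcal{S}$ are vertex disjoint. The plan is to split each extension into its spine part and its probe part, and check every pair of pieces from $T_S$ and $T_{S'}$.

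\emph{Spine parts.} Recall that the spine part of $T_S$ consists of the $T$-paths realizing the $T'$-edges inside the slice: the five branching nodes $v_{5i+1},\ldots,v_{5i+5}$ of $S$ together with the internal degree-$2$ vertices of the $T$-paths between consecutive ones.
\begin{itemize}
\item \emph{Branching nodes.} By construction, each component of $\mathcal{C}$ is cut into consecutive blocks of five vertices. Components of $\mathcal{C}$ are themselves vertex disjoint in $T'$. Hence two distinct slices share no vertex of $T'$, and so share no branching node of $T$.
\item \emph{Internal vertices of spine paths.} Each degree-$2$ vertex of $T$ lying on a path between two branching nodes lies on exactly one such path, namely the one realizing a unique edge of $T'$. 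The edges of $T'$ used by a slice join two vertices of that slice. Two distinct slices therefore use disjoint edge sets of $T'$, since they have disjoint vertex sets.
\item \emph{Gaps.} Edges of $T'$ between extremities of consecutive slices, such as $v_{5i+5}v_{5i+6}$, belong to no slice. This only helps: such an edge contributes nothing to either extension.
\end{itemize}

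\emph{Probe parts.} Each vertex of $S$ lies in $T' \setminus R$, so it is exactly $1$-terminal by the preceding argument and has a unique probe. By the Remark on $\ell$-terminal vertices, probes of distinct $1$-terminal vertices are pairwise vertex disjoint, so the probes attached in $T_S$ and in $T_{S'}$ are disjoint.

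\emph{Mixed pairs.} It remains to rule out a probe vertex of $T_S$ lying on the spine of $T_{S'}$. A probe is a component of $T \setminus x$ which is a path ending at a leaf. Its vertices have degree at most $2$ and the probe contains no branching node. A vertex on a spine path lies between two branching nodes, so removing $x$ leaves it in a component containing a branching node. It therefore cannot lie in a probe.

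\emph{Main obstacle.} The only delicate point is that the extension is well defined. The five probes exist because each slice vertex has degree exactly $3$ in $T$ and degree $2$ in $T'$. The spine paths between slice vertices are exactly those realizing $T'$-edges. With the case analysis above fixed, everything else is bookkeeping.
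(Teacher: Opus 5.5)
Your proof is correct and takes the same approach as the paper. Slices are disjoint blocks of consecutive $T'$-vertices, so their spines are disjoint. Probes contain only vertices of degree at most~$2$, hence no branching node, so attaching them cannot create an intersection. Your version also makes explicit three steps the paper leaves implicit: each internal degree-$2$ spine vertex lies on the realizing path of a unique $T'$-edge, probe--probe disjointness follows from the Remark, and probes cannot meet another extension's spine.
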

\begin{proof}
By definition of $\mathcal{S}$, all the slices contain five consecutive vertices in a path of $T'$. Moreover, all the slices contain pairwise disjoint vertices of $T'$. Thus the path between the extremities of a slice $S$ only contains $5$ branching vertices of $T$ which are the nodes of $T'$ in the slice. Since the probe does not contain any vertices of degree at least $3$, the addition of the probe cannot create an intersection between the extensions, which completes the proof.
\end{proof}

\begin{lemma}
    For every $S$, $T_S$ is a bad $5$-caterpillar.
\end{lemma}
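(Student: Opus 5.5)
We verify the conditions in the definition of an $\ell$-bad caterpillar directly, with $\ell=5$, $u:=v_{5i+1}$ and $v:=v_{5i+5}$ the two extremities of the slice $S=S_i$, and $H:=T_S$. Since $T'$ contains only branching nodes of $T$, both $u$ and $v$ are branching nodes of $T$. Moreover, $T_S$ is connected: it is the union of the $T$-path from $u$ to $v$ and of probes attached to vertices of that path.

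\textbf{Branching nodes on the spine.} Consecutive vertices $v_j,v_{j+1}$ of $P$ are adjacent in $T'$. By definition of $T'$, they are therefore joined in $T$ by a path whose internal vertices have degree $2$. Hence the $u$--$v$ path in $T$ contains exactly the five branching nodes $v_{5i+1},\dots,v_{5i+5}$, and all its other vertices have degree $2$ in $T$.

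\textbf{Terminality and the probes.} Each $v_j$ belongs to $T'\setminus R$, so by the argument just before the slice definition it has degree $3$ in $T$ and is exactly $1$-terminal. Its unique probe is a path ending at a leaf, so this probe contains no branching node. The probe is thus entirely contained in $T_S$, and since it ends at a leaf, it has no neighbour outside $T_S$ other than $v_j$.

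\textbf{Attachment vertices.} Consider a middle vertex $v_j$ with $5i+1<j<5i+5$. It has degree $3$ in $T$: one neighbour on its probe and its two spine neighbours towards $v_{j-1}$ and $v_{j+1}$, all three lying in $T_S$. So $v_j$ has no neighbour outside $T_S$. The degree-$2$ internal spine vertices and the probe vertices likewise have all their neighbours inside $T_S$. The extremities $u$ and $v$ each have their probe and one spine edge inside $T_S$; their third incident edge, towards the rest of $P$ or towards $R$, may leave $T_S$. Therefore $u$ and $v$ are the only vertices of $T_S$ with neighbours in $V(T)\setminus V(T_S)$.

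\textbf{Conclusion.} The remaining conditions are now immediate. All branching nodes of $H$ lie on the $u$--$v$ path, and there are exactly $5$ of them. Every one of them, including $u$ and $v$, is exactly $1$-terminal. Finally, $H$ contains a pendant path of $T\setminus u$ and one of $T\setminus v$, namely the probes of $u$ and of $v$. This is precisely the definition of a $5$-bad caterpillar.

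\textbf{Main obstacle.} The delicate point is making the definition of $T'$ precise. One must check that the $T'$-neighbours of $v_j$ along $P$ correspond to distinct branches of $T\setminus v_j$, and that the probe of $v_j$ is not one of those branches. This holds because every branch leading to another branching node contains a vertex of degree at least $3$, whereas a probe contains none.
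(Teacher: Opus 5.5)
Your proof is correct and follows the same route as the paper. Slice vertices lie outside $R$, so each has degree $3$ in $T$ and exactly one probe; $T'$-adjacency of consecutive slice vertices means the $u$--$v$ path in $T$ meets no other branching node; adding the five probes then gives a bad $5$-caterpillar. You also spell out the attachment condition (only the two extremities have neighbours outside $T_S$) and the connectivity of $T_S$, both of which the paper's terse proof leaves implicit.
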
 
\begin{proof}
    Let $S$ be a slice. By definition of slices, all the vertices in $S$ are not in $R$. So in particular, all the vertices in $S$ have degree three in $T$ and exactly one of the components attached to them is a path. Indeed, if none of them were paths, they would be branching nodes of $T$' and then be $N$ which is included in $R$. If at least two of them were paths, then they would be in $X$ and then in $R$.

    Moreover, by definition of slices, they are consecutive vertices  $v_1,\ldots,v_5$ of a path in $T'$. Since $T$ is a tree and that edges in $T'$ are paths of degree $2$ vertices in $T$, if the path from $v_i$ to $v_j$  with $i < j \le 5$ contains a branching node of $T$, it is a node $v_t$ with $i<t<j$. 

    Finally, if we add the unique path attached to each of the $v_i$'s that ends on a leaf that indeed provides a bad $5$-caterpillar.
\end{proof}

So there exist $((1-10\alpha)/5) |T'|\geqslant ((1-10\alpha)/10) k$ pairwise vertex disjoint bad $5$-caterpillars in $T$ (the last inequality comes from the fact that the number of branching nodes is at least half of the number of leaves). If we set $\alpha=1/15$, it ensures that there exist $k/30$ pairwise vertex disjoint bad $5$-caterpillars in $T$. So one of them has weight at most $30n/k$. And Lemma~\ref{lem:small_badcaterpillar} ensures that $\tau(G) \ge (1-15/k)n-12$, which completes the proof.

\section{Algorithmic Aspects}
\label{sec:complexity}

A \emph{star} is a tree with one branching node, called the \emph{center}, on which a set of leaves is attached. Given a graph $G$ and a vertex $v$ of $G$, the \emph{$k$-blow-up of $G$ on $v$} is the graph obtained from $G$ by replacing $v$ by a clique of size $k$ such that the new vertices are adjacent to $u$ if and only if $uv$ is an edge of $G$. A \emph{blow-up} of $G$ on $v$ is a $k$-blow-up for some integer $k$. 
A graph has \emph{bandwidth $k$} if there exists an ordering $v_1,\ldots,v_n$ of the vertices of the graph such that, for every $i$, all the vertices adjacent to $v_i$ are in the set $\{v_{i-k},\ldots,v_{i+k} \}$.

An instance of the \textsc{Outer Connected Fair Division of Graphs} problem is given by a graph $G$, a set of agents $N$ and their respective valuation function $v_i$ for each $i \in N$, and consists in deciding whether there exists an \EFO allocation of the vertices of the graph.

\begin{theorem} 
\label{thm:np_result} 
    The \textsc{Outer Connected Fair Division of Graphs} problem is NP-complete even restricted to:
    \begin{itemize}
        \item The blow-up of subdivided stars (on their center) for common binary additive valuation functions.
        %\item The 2-blow-up of subdivided stars (on their center) for common additive valuation functions.
        \item Graphs of bounded bandwidth for common additive valuation functions.
    \end{itemize}
\end{theorem}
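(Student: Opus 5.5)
Membership in NP is immediate: a certificate is the allocation itself. Connectivity of each bundle, and the \EFO{} condition for every ordered pair of agents (try each outer vertex of the envied bundle), are checked in polynomial time. The work is in the two hardness reductions. For both I would reduce from a strongly NP-hard partition-type problem, namely \textsc{3-Partition}: $3m$ integers $a_1,\dots,a_{3m}$ with sum $mB$ and $B/4<a_i<B/2$, and the question whether they split into $m$ triples each of sum $B$. Because the problem is strongly NP-hard, unary encodings are allowed. This matters for the binary-valuation case, where an integer $a_i$ has to be simulated by $a_i$ vertices of value~$1$.

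\emph{Blow-up of subdivided stars, binary valuations.} Take a center replaced by a clique $K$ of size $q$, which I will tune. For each $i$, attach to the center a pendant path $Q_i$ carrying $a_i$ value-$1$ vertices, padded with value-$0$ vertices. Set the number of agents to something like $q+m$, and give the clique vertices values so that the \EFO{} condition forces the following shape. Exactly $m$ ``collector'' agents must each take one clique vertex plus whole paths, and the remaining agents are pinned to single clique vertices or to path tails. The key gadget is the \EFO{} threshold. An agent owning only a clique vertex has value at most $1$, so it tolerates an envied bundle of value at most $2$ after an outer removal. That is far too strict, so instead I would make every bundle's value pinned near $B$ by adding ``anchor'' agents whose bundles are forced (for instance long value-$1$ paths of length $B$ attached to the center). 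With anchors of value $B$, \EFO{} forces every other bundle to have value in $[B-1,B+1]$ after accounting for one outer good. Shifting all $a_i$ by a scaling factor (replace $a_i$ by $c\,a_i$, $B$ by $cB$) removes the $\pm1$ slack, so collectors have value exactly $cB$. Connectivity through the clique then lets any triple of paths be grouped with one clique vertex; conversely, any \EFO{} allocation must cut the paths so that each collector gets whole paths. Path pieces not attached to the center would be low-value, low-length tails that create envy. This gives a valid triple partition.

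\emph{Bounded bandwidth, additive valuations.} Here the integers can be used directly as vertex weights. I would build a ``ladder''-like graph: the items $a_i$ are vertices arranged along a sequence, and $m$ slots are arranged so that each bundle is a connected set of bounded local width. A natural choice is a grid of constant height, or a caterpillar of bounded width. Large separator vertices of huge value $M$ force $m$ agents to each own exactly one separator. A valuation gap ($M\gg mB$) together with \EFO{} forces the remaining value to balance exactly at $B$ per agent. The outer-good slack is again neutralized by scaling.

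The main obstacle in both reductions is the \emph{soundness} direction. One has to show that every \EFO{} allocation has the intended canonical shape: no agent splits an item path or takes an unintended combination, and the one-outer-good relaxation cannot absorb a discrepancy. I would handle this with a rigid weight hierarchy (huge weights on anchors, scaled item weights, zero padding). The proof would then go through a sequence of claims that pin down, in order, which agent owns each anchor, then that the collector bundles are unions of whole item paths, and finally the exact sums.
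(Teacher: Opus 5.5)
Your proposal has a genuine gap. For neither bullet do you fix a graph, a number of agents, or the gadget values. The soundness direction, which you yourself flag as the main obstacle, is only announced as a sequence of claims, none of which is proved.

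Several of the devices you name also cannot do the job:
\begin{itemize}
\item In the binary bullet every vertex is worth $0$ or $1$, so there is no ``rigid weight hierarchy'' and nothing to tune on the clique.
\item With common valuations you cannot designate ``anchor agents'' with forced bundles. You can only analyse an arbitrary EF1\textsubscript{outer} allocation.
\item The scaling trick is circular. The window $[cB-1,cB+1]$ isolates $cB$ only once you know that bundles are unions of whole paths, which is exactly what must be proved.
\end{itemize}

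The missing idea is the paper's: hang a value-$0$ pendant leaf at the far end of every chain. In a bundle made of one center vertex and whole chains, every outer vertex is then worth $0$ (the pendant leaves, plus the center if there is a single chain). So EF1\textsubscript{outer} collapses to exact envy-freeness, and there is no slack to remove.

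The shape is forced by the observation you set aside as ``far too strict'': use exactly as many agents as center vertices. The paper uses two of each for \textsc{Partition}. The argument then runs as follows.
\begin{itemize}
\item An agent owning no center is confined to one chain and its leaf, so it is worth at most $a_i<B/2$.
\item Its EF1\textsubscript{outer} constraints then cap every bundle at $a_i+1<B$ (note $B\ge 3$), contradicting the total value $mB$.
\item Hence each agent owns exactly one center.
\item Each chain lies wholly in one bundle, since every bundle meeting a chain reaches its center through the chain's first vertex.
\item Envy-freeness gives every bundle value exactly $B$, i.e.\ exactly three chains.
\end{itemize}

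Concretely, for unary \textsc{3-Partition} (you are right that strong NP-hardness is needed here):
\begin{itemize}
\item take a clique of $m$ value-$0$ centers;
\item make each center adjacent to the first vertex of every one of the $3m$ chains of $a_i$ value-$1$ vertices;
\item add a value-$0$ pendant leaf at the far end of each chain;
\item use $m$ agents.
\end{itemize}
There are no anchors and no scaling. This keeps the valuation binary and the graph a true blow-up. By contrast, the paper's appendix gadget gives the clique vertices a large value and attaches the chains by a matching.

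For the bandwidth bullet you give no graph at all, and starting from \textsc{3-Partition} works against you. In a layout of bandwidth $w$, a connected bundle with vertices on both sides of a position contains one of the $w$ vertices just before it. So at most $w$ disjoint bundles straddle any position, which makes it hard to let $m$ agents each gather an arbitrary triple of item vertices.

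Since this bullet allows arbitrary additive values, the weak NP-hardness of \textsc{Partition} suffices: the integers are written in binary as vertex values, and two agents are enough. The paper's graph has:
\begin{itemize}
\item item vertices $v_1,\dots,v_m$ of value $a_i$;
\item two value-$0$ ``railing'' paths, with $v_i$ adjacent to the $i$-th vertex of each;
\item a value-$0$ pendant leaf on each $v_i$.
\end{itemize}
It has bandwidth $4$, and each agent can run along its own railing to collect any subset of items. The pendant leaves again make every outer good worthless, so an EF1\textsubscript{outer} allocation is exactly an equal split.
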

\begin{proof}
    Note first that the problem is in NP, since, to verify that an allocation is \EFO\!, it suffices to check for each pair of agents $i, j$, that up to one outer item, agent $i$ does not envy agent $j$, i.e. that there is some item allocated to agent $j$ which does not disconnect their bundle but sufficiently reduces its value. This can be done in polynomial time by exhaustive search.
    
    We reduce from the \textsc{Partition} problem, which, given a multiset $S$ of positive integers, consists in deciding whether there exists a partition $(S_1, S_2)$ such that the sum of numbers in $S_1$ equals the sum of numbers in $S_2$. 
    We first check, in linear time, whether there is any single number that is larger than the sum of the rest of the numbers. If we find such a number, it is a NO-instance. 
    Assuming that no such number exists, we construct an instance of the \textsc{Outer Connected Fair Division of Graphs} problem as follows. For each number $v$ in $S$, we create a chain of $v$ vertices of value 1. We add two \emph{center vertices} of value 0 that are adjacent to each other and to one endpoint of each chain. Finally, we add a pendant vertex of value 0 to each leaf (i.e. to the other endpoint of each chain). The resulting 2-blow-up of a subdivided star is illustrated in Figure \ref{fig:partition_star_blown_up_edge}, in the case of $S = \{3,1,5,4\}$.
    \begin{figure}[h!]
        \centering

\tikzset{every picture/.style={line width=0.75pt}} %set default line width to 0.75pt        

\begin{tikzpicture}[x=0.75pt,y=0.75pt,yscale=-0.7,xscale=0.7]
%uncomment if require: \path (0,300); %set diagram left start at 0, and has height of 300

%Shape: Circle [id:dp9233277090605028] 
\draw [red]  (179,16.75) .. controls (179,13.02) and (182.02,10) .. (185.75,10) .. controls (189.48,10) and (192.5,13.02) .. (192.5,16.75) .. controls (192.5,20.48) and (189.48,23.5) .. (185.75,23.5) .. controls (182.02,23.5) and (179,20.48) .. (179,16.75) -- cycle ;
%Shape: Circle [id:dp4515144946097994] 
\draw [red]  (249.5,16.75) .. controls (249.5,13.02) and (252.52,10) .. (256.25,10) .. controls (259.98,10) and (263,13.02) .. (263,16.75) .. controls (263,20.48) and (259.98,23.5) .. (256.25,23.5) .. controls (252.52,23.5) and (249.5,20.48) .. (249.5,16.75) -- cycle ;
%Straight Lines [id:da5093828610648342] 
\draw [red]   (192.5,16.75) -- (249.5,16.75) ;
%Shape: Circle [id:dp8096188417365748] 
\draw  [fill={rgb, 255:red, 0; green, 0; blue, 0 }  ,fill opacity=1 ] (149,90.75) .. controls (149,87.02) and (152.02,84) .. (155.75,84) .. controls (159.48,84) and (162.5,87.02) .. (162.5,90.75) .. controls (162.5,94.48) and (159.48,97.5) .. (155.75,97.5) .. controls (152.02,97.5) and (149,94.48) .. (149,90.75) -- cycle ;
%Shape: Circle [id:dp15448185696504924] 
\draw  [fill={rgb, 255:red, 0; green, 0; blue, 0 }  ,fill opacity=1 ] (192.5,90.75) .. controls (192.5,87.02) and (195.52,84) .. (199.25,84) .. controls (202.98,84) and (206,87.02) .. (206,90.75) .. controls (206,94.48) and (202.98,97.5) .. (199.25,97.5) .. controls (195.52,97.5) and (192.5,94.48) .. (192.5,90.75) -- cycle ;
%Shape: Circle [id:dp15158042997386978] 
\draw  [fill={rgb, 255:red, 0; green, 0; blue, 0 }  ,fill opacity=1 ] (235,90.75) .. controls (235,87.02) and (238.02,84) .. (241.75,84) .. controls (245.48,84) and (248.5,87.02) .. (248.5,90.75) .. controls (248.5,94.48) and (245.48,97.5) .. (241.75,97.5) .. controls (238.02,97.5) and (235,94.48) .. (235,90.75) -- cycle ;
%Shape: Circle [id:dp0667078527862004] 
\draw  [fill={rgb, 255:red, 0; green, 0; blue, 0 }  ,fill opacity=1 ] (276.5,90.75) .. controls (276.5,87.02) and (279.52,84) .. (283.25,84) .. controls (286.98,84) and (290,87.02) .. (290,90.75) .. controls (290,94.48) and (286.98,97.5) .. (283.25,97.5) .. controls (279.52,97.5) and (276.5,94.48) .. (276.5,90.75) -- cycle ;
%Shape: Circle [id:dp7722521483672288] 
\draw  [fill={rgb, 255:red, 0; green, 0; blue, 0 }  ,fill opacity=1 ] (149,120.25) .. controls (149,116.52) and (152.02,113.5) .. (155.75,113.5) .. controls (159.48,113.5) and (162.5,116.52) .. (162.5,120.25) .. controls (162.5,123.98) and (159.48,127) .. (155.75,127) .. controls (152.02,127) and (149,123.98) .. (149,120.25) -- cycle ;
%Shape: Circle [id:dp661688365050443] 
\draw  [fill={rgb, 255:red, 0; green, 0; blue, 0 }  ,fill opacity=1 ] (235,120.25) .. controls (235,116.52) and (238.02,113.5) .. (241.75,113.5) .. controls (245.48,113.5) and (248.5,116.52) .. (248.5,120.25) .. controls (248.5,123.98) and (245.48,127) .. (241.75,127) .. controls (238.02,127) and (235,123.98) .. (235,120.25) -- cycle ;
%Shape: Circle [id:dp6056726859730958] 
\draw  [fill={rgb, 255:red, 0; green, 0; blue, 0 }  ,fill opacity=1 ] (276.5,120.25) .. controls (276.5,116.52) and (279.52,113.5) .. (283.25,113.5) .. controls (286.98,113.5) and (290,116.52) .. (290,120.25) .. controls (290,123.98) and (286.98,127) .. (283.25,127) .. controls (279.52,127) and (276.5,123.98) .. (276.5,120.25) -- cycle ;
%Shape: Circle [id:dp670281197882122] 
\draw  [fill={rgb, 255:red, 0; green, 0; blue, 0 }  ,fill opacity=1 ] (235,148.75) .. controls (235,145.02) and (238.02,142) .. (241.75,142) .. controls (245.48,142) and (248.5,145.02) .. (248.5,148.75) .. controls (248.5,152.48) and (245.48,155.5) .. (241.75,155.5) .. controls (238.02,155.5) and (235,152.48) .. (235,148.75) -- cycle ;
%Shape: Circle [id:dp3854341124629628] 
\draw  [fill={rgb, 255:red, 0; green, 0; blue, 0 }  ,fill opacity=1 ] (235,177.25) .. controls (235,173.52) and (238.02,170.5) .. (241.75,170.5) .. controls (245.48,170.5) and (248.5,173.52) .. (248.5,177.25) .. controls (248.5,180.98) and (245.48,184) .. (241.75,184) .. controls (238.02,184) and (235,180.98) .. (235,177.25) -- cycle ;
%Shape: Circle [id:dp4984720825840313] 
\draw  [fill={rgb, 255:red, 0; green, 0; blue, 0 }  ,fill opacity=1 ] (149,149.25) .. controls (149,145.52) and (152.02,142.5) .. (155.75,142.5) .. controls (159.48,142.5) and (162.5,145.52) .. (162.5,149.25) .. controls (162.5,152.98) and (159.48,156) .. (155.75,156) .. controls (152.02,156) and (149,152.98) .. (149,149.25) -- cycle ;
%Shape: Circle [id:dp4367936165809805] 
\draw  [fill={rgb, 255:red, 0; green, 0; blue, 0 }  ,fill opacity=1 ] (235,205.75) .. controls (235,202.02) and (238.02,199) .. (241.75,199) .. controls (245.48,199) and (248.5,202.02) .. (248.5,205.75) .. controls (248.5,209.48) and (245.48,212.5) .. (241.75,212.5) .. controls (238.02,212.5) and (235,209.48) .. (235,205.75) -- cycle ;
%Shape: Circle [id:dp5612381149010585] 
\draw  [fill={rgb, 255:red, 0; green, 0; blue, 0 }  ,fill opacity=1 ] (276.5,177.5) .. controls (276.5,173.77) and (279.52,170.75) .. (283.25,170.75) .. controls (286.98,170.75) and (290,173.77) .. (290,177.5) .. controls (290,181.23) and (286.98,184.25) .. (283.25,184.25) .. controls (279.52,184.25) and (276.5,181.23) .. (276.5,177.5) -- cycle ;
%Shape: Circle [id:dp43649185095321896] 
\draw  [fill={rgb, 255:red, 0; green, 0; blue, 0 }  ,fill opacity=1 ] (276.5,149) .. controls (276.5,145.27) and (279.52,142.25) .. (283.25,142.25) .. controls (286.98,142.25) and (290,145.27) .. (290,149) .. controls (290,152.73) and (286.98,155.75) .. (283.25,155.75) .. controls (279.52,155.75) and (276.5,152.73) .. (276.5,149) -- cycle ;
%Straight Lines [id:da7724006302380608] 
\draw    (155.75,97.5) -- (155.75,171) ;
%Straight Lines [id:da842233353046402] 
\draw    (241.75,97.5) -- (241.75,227.75) ;
%Shape: Circle [id:dp7773970674062611] 
\draw   (149,177.75) .. controls (149,174.02) and (152.02,171) .. (155.75,171) .. controls (159.48,171) and (162.5,174.02) .. (162.5,177.75) .. controls (162.5,181.48) and (159.48,184.5) .. (155.75,184.5) .. controls (152.02,184.5) and (149,181.48) .. (149,177.75) -- cycle ;
%Shape: Circle [id:dp3908086921194093] 
\draw   (192.5,119.25) .. controls (192.5,115.52) and (195.52,112.5) .. (199.25,112.5) .. controls (202.98,112.5) and (206,115.52) .. (206,119.25) .. controls (206,122.98) and (202.98,126) .. (199.25,126) .. controls (195.52,126) and (192.5,122.98) .. (192.5,119.25) -- cycle ;
%Shape: Circle [id:dp23747171801594869] 
\draw   (235,234.5) .. controls (235,230.77) and (238.02,227.75) .. (241.75,227.75) .. controls (245.48,227.75) and (248.5,230.77) .. (248.5,234.5) .. controls (248.5,238.23) and (245.48,241.25) .. (241.75,241.25) .. controls (238.02,241.25) and (235,238.23) .. (235,234.5) -- cycle ;
%Shape: Circle [id:dp7773618051815165] 
\draw   (276.5,206) .. controls (276.5,202.27) and (279.52,199.25) .. (283.25,199.25) .. controls (286.98,199.25) and (290,202.27) .. (290,206) .. controls (290,209.73) and (286.98,212.75) .. (283.25,212.75) .. controls (279.52,212.75) and (276.5,209.73) .. (276.5,206) -- cycle ;
%Straight Lines [id:da05560501775926163] 
\draw    (199.25,97.5) -- (199.25,112.5) ;
%Straight Lines [id:da578765961952844] 
\draw    (283.25,97.5) -- (283.25,199.25) ;
%Straight Lines [id:da5039638734429395] 
\draw    (185.75,23.5) -- (155.75,84) ;
%Straight Lines [id:da7196277438659886] 
\draw    (185.75,23.5) -- (199.25,84) ;
%Straight Lines [id:da6137218725231017] 
\draw    (185.75,23.5) -- (241.75,84) ;
%Straight Lines [id:da032385734517827336] 
\draw    (185.75,23.5) -- (283.25,84) ;
%Straight Lines [id:da37458305948884363] 
\draw    (155.75,84) -- (256.25,23.5) ;
%Straight Lines [id:da2782836008212979] 
\draw    (256.25,23.5) -- (199.25,84) ;
%Straight Lines [id:da2536458934721927] 
\draw    (256.25,23.5) -- (241.75,84) ;
%Straight Lines [id:da41496831319583616] 
\draw    (256.25,23.5) -- (283.25,84) ;
\end{tikzpicture}

        \caption{Subdivided star with center vertex blown-up by two vertices (in red), with vertices of value 0 and 1 filled in white and black respectively. This corresponds to the instance of the \textsc{Partition} problem with $S = \{3,1,5,4\}$.}
        \label{fig:partition_star_blown_up_edge}
    \end{figure}
    
    To divide this graph among two agents with the given common binary additive valuation, we cannot cut any chain. Indeed, since bundles must be connected, this would imply that an agent receives only a strict subset of a single chain, but by our assumption that no number is larger than the rest, they would envy the other agent, even up to one vertex of value 1. Thus, each agent must receive one center vertex and multiple complete chains. Any allocation corresponds to a partition of the set of chains, or equivalently a partition of $S$, into two bundles. Since both bundles only have leaves of value 0, any \EFO allocation is in fact envy-free. And since the value of a bundle is precisely the sum of corresponding numbers in $S$, we get the desired equivalence.

    Since the \textsc{Partition} problem is only weakly NP-hard, we give a reduction from the strongly NP-hard problem \textsc{3-Partition} in Appendix~\ref{app:NP_reduction}. Note that this construction is only polynomial in the values, but not in the size of the input (i.e. the cardinality of $|S|$). To get a polynomial-size reduction, we can replace the chains of $v$ vertices of value 1, by a single vertex of value $v$, which requires extending the class of valuation functions to common additive (not necessarily binary).

    Finally, we give a reduction from \textsc{Partition} to an instance of \textsc{Outer Connected Fair Division of Graphs} with bounded bandwidth. Note that stars do not have bounded bandwidth. Instead we create the following graph. We start with $m$ vertices $v_1, \dots, v_m$ of the corresponding values of items in $S$. We add two disjoint \emph{railings}, that are paths on $m$ vertices of value zero. We attach each $v_i$ to the $i$\textsuperscript{th} vertex of each of the two railings and add a leaf of value zero attached to it. The construction is illustrated in Figure \ref{fig:3-partition_bounded_bandwidth}. This graph has bandwidth 4 by consistently numbering each $v_i$ and its three neighbours in increasing order of $i$. Any division of this graph among $2$ agents corresponds to a desired partition of $S$ into two equal parts, since both connected bundles must only have leaves of value zero as non-cut vertices. Conversely, using the two railings, we can cut the graph into two connected bundles corresponding to any partition $(S_1, S_2)$ of $S$.
    
    \begin{figure}[h!]
        \centering

\tikzset{every picture/.style={line width=0.75pt}} %set default line width to 0.75pt        

\begin{tikzpicture}[x=0.75pt,y=0.75pt,yscale=-0.7,xscale=0.7]
%uncomment if require: \path (0,300); %set diagram left start at 0, and has height of 300

%Shape: Circle [id:dp1885461765467269] 
\draw   (157,72.75) .. controls (157,69.02) and (160.02,66) .. (163.75,66) .. controls (167.48,66) and (170.5,69.02) .. (170.5,72.75) .. controls (170.5,76.48) and (167.48,79.5) .. (163.75,79.5) .. controls (160.02,79.5) and (157,76.48) .. (157,72.75) -- cycle ;
%Shape: Circle [id:dp6141361352914103] 
\draw   (217.5,72.75) .. controls (217.5,69.02) and (220.52,66) .. (224.25,66) .. controls (227.98,66) and (231,69.02) .. (231,72.75) .. controls (231,76.48) and (227.98,79.5) .. (224.25,79.5) .. controls (220.52,79.5) and (217.5,76.48) .. (217.5,72.75) -- cycle ;
%Shape: Circle [id:dp8569486161329997] 
\draw   (278,72.75) .. controls (278,69.02) and (281.02,66) .. (284.75,66) .. controls (288.48,66) and (291.5,69.02) .. (291.5,72.75) .. controls (291.5,76.48) and (288.48,79.5) .. (284.75,79.5) .. controls (281.02,79.5) and (278,76.48) .. (278,72.75) -- cycle ;
%Shape: Circle [id:dp7865917146065812] 
\draw   (337.5,72.75) .. controls (337.5,69.02) and (340.52,66) .. (344.25,66) .. controls (347.98,66) and (351,69.02) .. (351,72.75) .. controls (351,76.48) and (347.98,79.5) .. (344.25,79.5) .. controls (340.52,79.5) and (337.5,76.48) .. (337.5,72.75) -- cycle ;
%Shape: Circle [id:dp4375062137395508] 
\draw   (398,72.75) .. controls (398,69.02) and (401.02,66) .. (404.75,66) .. controls (408.48,66) and (411.5,69.02) .. (411.5,72.75) .. controls (411.5,76.48) and (408.48,79.5) .. (404.75,79.5) .. controls (401.02,79.5) and (398,76.48) .. (398,72.75) -- cycle ;
%Shape: Circle [id:dp8039666358107113] 
\draw  [fill={rgb, 255:red, 0; green, 0; blue, 0 }  ,fill opacity=1 ] (147,145.75) .. controls (147,142.02) and (150.02,139) .. (153.75,139) .. controls (157.48,139) and (160.5,142.02) .. (160.5,145.75) .. controls (160.5,149.48) and (157.48,152.5) .. (153.75,152.5) .. controls (150.02,152.5) and (147,149.48) .. (147,145.75) -- cycle ;
%Shape: Circle [id:dp4916021062948961] 
\draw  [fill={rgb, 255:red, 0; green, 0; blue, 0 }  ,fill opacity=1 ] (207.5,145.75) .. controls (207.5,142.02) and (210.52,139) .. (214.25,139) .. controls (217.98,139) and (221,142.02) .. (221,145.75) .. controls (221,149.48) and (217.98,152.5) .. (214.25,152.5) .. controls (210.52,152.5) and (207.5,149.48) .. (207.5,145.75) -- cycle ;
%Shape: Circle [id:dp5065306656418656] 
\draw  [fill={rgb, 255:red, 0; green, 0; blue, 0 }  ,fill opacity=1 ] (268,145.75) .. controls (268,142.02) and (271.02,139) .. (274.75,139) .. controls (278.48,139) and (281.5,142.02) .. (281.5,145.75) .. controls (281.5,149.48) and (278.48,152.5) .. (274.75,152.5) .. controls (271.02,152.5) and (268,149.48) .. (268,145.75) -- cycle ;
%Shape: Circle [id:dp5929636851257375] 
\draw  [fill={rgb, 255:red, 0; green, 0; blue, 0 }  ,fill opacity=1 ] (327.5,145.75) .. controls (327.5,142.02) and (330.52,139) .. (334.25,139) .. controls (337.98,139) and (341,142.02) .. (341,145.75) .. controls (341,149.48) and (337.98,152.5) .. (334.25,152.5) .. controls (330.52,152.5) and (327.5,149.48) .. (327.5,145.75) -- cycle ;
%Shape: Circle [id:dp19899993273931937] 
\draw  [fill={rgb, 255:red, 0; green, 0; blue, 0 }  ,fill opacity=1 ] (388,145.75) .. controls (388,142.02) and (391.02,139) .. (394.75,139) .. controls (398.48,139) and (401.5,142.02) .. (401.5,145.75) .. controls (401.5,149.48) and (398.48,152.5) .. (394.75,152.5) .. controls (391.02,152.5) and (388,149.48) .. (388,145.75) -- cycle ;
%Shape: Circle [id:dp5966177700378156] 
\draw   (147,183.75) .. controls (147,180.02) and (150.02,177) .. (153.75,177) .. controls (157.48,177) and (160.5,180.02) .. (160.5,183.75) .. controls (160.5,187.48) and (157.48,190.5) .. (153.75,190.5) .. controls (150.02,190.5) and (147,187.48) .. (147,183.75) -- cycle ;
%Shape: Circle [id:dp3230833721286305] 
\draw   (207.5,183.75) .. controls (207.5,180.02) and (210.52,177) .. (214.25,177) .. controls (217.98,177) and (221,180.02) .. (221,183.75) .. controls (221,187.48) and (217.98,190.5) .. (214.25,190.5) .. controls (210.52,190.5) and (207.5,187.48) .. (207.5,183.75) -- cycle ;
%Shape: Circle [id:dp8322065605048705] 
\draw   (268,183.75) .. controls (268,180.02) and (271.02,177) .. (274.75,177) .. controls (278.48,177) and (281.5,180.02) .. (281.5,183.75) .. controls (281.5,187.48) and (278.48,190.5) .. (274.75,190.5) .. controls (271.02,190.5) and (268,187.48) .. (268,183.75) -- cycle ;
%Shape: Circle [id:dp7850715437456532] 
\draw   (327.5,183.75) .. controls (327.5,180.02) and (330.52,177) .. (334.25,177) .. controls (337.98,177) and (341,180.02) .. (341,183.75) .. controls (341,187.48) and (337.98,190.5) .. (334.25,190.5) .. controls (330.52,190.5) and (327.5,187.48) .. (327.5,183.75) -- cycle ;
%Shape: Circle [id:dp7735491822972161] 
\draw   (388,183.75) .. controls (388,180.02) and (391.02,177) .. (394.75,177) .. controls (398.48,177) and (401.5,180.02) .. (401.5,183.75) .. controls (401.5,187.48) and (398.48,190.5) .. (394.75,190.5) .. controls (391.02,190.5) and (388,187.48) .. (388,183.75) -- cycle ;
%Straight Lines [id:da29643171416031433] 
\draw    (153.75,152.5) -- (153.75,177) ;
%Straight Lines [id:da04530226848921026] 
\draw    (214.25,152.5) -- (214.25,177) ;
%Straight Lines [id:da7582830217105171] 
\draw    (274.75,152.5) -- (274.75,177) ;
%Straight Lines [id:da21109716297853476] 
\draw    (334.25,152.5) -- (334.25,177) ;
%Straight Lines [id:da7801592752329706] 
\draw    (394.75,152.5) -- (394.75,177) ;
%Straight Lines [id:da6021800474907649] 
\draw [color={rgb, 255:red, 208; green, 2; blue, 27 }  ,draw opacity=1 ]   (170.5,72.75) -- (217.5,72.75) ;
%Straight Lines [id:da5122912884318764] 
\draw [color={rgb, 255:red, 208; green, 2; blue, 27 }  ,draw opacity=1 ]   (231,72.75) -- (278,72.75) ;
%Straight Lines [id:da16397382131368388] 
\draw [color={rgb, 255:red, 208; green, 2; blue, 27 }  ,draw opacity=1 ]   (291.5,72.75) -- (337.5,72.75) ;
%Straight Lines [id:da7253561191884759] 
\draw [color={rgb, 255:red, 208; green, 2; blue, 27 }  ,draw opacity=1 ]   (351,72.75) -- (398,72.75) ;
%Shape: Circle [id:dp5520240057133794] 
\draw   (136,98.75) .. controls (136,95.02) and (139.02,92) .. (142.75,92) .. controls (146.48,92) and (149.5,95.02) .. (149.5,98.75) .. controls (149.5,102.48) and (146.48,105.5) .. (142.75,105.5) .. controls (139.02,105.5) and (136,102.48) .. (136,98.75) -- cycle ;
%Shape: Circle [id:dp12559275476831955] 
\draw   (196.5,98.75) .. controls (196.5,95.02) and (199.52,92) .. (203.25,92) .. controls (206.98,92) and (210,95.02) .. (210,98.75) .. controls (210,102.48) and (206.98,105.5) .. (203.25,105.5) .. controls (199.52,105.5) and (196.5,102.48) .. (196.5,98.75) -- cycle ;
%Shape: Circle [id:dp778384184266406] 
\draw   (257,98.75) .. controls (257,95.02) and (260.02,92) .. (263.75,92) .. controls (267.48,92) and (270.5,95.02) .. (270.5,98.75) .. controls (270.5,102.48) and (267.48,105.5) .. (263.75,105.5) .. controls (260.02,105.5) and (257,102.48) .. (257,98.75) -- cycle ;
%Shape: Circle [id:dp848204522411225] 
\draw   (316.5,98.75) .. controls (316.5,95.02) and (319.52,92) .. (323.25,92) .. controls (326.98,92) and (330,95.02) .. (330,98.75) .. controls (330,102.48) and (326.98,105.5) .. (323.25,105.5) .. controls (319.52,105.5) and (316.5,102.48) .. (316.5,98.75) -- cycle ;
%Shape: Circle [id:dp37820284851122965] 
\draw   (377,98.75) .. controls (377,95.02) and (380.02,92) .. (383.75,92) .. controls (387.48,92) and (390.5,95.02) .. (390.5,98.75) .. controls (390.5,102.48) and (387.48,105.5) .. (383.75,105.5) .. controls (380.02,105.5) and (377,102.48) .. (377,98.75) -- cycle ;
%Straight Lines [id:da016714750218947638] 
\draw [color={rgb, 255:red, 208; green, 2; blue, 27 }  ,draw opacity=1 ]   (149.5,98.75) -- (196.5,98.75) ;
%Straight Lines [id:da8887679278582782] 
\draw [color={rgb, 255:red, 208; green, 2; blue, 27 }  ,draw opacity=1 ]   (210,98.75) -- (257,98.75) ;
%Straight Lines [id:da7949394084329087] 
\draw [color={rgb, 255:red, 208; green, 2; blue, 27 }  ,draw opacity=1 ]   (270.5,98.75) -- (316.5,98.75) ;
%Straight Lines [id:da1159591368856292] 
\draw [color={rgb, 255:red, 208; green, 2; blue, 27 }  ,draw opacity=1 ]   (330,98.75) -- (377,98.75) ;
%Straight Lines [id:da7333960129123313] 
\draw    (142.75,105.5) -- (153.75,139) ;
%Straight Lines [id:da8689175703849558] 
\draw    (163.75,79.5) -- (153.75,139) ;
%Straight Lines [id:da44858797843420894] 
\draw    (203.25,105.5) -- (214.25,139) ;
%Straight Lines [id:da05602973258596122] 
\draw    (224.25,79.5) -- (214.25,139) ;
%Straight Lines [id:da034677535700208484] 
\draw    (263.75,105.5) -- (274.75,139) ;
%Straight Lines [id:da08634389700347112] 
\draw    (284.75,79.5) -- (274.75,139) ;
%Straight Lines [id:da2706364749866237] 
\draw    (323.25,105.5) -- (334.25,139) ;
%Straight Lines [id:da07463651867350696] 
\draw    (344.25,79.5) -- (334.25,139) ;
%Straight Lines [id:da9960622147859792] 
\draw    (383.75,105.5) -- (394.75,139) ;
%Straight Lines [id:da6278422979313477] 
\draw    (404.75,79.5) -- (394.75,139) ;

% Text Node
\draw (124,137) node [anchor=north west][inner sep=0.75pt]   [align=left] {$v_1$};
% Text Node
\draw (185,137) node [anchor=north west][inner sep=0.75pt]   [align=left] {$v_2$};
% Text Node
\draw (245,137) node [anchor=north west][inner sep=0.75pt]   [align=left] {$v_3$};
% Text Node
\draw (304,136) node [anchor=north west][inner sep=0.75pt]   [align=left] {$v_4$};
% Text Node
\draw (365,136) node [anchor=north west][inner sep=0.75pt]   [align=left] {$v_5$};

\end{tikzpicture}

        \caption{Graph of bandwith 4 constructed in the reduction from a \textsc{Partition} instance with $S = \{v_1, \dots, v_5\}$, where vertices filled in white have value 0, and vertices filled in black have the corresponding value $v_i$. The two railings are shown in red.}
        \label{fig:3-partition_bounded_bandwidth}
    \end{figure} \vspace{-1cm}
\end{proof}

Note that this reduction ensures that the problem is NP-hard even on graphs of treewidth (and even pathwidth) $2$ and on graphs of bounded treedepth for general additive valuation functions. Also note that the graph of the second point is planar which shows that the problem is NP-hard even restricted to planar graphs.

\section{Conclusion and Future Work}

We resolved several open questions on connected \EFO allocations, which highlight remaining fundamental open problems. We showed that traceability is not a necessary condition for universal guarantees in the setting of common valuations. Extending our techniques to heterogeneous valuations is an important direction. In particular, we conjecture that our result on double suns continues to hold in this more general setting.

For common binary additive valuations on trees, we characterized the spectrum threshold using bad sets. Beyond binary valuations, even the case of common additive valuations with only two possible vertex values remains unresolved: the threshold may differ from the binary threshold, but a natural extension of bad sets appears to be a promising candidate characterization.

For general graphs, bad sets continue to provide a natural source of negative instances, suggesting a promising route toward understanding the spectrum. Fully characterizing the threshold beyond trees remains an important open question.

\bibliography{references}

@inproceedings{Abebe2017,
  author       = {Rediet Abebe and Jon M. Kleinberg and David C. Parkes},
  title        = {Fair Division via Social Comparison},
  booktitle    = {Proceedings of the 16th International Conference on Autonomous Agents and Multi-Agent Systems (AAMAS)},
  pages        = {281--289},
  year         = {2017},
  organization = {International Foundation for Autonomous Agents and Multiagent Systems},
  doi          = {10.5555/3091125.3091171}
}

@article{Amanatidis2023,
  author    = {Georgios Amanatidis and Haris Aziz and Georgios Birmpas and Aris Filos-Ratsikas and Bo Li and Herv{\'e} Moulin and Alexandros A. Voudouris and Xiaowei Wu},
  title     = {Fair division of indivisible goods: Recent progress and open questions},
  journal   = {Artificial Intelligence},
  volume    = {322},
  year      = {2023},
  pages     = {103965},
  doi       = {10.1016/j.artint.2023.103965}
}

@inproceedings{Aziz2019,
  author    = {Haris Aziz and Ioannis Caragiannis and Ayumi Igarashi and Toby Walsh},
  title     = {Fair Allocation of Indivisible Goods and Chores},
  booktitle = {Proceedings of the 28th International Joint Conference on Artificial Intelligence (IJCAI)},
  pages     = {53--59},
  year      = {2019},
  publisher = {IJCAI Organization}
}

@inproceedings{Bei2021,
  author    = {Bei, Xiaohui and Suksompong, Warut},
  title     = {Dividing a Graphical Cake},
  booktitle = {Proceedings of the 35th AAAI Conference on Artificial Intelligence (AAAI)},
  pages     = {5159--5166},
  year      = {2021}
}

@article{Bei2022,
  author    = {Xiaohui Bei and Ayumi Igarashi and Xinhang Lu and Warut Suksompong},
  title     = {The Price of Connectivity in Fair Division},
  journal   = {SIAM Journal on Discrete Mathematics},
  volume    = {36},
  number    = {2},
  year      = {2022},
  pages     = {}, 
  doi       = {10.1137/20M1388310}
}

@article{Bilo2022,
  author    = {Vittorio Bil{\`o} and Ioannis Caragiannis and Michele Flammini and Ayumi Igarashi and Gianpiero Monaco and Dominik Peters and Cosimo Vinci and William S.~Zwicker},
  title     = {Almost Envy{\textendash}Free Allocations with Connected Bundles},
  journal   = {Games and Economic Behavior},
  year      = {2022},
  volume    = {131},
  pages     = {197--221},
  doi       = {10.1016/j.geb.2021.11.006}
}

@incollection{Bouveret2015,
  author       = {Sylvain Bouveret and Yann Chevaleyre and Nicolas Maudet},
  title        = {Fair Allocation of Indivisible Goods},
  booktitle    = {Handbook of Computational Social Choice},
  editor       = {Felix Brandt and Vincent Conitzer and Ulle Endriss and J\'er\^ome Lang and Ariel D. Procaccia},
  chapter      = {12},
  pages        = {284--310},
  publisher    = {Cambridge University Press},
  year         = {2015},
  doi          = {10.1017/CBO9781107446984.013}
}

@inproceedings{Bouveret2017,
  author    = {Sylvain Bouveret and Katar{\'i}na Cechl{\'a}rov{\'a} and Edith Elkind and Ayumi Igarashi and Dominik Peters},
  title     = {Fair Division of a Graph},
  booktitle = {Proceedings of the Twenty-Sixth International Joint Conference on Artificial Intelligence (IJCAI)},
  year      = {2017},
  pages     = {135--141},
  doi       = {10.24963/ijcai.2017/20}
}

@article{bouveret2019,
  author    = {Sylvain Bouveret and Katar\'{\i}na Cechl\'{a}rov\'{a} and Julien Lesca},
  title     = {Chore Division on a Graph},
  journal   = {Autonomous Agents and Multi-Agent Systems},
  volume    = {33},
  number    = {5},
  pages     = {540--563},
  year      = {2019},
  doi       = {10.1007/s10458-019-09415-z}
}

@book{brams1996,
  author    = {Steven J. Brams and Alan D. Taylor},
  title     = {Fair Division: From Cake-Cutting to Dispute Resolution},
  publisher = {Cambridge University Press},
  year      = {1996},
  address   = {Cambridge, UK}
}

@article{Budish2011,
 author = {Eric Budish},
 journal = {Journal of Political Economy},
 number = {6},
 pages = {1061--1103},
 publisher = {The University of Chicago Press},
 title = {The Combinatorial Assignment Problem: Approximate Competitive Equilibrium from Equal Incomes},
 volume = {119},
 year = {2011}
}

@inproceedings{Chen2024, 
    author = {Chen, Jiehua and Zwicker, William S.}, 
    title = {Cutsets and {EF}1 Fair Division of Graphs}, 
    year = {2024}, 
    isbn = {9798400704864}, 
    publisher = {International Foundation for Autonomous Agents and Multiagent Systems}, 
    booktitle = {Proceedings of the 23rd International Conference on Autonomous Agents and Multiagent Systems}, 
    pages = {2192–2194},
    location = {Auckland, New Zealand}, 
    series = {AAMAS '24}
}

@article{Chevaleyre2017,
  author    = {Yann Chevaleyre and Ulle Endriss and Nicolas Maudet},
  title     = {Distributed Fair Allocation of Indivisible Goods},
  journal   = {Artificial Intelligence},
  volume    = {242},
  pages     = {1--22},
  year      = {2017},
  doi       = {10.1016/j.artint.2016.09.005}
}

@inproceedings{Deligkas2021,
  title     = {The Parameterized Complexity of Connected Fair Division},
  author    = {Deligkas, Argyrios and Eiben, Eduard and Ganian, Robert and Hamm, Thekla and Ordyniak, Sebastian},
  booktitle = {Proceedings of the Thirtieth International Joint Conference on Artificial Intelligence, {IJCAI-21}},
  publisher = {International Joint Conferences on Artificial Intelligence Organization},
  editor    = {Zhi-Hua Zhou},
  pages     = {139--145},
  year      = {2021},
  month     = {8},
  note      = {Main Track},
  doi       = {10.24963/ijcai.2021/20}
}

@inproceedings{Eiben2020,
    author    = {Eduard Eiben and Robert Ganian and Thekla Hamm and Sebastian Ordyniak},
    title     = {Parameterized Complexity of Envy‐Free Resource Allocation in Social Networks},
    booktitle = {Proceedings of the Thirty-Fourth AAAI Conference on Artificial Intelligence (AAAI 2020)},
    pages     = {7135--7142},
    publisher = {AAAI Press},
    year      = {2020}
}

@article{Foley1967,
    author  = {Duncan Foley},
    title   = {Resource allocation and the public sector},
    journal = {Yale Economic Essays},
    year    = {1967},
    volume  = {7},
    number  = {1}
}

@inproceedings{Greco2020,
    author    = {Gianluigi Greco and Francesco Scarcello},
    title     = {The Complexity of Computing Maximin Share Allocations on Graphs},
    booktitle = {Proceedings of the Thirty-Fourth AAAI Conference on Artificial Intelligence (AAAI 2020)},
    pages     = {2006--2013},
    publisher = {AAAI Press},
    year      = {2020}
}

@article{Hohne2021,
  author    = {Felix H\"ohne and Rob van Stee},
  title     = {Allocating Contiguous Blocks of Indivisible Chores Fairly},
  journal   = {Information and Computation},
  volume    = {281},
  pages     = {104739},
  year      = {2021},
  doi       = {10.1016/j.ic.2021.104739}
}

@inproceedings{Igarashi2023,
    author = {Igarashi, Ayumi},
    title = {How to cut a discrete cake fairly},
    year = {2023},
    isbn = {978-1-57735-880-0},
    publisher = {AAAI Press},
    booktitle = {Proceedings of the Thirty-Seventh AAAI Conference on Artificial Intelligence and Thirty-Fifth Conference on Innovative Applications of Artificial Intelligence and Thirteenth Symposium on Educational Advances in Artificial Intelligence},
    articleno = {636},
    numpages = {8},
    series = {AAAI'23/IAAI'23/EAAI'23}
}

@article{Igarashi2024,
  author    = {Ayumi Igarashi and William S. Zwicker},
  title     = {Fair Division of Graphs and of Tangled Cakes},
  journal   = {Mathematical Programming, Series B},
  volume    = {203},
  year      = {2024},
  pages     = {931--975},
  doi       = {10.1007/s10107-023-01969-9},
  month     = {April}
}

@article{Li2025,
  author    = {Bo Li and Ankang Sun and Mashbat Suzuki and Shiji Xing},
  title     = {On the Subsidy of Envy-Free Orientations in Graphs},
  journal   = {CoRR},
  volume    = {abs/2502.13671},
  year      = {2025},
  pages     = {},
  doi       = {10.48550/arXiv.2502.13671},
  url       = {https://arxiv.org/abs/2502.13671}
}

@inproceedings{Lipton2004,
    author = {Lipton, R. J. and Markakis, E. and Mossel, E. and Saberi, A.},
    title = {On approximately fair allocations of indivisible goods},
    year = {2004},
    isbn = {1581137710},
    publisher = {Association for Computing Machinery},
    address = {New York, NY, USA},
    booktitle = {Proceedings of the 5th ACM Conference on Electronic Commerce},
    pages = {125–131},
    numpages = {7},
}

@incollection{Markakis2017,
  author       = {Evangelos Markakis},
  title        = {Approximation Algorithms and Hardness Results for Fair Division with Indivisible Goods},
  booktitle    = {Trends in Computational Social Choice},
  editor       = {Ulle Endriss},
  chapter      = {12},
  pages        = {231--247},
  publisher    = {AI Access},
  year         = {2017}
}

@article{Moulin2019,
  author    = {Hervé Moulin},
  title     = {Fair Division in the Internet Age},
  journal   = {Annual Review of Economics},
  volume    = {11},
  pages     = {407--441},
  year      = {2019}
}

@inproceedings{Oh2019,
  author    = {Hoon Oh and Ariel D. Procaccia and Warut Suksompong},
  title     = {Fairly Allocating Many Goods with Few Queries},
  booktitle = {Proceedings of the 33rd AAAI Conference on Artificial Intelligence (AAAI)},
  pages     = {2141--2148},
  year      = {2019},
  organization = {AAAI Press}
}

@inproceedings{segalhalevi2018,
  author       = {Erel Segal-Halevi},
  title        = {Fairly Dividing a Cake after Some Parts Were Burnt in the Oven},
  booktitle    = {Proceedings of the 17th International Conference on Autonomous Agents and Multiagent Systems (AAMAS)},
  pages        = {1276--1284},
  year         = {2018},
  organization = {International Foundation for Autonomous Agents and Multiagent Systems}
}

@article{Stromquist1980,
  author  = {Walter Stromquist},
  title   = {How to Cut a Cake Fairly},
  journal = {The American Mathematical Monthly},
  volume  = {87},
  number  = {8},
  pages   = {640--644},
  year    = {1980},
  doi     = {10.1080/00029890.1980.11971518}
}

@article{Suksompong2019,
  author    = {Warut Suksompong},
  title     = {Fairly Allocating Contiguous Blocks of Indivisible Items},
  journal   = {Discrete Applied Mathematics},
  volume    = {260},
  pages     = {227--236},
  year      = {2019}
}

@article{Suksompong2021,
  author    = {Warut Suksompong},
  title     = {Constraints in Fair Division},
  journal   = {ACM SIGecom Exchanges},
  volume    = {19},
  number    = {2},
  pages     = {46--61},
  year      = {2021},
  doi       = {10.1145/3505156.3505162}
}

@incollection{Thomson2016,
  author       = {William Thomson},
  title        = {Introduction to the Theory of Fair Allocation},
  booktitle    = {Handbook of Computational Social Choice},
  editor       = {Felix Brandt and Vincent Conitzer and Ulle Endriss and J\'er\^ome Lang and Ariel D. Procaccia},
  chapter      = {11},
  pages        = {261--283},
  publisher    = {Cambridge University Press},
  year         = {2016}
}

\newpage
\appendix

\section{The case of subdivided claws with all branches of even size}
\label{app:subdivided_claw}

The last remaining case in the proof of Theorem~\ref{thm:binary} is the case where~$T$ is a subdivided claw with all branches of even size, and $n_1 \geqslant k \geqslant \bad{T}+2$. In this case, $n$ is odd. Proposition~\ref{prop:bad_set_at_least_n/2} ensures that $\bad{T}+2 \geqslant \left\lfloor\frac{n}{2}\right\rfloor$. Note that, if $k \geqslant \left\lceil\frac{n}{2}\right\rceil$, then the exact same proof as before applies. Thus, we cas assume that $k = \left\lfloor\frac{n}{2}\right\rfloor$.

First, assume that $V_0 \neq \emptyset$. The reason why the previous proof does not apply directly is because after the partitioning step we have more than~$k$ subtrees. Indeed, for a subdivided claw with all branches of even size, the partitioning step will always create $\left\lfloor \frac{n}{2} \right\rfloor$ subtrees of size~$2$, and one subtree of size~$1$, so $k+1$ subtrees in total. Let us explain how to adapt the proof by modifying this partitioning step. Our goal is to partition~$T$ into $k=\left\lfloor\frac{n}{2}\right\rfloor$ subtrees, one of size~$3$ having at least one vertex in~$V_0$, and the $k-1$ other subtrees having size~$2$. Note that, if we have such a partition, the rest of the proof (namely, the merging and splitting steps) is exactly the same. Let us now explain how to obtain such a partition. We first partition~$T$ into three paths, two that are branches (which thus have even size), and one that is a branch plus the branching node (which thus has odd size), with the condition that the path of odd size intersects~$V_0$. This is possible, because it suffices to group the branching node with a branch that intersects~$V_0$ (or with any branch if it itself belongs to~$V_0$). Then, we partition each of the two paths of even size into subtrees of size~$2$, and we partition the path of odd size into one subtree of size~$3$ which contains a vertex in~$V_0$ and subtrees of size~$2$. See Figure~\ref{fig:partitioning_subdivided_claw} for an illustration of this new partitioning step.

\begin{figure}[h!]
    \centering
    \begin{tikzpicture}[x=0.75pt,y=0.75pt,yscale=-0.8,xscale=0.8]
%uncomment if require: \path (0,443); %set diagram left start at 0, and has height of 443

%Straight Lines [id:da7452319231041531] 
\draw    (438.5,204.5) -- (464.5,219) ;
%Straight Lines [id:da6417171643058979] 
\draw    (308.5,58.5) -- (308.5,83) ;
%Straight Lines [id:da16643090381783898] 
\draw    (308.5,34) -- (308.5,58.5) ;
%Straight Lines [id:da4933172811624199] 
\draw    (230.5,175.5) -- (204.5,190) ;
%Straight Lines [id:da7863930459823071] 
\draw    (412.5,190) -- (438.5,204.5) ;
%Straight Lines [id:da5663772715512712] 
\draw    (386.5,175.5) -- (412.5,190) ;
%Straight Lines [id:da784378839541415] 
\draw    (282.5,146.5) -- (256.5,161) ;
%Straight Lines [id:da25324530029438097] 
\draw    (256.5,161) -- (230.5,175.5) ;
%Straight Lines [id:da3654905828681142] 
\draw    (334.5,146.5) -- (360.5,161) ;
%Straight Lines [id:da7944807192983117] 
\draw    (360.5,161) -- (386.5,175.5) ;
%Straight Lines [id:da6664588072876044] 
\draw    (308.5,83) -- (308.5,107.5) ;
%Straight Lines [id:da2683520073232416] 
\draw    (308.5,132) -- (282.5,146.5) ;
%Straight Lines [id:da41832524535798854] 
\draw    (308.5,132) -- (334.5,146.5) ;
%Straight Lines [id:da919701070034562] 
\draw    (308.5,107.5) -- (308.5,132) ;
%Shape: Circle [id:dp009864038146723986] 
\draw  [fill={rgb, 255:red, 0; green, 0; blue, 0 }  ,fill opacity=1 ] (303.5,132) .. controls (303.5,129.24) and (305.74,127) .. (308.5,127) .. controls (311.26,127) and (313.5,129.24) .. (313.5,132) .. controls (313.5,134.76) and (311.26,137) .. (308.5,137) .. controls (305.74,137) and (303.5,134.76) .. (303.5,132) -- cycle ;
%Shape: Circle [id:dp34943681916132896] 
\draw  [fill={rgb, 255:red, 0; green, 0; blue, 0 }  ,fill opacity=1 ] (355.5,161) .. controls (355.5,158.24) and (357.74,156) .. (360.5,156) .. controls (363.26,156) and (365.5,158.24) .. (365.5,161) .. controls (365.5,163.76) and (363.26,166) .. (360.5,166) .. controls (357.74,166) and (355.5,163.76) .. (355.5,161) -- cycle ;
%Shape: Circle [id:dp8587189875016208] 
\draw  [fill={rgb, 255:red, 0; green, 0; blue, 0 }  ,fill opacity=1 ] (277.5,146.5) .. controls (277.5,143.74) and (279.74,141.5) .. (282.5,141.5) .. controls (285.26,141.5) and (287.5,143.74) .. (287.5,146.5) .. controls (287.5,149.26) and (285.26,151.5) .. (282.5,151.5) .. controls (279.74,151.5) and (277.5,149.26) .. (277.5,146.5) -- cycle ;
%Shape: Circle [id:dp016714777540657755] 
\draw  [fill={rgb, 255:red, 0; green, 0; blue, 0 }  ,fill opacity=1 ] (303.5,107.5) .. controls (303.5,104.74) and (305.74,102.5) .. (308.5,102.5) .. controls (311.26,102.5) and (313.5,104.74) .. (313.5,107.5) .. controls (313.5,110.26) and (311.26,112.5) .. (308.5,112.5) .. controls (305.74,112.5) and (303.5,110.26) .. (303.5,107.5) -- cycle ;
%Shape: Circle [id:dp12363243983230077] 
\draw  [fill={rgb, 255:red, 0; green, 0; blue, 0 }  ,fill opacity=1 ] (303.5,83) .. controls (303.5,80.24) and (305.74,78) .. (308.5,78) .. controls (311.26,78) and (313.5,80.24) .. (313.5,83) .. controls (313.5,85.76) and (311.26,88) .. (308.5,88) .. controls (305.74,88) and (303.5,85.76) .. (303.5,83) -- cycle ;
%Shape: Circle [id:dp3922996362707891] 
\draw  [fill={rgb, 255:red, 0; green, 0; blue, 0 }  ,fill opacity=1 ] (329.5,146.5) .. controls (329.5,143.74) and (331.74,141.5) .. (334.5,141.5) .. controls (337.26,141.5) and (339.5,143.74) .. (339.5,146.5) .. controls (339.5,149.26) and (337.26,151.5) .. (334.5,151.5) .. controls (331.74,151.5) and (329.5,149.26) .. (329.5,146.5) -- cycle ;
%Shape: Circle [id:dp41119792607417727] 
\draw  [fill={rgb, 255:red, 255; green, 255; blue, 255 }  ,fill opacity=1 ] (381.5,175.5) .. controls (381.5,172.74) and (383.74,170.5) .. (386.5,170.5) .. controls (389.26,170.5) and (391.5,172.74) .. (391.5,175.5) .. controls (391.5,178.26) and (389.26,180.5) .. (386.5,180.5) .. controls (383.74,180.5) and (381.5,178.26) .. (381.5,175.5) -- cycle ;
%Shape: Circle [id:dp9027363191987893] 
\draw  [fill={rgb, 255:red, 255; green, 255; blue, 255 }  ,fill opacity=1 ] (251.5,161) .. controls (251.5,158.24) and (253.74,156) .. (256.5,156) .. controls (259.26,156) and (261.5,158.24) .. (261.5,161) .. controls (261.5,163.76) and (259.26,166) .. (256.5,166) .. controls (253.74,166) and (251.5,163.76) .. (251.5,161) -- cycle ;
%Shape: Circle [id:dp6076396671224557] 
\draw  [fill={rgb, 255:red, 0; green, 0; blue, 0 }  ,fill opacity=1 ] (225.5,175.5) .. controls (225.5,172.74) and (227.74,170.5) .. (230.5,170.5) .. controls (233.26,170.5) and (235.5,172.74) .. (235.5,175.5) .. controls (235.5,178.26) and (233.26,180.5) .. (230.5,180.5) .. controls (227.74,180.5) and (225.5,178.26) .. (225.5,175.5) -- cycle ;
%Shape: Circle [id:dp8153283627173313] 
\draw  [fill={rgb, 255:red, 0; green, 0; blue, 0 }  ,fill opacity=1 ] (407.5,190) .. controls (407.5,187.24) and (409.74,185) .. (412.5,185) .. controls (415.26,185) and (417.5,187.24) .. (417.5,190) .. controls (417.5,192.76) and (415.26,195) .. (412.5,195) .. controls (409.74,195) and (407.5,192.76) .. (407.5,190) -- cycle ;
%Shape: Circle [id:dp21415531788082287] 
\draw  [fill={rgb, 255:red, 0; green, 0; blue, 0 }  ,fill opacity=1 ] (433.5,204.5) .. controls (433.5,201.74) and (435.74,199.5) .. (438.5,199.5) .. controls (441.26,199.5) and (443.5,201.74) .. (443.5,204.5) .. controls (443.5,207.26) and (441.26,209.5) .. (438.5,209.5) .. controls (435.74,209.5) and (433.5,207.26) .. (433.5,204.5) -- cycle ;
%Shape: Circle [id:dp43678200988483273] 
\draw  [fill={rgb, 255:red, 0; green, 0; blue, 0 }  ,fill opacity=1 ] (199.5,190) .. controls (199.5,187.24) and (201.74,185) .. (204.5,185) .. controls (207.26,185) and (209.5,187.24) .. (209.5,190) .. controls (209.5,192.76) and (207.26,195) .. (204.5,195) .. controls (201.74,195) and (199.5,192.76) .. (199.5,190) -- cycle ;
%Shape: Circle [id:dp639928809614127] 
\draw  [fill={rgb, 255:red, 255; green, 255; blue, 255 }  ,fill opacity=1 ] (303.5,58.5) .. controls (303.5,55.74) and (305.74,53.5) .. (308.5,53.5) .. controls (311.26,53.5) and (313.5,55.74) .. (313.5,58.5) .. controls (313.5,61.26) and (311.26,63.5) .. (308.5,63.5) .. controls (305.74,63.5) and (303.5,61.26) .. (303.5,58.5) -- cycle ;
%Shape: Circle [id:dp8317349399646871] 
\draw  [fill={rgb, 255:red, 255; green, 255; blue, 255 }  ,fill opacity=1 ] (303.5,34) .. controls (303.5,31.24) and (305.74,29) .. (308.5,29) .. controls (311.26,29) and (313.5,31.24) .. (313.5,34) .. controls (313.5,36.76) and (311.26,39) .. (308.5,39) .. controls (305.74,39) and (303.5,36.76) .. (303.5,34) -- cycle ;
%Shape: Circle [id:dp9142074999299326] 
\draw  [fill={rgb, 255:red, 0; green, 0; blue, 0 }  ,fill opacity=1 ] (459.5,219) .. controls (459.5,216.24) and (461.74,214) .. (464.5,214) .. controls (467.26,214) and (469.5,216.24) .. (469.5,219) .. controls (469.5,221.76) and (467.26,224) .. (464.5,224) .. controls (461.74,224) and (459.5,221.76) .. (459.5,219) -- cycle ;
%Shape: Ellipse [id:dp7770296014181216] 
\draw  [line width=1.5]  (308.5,23.13) .. controls (314.05,23.13) and (318.55,33.48) .. (318.55,46.25) .. controls (318.55,59.02) and (314.05,69.37) .. (308.5,69.37) .. controls (302.95,69.37) and (298.45,59.02) .. (298.45,46.25) .. controls (298.45,33.48) and (302.95,23.13) .. (308.5,23.13) -- cycle ;
%Shape: Ellipse [id:dp47595863247788417] 
\draw  [line width=1.5]  (308.5,72.13) .. controls (314.05,72.13) and (318.55,82.48) .. (318.55,95.25) .. controls (318.55,108.02) and (314.05,118.37) .. (308.5,118.37) .. controls (302.95,118.37) and (298.45,108.02) .. (298.45,95.25) .. controls (298.45,82.48) and (302.95,72.13) .. (308.5,72.13) -- cycle ;
%Shape: Ellipse [id:dp8638548076102939] 
\draw  [line width=1.5]  (240.63,169.93) .. controls (243.7,175.48) and (235.84,185.72) .. (223.07,192.8) .. controls (210.3,199.88) and (197.45,201.12) .. (194.37,195.57) .. controls (191.3,190.02) and (199.16,179.78) .. (211.93,172.7) .. controls (224.7,165.62) and (237.55,164.38) .. (240.63,169.93) -- cycle ;
%Shape: Ellipse [id:dp42932469218275515] 
\draw  [line width=1.5]  (292.63,140.93) .. controls (295.7,146.48) and (287.84,156.72) .. (275.07,163.8) .. controls (262.3,170.88) and (249.45,172.12) .. (246.37,166.57) .. controls (243.3,161.02) and (251.16,150.78) .. (263.93,143.7) .. controls (276.7,136.62) and (289.55,135.38) .. (292.63,140.93) -- cycle ;
%Shape: Ellipse [id:dp5437754016500211] 
\draw  [line width=1.5]  (344.63,152.07) .. controls (341.55,157.62) and (328.7,156.38) .. (315.93,149.3) .. controls (303.16,142.22) and (295.3,131.98) .. (298.37,126.43) .. controls (301.45,120.88) and (314.3,122.12) .. (327.07,129.2) .. controls (339.84,136.28) and (347.7,146.52) .. (344.63,152.07) -- cycle ;
%Shape: Ellipse [id:dp6999392534052765] 
\draw  [line width=1.5]  (474.63,224.57) .. controls (471.55,230.12) and (458.7,228.88) .. (445.93,221.8) .. controls (433.16,214.72) and (425.3,204.48) .. (428.37,198.93) .. controls (431.45,193.38) and (444.3,194.62) .. (457.07,201.7) .. controls (469.84,208.78) and (477.7,219.02) .. (474.63,224.57) -- cycle ;
%Shape: Ellipse [id:dp460956414307953] 
\draw  [line width=1.5]  (424.29,196.45) .. controls (421.21,202) and (401.8,197.12) .. (380.93,185.55) .. controls (360.06,173.98) and (345.63,160.1) .. (348.71,154.55) .. controls (351.79,149) and (371.2,153.88) .. (392.07,165.45) .. controls (412.94,177.02) and (427.37,190.9) .. (424.29,196.45) -- cycle ;

\end{tikzpicture}

    \caption{The partitioning step for subdivided claws with all branches of even size, if $V_0 \neq \emptyset$. Vertices in~$V_0$ are colored white, vertices in~$V_1$ are colored black. This partitioning step gives us a partition of~$T$ into $\left\lfloor\frac{n}{2}\right\rfloor-1$ subtrees of size~$2$, and one subtree of size~$3$ that intersects~$V_0$.}
    \label{fig:partitioning_subdivided_claw}
\end{figure}

Finally, if $V_0 = \emptyset$, we can proceed as follows: we partition each branch into subtrees of size~$2$ (it is possible because all branches have even size), and we add the branching node to one of its adjacent subtrees. Thus, there are $\left\lfloor\frac{n}{2}\right\rfloor - 1$ subtrees of size~$2$, one subtree of size~$3$, and all vertices are in~$V_1$, so this allocation is \EFO.

\section{Deferred reduction from Section~\ref{sec:complexity}}
\label{app:NP_reduction}

\begin{lemma}
    The \textsc{3-Partition} problem reduces to the \textsc{Outer Connected Fair Divsion of Graphs} problem restricted to the blow-up of subdivided stars for common binary additive valuation functions.
\end{lemma}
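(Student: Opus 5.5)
We reduce from \textsc{3-Partition}, which stays NP-hard when the numbers are written in unary. An instance is a multiset $S=\{s_1,\dots,s_{3m}\}$ of positive integers with $\sum_i s_i = mB$ and $B/4 < s_i < B/2$ for all $i$. The question is whether $S$ splits into $m$ triples, each of sum $B$.

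\textbf{Construction.} We build the $m$-blow-up of a subdivided star on its center. The center is replaced by a clique $K$ of $m$ vertices of value $0$. For each $s_i$ we attach a chain of $s_i$ vertices of value $1$, whose first vertex is adjacent to every vertex of $K$. Each chain ends in a pendant leaf of value $0$. There are $m$ agents with the common binary additive valuation. The instance has size polynomial in $\sum_i s_i$, so the reduction is polynomial in unary. The intended solutions give each agent one clique vertex together with three complete chains (with their pendant leaves) of total value $B$.

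\textbf{Completeness.} Given a valid 3-partition, allocate as intended. Each bundle is connected through its clique vertex. Every bundle has value exactly $B$, so the allocation is envy-free and hence \EFO.

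\textbf{Soundness.} Suppose an \EFO allocation exists. First we show no bundle can avoid $K$. Such a bundle lies inside a single pendant path: a strict part of one chain plus possibly its leaf, so its value is less than $B/2$. Its owner compares against the other bundles. Since fewer than $m$ bundles meet $K$, those bundles together hold value at least $mB - B/2$. By pigeonhole, one of them has value greater than $B$. An outer vertex of a bundle containing a clique vertex is either a pendant leaf of value $0$, a clique vertex (value $0$), or the cut end of a partially taken chain (value $1$). Removing any of these lowers the value by at most $1$. So this bundle's value stays above $B-1 \ge B/2$, and the owner of the $K$-avoiding bundle still envies it, a contradiction.

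Hence each agent owns exactly one clique vertex. A bundle then consists of its clique vertex plus chains hanging from it, some taken in full and at most one partial piece per chain. The main obstacle is to rule out partial chains and show all values equal $B$, using only the integrality of the values and the bounds $B/4<s_i<B/2$.

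The plan for this step is as follows.
\begin{itemize}
\item Let $v_{\min}$ be the minimum bundle value. For every bundle, the \EFO condition gives value $-1 \le v_{\min}$.
\item Suppose some chain is split between two bundles. Pick one split. The bundle holding the leaf end is not connected to $K$, which contradicts the previous paragraph.
\item Therefore every chain is allocated entirely to one bundle, together with its leaf.
\item Each bundle's outer vertices are then leaves of value $0$, or the clique vertex itself when the bundle is a singleton. So the \EFO condition becomes envy-freeness, and all values are equal, namely $B$.
\item Since $B/4<s_i<B/2$, each bundle of value $B$ contains exactly three chains, which yields a 3-partition.
\end{itemize}

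The delicate point is that the clique vertex may be removed as an outer good only when its bundle has a single branch. Even then its value is $0$, so removing it does not help, and the argument is unaffected.
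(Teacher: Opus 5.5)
Your reduction is correct, and it takes a genuinely different route from the paper's.

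\textbf{The paper's route.} The written proof uses a clique with one vertex per element of $S$ and attaches each chain to its own clique vertex through a matching. It gives the clique vertices a very large value and uses $|S|/3$ agents. The large values force every agent to hold exactly three clique vertices, and connectivity then makes each chain follow its clique vertex. Since all outer goods have value $0$, EF1\textsubscript{outer} collapses to envy-freeness, which yields triples of equal sum without any bounds on the $s_i$.

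\textbf{Your route.} You use a clique of size $m$ (one vertex per triple and per agent) of value $0$, joined to the first vertex of every chain, with $m$ agents. ``One clique vertex per agent'' comes from an envy argument: a bundle avoiding $K$ is trapped in one pendant path and is worth less than $B/2$, so some bundle meeting $K$ must be worth more than $B$. ``Three chains per bundle'' comes from the bounds $B/4<s_i<B/2$ rather than from counting clique vertices.

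\textbf{What each buys.} The paper's argument is shorter because the heavy clique vertices do the counting for free. As written, however, its valuation is not binary, and its matching attachment is not a blow-up in the sense of Section~\ref{sec:complexity}. Your construction is an honest $m$-blow-up of a subdivided star with a genuinely binary valuation, so it proves the lemma exactly as stated. The price is that you must exclude empty bundles and split chains, which you do.

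\textbf{A small repair.} The claim that the bundles meeting $K$ ``together hold value at least $mB-B/2$'' tacitly assumes only one bundle avoids $K$. If $a\geq 1$ bundles avoid $K$, they hold less than $aB/2$ in total. The $m-a$ bundles meeting $K$ then hold more than $mB-aB/2\geq (m-a)B$. Pigeonhole still gives one of value at least $B+1$, which stays at least $B>B/2$ after deleting any single vertex, so your conclusion stands.

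\textbf{A minor imprecision.} The clique vertex is outer whenever its bundle has at most one branch, not only when the bundle is a singleton. As you note at the end, its value is $0$ either way, so this does not affect the argument.
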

\begin{proof}
    Given a multiset $S$ of positive integers, the \textsc{3-Partition} problem consists in deciding whether there exists a partition of the set $S$ into triplets that all have the same sum. 
    Let $m$ be the number of items in $S$. We construct a graph as follows. We start with the complete graph $K_m$ of size $m$, and assign each vertex infinite value (or some large constant, such as the sum of items in $S$). We create $m$ chains of length equal to the respective items in $S$, assigning each vertex value 1, and adding a leaf of value zero to both ends of every chain. Finally, we attach the (end of) chains to the clique, via a matching, i.e. each chain is adjacent to a unique vertex of $K_m$. This construction is illustrated in Figure \ref{fig:3-partition_star_blown_up_clique}. To divide this graph among $m/3$ agents, any agent must get exactly 3 central vertices to respect envyfreeness, and their corresponding chains to respect connectivity. Since all non-cut vertices of each such allocated bundle have value zero, we get that the corresponding partition of $S$ consists of triplets of the exact same value.
    \begin{figure}[h!]
        \centering
        
\tikzset{every picture/.style={line width=0.75pt}} %set default line width to 0.75pt        

\begin{tikzpicture}[x=0.75pt,y=0.75pt,yscale=-0.7,xscale=0.7]
%uncomment if require: \path (0,300); %set diagram left start at 0, and has height of 300

%Shape: Circle [id:dp5896792246620907] 
\draw [red]  (247.75,36.75) .. controls (247.75,33.02) and (250.77,30) .. (254.5,30) .. controls (258.23,30) and (261.25,33.02) .. (261.25,36.75) .. controls (261.25,40.48) and (258.23,43.5) .. (254.5,43.5) .. controls (250.77,43.5) and (247.75,40.48) .. (247.75,36.75) -- cycle ;
%Shape: Circle [id:dp5871451767829492] 
\draw [red]  (199,36.75) .. controls (199,33.02) and (202.02,30) .. (205.75,30) .. controls (209.48,30) and (212.5,33.02) .. (212.5,36.75) .. controls (212.5,40.48) and (209.48,43.5) .. (205.75,43.5) .. controls (202.02,43.5) and (199,40.48) .. (199,36.75) -- cycle ;
%Shape: Circle [id:dp7949190575832291] 
\draw [red]  (224,64.75) .. controls (224,61.02) and (227.02,58) .. (230.75,58) .. controls (234.48,58) and (237.5,61.02) .. (237.5,64.75) .. controls (237.5,68.48) and (234.48,71.5) .. (230.75,71.5) .. controls (227.02,71.5) and (224,68.48) .. (224,64.75) -- cycle ;
%Shape: Circle [id:dp7789939184112632] 
\draw   (224,124.75) .. controls (224,121.02) and (227.02,118) .. (230.75,118) .. controls (234.48,118) and (237.5,121.02) .. (237.5,124.75) .. controls (237.5,128.48) and (234.48,131.5) .. (230.75,131.5) .. controls (227.02,131.5) and (224,128.48) .. (224,124.75) -- cycle ;
%Shape: Circle [id:dp4966921739875768] 
\draw   (282.75,99.75) .. controls (282.75,96.02) and (285.77,93) .. (289.5,93) .. controls (293.23,93) and (296.25,96.02) .. (296.25,99.75) .. controls (296.25,103.48) and (293.23,106.5) .. (289.5,106.5) .. controls (285.77,106.5) and (282.75,103.48) .. (282.75,99.75) -- cycle ;
%Straight Lines [id:da5301469635878038] 
\draw    (212.5,36.75) -- (289.5,93) ;
%Straight Lines [id:da5004173001477068] 
\draw    (254.5,43.5) -- (230.75,118) ;
%Straight Lines [id:da5567637509508881] 
\draw    (205.75,43.5) -- (230.75,118) ;
%Straight Lines [id:da5824554376248212] 
\draw    (230.75,71.5) -- (230.75,118) ;
%Shape: Circle [id:dp8665910154690021] 
\draw   (165.25,99.75) .. controls (165.25,96.02) and (168.27,93) .. (172,93) .. controls (175.73,93) and (178.75,96.02) .. (178.75,99.75) .. controls (178.75,103.48) and (175.73,106.5) .. (172,106.5) .. controls (168.27,106.5) and (165.25,103.48) .. (165.25,99.75) -- cycle ;
%Shape: Circle [id:dp0693772706560406] 
\draw  [fill={rgb, 255:red, 0; green, 0; blue, 0 }  ,fill opacity=1 ] (165.25,133.25) .. controls (165.25,129.52) and (168.27,126.5) .. (172,126.5) .. controls (175.73,126.5) and (178.75,129.52) .. (178.75,133.25) .. controls (178.75,136.98) and (175.73,140) .. (172,140) .. controls (168.27,140) and (165.25,136.98) .. (165.25,133.25) -- cycle ;
%Shape: Circle [id:dp06539254111806747] 
\draw  [fill={rgb, 255:red, 0; green, 0; blue, 0 }  ,fill opacity=1 ] (165.25,161.75) .. controls (165.25,158.02) and (168.27,155) .. (172,155) .. controls (175.73,155) and (178.75,158.02) .. (178.75,161.75) .. controls (178.75,165.48) and (175.73,168.5) .. (172,168.5) .. controls (168.27,168.5) and (165.25,165.48) .. (165.25,161.75) -- cycle ;
%Shape: Circle [id:dp7607567940372177] 
\draw  [fill={rgb, 255:red, 0; green, 0; blue, 0 }  ,fill opacity=1 ] (165.25,190.25) .. controls (165.25,186.52) and (168.27,183.5) .. (172,183.5) .. controls (175.73,183.5) and (178.75,186.52) .. (178.75,190.25) .. controls (178.75,193.98) and (175.73,197) .. (172,197) .. controls (168.27,197) and (165.25,193.98) .. (165.25,190.25) -- cycle ;
%Shape: Circle [id:dp6374850797805277] 
\draw  [fill={rgb, 255:red, 0; green, 0; blue, 0 }  ,fill opacity=1 ] (165.25,218.75) .. controls (165.25,215.02) and (168.27,212) .. (172,212) .. controls (175.73,212) and (178.75,215.02) .. (178.75,218.75) .. controls (178.75,222.48) and (175.73,225.5) .. (172,225.5) .. controls (168.27,225.5) and (165.25,222.48) .. (165.25,218.75) -- cycle ;
%Shape: Circle [id:dp19915224191300396] 
\draw  [fill={rgb, 255:red, 0; green, 0; blue, 0 }  ,fill opacity=1 ] (224,153.25) .. controls (224,149.52) and (227.02,146.5) .. (230.75,146.5) .. controls (234.48,146.5) and (237.5,149.52) .. (237.5,153.25) .. controls (237.5,156.98) and (234.48,160) .. (230.75,160) .. controls (227.02,160) and (224,156.98) .. (224,153.25) -- cycle ;
%Shape: Circle [id:dp5031771576183209] 
\draw  [fill={rgb, 255:red, 0; green, 0; blue, 0 }  ,fill opacity=1 ] (282.75,128.25) .. controls (282.75,124.52) and (285.77,121.5) .. (289.5,121.5) .. controls (293.23,121.5) and (296.25,124.52) .. (296.25,128.25) .. controls (296.25,131.98) and (293.23,135) .. (289.5,135) .. controls (285.77,135) and (282.75,131.98) .. (282.75,128.25) -- cycle ;
%Shape: Circle [id:dp0814981827247242] 
\draw  [fill={rgb, 255:red, 0; green, 0; blue, 0 }  ,fill opacity=1 ] (282.75,156.75) .. controls (282.75,153.02) and (285.77,150) .. (289.5,150) .. controls (293.23,150) and (296.25,153.02) .. (296.25,156.75) .. controls (296.25,160.48) and (293.23,163.5) .. (289.5,163.5) .. controls (285.77,163.5) and (282.75,160.48) .. (282.75,156.75) -- cycle ;
%Shape: Circle [id:dp2885501638126292] 
\draw   (165.25,247.25) .. controls (165.25,243.52) and (168.27,240.5) .. (172,240.5) .. controls (175.73,240.5) and (178.75,243.52) .. (178.75,247.25) .. controls (178.75,250.98) and (175.73,254) .. (172,254) .. controls (168.27,254) and (165.25,250.98) .. (165.25,247.25) -- cycle ;
%Shape: Circle [id:dp2737020922632992] 
\draw   (224,181.75) .. controls (224,178.02) and (227.02,175) .. (230.75,175) .. controls (234.48,175) and (237.5,178.02) .. (237.5,181.75) .. controls (237.5,185.48) and (234.48,188.5) .. (230.75,188.5) .. controls (227.02,188.5) and (224,185.48) .. (224,181.75) -- cycle ;
%Shape: Circle [id:dp18052289285942447] 
\draw   (282.75,185.25) .. controls (282.75,181.52) and (285.77,178.5) .. (289.5,178.5) .. controls (293.23,178.5) and (296.25,181.52) .. (296.25,185.25) .. controls (296.25,188.98) and (293.23,192) .. (289.5,192) .. controls (285.77,192) and (282.75,188.98) .. (282.75,185.25) -- cycle ;
%Straight Lines [id:da21729365425199887] 
\draw    (172,106.5) -- (172,240.5) ;
%Straight Lines [id:da11551261154740977] 
\draw    (289.5,93) -- (230.75,71.5) ;
%Straight Lines [id:da27215669182497615] 
\draw    (254.5,43.5) -- (289.5,93) ;
%Straight Lines [id:da6348654185045882] 
\draw    (172,93) -- (205.75,43.5) ;
%Straight Lines [id:da9643397004663753] 
\draw    (247.75,36.75) -- (172,93) ;
%Straight Lines [id:da09248834896338232] 
\draw    (172,93) -- (230.75,71.5) ;
%Straight Lines [id:da30222139250620195] 
\draw    (230.75,131.5) -- (230.75,175) ;
%Straight Lines [id:da5919260794121167] 
\draw    (289.5,106.5) -- (289.5,178.5) ;
%Straight Lines [id:da28184022954293486] 
\draw [red]   (212.5,36.75) -- (230.75,58) ;
%Straight Lines [id:da8404862093254301] 
\draw [red]  (247.75,36.75) -- (230.75,58) ;
%Straight Lines [id:da8349749340389181] 
\draw [red]  (212.5,36.75) -- (247.75,36.75) ;
\end{tikzpicture}
        
        \caption{Subdivided star with center vertex blown-up by a clique of size $m=3$ (in red), with vertices of value 0 and 1 filled in white and black respectively. This corresponds to the instance of the \textsc{3-Partition} problem with $S = \{4,1,2\}$.}
        \label{fig:3-partition_star_blown_up_clique}
    \end{figure}
\end{proof}

\end{document}